\title{Problems in \NP\ can Admit Double-Exponential Lower Bounds when Parameterized by Treewidth or Vertex Cover\thanks{An extended abstract of parts of this paper was presented in~\cite{icalppaper}.}} 
\titlerunning{Problems in \NP\ can Admit Double-Exponential Lower Bounds} 
\author{Florent Foucaud}{Université Clermont Auvergne, CNRS, Mines Saint-Étienne, Clermont Auvergne INP, LIMOS, 63000 Clermont-Ferrand, France\and \url{https://perso.limos.fr/ffoucaud}}{florent.foucaud@uca.fr}{https://orcid.org/0000-0001-8198-693X}{ANR project GRALMECO (ANR-21-CE48-0004), French government IDEX-ISITE initiative 16-IDEX-0001 (CAP 20-25), International Research Center "Innovation Transportation and Production Systems" of the I-SITE CAP 20-25.}
\author{Esther Galby}{Department of Computer Science and Engineering, Chalmers University of Technology and University of Gothenburg, Gothenburg, Sweden}{galby@chalmers.se}{https://orcid.org/0009-0004-5398-2770}{}
\author{Liana Khazaliya}{Technische Universit\"{a}t Wien, Vienna, Austria \and \url{https://www.ac.tuwien.ac.at/people/lkhazaliya/}}{lkhazaliya@ac.tuwien.ac.at}{https://orcid.org/my-orcid?orcid=0009-0002-3012-7240}{Vienna Science and Technology Fund (WWTF) [10.47379/ICT22029]; Austrian Science Fund (FWF) [Y1329]; European Union's Horizon 2020 COFUND programme [LogiCS@TUWien, grant agreement No.\ 101034440].}
\author{Shaohua Li}{School of Computer Science and Engineering, Central South University, Changsha, China}{shaohua.li@csu.edu.cn}{https://orcid.org/0000-0001-8079-6405}{National Natural Science Foundation of China under Grant 62472449.}
\author{Fionn {Mc Inerney}}{Telef\'{o}nica Scientific Research, Barcelona, Spain \and \url{https://sites.google.com/view/fionn-mc-inerney/home?pli=1}}{fmcinern@gmail.com}{https://orcid.org/0000-0002-5634-9506}{Smart Networks and Services Joint Undertaking (SNS JU) under the EU's Horizon Europe and innovation programme under Grant Agreement No. 101139067 (ELASTIC).}
\author{Roohani Sharma}{University of Bergen, Bergen, Norway}{r.sharma@uib.no}{https://orcid.org/0000-0003-2212-1359}{}
\author{Prafullkumar Tale}{Indian Institute of Science Education and Research Pune, Pune, India \and \url{https://pptale.github.io/}}{prafullkumar@iiserb.ac.in}{https://orcid.org/0000-0001-9753-0523}{}
\authorrunning{F. Foucaud, E. Galby, L. Khazaliya, S. Li, F. {Mc Inerney}, R. Sharma, and P. Tale} 
\keywords{Parameterized Complexity, ETH-based Lower Bounds, Double-Exponential Lower Bounds, Kernelization, Vertex Cover, Treewidth, Diameter, Metric Dimension, Strong Metric Dimension, Geodetic Sets} 
\newcommand{\mdfull}{\textsc{Metric Dimension}\xspace}
\newcommand{\smdfull}{\textsc{Strong Metric Dimension}\xspace}
\newcommand{\gsfull}{\textsc{Geodetic Set}\xspace}
\newcommand{\OO}{\mathcal{O}}
\newcommand{\calF}{\mathcal{F}}
\newcommand{\calO}{\mathcal{O}}
\newcommand{\true}{\texttt{True}}
\newcommand{\false}{\texttt{False}}
\newcommand{\bit}{\textsf{bin}}
\newcommand{\bitrep}{\textsf{bit-rep}}
\newcommand{\setrep}{\textsf{set-rep}}
\newcommand{\bitrepnullifier}{\textsf{nullifier}}
\newcommand{\bits}{\textsf{bits}}
\DeclareMathOperator{\md}{md}
\DeclareMathOperator{\dist}{dist}
\DeclareMathOperator{\gs}{gs}
\newcommand{\vc}{\mathtt{vc}}
\newcommand{\fvs}{\mathtt{fvs}}
\newcommand{\td}{\mathtt{td}}
\newcommand{\tw}{\mathtt{tw}}
\newcommand{\pw}{\mathtt{pw}}
\newcommand{\diam}{\mathtt{diam}}
\newcommand{\yes}{\textsc{Yes}}
\newcommand{\no}{\textsc{No}}
\newcommand{\ETH}{\textsf{ETH}}
\newcommand{\smd}{\mathtt{smd}}
\newcommand{\glb}{\textsf{glb}}
\newcommand{\pndt}{\textsf{pndt}}
\newcommand{\conport}{\textsf{con-port}}
\newtheoremstyle{noparentheses}
    {}{}{\itshape}{}%
    {\bfseries}{.}{ }%
    {\thmname{#1}\thmnumber{ #2}\thmnote{ {\mdseries #3}}}
\theoremstyle{noparentheses} 
\newtheorem*{theoremnp*}{Theorem}
\newcommand{\defproblem}[3]{
  \vspace{3mm}
\noindent\fbox{
  \begin{minipage}{0.96\textwidth}
  \begin{tabular*}{\textwidth}{@{\extracolsep{\fill}}lr} #1 \\ \end{tabular*}
  {\bf{Input:}} #2  \\
  {\bf{Question:}} #3
  \end{minipage}
  }
  \vspace{3mm}
}
\tikzset{
  circ/.style = {circle,draw,fill,inner sep=1.3pt}
}
\newtheorem{reduction rule}{Reduction Rule}
\begin{document}

\maketitle

\begin{abstract}

	Treewidth serves as an important parameter that, when bounded, yields tractability for a wide class of problems. 
	For example, graph
	problems expressible in Monadic Second Order (MSO) logic and \textsc{Quantified SAT} or, more generally, \textsc{Quantified CSP}, are fixed-parameter tractable parameterized by the treewidth {of the input's (primal) graph} 
	plus the length of the MSO-formula~[Courcelle, Information \& Computation 1990]  and the quantifier rank~[Chen, ECAI 2004], respectively. 
	The algorithms generated by these (meta-)results have running times whose dependence on treewidth is 
	a tower of exponents. 
	A {conditional lower bound} by Fichte, Hecher, and Pfandler~[LICS 2020] shows that, for \textsc{Quantified SAT},
	the height of this tower is equal to the number of quantifier alternations.
	These types of
	lower bounds, which show that at least {\em double-exponential} factors in the running time are {\em necessary}, exhibit the extraordinary level of computational hardness for such problems,
	and are 
	rare in the current literature: there are only a handful 
	of such lower bounds (for treewidth and vertex cover parameterizations) and all of them are for problems that are $\#$\NP-complete, $\Sigma_2^p$-complete, $\Pi_2^p$-complete, or complete for even higher levels of the polynomial hierarchy.
	
	Our results demonstrate,
	for the first time, that it is not necessary to go higher up in the polynomial hierarchy to achieve double-exponential lower bounds: we derive double-exponential lower bounds in the treewidth {$(\textsf{tw})$} and 
	the vertex cover number {$(\textsf{vc})$}, 
	for natural, important, and well-studied {\em \NP-complete} graph problems. 
	Specifically, we design a {\em technique} to obtain such lower bounds and show its {\em versatility} by applying it to three different problems:
	\textsc{Metric Dimension}, \textsc{Strong Metric Dimension},
	and \textsc{Geodetic Set}.
	We prove that these
	problems do not admit $2^{2^{o(\textsf{tw})}} \cdot n^{\mathcal{O}(1)}$-time algorithms, even on bounded diameter graphs, unless the \ETH\ fails (here, $n$ is the number of vertices in the graph).
	In fact, for \textsc{Strong Metric Dimension}, the double-exponential {lower bound holds}
	even {for} the {vertex cover number}.
	We further complement all our
	lower bounds with matching (and sometimes {non-trivial})
	upper bounds. 
	
	For the conditional lower bounds, 
	we design and use a novel, yet simple technique 
	based on \emph{Sperner families} of sets.
	We believe that the amenability of our technique will lead to obtaining such lower bounds for many other problems in~\NP.
\end{abstract}

\newpage
\tableofcontents
\newpage


\section{Introduction}
\label{sec:intro}
Many interesting computational problems turn out to be intractable. In these cases, identifying parameters under which the problems become tractable is desirable.
In the area of parameterized complexity, 
treewidth is a cornerstone parameter since 
a large class of problems become tractable on graphs of bounded treewidth.

Courcelle's celebrated theorem~\cite{Courcelle90}
states that the class of graph problems expressible in Monadic Second-Order
Logic (MSOL) of constant size is fixed-parameter tractable (\FPT) when parameterized by the treewidth of the graph. 
That is, such problems admit algorithms whose running time is of the form $f(\tw)\cdot \poly(n)$, where $\tw$ is the treewidth of the input, $n$ is the size of the input, and 
 $f$ is a function that depends only on $\tw$. 
 Similarly, a result by Chen~\cite{DBLP:conf/ecai/Chen04} shows that the {\sc Quantified SAT} ({\sc Q-SAT}) problem can also be solved in time $f(\tw) \cdot \poly(n)$, where $\tw$ is the treewidth of the primal graph of the input formula and $f$ is a function that depends only on $\tw$ and the number of quantifier alternations in the input formula. 
{\sc Q-SAT} is a generalization of {\sc SAT} that allows universal and existential quantifications over the variables. 
Note that {\sc Q-SAT} with $k$ quantifier alternations is $\Pi_k^p$-complete or $\Sigma_k^p$-complete.
Unfortunately, in both of the aforementioned results,
the function $f$
is a tower of exponents whose height depends roughly on the size of the MSOL and input formulas, respectively. 
For {\sc Q-SAT}, the height of this tower equals the number of quantifier alternations in the {\sc Q-SAT} instance~\cite{DBLP:conf/ecai/Chen04}.
 
Over the years, the focus shifted to making such \FPT\ algorithms as efficient as possible. Thus, a natural question is to ask when this higher-exponential 
dependence on treewidth is necessary.
 There is a rich literature that provides (conditional)
 lower bounds on this dependency for many problems,
 and these bounds are commonly of the form $2^{o(\tw)}$ or, in some unusual cases, $2^{o(\tw \log \tw)}$ (e.g., \cite{6108160,doi:10.1137/16M1104834}) and even $2^{o(\poly(\tw))}$ (e.g., \cite{10.1016/j.ic.2017.04.009,DBLP:conf/mfcs/Pilipczuk11}). Most notably, these lower bounds are far from the tower of exponents upper bounds given by the (meta-)results discussed above.
In this work, we develop a simple technique that allows to prove double-exponential dependence on the treewidth $\tw$ and the vertex cover number $\vc$, two of the most fundamental graph parameters. Notably, these are the first such results for problems in \NP, and 
we believe that the amenability of our technique will lead to many more similar results for other problems in \NP.

Indeed, after a preprint of this paper appeared on arxiv, our technique was also used to prove double-exponential dependence on $\vc$ for an \NP-complete machine learning problem~\cite{CCMR23}
and double-exponential dependence on the solution size and $\tw$ for \NP-complete \emph{identification problems} like 
\textsc{Test Cover} and \textsc{Locating-Dominating Set}~\cite{chakraborty2024tight}.

\subparagraph{Double-exponential lower bounds: treewidth and vertex cover parameterizations.} 
Fichte, Hecher, and Pfander~\cite{DBLP:conf/lics/FichteHP20} recently proved that, assuming the Exponential Time Hypothesis\footnote{The Exponential Time Hypothesis roughly states that $n$-variable {\sc 3-SAT} cannot be solved in time $2^{o(n)}$.} (\ETH), {\sc Q-SAT} with $k$ quantifier alternations cannot be solved in time significantly better than a tower of exponents of height $k$ in the treewidth. 
This exemplifies an interesting but expected trait of this problem: 
its complexity, in terms of the height of the exponential tower in $\tw$, increases with each quantifier alternation. 
It strengthened the result that appeared in~\cite{LM17}, where conditional double-exponential lower bounds for $\exists \forall${\sc SAT} and $\forall \exists${\sc SAT} were given. 
The results in~\cite{LM17} also yield a double-exponential lower bound in $\vc$ 
of the primal graph for both problems.
Besides these results, there are only a handful of other problems known to 
require higher-exponential dependence in the treewidth
of the input graph (or the primal graph of the input formula).
Specifically, the 
$\Pi^p_2$-complete \textsc{$k$-Choosability} problem and the $\Sigma_3^p$-complete \textsc{$k$-Choosability Deletion} problem
admit a double-exponential and a triple-exponential lower bound in treewidth~\cite{MM16}, respectively.
Recently, the $\Sigma^2_p$-complete problems \textsc{Cycle HitPack} and \textsc{$H$-HitPack}, for a fixed graph $H$, were shown to admit tight algorithms that are double-exponential in the treewidth~\cite{DBLP:journals/corr/abs-2402-14927}.
Further, the $\Sigma^2_p$-complete problem \textsc{Core Stability} was shown to admit a tight double-exponential lower bound in the treewidth, even on graphs of bounded degree~\cite{HKL24}.
Lastly, the $\#$\NP-complete counting problem
\textsc{Projected Model
  Counting} admits a double-exponential lower bound in $\tw$~\cite{DBLP:journals/ai/FichteHMTW23,DBLP:conf/sat/FichteHMW18}. For other double-exponential lower bounds, see~\cite{DBLP:journals/algorithmica/AchilleosLM12,DBLP:journals/siamcomp/CyganPP16,DBLP:journals/talg/FominGLSZ19,HKL24,JKL23,DBLP:journals/toct/KnopPW20,KLMPS24,kunnemann_et_al:LIPIcs.ICALP.2023.131,DBLP:conf/sat/LampisMM18,DBLP:conf/fsttcs/Lokshtanov0SX21,DBLP:journals/dmtcs/PilipczukS20,tale2024} and Section~\ref{sec:related-work-double-exp}.
 
All the double- (or higher) exponential lower bounds in treewidth mentioned so far are for problems that are $\#$\NP-complete, $\Sigma_2^p$-complete, $\Pi_2^p$-complete, or complete for even higher levels of the polynomial hierarchy. 
To quote~\cite{MM16}: 
\emph{``$\Pi^p_2$-completeness of these
problems already gives sufficient explanation why double- [$\dots$]
exponential dependence on treewidth is needed. [$\dots$] the quantifier
alternations in the problem definitions are the common underlying
reasons for being in the higher levels of the polynomial hierarchy and
for requiring unusually large dependence on treewidth.''}

As mentioned above, we develop a technique that allows to demonstrate, for the first time, that it is not necessary to go to higher levels of the polynomial hierarchy to achieve double-exponential lower bounds in the treewidth or
the vertex cover number of the graph.

\medskip

\begin{mdframed}[backgroundcolor=green!10]
We prove that three natural and well-studied {\NP-complete} problems
admit double-exponential lower bounds in $\tw$ or $\vc
 $, under the \ETH.
These are the first problems in \NP\ known to admit such lower bounds.\footnotemark[2]
\end{mdframed}
\footnotetext[2]{While it may be possible to artificially engineer a graph problem or graph representation of a problem in \NP\ that admits such lower bounds (although, to the best of our knowledge, this has not been done), we emphasize that this is not the case for these three natural and well-established graph problems in \NP.}

\subparagraph*{\boldsymbol{\NP}-complete metric-based graph problems.}
We study three \emph{metric-based graph problems}.
These problems are \textsc{Metric Dimension}, \textsc{Strong Metric Dimension}, and \textsc{Geodetic Set}, and they arise from network
design and network monitoring.
Apart from serving as examples for double-exponential dependence on treewidth and the amenability of our technique, these problems are of interest in their own right, and possess a rich literature both in the algorithms and discrete mathematics communities (see Section~\ref{sec:related-work-metric}). 
Their non-local nature has posed interesting algorithmic challenges and 
our results, 
as we explain later, supplement the already vast literature on the structural parameterizations of these problems.
Below we define the three above-mentioned problems formally, and particularly focus on \mdfull\ as it is the most popular and well-studied of the three. 

\defproblem{\mdfull}{A graph $G$ and a positive integer $k$.}{Does there exist $S \subseteq V(G)$ such that $|S| \leq k$ and, for any pair of vertices $u,v\in V(G)$,
there exists a vertex $w\in S$ with $d(w,u)\neq d(w,v)$?}

The \mdfull problem dates back to the 70s~\cite{HM76,Slater75}. As in geolocation
problems, the aim is to distinguish the vertices of a graph via their distances to a solution set. 
\mdfull was first shown to be \NP-complete in general graphs in Garey and
Johnson's book~\cite[GT61]{GJ79}, and this was later extended to many
restricted graph classes~\cite{DiazPSL17,ELW15,FoucaudMNPV17b},
including graphs of diameter~$2$~\cite{FoucaudMNPV17b} and graphs of
pathwidth~24~\cite{LM21}. 
In a seminal paper, \mdfull was proven to be
\W[2]-hard parameterized by the solution size $k$, even in subcubic
bipartite graphs~\cite{HartungN13}. This drove the subsequent meticulous study of the problem under structural
parameterizations.
  
In particular, the complexity of \mdfull parameterized by treewidth
remained an intriguing open problem for a long time.
Recently, it was shown that \mdfull\ is para-\NP-hard parameterized by
pathwidth ($\pw$)~\cite{LM21} (an earlier
result~\cite{BP21} showed that it is \W[1]-hard for pathwidth). 
A subsequent paper showed that the problem is \W[1]-hard
parameterized by the combined parameter feedback vertex set number
($\fvs$) plus pathwidth of the graph~\cite{GKIST23}. See Section~\ref{sec:related-work-metric} for more related work on \mdfull. 

We conclude this part with the definitions of
the remaining two problems,
both of which are known to be \NP-Complete~\cite{floCALDAM20,OP07}.
\textsc{Geodetic Set} is also
\W[1]-hard parameterized by the solution size, feedback vertex set number, and pathwidth, combined~\cite{KK22}.

\defproblem{\smdfull}{A graph $G$ and a positive integer $k$.}{Does there exist $S \subseteq V(G)$ such that $|S| \leq k$ and, for any pair of vertices $u,v\in V(G)$,
there exists a vertex $w\in S$ such that either $u$ lies on some shortest path between $v$ and $w$, or $v$ lies on some shortest path between $u$ and $w$?}

\defproblem{\gsfull}{A graph $G$ and a positive integer $k$.}{Does there exist $S \subseteq V(G)$ such that $|S| \leq k$ and, for any vertex $u \in V(G)$, there are two
vertices $s_1, s_2 \in S$ such that a shortest path from $s_1$ to
$s_2$ contains $u$?}

\subparagraph*{Our technical contributions.}
As \mdfull\ and \gsfull\ are \NP-complete on bounded diameter graphs 
\emph{or} on bounded treewidth graphs, 
we study their parameterized complexity with $\tw+\diam$ as the parameter
and prove the following results.

\medskip

\begin{mdframed}[backgroundcolor=yellow!10]
\begin{enumerate}
\item \mdfull\ and \gsfull\ do not admit algorithms running in time $2^{f(\diam)^{o(\tw)}} \cdot n^{\OO(1)}$, for any computable function $f$, unless the \ETH\ fails.
(Sections~\ref{sec:lower-bound-diam-treewidth-MD},~\ref{sec:lower-bound-diam-treewidth})

\smallskip

\item \smdfull\ does not even admit an algorithm with a running time of $2^{2^{o(\vc)}} \cdot n^{\OO(1)}$, unless the \ETH\ fails.
This also implies the problem does not admit a kernelization algorithm that outputs an instance with $2^{o(\vc)}$ {\em vertices}, unless the \ETH\ fails
(Section~\ref{sec:strong-met-dim-lower-bound-vc}).

\end{enumerate}
\end{mdframed}

\smallskip

The above lower bounds for $\tw+\diam$, in particular, imply
that \mdfull and \gsfull on graphs of bounded diameter cannot admit
$2^{2^{o(\tw)}} \cdot n^{\OO(1)}$-time algorithms,
unless the \ETH\ fails. 
The reduction in
Section~\ref{sec:lower-bound-diam-treewidth-MD} also works for $\fvs$ and $\td$ for \mdfull, and
the reduction in
Section~\ref{sec:lower-bound-diam-treewidth} works for $\td$ for \gsfull.

We show that all our lower bounds are tight by providing  
algorithms (kernelization algorithms, respectively) with matching running times (guarantees, respectively). 

\medskip

\begin{mdframed}[backgroundcolor=yellow!10]
\begin{enumerate}
\item \mdfull\ and \gsfull\ admit algorithms running in time $2^{\diam^{\OO(\tw)}} \cdot n^{\OO(1)}$. (Sections~\ref{subsec:algo-tw-diam-MD}, \ref{subsec:algo-tw-diam-GD})

\smallskip

\item \smdfull\ admits an algorithm running in time $2^{2^{\OO(\vc)}} \cdot n^{\OO(1)}$ and a kernel with $2^{\OO(\vc)}$ vertices.
(Section~\ref{subsec:algo-vc-SMD})
\end{enumerate}
\end{mdframed}

\smallskip

The (kernelization) algorithms for the $\vc$ parameterization are very simple, whereas the algorithms for the $\tw+\diam$ parameter are highly non-trivial and require showing interesting locality properties in the instance. 
Further, for our $\tw+\diam$ parameterized algorithms, the (double-exponential) dependency of treewidth in the running time is unusual (and rightly so, as exhibited by our lower bounds),
as most natural graph problems in \NP\ for which a dedicated algorithm 
(i.e., \emph{not} relying on Courcelle's theorem) parameterized by
treewidth is known, can be solved in time $2^{\calO(\tw)} \cdot
n^{\calO(1)}$, $2^{\calO(\tw \cdot \log(\tw))} \cdot n^{\calO(1)}$ or $2^{\calO(\poly(\tw))} \cdot n^{\calO(1)}$.

Finally, our reductions rely on a novel, yet simple technique based on {\em Sperner families} of sets that allows to encode particular {\sc SAT} relations across large sets of
variables and clauses into relatively small vertex-separators. As mentioned before, we believe that this technique is the key to obtaining such lower bound results for other problems in \NP. In particular, as witnessed by our results, our technique has the additional features that it even allows to prove such lower bounds in very restricted cases, such as bounded diameter graphs, and is not specific to any one structural parameter, as it also works for, e.g., the feedback vertex set number and treedepth. We elaborate on our technique in the next section.

\section{Technical Overview}\label{subsec:technique}

In this section, we present an overview of our lower bound techniques.
We first exhibit our technique to obtain the double-exponential lower bounds in its most general setting.
Then, we continue with the problem-specific tools we developed that are required for the reductions.


The first integral part of our technique is to reduce from a variant of {\sc 3-SAT} known as {\sc 3-Partitioned-3-SAT} that was introduced in~\cite{DBLP:journals/corr/abs-2302-09604}.
In this problem, the input is a formula~$\psi$ in $3$-\textsc{CNF} form,
together with a partition of the set of its variables into three
disjoint sets $X^{\alpha}$, $X^{\beta}$, $X^{\gamma}$,
with $|X^{\alpha}| = |X^{\beta}| = |X^{\gamma}| = n$, and
such that no clause contains more than one variable from each of
$X^{\alpha}$,  $X^{\beta}$, and $X^{\gamma}$.
The objective is to determine whether $\psi$ is satisfiable.
Unless the \ETH\ fails, \textsc{3-Partitioned-3-SAT} does not admit
an algorithm running in time $2^{o(n)}$~\cite[Theorem 3]{DBLP:journals/corr/abs-2302-09604}.

\begin{figure}[t]
    \centering
        \includegraphics[scale=0.475]{./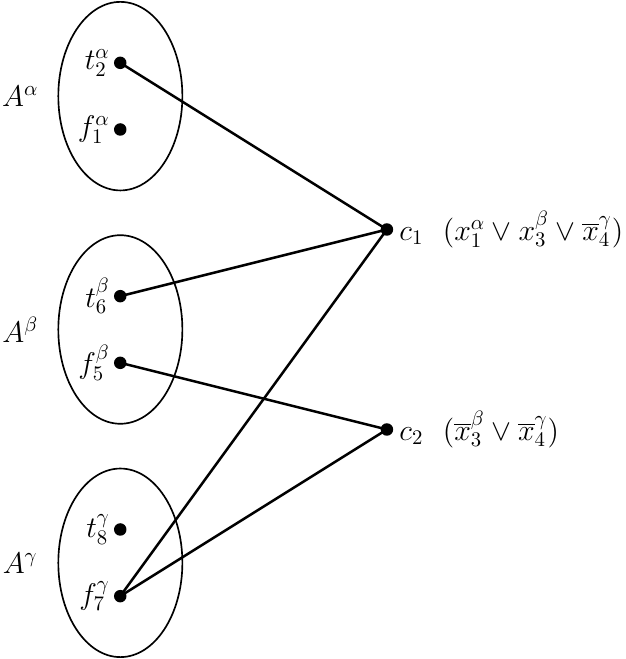}
        \hspace{0.05cm}
        \vline
        \hspace{0.15cm}
        \includegraphics[scale=0.475]{./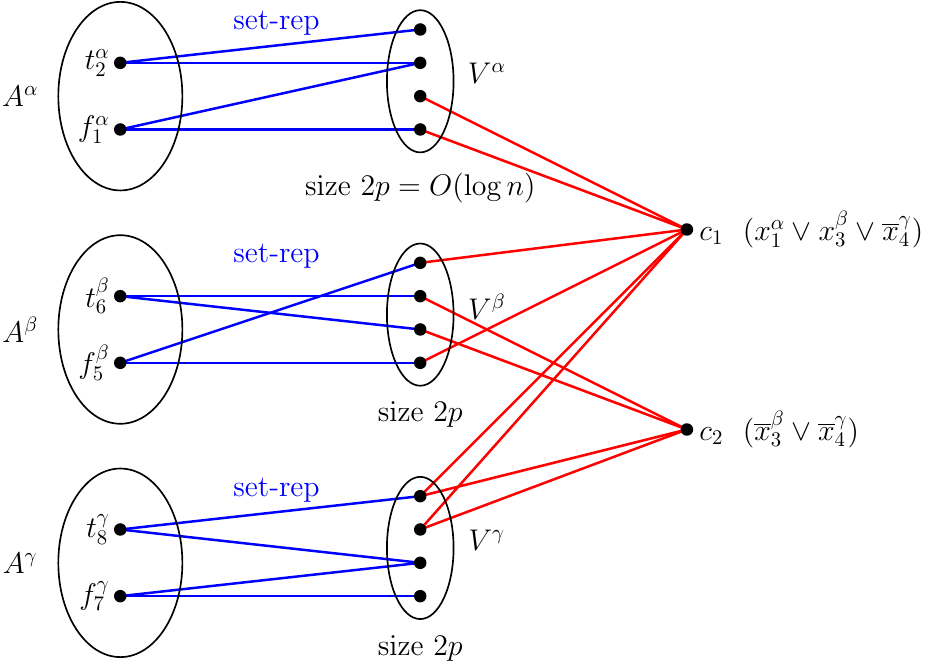}
    \caption{{\bf Graph representations of {\sc \bf 3-Partitioned-3-SAT}.} (Left) incidence graph representation. (Right) representation with small separators using our technique. Note, for example, that $x_1^{\alpha}$ appears as a positive literal in the clause $C_1$. Thus, on the left, $t_2^{\alpha}$ is the only literal vertex in $A^{\alpha}$ incident to $c_1$, while on the right, $t_2^{\alpha}$ is the only literal vertex in $A^{\alpha}$ that does not share a common neighbor with $c_1$ in $V^{\alpha}$. The edges from $c_2$ to each vertex in $V^{\alpha}$ are omitted for clarity.}
       \label{fig:technique}
\end{figure}

Typical reductions from satisfiability problems to graph problems usually entail representing the satisfiability problem by its incidence graph, in which each variable is represented by two vertices corresponding to its positive and negative literals.
In this representation, a clause vertex is adjacent to a literal vertex if and only if it contains that literal in $\psi$ (see~\cref{fig:technique} (left) for an illustration).
However, this naive approach does not lead to any structural parameters of the incidence graph being of bounded size.
The {\em core idea} of our technique is to instead represent the relationships between clause and literal vertices via edges from these two sets of vertices to ``small'' separators (three separators in the case of {\sc 3-Partitioned-3-SAT}) that encode these relationships.

Formally, this is achieved as follows. For a positive integer $p$, define $\calF_p$ as the collection of
subsets of $[2p]$ that contains exactly $p$ integers.
We critically use the fact that no set in $\calF_p$
is contained in any other set in $\calF_p$
(such a collection of sets are called a \emph{Sperner family}).
Let $\ell$ be a positive integer such that $\ell \leq \binom{2p}{p}$.
We define
$\setrep: [\ell] \mapsto \calF_p$ as a one-to-one function
by arbitrarily assigning a set in $\calF_p$ to an integer in $[\ell]$.
By the asymptotic estimation of the central binomial coefficient, $\binom{2p}{p}\sim \frac{4^p}{\sqrt{\pi \cdot p}}$ ~\cite{Sperner}.
To get the upper bound of $p$, we scale down the asymptotic function and have $\ell \leq \frac{4^p}{2^p}=2^p$.
Thus, $p=\OO(\log \ell)$. 

Let $\psi$ be an instance of {\sc 3-Partitioned-3-SAT} on $3n$ variables, and let $p$ be the smallest integer such that $2n \leq \binom{2p}{p}$. In particular, $p=\OO(\log n)$. Define
$\setrep: [2n] \mapsto \calF_p$ as above. Rename the variables in $X^{\alpha}$ to $x^{\alpha}_i$ for all $i\in [n]$.
For each variable $x^{\alpha}_i$, add two vertices $t^{\alpha}_{2i}$ and $f^{\alpha}_{2i-1}$ corresponding to the positive and negative literals of $x^{\alpha}_i$, respectively.
Let $A^{\alpha} = \{t^{\alpha}_{2i},  f^{\alpha}_{2i - 1} |\ i \in [n] \}$.
Add a {\em validation portal} with $2p$ vertices, denoted by $V^{\alpha}=\{v^{\alpha}_1,\ldots,v^{\alpha}_{2p}\}$.
For each $i\in [n]$, add the edge $t^{\alpha}_{2i}v^{\alpha}_{p'}$ for each $p'\in \setrep(2i)$.
Similarly, for each $i\in [n]$, add the edge $f^{\alpha}_{2i-1}v^{\alpha}_{p'}$ for each $p'\in \setrep(2i-1)$.
Repeat the above steps for $\beta$ and $\gamma$.

Now, for each clause $C_j$ ($j\in [m]$) in $\psi$, add a clause vertex $c_j$.
Let $\delta\in \{\alpha,\beta,\gamma\}$. For all $i\in [n]$ and $j\in [m]$, if the variable $x^{\delta}_i$ appears as a positive (negative, respectively) literal in the clause $C_j$ in $\psi$, then add the edge $c_jv^{\delta}_{p'}$ for each $p'\in [2p]\setminus \setrep(2i)$ ($p'\in [2p]\setminus \setrep(2i-1)$, respectively).
For all $j\in [m]$, if no variable from $X^{\delta}$ appears in $C_j$ in $\psi$, then make $c_j$ adjacent to all the vertices in $V^{\delta}$.
See~\cref{fig:technique} (right) for an illustration.

As a clause contains at most one variable from $X^{\delta}$ in $\psi$, $c_j$ and $t^{\delta}_{2i}$ ($f^{\delta}_{2i-1}$, respectively) do not share a common neighbor in $V^{\delta}$ if and only if the clause $C_j$ contains $x^{\delta}_i$ as a positive (negative, respectively) literal in $\psi$.
For the reductions, we use this representation of the relationship between clause and literal vertices. Since $p=\OO(\log n)$, this ensures that $\tw(G)=\OO(\log n)$, which we exploit along with the fact that, unless the \ETH\ fails, \textsc{3-Partitioned-3-SAT} does not admit an algorithm running in time $2^{o(n)}$.
 
\subsection{Basic Tools for Lower Bounds}
 
For brevity, we focus on \textsc{Metric Dimension} and explain our problem-specific tools in this context.
We use two such simple tools: the \emph{bit representation gadget} and the \emph{set representation gadget}. The set representation gadget is the problem-specific implementation of the above technique, and it uses the bit representation gadget.

Before going further, we need to define some terms related to \mdfull. 
The set $S$ defined in the problem statement of \textsc{Metric Dimension} is called a {\em resolving set} of $G$.
A subset of vertices $S'\subseteq V(G)$ {\it resolves} a pair of vertices $u,v\in V(G)$ if there exists a vertex $w \in S'$ such that $d(w,u)\neq d(w,v)$.
Lastly, a vertex $u\in V(G)$ is {\it distinguished} by a subset of vertices $S'\subseteq V(G)$ if, for any $v\in V(G)\setminus \{u\}$, there exists a vertex $w\in S'$ such that $d(w,u)\neq d(w,v)$.

\begin{figure}[t]
    \centering
        \includegraphics[scale=1.23]{./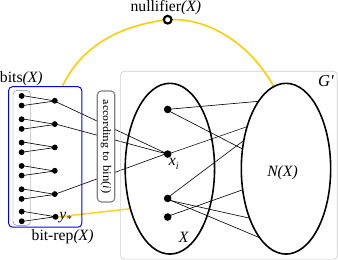} \hspace{0.4cm}
    \caption{\textbf{Set Identifying Gadget}. The blue box represents $\bitrep(X)$ and the yellow lines represent that $\bitrepnullifier(X)$ is adjacent to each vertex in $(\bitrep(X)\setminus \bits(X))\cup N(X)$, and $y_{\star}$ is adjacent to each vertex in $X$. Also, $G'$ is not necessarily restricted to the graph induced by the vertices in $X\cup N(X)$.}
       \label{fig:set-identifying-gadget-tech}
\end{figure}

\subparagraph*{Bit Representation Gadget 
to Identify Sets.}
\label{subsubsec:gadget-set}
Suppose we are given a graph $G'$ and a subset $X\subseteq V(G')$ of its vertices.
Further, suppose that we want to add a vertex set $X^+$ to $G'$ to obtain a new graph $G$ with the following properties. We want that each vertex in $X \cup X^+$ is distinguished by vertices in $X^+$ that must be in any resolving set $S$ of $G$, and that no vertex in $X^+$ can resolve any ``critical pair'' of vertices in $G$. Roughly, a pair of vertices is critical if it forces certain ``types'' of vertices to be in any resolving set $S$ of $G$, and the selection of the specific vertices of those types depends on the solution to the problem being reduced from (which, in our case, is \textsc{3-Partitioned-3-SAT}~\cite{DBLP:journals/corr/abs-2302-09604}).  
We refer to the graph induced by the vertices of $X^+$, along with the edges connecting $X^+$ to $G'$, as the 
\emph{Set Identifying Gadget} for the set $X$.
Given a graph $G'$ and a non-empty subset $X\subseteq V(G')$ of its vertices, to construct such a graph $G$, we add vertices and edges to $G'$ as follows (see \cref{fig:set-identifying-gadget-tech}):
\begin{itemize}
\item The vertex set $X^+$ that we are aiming to add is the union of a set $\bitrep(X)$ and a special vertex denoted by $\bitrepnullifier(X)$.
\item First, let $X=\{x_i\mid i\in [|X|]\}$, and set $q := \lceil \log(|X|+ 2) \rceil+1$.
We select this value for $q$ to \emph{(1)} uniquely represent each integer in $[|X|]$ by its bit-representation in binary (note that we start from $1$ and not $0$), \emph{(2)} ensure that the only vertex whose bit-representation contains all $1$'s is $\bitrepnullifier(X)$, and \emph{(3)} reserve one spot for an additional vertex $y_{\star}$.
\item For every $i \in [q]$, add three vertices $y^a_i,  y_i, y^b_i$, and add the path $(y^a_i, y_i, y^b_i)$.
\item Add $3$ vertices $y^a_{\star},  y_{\star}, y^b_{\star}$ and the path $(y^a_{\star}, y_{\star}, y^b_{\star})$. 
Add edges to make $\{y_i\mid\ i \in [q] \}\cup\{y_{\star}\}$ a clique.
Make $y_{\star}$ adjacent to each vertex in $X$.
Let $\bitrep(X)=\{y_i, y^a_i, y^b_i\mid i\in [q]\}\cup \{y_{\star}, y^a_{\star}, y^b_{\star}\}$ and denote its subset by $\bits(X)=\{y^a_i, y^b_i\mid i\in [q]\}\cup \{y^a_{\star}, y^b_{\star}\}$.
\item For every integer $j \in [|X|]$, let $\bit(j)$ denote the binary
representation of $j$ using $q$ bits.
Connect $x_j$ with $y_{i}$
if the $i^{th}$ bit (going from left to right) in $\bit(j)$ is $1$.
\item Add a vertex, denoted by $\bitrepnullifier(X)$,
and connect it to each vertex in $\{y_i\mid i \in [q] \}\cup\{y_{\star}\}$.
\item For every vertex $u \in V(G)\setminus (X\cup X^+)$ such that
$u$ is adjacent to some vertex in $X$, add an edge between
$u$ and $\bitrepnullifier(X)$.
We add this vertex to ensure that
vertices in $\bitrep(X)$ do not resolve critical pairs
in $V(G)$.
\end{itemize}

\subparagraph{Set Representation Gadget.}
\label{subsec:prelim-3-Par-3-SAT-Met-Dim-diam-tw-tech}
We define $\setrep: [\ell] \mapsto \calF_p$ as in \cref{subsec:technique}, and recall that $p=\OO(\log \ell)$. 
Suppose we have a ``large'' collection of vertices,  say $A = \{a_1, a_2, \dots, a_{\ell} \}$,
and a ``large'' collection of critical pairs
$C = \{\langle c^{\circ}_1, c^{\star}_1\rangle,  \langle c^{\circ}_2, c^{\star}_2\rangle,  \dots,
\langle c^{\circ}_m,  c^{\star}_m\rangle\}$.
Moreover,  we are given an injective function $\phi:[m] \mapsto [\ell]$.
The objective is to design a gadget such that only $a_{\phi(q)} \in A$
can resolve a critical pair $\langle c^{\circ}_q, c^{\star}_q\rangle \in C$
for any $q \in [m]$, while keeping the treewidth of this part of
the graph of order $\calO(\log(|A|))$. With this in mind, we do the following.

\begin{itemize}
\item Add vertices and edges to identify the set $A$ and to add critical pairs in $C$ (for each critical pair in $C$, both vertices share the same bit-representation in the Set Identifying Gadget for $C$).
\item Add a \emph{validation portal}, a clique on $2p$ vertices, denoted by $V = \{v_1, v_2, \dots,  v_{2p}\}$,
and vertices and edges to identify it.
\item For every $i \in [\ell]$ and for every $p' \in \setrep(i)$,
add the edge $(a_i,  v_{p'})$.
\item For every critical pair $\langle c^{\circ}_q, c^{\star}_q \rangle$,
make $c^{\circ}_q$ adjacent to every vertex in $V$,  and
add every edge of the form $(c^{\star}_q, v_{p'})$ for $p' \in [2p] \setminus \setrep(\phi(q))$.
Note that the vertices in $V$ that are indexed using integers
in $\setrep(\phi(q))$ are \emph{not} adjacent with $c^{\star}_q$.
\end{itemize}
See \cref{fig:set-rep-core-tech} for an illustration.
Now, consider a critical pair 
$\langle c^{\circ}_q, c^{\star}_q \rangle$ and
suppose $i = \phi(q)$.
\begin{itemize}
\item 
By the construction, 
$N(a_i) \cap N(c^{\circ}_q) \neq \emptyset$,
whereas $N(a_i) \cap N(c^{\star}_q) = \emptyset$.
Hence, $a_i$ resolves the critical pair
$\langle c^{\circ}_q, c^{\star}_q \rangle$ 
as $d(a_i,  c_q^{\circ}) = 2$ and $d(a_j, c_q^{\star}) > 2$.

\item For any other vertex in $A$, say $a_j$,  $\setrep(j) \setminus \setrep(i)$ is a non-empty set.
So, there are paths from $a_j$ to  $c^{\circ}_q$ and $a_j$ to $c^{\star}_q$
    through vertices in $V$ with indices in $\setrep(j) \setminus \setrep(i)$.
    This implies that $d(a_j,  c_q^{\circ}) = d(a_j, c_q^{\star}) = 2$ and $a_j$ cannot 
    resolve the pair $\langle c^{\circ}_q, c^{\star}_q \rangle$.
\end{itemize}

\begin{figure}[t]
    \centering
        \includegraphics[scale=1.2]{./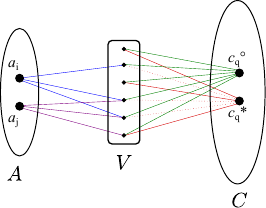}
    \caption{\textbf{Set Representation Gadget}.
    Let $\phi(q) = i$, i.e., only $a_i$ in $A$ can resolve
    the critical pair $\langle c^{\circ}_q, c^{\star}_q \rangle$.
    Let the vertices in $V$ be indexed from top to bottom and let $\setrep(i) = \{2, 4, 5\}$.
    By construction, the only vertices in $V$ that $c^{\star}_q$ is \emph{not} adjacent to are $v_2$, $v_4$, and $v_5$ (this is highlighted by red-dotted edges).
    Thus,  $\dist(a_i,  c^{\circ}_q) = 2$ and $\dist(a_i, c^{\star}_q) > 2$,  and hence,
    $a_i$ resolves $\langle c^{\circ}_q, c^{\star}_q \rangle$.
    For any other vertex in $A$, say $a_j$, $\setrep(j) \setminus \setrep(i)$ is non-empty, and thus,
    $a_j$ cannot resolve $\langle c^{\circ}_q, c^{\star}_q \rangle$.
    \label{fig:set-rep-core-tech}}
\end{figure}

\subsection{Sketch of the Lower Bound Proof for \textsc{Metric Dimension}}

With these tools in hand, 
we present an overview of the
reduction from \textsc{$3$-Partitioned-$3$-SAT} used to prove Theorem~\ref{thm:lower-bound-diam-tw}, which we restate 
here for convenience.

\begin{theoremnp*}[\textbf{\ref{thm:lower-bound-diam-tw}}]
Unless the \ETH\ fails, \mdfull does not admit an algorithm 
running in time $2^{f(\diam)^{o(\tw)}} \cdot n^{\OO(1)}$ for any computable function 
$f:\mathbb{N} \mapsto \mathbb{N}$.
\end{theoremnp*}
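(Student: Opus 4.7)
The plan is to reduce from \PSAT, which by~\cite{DBLP:journals/corr/abs-2302-09604} cannot be solved in $2^{o(n)}$ time unless the \ETH\ fails. Given an instance $\psi$ with variable partition $X^{\alpha} \cup X^{\beta} \cup X^{\gamma}$ (each of size $n$) and $m$ clauses, I will construct in polynomial time a graph $G$ and an integer $k$ such that $\psi$ is satisfiable iff $\md(G) \leq k$, with $\diam(G) = \OO(1)$ and $\tw(G) = \OO(\log n)$. Then any algorithm running in time $2^{f(\diam)^{o(\tw)}}\cdot n^{\OO(1)}$ would decide $\psi$ in time $2^{o(n)} \cdot n^{\OO(1)}$, contradicting the \ETH.

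The construction follows the blueprint in Section~\ref{subsec:technique}. Pick $p$ as the smallest integer with $2n \leq \binom{2p}{p}$, so $p = \OO(\log n)$, and fix $\setrep:[2n]\to \calF_p$. For each $\delta\in\{\alpha,\beta,\gamma\}$, create the literal set $A^{\delta} = \{t^{\delta}_{2i}, f^{\delta}_{2i-1}\mid i\in[n]\}$ and a validation portal $V^{\delta}$ of $2p$ vertices forming a clique, with edges from literal vertices to $V^{\delta}$ determined by $\setrep$. Add clause vertices $c_1,\dots,c_m$, joining each $c_j$ to $V^{\delta}$ by the complementary pattern from the Set Representation Gadget so that, for the unique literal of $X^{\delta}$ occurring in $C_j$ (if any), $c_j$ and that literal vertex fail to share a common neighbor in $V^{\delta}$, while all other pairs of $c_j$ with literal vertices of $X^{\delta}$ do share a neighbor. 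Attach a Set Identifying Gadget (the bit-representation gadget) to each of $A^{\alpha}, A^{\beta}, A^{\gamma}, V^{\alpha}, V^{\beta}, V^{\gamma}$ and the clause set, so that the $\bitrep$-vertices must be in every resolving set (and they do \emph{not} resolve the critical pairs described below, thanks to the $\bitrepnullifier$ vertex). Finally, introduce for each variable $x^{\delta}_i$ a critical pair of twin-like vertices that can only be separated by choosing one of $t^{\delta}_{2i}$ or $f^{\delta}_{2i-1}$ (encoding the truth assignment), and for each clause $C_j$ a critical pair $\langle c^{\circ}_j, c^{\star}_j\rangle$ that, by the Set Representation Gadget, can be resolved only by a literal vertex corresponding to a literal of $C_j$. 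Set the budget $k$ to be the total number of forced $\bitrep$-vertices plus exactly $3n$ literal vertices (one per variable).

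The correctness argument then has the usual two directions. For the forward direction, a satisfying assignment of $\psi$ yields a resolving set by taking, for each variable, the literal vertex consistent with the assignment and, for each clause, observing that some chosen literal already resolves its critical pair. For the reverse direction, a resolving set of size at most $k$ must contain all forced $\bitrep$-vertices and therefore has room for only one literal per variable; the variable and clause critical pairs together force this choice to be a satisfying assignment (using the key property of Sperner families: if $i\neq \phi(q)$ then $\setrep(i)\setminus\setrep(\phi(q))\neq\emptyset$, yielding a common neighbor in the portal that equalizes the two distances). Bounded diameter follows since every literal, clause, portal, and gadget vertex is within constant distance of the portal cliques and the $\bitrepnullifier$ vertices. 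Bounded treewidth follows from a tree decomposition whose bags contain the three portals $V^{\alpha}\cup V^{\beta}\cup V^{\gamma}$ (size $6p = \OO(\log n)$) together with the $\bitrep$-vertices (also $\OO(\log n)$) and a constant number of ``currently processed'' literal/clause vertices.

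The main obstacle I anticipate is designing the variable and clause critical pairs so that (i) they interact correctly with the Set Representation Gadget (the intended literal yields distance $2$ and every other literal yields distance $2$ to both endpoints), (ii) they are not accidentally resolved by $\bitrep$-vertices or by literals belonging to other partition classes, and (iii) they do not inflate the diameter. The $\bitrepnullifier$ vertex is tailored to handle (ii) by short-circuiting distances from $\bitrep$-vertices, and adding short attachment paths of constant length suffices for (iii); the delicate part is a careful case analysis of distances through the three portals to confirm (i), particularly when a clause contains literals from only one or two of the three partition classes, which must be handled by making $c_j$ adjacent to the entire corresponding portal so that the Sperner argument still goes through vacuously.
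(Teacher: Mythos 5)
Your proposal takes essentially the same route as the paper's proof of Theorem~\ref{thm:lower-bound-diam-tw}: a polynomial-time reduction from \textsc{3-Partitioned-3-SAT} using Sperner-family-based validation portals of size $\OO(\log n)$, Set Identifying Gadgets forcing the $\bitrep$-vertices into every resolving set, variable and clause critical pairs with the budget $k$ equal to the forced gadget vertices plus $3n$, and the final parameter bounds ($\diam(G)=\OO(1)$, and $\tw(G)=\OO(\log n)$ because deleting the portals and gadget vertices leaves only tiny components), yielding the contradiction with Proposition~\ref{prop:3-SAT-to-3-Partition-3-SAT}.

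One claim needs repair: you cannot design a critical pair for variable $x^{\delta}_i$ that ``can only be separated by choosing one of $t^{\delta}_{2i}$ or $f^{\delta}_{2i-1}$,'' because in \mdfull any vertex of a pair that is itself placed in the resolving set trivially resolves that pair (distance $0$ versus positive). The paper therefore only forces one vertex from the four-element set $\{t^{\delta}_{2i}, f^{\delta}_{2i-1}, x^{\delta,\circ}_i, x^{\delta,\star}_i\}$ per variable, and in the backward direction (Lemma~\ref{lemma:3-Part-3-SAT-Met-Dim-diam-tw-backword}) treats the case where $x^{\delta,\circ}_i$ or $x^{\delta,\star}_i$ is selected as an arbitrary assignment of $x^{\delta}_i$. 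This is harmless precisely because the clause critical pairs $\langle c^{\circ}_q, c^{\star}_q\rangle$ can be resolved only by literal vertices of $A^{\delta}$: vertices of $X^{\delta}$ and all gadget vertices are equidistant from $c^{\circ}_q$ and $c^{\star}_q$ (their shortest paths pass through $\bitrepnullifier(V^{\delta})$ or through the clique part of $\bitrep(C)$), a fact established in Lemma~\ref{lemma:set-id-tw}. With this adjustment your budget accounting (``one of four'' rather than ``one literal'' per variable, which costs the same) and the rest of your argument, including the handling of clauses missing a variable from some part by joining $c^{\star}_q$ to the whole portal, go through exactly as in the paper.
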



The reduction in the proof of Theorem~\ref{thm:lower-bound-diam-tw}
takes as input an instance $\psi$ of \textsc{3-Partitioned-3-SAT}
on $3n$ variables and returns $(G, k)$ as an instance of \textsc{Metric Dimension}
such that $\tw(G) = \calO(\log(n))$ and $\diam(G) = \calO(1)$.
In the following, we mention a crude
outline of the reduction, omitting 
some technical details.
For the formal proof, please refer
to Section~\ref{sec:lower-bound-diam-treewidth-MD}.

\subsubsection{Reduction}

\begin{itemize}
\item 
We rename the variables in $X^{\alpha}$
to $x^{\alpha}_{i}$ for $i \in [n]$.
For every variable $x^{\alpha}_i$, we add a critical pair 
$\langle x^{\alpha, \circ}_i, x^{\alpha, \star}_i \rangle$ of vertices.
We denote $X^{\alpha} = \{x^{\alpha, \circ}_i, x^{\alpha, \star}_i\  |\ i \in [n] \}$.

\item For each variable $x^{\alpha}_i$,
we add the vertices $t^{\alpha}_{2i}$ and $f^{\alpha}_{2i - 1}$. 
Let $A^{\alpha} = \{t^{\alpha}_{2i},  f^{\alpha}_{2i - 1} |\ i \in [n] \}$.

\item 
For every $i \in [n]$,  
we add the edges $(x^{\alpha, \circ}_i, t^{\alpha}_{2i})$ and $(x^{\alpha, \circ}_i, f^{\alpha}_{2i-1})$
which will ensure that
any resolving set contains at least one vertex in 
$\{t^{\alpha}_{2i},  f^{\alpha}_{2i - 1}, x^{\alpha, \circ}_i, x^{\alpha, \star}_i\}$ for every $i \in [n]$.

\item 
Let $p$ be the smallest integer such that $2n \leq \binom{2p}{p}$.
In particular, $p=\OO(\log n)$.
Define $\setrep: [2n] \mapsto \calF_{p}$ as in \cref{subsec:technique}.

\item We add a \emph{validation portal},
a clique on $2p$ vertices,  denoted by 
$V^{\alpha} = \{v^{\alpha}_1, v^{\alpha}_2,  \dots, v^{\alpha}_{2p} \}$.

\item 
For each $i\in [n]$, we add the edge $(t^{\alpha}_{2i},  v^{\alpha}_{p'})$
for every $p' \in \setrep(2i)$.
Similarly, for each $i\in [n]$,
we add the edge $(f^{\alpha}_{2i - 1},  v^{\alpha}_{p'})$
for every $p' \in \setrep(2i - 1)$.
\end{itemize}

We repeat the above steps to construct
$X^{\beta},  A^{\beta},  V^{\beta}$, 
$X^{\gamma},  A^{\gamma},  V^{\gamma}$.

\begin{itemize}
\item For every clause $C_q$ in $\psi$, we introduce a pair 
$\langle c^{\circ}_q,  c^{\star}_q \rangle$ of vertices.
Let $C$ be the collection of vertices in such pairs.

\item We add edges across $C$ and the portals as follows.
Consider a clause $C_q$ in $\psi$ and the corresponding
critical pair $\langle c^{\circ}_q,  c^{\star}_q \rangle$ in $C$.
Let $\delta \in \{\alpha, \beta, \gamma\}$.
As $\psi$ is an instance of \textsc{$3$-Partitioned-$3$-SAT}, at most one variable in $X^{\delta}$ appears in $C_q$, say $x^{\delta}_{i}$ for some $i \in [n]$.
We add all edges of the form 
$(v^{\delta}_{p'},  c^{\circ}_q)$ for every $p' \in [2p]$.
If $x^{\delta}_{i}$ appears as a positive literal in $C_q$, then
we add the edge $(v^{\delta}_{p'},  c^{\star}_q)$
for every $p' \in [2p] \setminus \setrep(2i)$
(which corresponds to $t^{\delta}_{2i}$).
If $x^{\delta}_{i}$ appears as a negative literal in $C_q$, then
we add the edge $(v^{\delta}_{p'},  c^{\star}_q)$
for every $p' \in [2p] \setminus \setrep(2i  - 1)$
(which corresponds to $f^{\delta}_{2i - 1}$).
Note that if $x^{\delta}_{i}$ appears as a positive (negative, respectively) literal in $C_q$,
then the vertices in $V^{\delta}$ whose indices are in $\setrep(2i)$ ($\setrep(2i - 1)$, respectively) 
are \emph{not adjacent} to $c^{\star}_q$.
If no variable in $X^{\delta}$ appears in $C_q$, 
then we make each vertex in $V^{\delta}$ adjacent to both $c^{\circ}_q$ and $c^{\star}_q$.
\end{itemize}
For all the sets mentioned above,
we add vertices and edges
to identify them as shown in \cref{fig:reduction-overview-diam-tw-tech} (for each critical pair, both vertices share the same bit-representation in their Set Identifying Gadget).
This concludes the construction of $G$.
The reduction returns $(G,  k)$
as an instance of \textsc{Metric Dimension} 
for some appropriate value of $k$.

\begin{figure}[t]
    \centering
        \includegraphics[scale=1.1]{./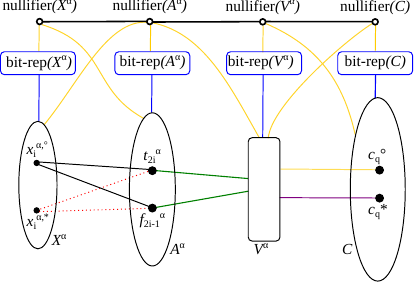}
    \caption{\textbf{Reduction for proof of Theorem~\ref{thm:lower-bound-diam-tw}.}
    For any $X\in \{X^{\alpha},A^{\alpha},V^{\alpha},C\}$, the yellow line between a vertex and a blue box containing $\bitrep(X)$ indicates that vertex is connected to every vertex in $\bitrep(X)\setminus \bits(X)$. The remainder of the yellow lines represent that vertex is connected to every vertex in the set the edge goes to.
    Green edges denote adjacencies with respect to $\setrep$,
    e.g.,  $t^{\alpha}_{2i}$ is adjacent to $v_j\in V^{\alpha}$ if $j\in \setrep(2i)$.
    Purple lines also indicate adjacencies with respect to $\setrep$, but in a complementary
    way, i.e., if $x_i\in c_q$, then, for every $p'\in [2p]\setminus \setrep(2i)$, we have $(v_{p'}^{\alpha}, c_q^{\star})\in E(G)$, and if $\overline{x}_i\in c_q$, then, for all $p'\in [2p]\setminus \setrep(2i-1)$, we have $(v_{p'}^{\alpha}, c_q^{\star})\in E(G)$.}
    \label{fig:reduction-overview-diam-tw-tech}
\end{figure}

\subsubsection{Correctness of the Reduction}

We give an informal description of the proof of correctness of the reverse direction
here.
Fix $\delta\in \{\alpha,\beta,\gamma\}$. For all $i\in [n]$, the only vertices that can resolve the critical pair 
$\langle x^{\delta, \circ}_i,  x^{\delta, \star}_i \rangle$
are the vertices in $\{x^{\delta, \circ}_i,  x^{\delta, \star}_i\} \cup \{t^{\delta}_{2i},  f^{\delta}_{2i - 1}\}$.
This fact and the budget $k$ ensure that any resolving set of $G$ contains 
exactly one vertex from 
$\{t^{\delta}_{2i},  f^{\delta}_{2i - 1}\} \cup \{x^{\delta, \circ}_i,  x^{\delta, \star}_i\}$ for all $i\in [n]$.
This naturally corresponds to
an assignment of the variable $x^{\delta}_i$ if a vertex from $\{t^{\delta}_{2i},  f^{\delta}_{2i - 1}\}$ is in the resolving set.
However, if a vertex from $\{x^{\delta, \circ}_i,  x^{\delta, \star}_i\}$ is in the resolving set, then we can 
see this as giving an arbitrary assignment to the variable $x^{\delta}_i$.
Suppose the clause $C_q$ contains the variable $x^{\delta}_{i}$ as a positive literal.
By the construction, 
every vertex in $V^{\delta}$ that is adjacent to $t^{\delta}_{2i}$ is \emph{not}
adjacent to $c_q^{\star}$.
However, $c_q^{\circ}$ is adjacent to every vertex in $V^{\delta}$.
Hence, $d(t^{\delta}_{2i},c_q^{\circ})=2$, whereas 
$d(t^{\delta}_{2i},c_q^{\star})>2$.
Thus, $t^{\delta}_{2i}$ resolves the critical pair $\langle c_q^{\circ}, c_q^{\star}\rangle$.
Consider any other vertex in $A^{\delta}$, say $t^{\delta}_{2j}$.
Since  $\setrep(2i)$ is not a subset of $\setrep(2j)$ (as both have the same cardinality),
there is at least one integer,  say $p'$, in $\setrep(2j) \setminus \setrep(2i)$. 
The vertex $v^{\delta}_{p'} \in V^{\delta}$ is adjacent to $t^{\delta}_{2j}$,  $c_q^{\circ}$,
and $c_q^{\star}$.
Hence,  $t^{\delta}_{2j}$ cannot resolve the critical pair $\langle c_q^{\circ}, c_q^{\star}\rangle$
as both these vertices are at distance $2$ from it.
Also, as $\psi$ is an instance of \textsc{$3$-Partitioned-$3$-SAT},
$C_q$ contains at most one variable in $X^{\delta}$, which is $x^{\delta}_i$ in this case.
This also helps to encode the fact that at most one
vertex from $A^{\delta}$ should be able to resolve
the critical pair $\langle c_q^{\circ}, c_q^{\star}\rangle$.
Since vertices in $X^{\delta}$ cannot resolve critical pairs
$\langle c^{\circ}_q,  c^{\star}_q \rangle$ in $C$, then
finding a resolving set in $G$ corresponds to finding a satisfying
assignment for $\psi$.

\subsubsection{Lower Bounds Obtained from the Reduction}

Let $Z=\{V^{\delta}\cup X^+~|~X \in \{X^{\delta}, A^{\delta},  V^{\delta}, C\}, \delta \in \{\alpha, \beta, \gamma\}\}$. Note that $|Z| = \calO(\log (n))$
and $G - Z$ is a collection of $P_3$'s and isolated vertices.
Hence,  $\tw(G)$, $\fvs(G)$, and $\td(G)$ are upper bounded by~$\calO(\log (n))$.
Also, $G$ has constant diameter.
Thus, if there is an algorithm for \textsc{Metric Dimension} 
that runs in time~$2^{f(\diam)^{o(\tw)}}$ (or $2^{f(\diam)^{o(\fvs)}}$ or $2^{f(\diam)^{o(td)}}$),
then there is an algorithm solving \textsc{$3$-Partitioned-$3$-SAT} in time~$2^{o(n)}$, contradicting the \ETH.

\section{Related Work}\label{sec:related-work}

\subsection{Double-Exponential Lower Bounds}\label{sec:related-work-double-exp}

It is long known that certain algorithmic tasks cannot be solved in
less than double-exponential time. In the realm of classical
complexity, this is captured by the complexity class
2-EXPTIME, with some problems that are complete for that class, for
example, \textsc{Presburger Arithmetic}~\cite{10.1007/978-3-7091-9459-1_5},
the \textsc{Asynchronous Reactive Module Synthesis} arising from linear temporal
logic~\cite{DBLP:conf/icalp/PnueliR89} or \textsc{Planning with Partial Observability}~\cite{DBLP:conf/aips/Rintanen04a}.

In the realm of parameterized complexity as well, it is known that
certain problems have a fixed-parameter-tractable running time that
requires a double-exponential dependency in the parameter, such as
\textsc{Edge Clique Cover}~\cite{DBLP:journals/siamcomp/CyganPP16},
\textsc{Multi-Team Formation}~\cite{DBLP:conf/fsttcs/Lokshtanov0SX21},
\textsc{Modal
  Satisfiability}~\cite{DBLP:journals/algorithmica/AchilleosLM12}, and
\textsc{Distinct Vectors}~\cite{DBLP:journals/dmtcs/PilipczukS20}.

When it comes to structural parameterized algorithms, treewidth is
one of the main success stories of the field, as many problems are
\FPT\ when parameterized by the treewidth of the input. This is
witnessed by the famous theorem of Courcelle that states that all problems
expressible in MSOL can be solved by a linear-time \FPT\
algorithm~\cite{Courcelle90}, and by a similar result of Chen for the
{\sc Quantified SAT} and {\sc Quantified CSP}
problems~\cite{DBLP:conf/ecai/Chen04} (here, we consider the treewidth
of the primal graph of the input relational structure). Unfortunately,
the dependence in treewidth is notoriously ``galactic'': a tower of
exponentials whose height depends on the number of quantifier
alternations in the MSOL formula, and in the SAT instance,
respectively. Moreover, Chen~\cite{DBLP:conf/ecai/Chen04} showed that
the height of this tower is equal to the number of quantifier
alternations in the {\sc Quantified SAT} instance.

However, note that {\sc Quantified SAT} and \textsc{MSO Model
  Checking on Trees} (even for FO formulas) are PSPACE-complete
problems~\cite{arora2009computational,DBLP:journals/jacm/FrickG01}. There
are few natural problems that have been shown to admit (at least)
double-exponential lower bounds with respect to treewidth. The
$\Pi^p_2$-complete \textsc{$k$-Choosability} problem and the
$\Sigma_3^p$-complete \textsc{$k$-Choosability Deletion} problem admit
a double-exponential and a triple-exponential lower bound in
treewidth~\cite{MM16}, respectively. Double-exponential lower bounds
for the $\Sigma_2^p$-complete $\exists \forall${\sc SAT} and
$\Pi^p_2$-complete $\forall \exists${\sc SAT} problems were shown
in~\cite{LM17}. 
Recently, the $\Sigma^2_p$-complete problems \textsc{Cycle HitPack} and \textsc{$H$-HitPack}, for a fixed graph $H$, were shown to admit tight algorithms that are double-exponential in the treewidth~\cite{DBLP:journals/corr/abs-2402-14927}.
Further, the $\Sigma^2_p$-complete problem \textsc{Core Stability} was shown to admit a tight double-exponential lower bound in the treewidth, even on graphs of bounded degree~\cite{HKL24}.
Lastly, the $\#$\NP-complete counting problem
\textsc{Projected Model Counting} admits a double-exponential lower
bound in
$\tw$~\cite{DBLP:journals/ai/FichteHMTW23,DBLP:conf/sat/FichteHMW18}. Similar
lower bounds were obtained in~\cite{DBLP:conf/sat/LampisMM18} for problems from artificial
intelligence (abstract argumentation, abduction, circumscription, and
the computation of minimal unsatisfiable sets in unsatisfiable
formulas), with these problems also lying in the second level of the polynomial
hierarchy.

With respect to the vertex cover parameter (of the primal graph of the
input relational structure), the $\Sigma^p_2$-complete problem
\textsc{$\exists \forall$-CSP} also requires a double-exponential
dependency in its running time~\cite{LM17}.

A double-exponential lower bound is also known for \textsc{Coloring}
with respect to the smaller parameter
\emph{cliquewidth}~\cite{DBLP:journals/talg/FominGLSZ19}. However, in contrast
with the aforementioned ones, this problem is only \XP\ and not \FPT\ parameterized by the cliquewidth.
As another example, 
we know that \textsc{ILP Feasibility} admits a double-exponential
lower bound when parameterized by the \emph{dual treedepth} of the 
input matrix~\cite{DBLP:journals/toct/KnopPW20}.

%

\subsection{Metric Graph Problems}\label{sec:related-work-metric}

Metric graph problems are defined using either distance values or
shortest paths in the graph. Metric-based graph problems are
ubiquitous in computer science, for example, the classic
\textsc{(Single-Source) Shortest Path}, \textsc{(Graphic) Traveling
  Salesperson Problem} or \textsc{Steiner Tree} fall into this
category. Those are fundamental problems, often stemming from
applications in network design, for which a lot of algorithmic research
has been done. Among these, metric-based graph packing and
covering problems such as, for example, \textsc{Distance
  Domination}~\cite{JKST19} or \textsc{Scattered Set}~\cite{KLP22},
have recently gained a lot of attention. Their non-local nature leads
to non-trivial algorithmic properties that differ from most classic
graph problems with a more local nature. This is the case in
particular for treewidth-based algorithms. In this paper, we focus on
three problems arising from network design (\gsfull) and network
monitoring (\mdfull and \smdfull). These problems have 
far-reaching applications, as exemplified by, e.g., the recent work~\cite{BDM23}, in which
it was shown that enumerating minimal solution sets for the metric dimension problem in (general) graphs and the geodetic set
problem in split graphs is equivalent to enumerating minimal transversals of hypergraphs, arguably
the most important open problem in algorithmic enumeration.

\subparagraph*{\mdfull.}
\mdfull was introduced in the 1970s independently by Harary and
Melter~\cite{HM76} and Slater~\cite{Slater75} as a network monitoring
problem. \mdfull and its variants (see, e.g.,~\cite{BGLM08, BMM+19,
  ERY15, FGLTY22, harary1993, KarpovskyCL98, sebo04, Slater87}) are
very well-studied and have numerous applications such as in graph
isomorphism testing~\cite{B80}, network discovery~\cite{BEE+06}, image
processing~\cite{MT84}, chemistry~\cite{J93}, graph
reconstruction~\cite{DBLP:conf/esa/Mathieu021} or
genomics~\cite{TL19}. In fact, \mdfull was first shown to be
\NP-complete in general graphs in Garey and Johnson's
book~\cite{GJ79}, and this was later extended to
unit disk graphs~\cite{HW13}, split graphs, bipartite graphs, co-bipartite graphs, 
and line graphs of bipartite graphs~\cite{ELW15}, bounded-degree planar 
graphs~\cite{DiazPSL17}, and interval and permutation graphs of 
diameter~$2$~\cite{FoucaudMNPV17b}.
On the tractable side, \mdfull admits linear-time algorithms on 
trees~\cite{Slater75}, cographs~\cite{ELW15}, {chain graphs~\cite{FHHMS15}}, cactus block 
graphs~\cite{HEW16}, and bipartite distance-hereditary graphs~\cite{M22}, and a polynomial-time algorithm on
outerplanar graphs~\cite{DiazPSL17}.

Due to the \NP-hardness results, the focus has now
shifted to studying its parameterized complexity, in search of
tractable instances. In a seminal paper, it was proven that \mdfull is
\W[2]-hard parameterized by the solution size $k$, even in subcubic
bipartite graphs~\cite{HartungN13}. This paper was the driving
motivation behind the subsequent meticulous study of \mdfull under
structural parameterizations. Several different parameterizations have
been studied for this problem, that we now elaborate on (see also~\cite[Figure~1]{GKIST23}).

In terms of structural parameterizions for \mdfull, 
through careful design, kernelization, and/or meta-results, 
it was proven that there is an \XP\ algorithm parameterized by the 
feedback edge set number in~\cite{ELW15}, and \FPT\ algorithms 
parameterized by the max leaf number in~\cite{E15}, the modular-width 
and the treelength plus the maximum degree in~\cite{BelmonteFGR17}, 
the treedepth and the clique-width plus the diameter in~\cite{GHK22}, 
and the distance to cluster (co-cluster, respectively) in~\cite{GKIST23}. 
Recently, an \FPT\ algorithm parameterized by the treewidth in chordal 
graphs was given in~\cite{BDP23}. 
On the negative side, \mdfull is \W[1]-hard parameterized by the pathwidth 
on graphs of constant degree~\cite{BP21}, para-\NP-hard parameterized 
by the pathwidth~\cite{LM21}, and \W[1]-hard parameterized by 
the combined parameter feedback vertex set number plus pathwidth~\cite{GKIST23}. Lastly, it is not computable (unless the \ETH\ fails) in time $2^{o(n)}$ on bipartite graphs, and in time $2^{o(\sqrt{n})}$ on planar bipartite graphs~\cite{BIT20}.

\subparagraph*{\smdfull.}

Albeit less well-studied than \mdfull, the \smdfull problem, which was
introduced by Seb\H{o} and Tannier in 2004~\cite{sebo04} as a
strengthening of \mdfull, enjoys interesting applications in
coin-weighing problems and other areas of algorithms and
combinatorics. Here, we are given a graph $G$ and an integer $k$, and
we are seeking a solution set $S$ of size at most $k$, called a
\emph{strong resolving set}, such that, for any pair of vertices
$u,v\in V(G)$, there exists a vertex $w\in S$ such that either $u$
lies on some shortest path between $v$ and $w$, or $v$ lies on some
shortest path between $u$ and $w$. The size of the smallest such set
$S$ is called the \emph{strong metric dimension} of $G$. It is clear
from the definition that any strong resolving set is also a resolving
set.

In the seminal paper
introducing the problem, it was used to design an efficient algorithm
for the graph problem \textsc{Connected Join Existence}~\cite{sebo04}.
Interestingly, it was shown in~\cite{OP07} that the problem on an
instance $(G,k)$ can be reduced (in polynomial time) to an instance
$(G',k)$ of \textsc{Vertex Cover} where $V(G)=V(G')$ and the edges of
$G'$ join a set of suitably defined critical pairs (see
also~\cite{DBLP:journals/dam/KuziakPRY18} for further studies of this
reduction). Consequently, algorithmic results known for \textsc{Vertex
  Cover} can be applied to \smdfull: in particular, the problem is
\FPT\ when parameterized by the solution size. On the other hand, it was
shown in~\cite{DBLP:journals/dam/DasGuptaM17} that many hardness
results known for \textsc{Vertex Cover} can also be transferred to
\smdfull.

\subparagraph*{\gsfull.}

\gsfull was introduced in 1993 by Harary, Loukakis, and Tsouros
in~\cite{harary1993}. It can be seen as a network design problem,
where one seeks to determine the optimal locations of public
transportation hubs in a road network, while minimizing the total
number of such hubs~\cite{floCALDAM20}.
Other applications are mentioned in~\cite{ekim2012}. More generally,
\gsfull is part of the area of geodesic convexity in graphs: see, e.g., the paper~\cite{farber1986} or the book~\cite{bookGC}.

As is often the case with metric-based problems, \gsfull is
computationally hard, even for very structured graphs. Its \NP-hardness
was claimed in the seminal paper~\cite{harary1993} (see~\cite{JCMCC96}
for the earliest explicit proofs). This is known to hold even for
graphs that belong to various structured input graph classes, such as
interval graphs~\cite{floISAAC20}, co-bipartite
graphs~\cite{ekim2012}, line graphs~\cite{floCALDAM20}, graphs of
diameter~2~\cite{floCALDAM20}, and subcubic (planar bipartite) grid
graphs of arbitrarily large girth~\cite{floISAAC20,DBLP:journals/tcs/ChakrabortyGR23} (see
also~\cite{atici2002,bueno2018,dourado2008,dourado2010} for various
earlier hardness results).  \gsfull can be solved in polynomial time
on split graphs~\cite{dourado2010,JCMCC96} and, more generally,
well-partitioned chordal graphs~\cite{wellpart}, outerplanar
graphs~\cite{mezzini2018}, ptolemaic graphs~\cite{farber1986},
cographs~\cite{dourado2010} and, more generally, distance-hereditary
graphs~\cite{dh}, block-cactus graphs~\cite{ekim2012}, solid grid graphs~\cite{floISAAC20,DBLP:journals/tcs/ChakrabortyGR23}, and proper
interval graphs~\cite{ekim2012}.

The parameterized complexity of \gsfull was first addressed by
Kellerhals and Koana in~\cite{KK22}. They observed that the reduction
from~\cite{dourado2010} implies that the problem is \W[2]-hard when 
parameterized by the solution size (even for chordal
bipartite graphs). The above-mentioned hardness results for structural
graph classes motivated the authors of~\cite{KK22} to investigate
structural parameterizations of \gsfull. They proved the problem to
be W[1]-hard for the parameters solution size, feedback vertex set
number, and pathwidth, combined~\cite{KK22}. On the positive side,
they showed that \gsfull is \FPT\ for the parameters treedepth,
modular-width (more generally, clique-width plus diameter), and
feedback edge set number~\cite{KK22}. The problem is also \FPT\ on
chordal graphs when parameterized by the treewidth~\cite{floISAAC20}.




The approximability of \gsfull was also studied. Its
minimization variant is \NP-hard to approximate within a factor of
$o(\log n)$, even for diameter~2 graphs~\cite{floCALDAM20} and
subcubic bipartite graphs of arbitrarily large girth~\cite{DIT21}. It
can be approximated in polynomial time within a factor of $n^{1/3}\log
n$~\cite{floCALDAM20} (but the best possible approximation factor is
unknown).



\section{Preliminaries}
\label{sec:preliminaries}

In this paper, all logarithms are to the base $2$. For an integer $a$, we let $[a] = \{1,\ldots,a\}$.

\subparagraph*{Graph theory.} 
We use standard graph-theoretic notation and refer the reader 
to~\cite{D12} for any undefined notation. 
For an undirected graph $G$, the sets $V(G)$ and $E(G)$ denote its 
set of vertices and edges, respectively.
Two vertices $u,v\in V(G)$ are {\it adjacent} or {\it neighbors} if 
$(u, v)\in E(G)$. 
The {\it open neighborhood} of a vertex $u\in V(G)$, denoted by $N(u):=N_G(u)$, is the set of vertices that are neighbors of $u$. 
The {\it closed neighborhood} of a vertex $u\in V(G)$ is denoted by $N[u]:=N_G[u]:=N_G(u)\cup \{u\}$.
For any $X \subseteq V(G)$ and $u\in V(G)$, $N_X(u) = N_G(u) \cap X$. 
Any two vertices $u,v\in V(G)$ are {\it true twins} if $N[u] = N[v]$, and are {\it false twins} if $N(u) = N(v)$. 
Observe that if $u$ and $v$ are true twins, then $(u,v) \in E(G)$, but if they are only false twins, then $(u,v) \not \in E(G)$.
For a subset $S$ of $V(G)$, we say that the vertices in $S$ are true (false, respectively) twins if, for any $u,v\in S$, $u$ and $v$ are true (false, respectively) twins.
The {\it distance} between two vertices $u,v\in V(G)$ in $G$, denoted by $d(u,v):=d_G(u,v)$, is the length of a $(u,v)$-shortest path in $G$. 
For a subset $S$ of $V(G)$, we define $N[S] = \bigcup_{v \in S} N[v]$ and $N(S) = N[S] \setminus S$.
For a subset $S$ of $V(G)$, we denote the graph obtained by deleting $S$ from $G$ by $G - S$.
We denote the subgraph of $G$ induced on the set $S$ by $G[S]$.
For a graph $G$, a set $X \subseteq V(G)$ is a \emph{vertex cover} of $G$ if $V(G) \setminus X$ is an independent set. We denote by $\vc(G)$  the size of a minimum vertex cover in $G$. When $G$ is clear from the context, we simply say $\vc$.
For a graph $G$, a set $X \subseteq V(G)$ is a \emph{feedback vertex set} of $G$ if $V(G) \setminus X$ is an acyclic graph. We define the notation of \emph{the feedback vertex set number} in the analogous way.

\subparagraph*{Tree decompositions.} A \emph{tree decomposition} of a graph $G$ is a pair $(T,\mathcal{X})$, where $T$ is a tree and $\mathcal{X}:=\{X_i:i\in V(T)\}$ is a collection of subsets of $V(G)$, called \emph{bags}, satisfying the following conditions: (i)~$\bigcup_{i\in V(T)}X_i=V(G)$, (ii)~for every edge $(u,v)\in E(G)$, there is a bag that contains both $u$ and $v$, and (iii)~for every vertex $v\in V(G)$, the set of nodes of $T$ whose bags contain $v$ induces a (connected) subtree of $T$.

The maximum size of a bag minus one is called the \emph{width} of $T$. The minimum width of a tree decomposition of $G$ is the \emph{treewidth} of $G$.

We consider a {\em rooted} tree decomposition by fixing a root of $T$ and orienting the tree edges from the root toward the leaves. A rooted tree decomposition is \emph{nice} (see~\cite{niceTW}) if each node $i$ of $T$ has at most two children and falls into one of the four types: 
\begin{itemize}
\item {\em Join} node: $i$ has exactly two children $i_1$ and $i_2$ with $X_i=X_{i_1}=X_{i_2}$.
\item {\em Introduce} node: $i$ has a unique child $i'$ with $X_{i'}=X_{i}\setminus \{v\}$, where $v\in V(G) \setminus X_{i'}$.
\item {\em Forget} node: $i$ has a unique child $i'$ with $X_i=X_{i'}\setminus \{v\}$, where $v\in X_{i'}$.
\item {\em Leaf} node: $i$ is a leaf of $T$ with $|X_i|=1$.
\end{itemize}
For a node $i$ of $T$, we denote by $T_i$ the subtree of $T$ rooted at $i$, and by $G_i$, the subgraph of $G$ induced by the vertices of the bags in $T_i$.

For a graph $G$, a set $S \subseteq V(G)$ is a \emph{separator} for two non-adjacent vertices $x,y \in V(G)$ if $x$ and $y$ belong to two different connected components of $G-X$.

\subparagraph*{Parameterized Complexity.}
An instance of a parameterized problem $\Pi$ comprises an input $I$, which is an input of the classical instance of the problem, and an integer $\ell$, which is called the parameter.
A problem $\Pi$ is said to be \emph{fixed-parameter tractable} or in \FPT\ if given an instance $(I,\ell)$ of $\Pi$, we can decide whether or not $(I,\ell)$ is a \yes-instance of $\Pi$ in  time $f(\ell)\cdot |I|^{\OO(1)}$,
for some computable function $f$ whose value depends only on $\ell$. 

A {\em kernelization} algorithm for $\Pi$ is a polynomial-time algorithm that takes as input an instance $(I,\ell)$ of $\Pi$ and returns an {\em equivalent} instance $(I',\ell')$ of $\Pi$ with $|I'|, \ell' \leq f(\ell)$, where $f$ is a function that depends only on the initial parameter $\ell$. If such an algorithm exists for $\Pi$, we say that $\Pi$ admits a kernel of {\em size} $f(\ell)$. If $f$ is a polynomial or exponential function of $\ell$, we say that $\Pi$ admits a polynomial or exponential kernel, respectively. 
If $\Pi$ is a graph problem, then $I$ contains a graph, say $G$, and $I'$ contains a graph, say $G'$. In this case, we say that $\Pi$ admits a kernel with $f(\ell)$ vertices if the number of vertices of $G'$ is at most $f(\ell)$. 

It is typical to describe a kernelization algorithm as a series of reduction rules.
A \emph{reduction rule} is a polynomial time algorithm that takes as an input an instance of a problem and outputs another (usually reduced) instance.
A reduction rule said to be \emph{applicable} on an instance if the output instance is different from the input instance.
A reduction rule is \emph{safe} if the input instance is a \yes-instance if and only if the output instance is a \yes-instance.
For more on parameterized complexity and related terminologies, we refer the reader to the recent book by Cygan et al.~\cite{DBLP:books/sp/CyganFKLMPPS15}.

\subparagraph*{(Strong) Metric Dimension.}
A subset of vertices $S\subseteq V(G)$ {\it resolves} a pair of vertices $u,v\in V(G)$ if there exists a vertex $w \in S$ such that $d(w,u)\neq d(w,v)$.
A subset of vertices $S\subseteq V(G)$ is a {\it resolving set} of $G$ if it resolves all pairs of vertices $u,v\in V(G)$.
A vertex $u\in V(G)$ is {\it distinguished} by a subset of vertices $S\subseteq V(G)$ if, for any $v\in V(G)\setminus \{u\}$, there exists a vertex $w\in S$ such that $d(w,u)\neq d(w,v)$.
For an ordered subset of vertices $S=\{s_1,\dots,s_k\}\subseteq V(G)$ and a single vertex $u\in V(G)$, the {\it distance vector} of $S$ with respect to $u$ is $r(S|u):=(d(s_1,u),\dots,d(s_k,u))$.
The next observation is used throughout the paper.

\begin{observation}\label{obs:twins}
Let $G$ be a graph. 
For any (true or false) twins $u,v \in V(G)$ and any $w \in V(G) \setminus \{u,v\}$, $d(u,w) = d(v,w)$, 
and so, for any resolving set $S$ of $G$, $S\cap \{u,v\} \neq \emptyset$.
\end{observation}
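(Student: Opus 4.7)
The plan is to prove the distance equality first and then deduce the resolving-set statement as an immediate corollary. I would split into the two twin cases and, in each, argue by taking a shortest path from one of the twins and transforming it into a path of the same length starting at the other.

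First, suppose $u$ and $v$ are true twins, so $N[u]=N[v]$, and let $w \in V(G)\setminus\{u,v\}$. Consider a shortest $u$-$w$ path $P = (u, x_1, x_2, \ldots, x_k = w)$ of length $k = d(u,w)$. I claim $x_1 \neq v$: otherwise, $x_2 \in N(v) = N[u]\setminus\{v\}$ would also be a neighbor of $u$, so $(u, x_2, \ldots, x_k)$ would be a strictly shorter $u$-$w$ walk, contradicting the minimality of $P$ (the case $k = 1$ forces $x_1 = w \neq v$ directly). Hence $x_1 \in N(u)\setminus\{v\} \subseteq N(v)$, and so $(v, x_1, \ldots, x_k)$ is a $v$-$w$ walk of length $k$, giving $d(v,w) \leq d(u,w)$. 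The symmetric argument yields $d(u,w) = d(v,w)$. The false twin case is even cleaner: since $N(u) = N(v)$ and $v \notin N(v)$, the first vertex $x_1$ on any shortest $u$-$w$ path satisfies $x_1 \in N(u) = N(v)$ and $x_1 \neq v$, so again $(v, x_1, \ldots, x_k)$ is a $v$-$w$ path of length $d(u,w)$, and symmetry closes the argument.

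For the second assertion, suppose toward a contradiction that $S$ is a resolving set of $G$ with $S \cap \{u,v\} = \emptyset$. Then every $w \in S$ lies in $V(G)\setminus\{u,v\}$, so the first part gives $d(u,w) = d(v,w)$ for all $w \in S$. Thus no vertex of $S$ resolves the pair $\{u,v\}$, contradicting that $S$ is a resolving set.

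I do not expect any real obstacle here: the only subtlety is noticing that when manipulating a shortest $u$-$w$ path in the true twin case, one must rule out the possibility that $v$ is the successor of $u$ on the path, which is handled by a one-line shortcutting argument using $N[u] = N[v]$. Everything else is immediate from the definitions of (true/false) twins and of a resolving set.
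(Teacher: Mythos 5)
Your proof is correct and follows essentially the same route as the paper's: both arguments rest on the fact that a shortest path from either twin to $w$ must leave through the common neighborhood $N(u)\setminus\{v\}=N(v)\setminus\{u\}$, giving $d(u,w)=d(v,w)$, from which the resolving-set claim is immediate. You merely spell out the shortcutting detail (ruling out $v$ as the successor of $u$ in the true-twin case) that the paper's one-line proof leaves implicit.
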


\begin{proof}
As $w\in V(G) \setminus \{u,v\}$, and $u$ and $v$ are (true or false) twins, the shortest $(u,w)$- and $(v,w)$-paths contain a vertex of $N:=N(u)\setminus\{v\}=N(v)\setminus\{u\}$, and $d(u,w)=d(v,w)$. Hence, any resolving set $S$ of $G$ contains at least one of $u$ and $v$.
\end{proof}

A vertex $s \in V(G)$ \emph{strongly resolves} a pair of vertices $u,v \in V(G)$ if there exists a shortest path from $u$ to $s$ containing $v$, or a shortest path from $v$ to $s$ containing $u$. A subset $S \subseteq V(G)$ is a \emph{strong resolving set} is every pair of vertices in $V(G)$ is strongly resolved by a vertex in $S$.

\subparagraph*{Geodetic Set.}
A subset $S \subseteq V(G)$ is a \emph{geodetic set} if for every $u \in V(G)$, the following holds: there exist $s_1,s_2 \in S$ such that $u$ lies on a shortest path from $s_1$ to $s_2$.
The following simple observation is used throughout the paper. Recall that a vertex is \emph{simplicial} if its neighborhood forms a clique.

\begin{observation}[\cite{CHZ02}]\label{obs:simplicial}
If a graph $G$ contains a simplicial vertex $v$, then $v$ belongs to any geodetic set of $G$.
\end{observation}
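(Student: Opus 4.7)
The plan is to prove the contrapositive: if $v$ is a simplicial vertex of $G$ and $v \notin S$ for some set $S \subseteq V(G)$, then $S$ is not a geodetic set. Equivalently, I will show that a simplicial vertex cannot be an interior vertex of any shortest path, so the only way for the geodetic condition to cover $v$ is to include $v$ in $S$ itself.

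First, suppose toward contradiction that $v$ is simplicial but there exists a geodetic set $S$ of $G$ with $v \notin S$. By the definition of a geodetic set, there must exist $s_1, s_2 \in S$ such that $v$ lies on some shortest $(s_1,s_2)$-path $P$. Since $v \notin S$, we have $v \neq s_1$ and $v \neq s_2$, so $v$ is an \emph{internal} vertex of $P$. Let $u_1$ and $u_2$ denote the two neighbors of $v$ along $P$.

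Now I invoke the simplicial property: because $N(v)$ induces a clique, the edge $(u_1,u_2)$ belongs to $E(G)$. Therefore, replacing the subpath $u_1, v, u_2$ of $P$ by the single edge $u_1 u_2$ yields an $(s_1,s_2)$-walk whose length is exactly one less than the length of $P$. This contradicts the assumption that $P$ was a shortest $(s_1,s_2)$-path, completing the proof.

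There is no real obstacle here: the whole argument rests on the one-line observation that an internal vertex of a shortest path can never have two adjacent neighbors on that path. The only minor bookkeeping point is to justify that $v$ is strictly internal to $P$ (which follows immediately from $v \notin S$), so that both neighbors $u_1$ and $u_2$ are well defined and lie in $N(v)$.
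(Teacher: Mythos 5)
Your proof is correct and is essentially the same argument as the paper's: the paper's one-line proof ("$v$ does not belong to any shortest path between any pair of vertices distinct from $v$") rests on exactly the shortcut you spell out, namely that an internal vertex of a shortest path has two non-adjacent neighbours, which is impossible for a simplicial vertex. You have merely written out the contradiction in full detail.
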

\begin{proof}
Observe that $v$ does not belong to any shortest path between any pair $x,y$ of vertices (both distinct from $v$).
\end{proof}

This gives the following observation as an immediate corollary.

\begin{observation}\label{obs:pendant}
  If a graph $G$ contains a degree-$1$ vertex $v$, then $v$ belongs to any geodetic set of $G$.
\end{observation}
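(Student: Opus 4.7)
The plan is to deduce Observation~\ref{obs:pendant} as an immediate corollary of Observation~\ref{obs:simplicial}. First, I would observe that any degree-$1$ vertex $v$ is simplicial: indeed, $N(v)$ consists of a single vertex, and any one-vertex set trivially induces a clique (the condition for being a clique is vacuously satisfied on a single vertex, since there are no pairs of distinct vertices in $N(v)$ to check for adjacency).

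Having established simpliciality, I would then invoke Observation~\ref{obs:simplicial} to conclude that $v$ must lie in every geodetic set of $G$. For completeness, I could also give a direct self-contained argument, mirroring the proof of Observation~\ref{obs:simplicial}: if $v$ had degree $1$ with unique neighbor $u$, then any shortest path from a vertex $x \neq v$ to a vertex $y \neq v$ that passed through $v$ would have to use the edge $uv$ twice (once to enter $v$ and once to leave), which is impossible for a shortest path. Hence $v$ does not lie on any shortest path between two other vertices, so $v$ itself must be included in any geodetic set.

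Since this is essentially a one-line deduction, there is no real obstacle. The only thing to be careful about is the degenerate case where $|V(G)| = 1$ or $|V(G)| = 2$; in the former, $v$ has no neighbors at all so the statement about degree-$1$ vertices does not apply, and in the latter, the only geodetic set must contain both vertices (including $v$), which is consistent with the claim. So the observation holds without further case analysis.
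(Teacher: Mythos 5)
Your proof is correct and matches the paper's approach: the paper also derives this observation as an immediate corollary of Observation~\ref{obs:simplicial}, noting that a degree-$1$ vertex is trivially simplicial. Your additional direct argument and the remark on degenerate cases are fine but not needed.
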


\subparagraph*{\textsc{3-Partitioned-3-SAT}.}
Most of our lower bound proofs consist of reductions from the \textsc{3-Partitioned-3-SAT} problem, a version of \textsc{3-SAT} introduced
in~\cite{DBLP:journals/corr/abs-2302-09604} and defined as follows.

\defproblem{\textsc{3-Partitioned-3-SAT}}{A formula $\psi$ in $3$-\textsc{CNF} form,
together with a partition of the set of its variables into three
disjoint sets $X^{\alpha}$, $X^{\beta}$, $X^{\gamma}$,
with $|X^{\alpha}| = |X^{\beta}| = |X^{\gamma}| = n$, and
such that no clause contains more than one variable from each of
$X^{\alpha},  X^{\beta}$, and $X^{\gamma}$.}{Determine whether $\psi$ is satisfiable.}

The authors of~\cite{DBLP:journals/corr/abs-2302-09604} also proved the following.

\begin{proposition}[{\cite[Theorem 3]{DBLP:journals/corr/abs-2302-09604}}]
\label{prop:3-SAT-to-3-Partition-3-SAT-lampis}
Unless the \ETH\ fails,  \textsc{3-Partitioned-3-SAT} does not admit
an algorithm running in time $2^{o(n)}$.
\end{proposition}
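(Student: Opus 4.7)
The plan is to give a polynomial-time reduction from \textsc{3-SAT} on $n$ variables to \PSAT\ with partitions of size $n$ each; combined with the \ETH\ lower bound of $2^{\Omega(n)}$ for \textsc{3-SAT}, this immediately yields the claimed $2^{\Omega(n)}$ lower bound for \PSAT.

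Given an instance $\phi$ of \textsc{3-SAT} on variables $x_1, \dots, x_n$, I would first triplicate the variable set: introduce fresh copies $x_i^{\alpha}, x_i^{\beta}, x_i^{\gamma}$ for every $i \in [n]$, and let $X^{\delta} := \{x_i^{\delta} : i \in [n]\}$ for each $\delta \in \{\alpha, \beta, \gamma\}$. For every clause $C = (\ell_1 \vee \ell_2 \vee \ell_3)$ of $\phi$ in which $\ell_j$ is a literal on the original variable $x_{i_j}$, I would form a corresponding clause of $\psi$ by replacing each $\ell_j$ with its analogous literal on $x_{i_j}^{\delta_j}$, choosing $\delta_1, \delta_2, \delta_3 \in \{\alpha,\beta,\gamma\}$ to be pairwise distinct. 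Since every clause involves at most three distinct variables and there are three partitions available, such a choice is always possible, and each clause of $\psi$ then contains at most one variable from each of $X^{\alpha}, X^{\beta}, X^{\gamma}$.

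Next, to force the three copies of every original variable to agree in any satisfying assignment, I would add, for each $i \in [n]$, the consistency clauses $(\neg x_i^{\alpha} \vee x_i^{\beta})$, $(x_i^{\alpha} \vee \neg x_i^{\beta})$, $(\neg x_i^{\beta} \vee x_i^{\gamma})$, and $(x_i^{\beta} \vee \neg x_i^{\gamma})$. Each of these has exactly two literals drawn from two distinct partitions, so the partition restriction is respected. Correctness is then immediate: any satisfying assignment of $\phi$ lifts to $\psi$ by setting $x_i^{\alpha} = x_i^{\beta} = x_i^{\gamma} := x_i$, and conversely, in any satisfying assignment of $\psi$, the consistency clauses force equality of the three copies for every $i$, inducing a satisfying assignment of $\phi$.

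The reduction runs in polynomial time and outputs an instance of \PSAT\ whose partitions have size exactly $n$. Consequently, any algorithm solving \PSAT\ in time $2^{o(n)}\cdot |\psi|^{\OO(1)}$ would solve \textsc{3-SAT} on $n$ variables in time $2^{o(n)}\cdot \mathrm{poly}(n,m)$, contradicting \ETH. The only design point requiring any care is the shape of the consistency gadgets: a direct $(x_i^{\alpha} \leftrightarrow x_i^{\delta})$ encoding via a single 3-literal clause could not simultaneously respect the ``at most one variable per partition'' constraint, so I split each biconditional into two 2-literal clauses across two different partitions. This is the main (though modest) obstacle of the reduction, and it is handled cleanly by the fact that 2-clauses are valid \textsc{3-CNF} clauses and involve exactly the right pattern of partitions.
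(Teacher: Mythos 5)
Your proposal is correct and follows essentially the same route as the paper's own proof: triplicate each variable into the three parts, distribute the literals of each clause across distinct parts, and add $2$-literal consistency clauses forcing the three copies to agree, so that a $2^{o(n)}$ algorithm for \textsc{3-Partitioned-3-SAT} would violate the \ETH\ via \textsc{3-SAT}. The only (immaterial) difference is that the paper enforces equality with a cyclic chain of three implication clauses $(\neg x_i^{\alpha} \lor x_i^{\beta})$, $(\neg x_i^{\beta} \lor x_i^{\gamma})$, $(x_i^{\alpha} \lor \neg x_i^{\gamma})$, whereas you use four clauses encoding two biconditionals.
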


We will also use the following restricted
version of the above problem.

\defproblem{\textsc{Exact-3-Partitioned-3-SAT}}{A formula $\psi$ in $3$-\textsc{CNF} form,
together with a partition of the set of its variables into three
disjoint sets $X^{\alpha}$, $X^{\beta}$, $X^{\gamma}$,
with $|X^{\alpha}| = |X^{\beta}| = |X^{\gamma}| = n$, and
every clause contains exactly one variable from each of
$X^{\alpha},  X^{\beta}$, and $X^{\gamma}$.}{Determine whether $\psi$ is satisfiable.}

For completeness, 
we repeat the polynomial-time reduction in~\cite{DBLP:journals/corr/abs-2302-09604} from \textsc{$3$-SAT}
to \textsc{$3$-Partitioned-$3$-SAT} that increases the number of variables and
clauses by a constant factor.
Importantly, we make a simple change 
to adapt the proof for 
\textsc{Exact-$3$-Partitioned-$3$-SAT}.

\begin{proposition}{\cite[Theorem 3]{DBLP:journals/corr/abs-2302-09604}}
\label{prop:3-SAT-to-3-Partition-3-SAT}
Unless the \ETH\ fails, \textsc{3-Partitioned-3-SAT} or 
\textsc{Exact-3-Partitioned-3-SAT}
does not admit
an algorithm running in time $2^{o(n)}$.
\end{proposition}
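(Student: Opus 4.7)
The plan is to reduce from \textsc{3-SAT} to \textsc{Exact-3-Partitioned-3-SAT} in polynomial time, with only a constant-factor blowup in the number of variables. Since every instance of \textsc{Exact-3-Partitioned-3-SAT} is, by definition, also a valid instance of \textsc{3-Partitioned-3-SAT}, establishing the lower bound for the former simultaneously establishes it for the latter. As the \ETH\ rules out any $2^{o(N)}$-time algorithm for \textsc{3-SAT} on $N$ variables, a reduction producing a formula with partition size $n = |X^\alpha| = \Theta(N)$ will transfer the lower bound.

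First, given an instance $\psi$ of \textsc{3-SAT} with $N$ variables $x_1, \dots, x_N$, I would triplicate each variable: for every $i \in [N]$, introduce fresh variables $x_i^\alpha, x_i^\beta, x_i^\gamma$ and place $x_i^\delta$ in $X^\delta$. This immediately yields $|X^\alpha| = |X^\beta| = |X^\gamma| = N$. Next, to force the three copies of each variable to agree, I would add, for each $i \in [N]$, the six \emph{consistency clauses} on $\{x_i^\alpha, x_i^\beta, x_i^\gamma\}$, one per non-unanimous assignment; for instance, the assignment $(T,T,F)$ is ruled out by the clause $(\bar{x}_i^\alpha \vee \bar{x}_i^\beta \vee x_i^\gamma)$. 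Crucially, each consistency clause contains exactly one variable from each of the three partitions, which is the key difference from the original reduction of~\cite{DBLP:journals/corr/abs-2302-09604}, where a 2-literal equivalence encoding sufficed.

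For each original clause $C_j$ of $\psi$, after a mild preprocessing that ensures every clause contains three distinct variables (via standard 3-SAT transformations), I would replace its $r$th literal's underlying variable by the copy in the $r$th partition, for $r \in \{1, 2, 3\}$. The resulting clause is a 3-clause containing exactly one variable per partition. The final formula $\psi'$ has $3N$ variables evenly split among the three partitions, satisfies the Exact-3-Partitioned constraint, and is equisatisfiable with $\psi$ by the consistency clauses. Hence, a hypothetical $2^{o(n)}$-time algorithm for \textsc{Exact-3-Partitioned-3-SAT} would yield a $2^{o(N)}$-time algorithm for \textsc{3-SAT}, contradicting the \ETH.

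The main subtlety I anticipate is the preprocessing step that makes every \textsc{3-SAT} clause contain three distinct variables while preserving equisatisfiability and keeping the number of variables linear in $N$: degenerate clauses (with repeated variables) and tautologies need to be handled so that the assignment of literals to partitions is well-defined. This can be handled by, e.g., introducing a constant number of fresh ``dummy'' variables (a few per partition) that are forced to prescribed truth values through additional 3-literal consistency clauses of the same ``exact-partitioned'' form, so that any short clause can be padded up to three distinct variables without affecting satisfiability or the asymptotic partition size.
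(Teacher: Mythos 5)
Your proof is correct and follows essentially the same route as the paper: triplicate each variable into one copy per part, force the copies to agree, transfer each original clause with its literals' variables assigned to distinct parts, and use a constant number of dummy variables forced to fixed truth values (via exact-partitioned ternary clauses) to pad degenerate or short clauses, noting that hardness of the exact variant carries over to \textsc{3-Partitioned-3-SAT} since its instances are a subset. The only difference is the consistency gadget: you rule out the six non-unanimous assignments with six ternary clauses that are already exact-partitioned, whereas the paper first enforces agreement with three binary implication clauses (proving the non-exact version) and only afterwards pads short clauses with a forced dummy variable to obtain the exact variant; your variant folds these two steps into one.
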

\begin{proof}
Let $\psi$ be a \textsc{$3$-SAT} formula of 
$m$ clauses and $n$ variables.
We can assume, without loss of generality,
that every variable is used in some clause
and every clause contains at least two literals.
Suppose $X = \{x_1,\dots, x_n\}$
is the set of variables in $\psi$.
We construct an equivalent instance $\psi'$
of \textsc{$3$-Partitioned-$3$-SAT} as follows:
\begin{itemize}
\item For every $i \in [n]$, we introduce
three variables $x^{\alpha}_i$, $x^{\beta}_i$,
and $x^{\gamma}_i$, corresponding to 
the variable $x_i$, to $\psi'$.
\item For every clause, e.g., 
$C = (x_i \lor \neg x_j \lor x_{\ell})$, we
introduce the clause
$C' = (x^{\alpha}_i \lor \neg x^{\beta}_j \lor x^{\gamma}_{\ell})$ 
to $\psi'$.
In an analogous way,
for every clause $C = (x_i \lor x_j)$, we
introduce $C' = (x^{\alpha}_i \lor x^{\beta}_j)$.
\item 
For every $i \in [n]$, we 
introduce the clauses $(\neg x_i^{\alpha} \lor x_i^{\beta})$, $(\neg x_i^{\beta} \lor x^{\gamma}_{i})$, and
$(x^{\alpha}_i \lor \neg x^{\gamma}_i)$.
\end{itemize}

Define $X^{\alpha} = \{x^{\alpha}_i \mid i \in [n]\}$, and 
$X^{\beta}$, $X^{\gamma}$ in the analogous way.
Note that $\psi'$ is a valid instance of 
\textsc{$3$-Partitioned-$3$-SAT}
as its variable set is divided into three
equal parts, $X^{\alpha}$, $X^{\beta}$, and 
$X^{\gamma}$, and each clause contains 
at most one variable from each 
of these parts.
To see that $\psi$ and $\psi'$ are equivalent
instances, consider a satisfying assignment 
$\pi: X \mapsto \{\true, \false\}$ for $\psi$.
Consider the assignment 
$\pi': X^{\alpha} \cup X^{\beta} \cup X^{\gamma} \mapsto \{\true, \false\}$
defined as follows: 
$\pi'(x^{\alpha}_i) = \pi'(x^{\beta}_i) =
\pi'(x^{\gamma}_i) = \pi(x_i)$
for all $i \in [n]$.
It is easy to verify
that the assignment $\pi'$ is a satisfying 
assignment for $\psi'$.
In the reverse direction,
consider a satisfying assignment
$\pi': X^{\alpha} \cup X^{\beta} \cup X^{\gamma} \mapsto \{\true, \false\}$ for $\psi'$.
Note that the clauses added in
the third step above are all
satisfied if and only if the variables 
$x^{\alpha}_i$, $x^{\beta}_i$, and $x^{\gamma}_i$
share the same assignment, i.e., either all
are $\true$ or all are $\false$.
Hence, $\pi'(x^{\alpha}_i) = \pi'(x^{\beta}_i) = \pi'(x^{\gamma}_i)$.
It is easy to see that $\pi: X \mapsto \{\true, \false\}$,
where $\pi(x_i) = \pi'(x^{\alpha}_{i})$ for
all $i \in [n]$, is a satisfying
assignment for $\psi$.
As the number of variables in $\psi'$ is at 
most $3$ times the number of variables
in $\psi$, 
if \textsc{3-Partitioned-3-SAT} admits
an algorithm running in time $2^{o(n)}$, so
does \textsc{3-SAT}, which contradicts
the \ETH.
This completes the first part of the 
proposition.

To prove the second part, we 
add the following steps to the above reduction.
\begin{itemize}
\item We add the variables $x^{\alpha}_0$,
$x^{\beta}_0$, $x^{\gamma}_0$ to 
$X^{\alpha}$, $X^{\beta}$, and $X^{\gamma}$,
respectively.
\item We add the following clauses:
\begin{itemize}
\item $(\neg x_0^{\alpha} \lor \neg x_0^{\beta} \lor \neg x_0^{\gamma})$,
\item $(x_0^{\alpha} \lor \neg x_0^{\beta} \lor \neg x_0^{\gamma})$,
$(\neg x_0^{\alpha} \lor x_0^{\beta} \lor \neg x_0^{\gamma})$,
$(\neg x_0^{\alpha} \lor \neg x_0^{\beta} \lor x_0^{\gamma})$,
and
\item $(\neg x_0^{\alpha} \lor x_0^{\beta} \lor  x_0^{\gamma})$,
$(x_0^{\alpha} \lor \neg x_0^{\beta} \lor x_0^{\gamma})$, $(x_0^{\alpha} \lor x_0^{\beta} \lor \neg x_0^{\gamma})$.
\item For every clause that 
has only two literals, we
add $x^{\gamma}_{0}$.
\end{itemize}
\end{itemize}
By the construction above,
each clause that had only two
literals contained literals
corresponding to variables in
$X^{\alpha}$ and $X^{\beta}$.
Thus, $\psi'$ is a valid
instance of 
\textsc{Exact-3-Partitioned-3-SAT}.
Now, it suffices to note that any satisfying assignment $\pi'$ for $\psi'$ must set $x_0^{\alpha}$, $x_0^{\beta}$, and $x_0^{\gamma}$ to \false. 
Then, the other arguments are similar to those
mentioned in the above paragraph.
\end{proof}
\section{\mdfull: Lower Bound Regarding Diameter plus Treewidth}
\label{sec:lower-bound-diam-treewidth-MD}


The aim of this section is to prove
the following theorem.

\begin{theorem}
\label{thm:lower-bound-diam-tw}
Unless the \ETH\ fails, \mdfull does not admit an algorithm 
running in time $2^{f(\diam)^{o(\tw)}} \cdot n^{\OO(1)}$ for any computable function 
$f:\mathbb{N} \mapsto \mathbb{N}$.
\end{theorem}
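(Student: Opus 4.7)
The plan is to establish Theorem~\ref{thm:lower-bound-diam-tw} by a polynomial-time reduction from \PSAT, invoking Proposition~\ref{prop:3-SAT-to-3-Partition-3-SAT-lampis}. Given an instance $\psi$ with $3n$ variables partitioned as $X^\alpha \cup X^\beta \cup X^\gamma$, I will construct a graph $G$ and a budget $k$ such that $\psi$ is satisfiable if and only if $G$ has a resolving set of size at most $k$, with $\diam(G) = \OO(1)$ and $\tw(G) = \OO(\log n)$. Any algorithm running in time $2^{f(\diam)^{o(\tw)}} \cdot n^{\OO(1)}$ for \mdfull\ would then solve $\psi$ in time $2^{f(\OO(1))^{o(\log n)}} \cdot n^{\OO(1)} = 2^{o(n)}$, contradicting the \ETH.

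The construction follows the blueprint of Section~\ref{subsec:technique}. For each $\delta \in \{\alpha,\beta,\gamma\}$ and each $i \in [n]$, I introduce a critical pair $\langle x^{\delta,\circ}_i, x^{\delta,\star}_i\rangle$ together with two literal vertices $t^{\delta}_{2i}$ and $f^{\delta}_{2i-1}$, with $x^{\delta,\circ}_i$ adjacent to both. Setting $p$ to be the smallest integer with $2n \leq \binom{2p}{p}$, I fix the injection $\setrep\colon [2n]\to \calF_p$. The validation portal $V^\delta = \{v^\delta_1,\ldots,v^\delta_{2p}\}$ is attached to the literal vertices as described in the technical overview, and each clause $C_q$ yields a critical pair $\langle c^\circ_q,c^\star_q\rangle$ connected to the portals by the complementary Sperner pattern. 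Finally, each of the sets $X^\delta$, $A^\delta$, $V^\delta$, and the set of clause vertices is identified using a Set Identifying Gadget so that its $\bitrep$-vertices belong to every resolving set, distinguish its members, and---thanks to the $\bitrepnullifier$ vertex---cannot resolve the critical pairs.

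The budget $k$ is set so that exactly one vertex must be chosen from each critical pair, together with all the mandatory $\bitrep$-vertices. For the forward direction, given a satisfying assignment of $\psi$, the resolving set picks $t^\delta_{2i}$ or $f^\delta_{2i-1}$ according to the truth value of $x^\delta_i$; the $\bitrep$-vertices handle every pair not involving a literal or a clause vertex, and the Sperner construction guarantees that each clause pair is resolved by a literal vertex witnessing a satisfied literal. For the reverse direction, the budget argument forces the selection of exactly one element per critical pair; a choice drawn from $\{x^{\delta,\circ}_i, x^{\delta,\star}_i\}$ is interpreted as an arbitrary assignment for $x^\delta_i$, while a choice of $t^\delta_{2i}$ or $f^\delta_{2i-1}$ encodes its truth value. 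The key combinatorial fact is that sets in $\calF_p$ are pairwise incomparable, so $\setrep(j)\nsubseteq \setrep(i)$ whenever $j\neq i$: this ensures that only the literal vertex matching a clause can resolve the associated critical pair, forcing the chosen literal vertices to satisfy every clause.

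The main obstacles I anticipate are twofold. First, I must carefully justify the behavior of the Set Identifying Gadgets: the $\bitrep$-vertices must be forced into every resolving set and must distinguish every vertex of their associated set, while the $\bitrepnullifier$ vertices must neutralize them with respect to the critical pairs. This demands a painstaking case analysis of shortest paths through the various portals and gadgets. Second, the diameter of $G$ must be held to a constant despite the Sperner encoding; I will verify that any two vertices are joined by a path of bounded length via a portal or nullifier, while simultaneously bounding $\tw(G)$ by exhibiting the separator $Z = \bigcup_{\delta}\bigl(V^\delta \cup \bitrep(X^\delta) \cup \bitrep(A^\delta) \cup \bitrep(V^\delta)\bigr) \cup \bitrep(C)$ of size $\OO(\log n)$, so that $G-Z$ is a disjoint union of $P_3$'s and isolated vertices and hence $\tw(G) \leq |Z| + \OO(1) = \OO(\log n)$, as required for the ETH-based contradiction.
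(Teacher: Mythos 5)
Your proposal is correct and follows essentially the same route as the paper's own proof: a reduction from \textsc{3-Partitioned-3-SAT} with critical pairs for variables and clauses, Sperner-family validation portals, Set Identifying Gadgets, a budget forcing one vertex per variable pair, and the logarithmic separator argument for treewidth together with constant diameter. The only small slip is that your separator $Z$ should also include the constantly many $\bitrepnullifier(\cdot)$ vertices (as in the paper), since otherwise $G-Z$ is not literally a union of $P_3$'s and isolated vertices; this does not affect the $\OO(\log n)$ bound.
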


To this end, we present a reduction from \textsc{$3$-Partitioned-$3$-SAT}
to \textsc{Metric Dimension}.
The reduction takes as input an instance $\psi$ of \textsc{$3$-Partitioned-$3$-SAT}
on $3n$ variables (see \Cref{sec:preliminaries} for a definition of this problem), and returns $(G, k)$ as an instance of \textsc{Metric Dimension}
such that $\tw(G) = \calO(\log(n))$ and $\diam(G) = \calO(1)$.
Before presenting the reduction, we first introduce some preliminary tools.

\subsection{Preliminary Tools}
\label{subsec:prelim-3-Par-3-SAT-Met-Dim-diam-tw}

\begin{figure}[t]
    \centering
        \includegraphics[scale=.91]{./images/set-identifying-gadget2.pdf} \hspace{0.7cm}
        \includegraphics[scale=.91]{./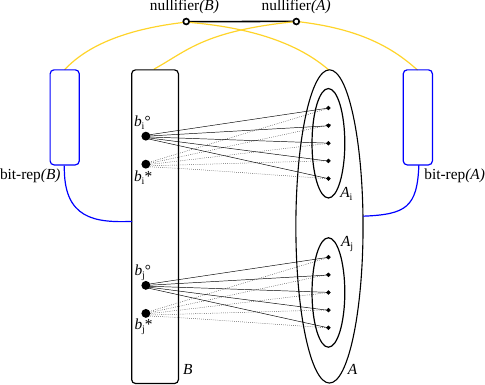}
    \caption{ \textbf{Set Identifying Gadget (left)}. The blue box represents $\bitrep(X)$ and the yellow lines represent that every vertex in $\bitrep(X)\setminus \bits(X)$ is adjacent to $\bitrepnullifier(X)$, $\bitrepnullifier(X)$ is adjacent to every vertex in $N(X)$, and $y_{\star}$ is adjacent to every vertex in $X$. Note that $G'$ is not necessarily restricted to the graph induced by the vertices in $X\cup N(X)$.
    \textbf{Vertex Selector Gadget (right)}.
    For $X \in \{B,  A\}$,  the blue box represents $\bitrep(X)$, the blue link represents the connection
    with respect to the binary representation, and the yellow line represents that $\bitrepnullifier(X)$
    is connected to every vertex in $\bitrep(X)\setminus \bits(X)$.
    The dotted lines highlight the absence of edges.}
    \label{fig:set-identifying-gadget}
    \label{fig:vertex-selector-gadget}
\end{figure}

\subsubsection{Set Identifying Gadget}
\label{subsubsec:gadget-set-MD}

Suppose that we are given a graph $G'$ and a subset $X\subseteq V(G')$ of its vertices.
Further, suppose that we want to add a vertex set $X^+$ to $G'$ in order to obtain a new graph $G$ with the following properties. We want that each vertex in $X \cup X^+$ will be distinguished by vertices in $X^+$ that must be in any resolving set $S$ of~$G$, and no vertex in $X^+$ can resolve any ``critical pair'' of vertices in $V(G)$ (critical pairs will be defined in the next subsection).
 
We refer to the graph induced by the vertices of $X^+$, along with the edges connecting $X^+$ to $G'$, as the Set Identifying Gadget for the set $X$.

Given a graph $G'$ and a non-empty subset $X\subseteq V(G')$ of its vertices, to construct such a graph $G$, we add vertices and edges to $G'$ as follows:
\begin{itemize}
\item The vertex set $X^+$ that we are aiming to add is the union of a set $\bitrep(X)$ and a special vertex denoted by $\bitrepnullifier(X)$.
\item First, let $X=\{x_i\mid i\in [|X|]\}$, and set $q := \lceil \log(|X|+ 2) \rceil+1$.
We select this value for $q$ to \emph{(1)} uniquely represent each integer in $[|X|]$ by its bit-representation in binary (note that we start from $1$ and not $0$), \emph{(2)} ensure that the only vertex whose bit-representation contains all $1$'s is $\bitrepnullifier(X)$, and \emph{(3)} reserve one spot for an additional vertex $y_{\star}$.
\item For every $i \in [q]$, add three vertices $y^a_i,  y_i, y^b_i$, and add the path $(y^a_i, y_i, y^b_i)$.
\item Add three vertices $y^a_{\star},  y_{\star}, y^b_{\star}$, and add the path $(y^a_{\star}, y_{\star}, y^b_{\star})$. 
Add all the edges to make $\{y_i\mid\ i \in [q] \}\cup\{y_{\star}\}$ into a clique.
Make $y_{\star}$ adjacent to each vertex $v\in X$.
We denote $\bitrep(X)=\{y_i, y^a_i, y^b_i\mid i\in [q]\}\cup \{y_{\star}, y^a_{\star}, y^b_{\star}\}$ and its subset $\bits(X)=\{y^a_i, y^b_i\mid i\in [q]\}\cup \{y^a_{\star}, y^b_{\star}\}$ for convenience in a later case analysis.
\item For every integer $j \in [|X|]$, let $\bit(j)$ denote the binary
representation of $j$ using $q$ bits.
Connect $x_j$ with $y_{i}$
if the $i^{th}$ bit (going from left to right) in $\bit(j)$ is $1$.
\item Add a vertex, denoted by $\bitrepnullifier(X)$,
and make it adjacent to every vertex in $\{y_i\mid i \in [q] \}\cup\{y_{\star}\}$.
One can think of the vertex $\bitrepnullifier(X)$ as the only vertex whose bit-representation contains all $1$'s.
\item For every vertex $u \in V(G)\setminus (X\cup X^+)$ such that
$u$ is adjacent to some vertex in $X$, add an edge between
$u$ and $\bitrepnullifier(X)$.
We add this vertex to ensure that
vertices in $\bitrep(X)$ do not resolve critical pairs
in $V(G)$.
\end{itemize}
This completes the construction of $G$. The properties of $G$ are not proven yet, but just given as an intuition behind its construction. See Figure~\ref{fig:set-identifying-gadget} for an illustration.

\subsubsection{Gadget to Add Critical Pairs}
\label{subsubsec:gadget-critical-pairs}

Any resolving set needs to resolve \emph{all} pairs of vertices in
the input graph.
As we will see, some pairs, which we call critical pairs, are harder
to resolve than others.
In fact, the non-trivial part will be to
resolve all of the critical pairs.

Suppose that we need to have $m \in \mathbb{N}$ critical pairs in a graph $G$, say $\langle c^\circ_i,   c^\star_i \rangle$ for every $i \in [m]$.
Define $C := \{c^{\circ}_i,  c^{\star}_i\mid i \in [m]\}$.
We then add $\bitrep(C)$ and $\bitrepnullifier(C)$ as mentioned above (taking $C$ as the set $X$), but
the connection across $\{c^{\circ}_i, c^\star_i\}$ and $\bitrep(C)$ is defined by $\bit(i)$, i.e.,
connect both~$c^{\circ}_i$ and $c^\star_i$ with the $j$-th vertex of $\bitrep(C)$
if the $j^{th}$ digit (going from left to right) in $\bit(i)$ is $1$.
Hence,  $\bitrep(C)$ can resolve any pair of the form
$\langle c^{\circ}_i,  c^{\star}_{\ell} \rangle$,  $\langle c^{\circ}_i,  c^{\circ}_{\ell} \rangle$,  or
$\langle c^{\star}_i,  c^{\star}_{\ell} \rangle$ as long as $i \neq \ell$.
As before, $\bitrep(C)$ can also resolve all pairs with one vertex in
$C \cup \bitrep(C) \cup \{\bitrepnullifier(C)\}$, but no critical pair of vertices. Again, when these facts
will be used, they will be proven formally.

\subsubsection{Vertex Selector Gadgets}
\label{subsubsec:vertex-selector}

Suppose that we are given a collection of sets $A_1, A_2, \dots, A_q$ of vertices
in a graph $G$, and
we want to ensure that any resolving set of $G$ includes at least one vertex
from $A_i$ for every $i \in [q]$.
In the following,  we construct a gadget that achieves a slightly weaker objective.
\begin{itemize}
\item Let $A = \underset{i \in [q]}{\bigcup} A_i$.
Add a set identifying gadget for $A$ as mentioned in Subsection~\ref{subsubsec:gadget-set-MD}.
\item For every $i \in [q]$,  add two vertices $b^{\circ}_i$ and $b^{\star}_i$.
Use the gadget mentioned in Subsection~\ref{subsubsec:gadget-critical-pairs}
to make all the pairs of the form
$\langle b^{\circ}_i, b^{\star}_i \rangle$ critical pairs.

\item For every $a \in A_i$,  add an edge $(a, b^{\circ}_i)$.
We highlight that we do not make $a$ adjacent to $b^{\star}_i$
by a dotted line in Figure~\ref{fig:vertex-selector-gadget}.
Also, add the edges $(a,  \bitrepnullifier(B))$,  $(b^{\circ}_i,  \bitrepnullifier(A))$, $(b^{\star}_i,  \bitrepnullifier(A))$, and $(\bitrepnullifier(A), \bitrepnullifier(B))$.
\end{itemize}
This completes the construction.

Note that the only vertices that can resolve a critical pair $\langle b^{\circ}_i, b^{\star}_i \rangle$,
apart from $b^{\circ}_i$ and $b^{\star}_i$, are the vertices in $A_i$.
Hence, every resolving set contains at least one vertex in
$\{b^{\circ}_i, b^{\star}_i\} \cup A_i$. Again, when used, these facts will be proven formally.

\subsubsection{Set Representation}
\label{subsubsec:set-representation-MD}

For a positive integer $p$,  define $\calF_p$ as the collection of
subsets of $[2p]$ that contains exactly $p$ integers.
We critically use the fact that no set in $\calF_p$
is contained in any other set in $\calF_p$
(such a collection of sets are called a \emph{Sperner family}).
Let $\ell$ be a positive integer such that $\ell \leq \binom{2p}{p}$.
We define
$\setrep: [\ell] \mapsto \calF_p$ as a one-to-one function
by arbitrarily assigning a set in $\calF_p$ to an integer in $[\ell]$.
By the asymptotic estimation of the central binomial coefficient, $\binom{2p}{p}\sim \frac{4^p}{\sqrt{\pi \cdot p}}$ \cite{Sperner}.
To get the upper bound of $p$, we scale down the asymptotic function and have $\ell \leq \frac{4^p}{2^p}=2^p$.
Thus, $p=\OO(\log \ell)$. 

We mention an application of such a function in the context of
\textsc{Metric Dimension}.
Suppose that we have a ``large'' collection of vertices,  say $A = \{a_1, a_2, \dots, a_{\ell} \}$,
and a ``large'' collection of critical pairs
$C = \{\langle c^{\circ}_1, c^{\star}_1\rangle,  \langle c^{\circ}_2, c^{\star}_2\rangle,  \dots,
\langle c^{\circ}_m,  c^{\star}_m\rangle\}$.
Moreover,  we are given an injective function $\phi:[m] \mapsto [\ell]$.
The objective is to design a gadget such that only $a_{\phi(q)} \in A$
can resolve a critical pair $\langle c^{\circ}_q, c^{\star}_q\rangle \in C$
for any $q \in [m]$, while keeping the treewidth of this part of
the graph of order $\calO(\log(|A|))$.
We add the following vertices and edges in order to achieve this objective.

\begin{itemize}
\item Add vertices and edges as mentioned in Subsection~\ref{subsubsec:gadget-set-MD}
and in Subsection~\ref{subsubsec:gadget-critical-pairs},  respectively,
to identify the set $A$ and to add critical pairs in $C$.
\item Add a \emph{validation portal}, a clique on $2p$ vertices, denoted by $V = \{v_1, v_2, \dots,  v_{2p}\}$,
and vertices and edges to identify it.
\item For every $i \in [\ell]$ and for every $p' \in \setrep(i)$,
add the edge $(a_i,  v_{p'})$.
\item For every critical pair $\langle c^{\circ}_q, c^{\star}_q \rangle$,
make $c^{\circ}_q$ adjacent to every vertex in $V$,  and
add every edge of the form $(c^{\star}_q, v_{p'})$ for $p' \in [2p] \setminus \setrep(\phi(q))$.
Note that the vertices in $V$ that are indexed using integers
in $\setrep(\phi(q))$ are \emph{not} adjacent with $c^{\star}_q$.
\end{itemize}

See Figure~\ref{fig:set-rep-core} for an illustration.
\begin{figure}[t]
    \centering
        \includegraphics[scale=1.25]{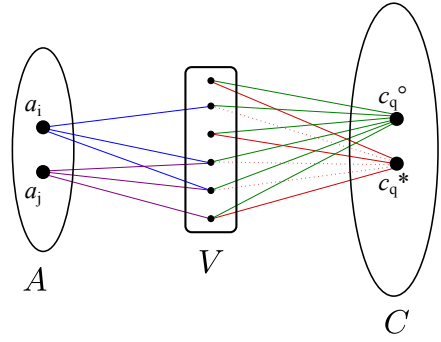}
    \caption{A toy example to illustrate the application of $\setrep$.
    See Subsection~\ref{subsec:prelim-3-Par-3-SAT-Met-Dim-diam-tw}.
    Suppose that $\phi(q) = i$, i.e., we want to add gadgets such that only $a_i$ in $A$ can resolve
    the critical pair $\langle c^{\circ}_q, c^{\star}_q \rangle$.
    Suppose that the vertices in $V$ are indexed from top to bottom and $\setrep(i) = \{2, 4, 5\}$.
    By the construction, the only vertices in $V$ that $c^{\star}_q$ is \emph{not} adjacent to are $v_2$, $v_4$, and $v_5$ (this fact is highlighted with red-dotted edges).
    Thus,  $\dist(a_i,  c^{\circ}_q) = 2$ and $\dist(a_i, c^{\star}_q) > 2$,  and hence,
    $a_i$ resolves the critical pair $\langle c^{\circ}_q, c^{\star}_q \rangle$.
    For any other vertex in $A$, say $a_j$, $\setrep(j) \setminus \setrep(i)$ is a non-empty set.
    Hence,  there are shortest paths from $a_j$ to  $c^{\circ}_q$, and $a_j$ to $c^{\star}_q$
    through the vertices in $V$ with indices in $\setrep(j) \setminus \setrep(i)$.
    This implies that $\dist(a_j,  c^{\circ}_q) = \dist(a_j, c^{\star}_q) = 2$ and $a_j$ cannot
    resolve the pair $\langle c^{\circ}_q, c^{\star}_q \rangle$.
    The sets $\bitrep(X)$ and $\bitrepnullifier(X)$ are omitted for $X\in \{A, V, C\}$.
    \label{fig:set-rep-core}}
\end{figure}

\subsection{Reduction}\label{sec:reduction-tw}

Consider an instance $\psi$ of \textsc{3-Partitioned-3-SAT},
with $X^{\alpha}, X^{\beta}, X^{\gamma}$ the partition of the variable set.
From $\psi$, we construct the graph $G$ as follows.
We describe the construction of the part of the graph $G$ that corresponds to $X^{\alpha}$, with the parts corresponding to $X^{\beta}$ and $X^{\gamma}$ being analogous.
We rename the variables in $X^{\alpha}$
to $x^{\alpha}_{i}$ for $i \in [n]$.

\begin{itemize}
\item
For every variable $x^{\alpha}_i$, we add a pair
$\langle x^{\alpha, \circ}_i, x^{\alpha, \star}_i \rangle$ of vertices.
We add vertices and edges as mentioned in Subsection~\ref{subsubsec:gadget-critical-pairs}
to make all pairs of the form $\langle x^{\alpha, \circ}_i, x^{\alpha, \star}_i \rangle$
critical in the graph $G$.
We denote
$X^{\alpha} = \{x^{\alpha, \circ}_i, x^{\alpha, \star}_i\  |\ i \in [n] \}$
as the collection of vertices in the critical pairs.
We remark that we do \emph{not} convert $X^{\alpha}$ into a clique.

\item For every variable $x^{\alpha}_i$,
we add the vertices $t^{\alpha}_{2i}$ and $f^{\alpha}_{2i - 1}$.
Formally,
$A^{\alpha} = \{t^{\alpha}_{2i},  f^{\alpha}_{2i - 1} |\ i \in [n] \}$,
and hence, $|A^{\alpha}| = 2n$.
We add vertices and edges as mentioned in
Subsection~\ref{subsubsec:gadget-set-MD} in order to identify the
set $A^{\alpha}$ in $G$.

\item We would like that
any resolving set contains at least one vertex in
$\{t^{\alpha}_{2i},  f^{\alpha}_{2i - 1}\}$ for every $i \in [n]$, but instead we add the construction mentioned in 
Subsection~\ref{subsubsec:vertex-selector} that achieves the slightly weaker objective as mentioned there.
As before,  instead of adding two new vertices, we use
$\langle x^{\alpha, \circ}_i,  x^{\alpha,  \star}_i \rangle$ as the necessary critical pair.
Formally,  for every $i \in [n]$,
we add the edges $(x^{\alpha, \circ}_i, t^{\alpha}_{2i})$ and $(x^{\alpha, \circ}_i, f^{\alpha}_{2i-1})$.
We add edges to make $\bitrepnullifier(X^{\alpha})$ adjacent to every vertex
in $A^{\alpha}$, and
$\bitrepnullifier(A^{\alpha})$ adjacent to every vertex in $X^{\alpha}$.
Also, we add the edge $(\bitrepnullifier(X^{\alpha}), \bitrepnullifier(A^{\alpha}))$.

\item
Let $p$ be the smallest positive integer such that $2n \leq \binom{2p}{p}$.
In particular, $p=\OO(\log n)$.
Moreover,  define $\setrep: [2n] \mapsto \calF_{p}$
as mentioned in Subsection~\ref{subsec:prelim-3-Par-3-SAT-Met-Dim-diam-tw}.

\item We add a \emph{validation portal},
a clique on $2p$ vertices,  denoted by
$V^{\alpha} = \{v^{\alpha}_1, v^{\alpha}_2,  \dots, v^{\alpha}_{2p} \}$.
We add vertices and edges to identify $V^{\alpha}$ as mentioned
in Subsection~\ref{subsubsec:gadget-set-MD}.
We add the edge $(\bitrepnullifier(V^{\alpha}),  \bitrepnullifier(A^{\alpha}))$
and make $\bitrepnullifier(A^{\alpha})$ adjacent to every vertex in $V^{\alpha}$.
We note that we \emph{do not} add edges across
$\bitrepnullifier(V^{\alpha})$
 and $A^{\alpha}$.

\item We add edges across $A^{\alpha}$ and the validation portal as follows:
for each $i \in [n]$,
we add the edge $(t^{\alpha}_{2i},  v^{\alpha}_{p'})$
for every $p' \in \setrep(2i)$. Similarly, for each $i\in [n]$, we add the
edge $(f^{\alpha}_{2i - 1},  v^{\alpha}_{p'})$
for every $p' \in \setrep(2i - 1)$.

We repeat the above steps to construct
$X^{\beta},  A^{\beta},  V^{\beta}$,
$X^{\gamma},  A^{\gamma},  V^{\gamma}$,
and their related vertices and edges.

\item For every clause $C_q$ in $\psi$, we introduce a pair
$\langle c^{\circ}_q,  c^{\star}_q \rangle$ of vertices.
We add vertices and edges to make each pair of the form
$\langle c^{\circ}_q,  c^{\star}_q \rangle$ a critical pair
as mentioned in Subsection~\ref{subsubsec:gadget-critical-pairs}.
Let $C$ be the collection of the vertices in such pairs.

\item We add edges across $C$ and the portals as follows.
Consider a clause $C_q$ in $\psi$ and the corresponding
critical pair $\langle c^{\circ}_q,  c^{\star}_q \rangle$ in $C$.
Suppose $\delta \in \{\alpha, \beta, \gamma\}$.
As $\psi$ is an instance of \textsc{$3$-Partitioned-$3$-SAT},
there is at most one variable in $X^{\delta}$ that appears in $C_q$.
Suppose that variable is $x^{\delta}_{i}$ for some $i \in [n]$.

We add all edges of the form
$(v^{\delta}_{p'},  c^{\circ}_q)$ for every $p' \in [2p]$.
If $x^{\delta}_{i}$ appears as a positive literal in $C_q$, then
we add the edge $(v^{\delta}_{p'},  c^{\star}_q)$
for every $p' \in [2p] \setminus \setrep(2i)$
(which corresponds to $t^{\delta}_{2i}$).
If $x^{\delta}_{i}$ appears as a negative literal in $C_q$, then
we add the edge $(v^{\delta}_{p'},  c^{\star}_q)$
for every $p' \in [2p] \setminus \setrep(2i  - 1)$
(which corresponds to $f^{\delta}_{2i - 1}$).
We remark that if $x^{\delta}_{i}$ appears as a positive (negative, respectively) literal in $C_q$, then the vertices in $V^{\delta}$
whose indices are in $\setrep(2i)$ ($\setrep(2i - 1)$, respectively) are \emph{not adjacent}
to $c^{\star}_q$ .
If there is no variable in $X^{\delta}$ that appears in $C_q$, 
then we make every vertex in $V^{\delta}$ adjacent to both $c^{\circ}_q$ and $c^{\star}_q$.
Finally, we add the edge $(\bitrepnullifier(V^{\delta}),  \bitrepnullifier(C))$.
See Figure~\ref{fig:reduction-overview-diam-tw}.
\end{itemize}

\begin{figure}[t]
    \centering
        \includegraphics[scale=1.25]{./images/reduction-overview-diam-tw2.pdf}
    \caption{Overview of the reduction.
    For any set $X \in \{X^{\alpha}, A^{\alpha},  V^{\alpha}, C\}$,  the blue rectangle attached to it
    via the blue edge represents $\bitrep(X)$, and
    the yellow line between a vertex and $\bitrep(X)$ indicates that vertex is connected to every vertex in $\bitrep(X)\setminus \bits(X)$. The remainder of the yellow lines represent that vertex is connected to every vertex in the set the edge goes to.
    Note that $\bitrepnullifier(V^{\alpha})$ is \emph{not} adjacent to any
    vertex in $A^{\alpha}$.
    Green edges denote adjacencies with respect to $\setrep$,
    i.e.,  $t^{\alpha}_{2i}$ is adjacent to $v_j\in V^{\alpha}$ if $j\in \setrep(2i)$.
    The same holds for $f_{2i-1}$ for all $i\in[n]$.
    Purple lines also indicate adjacencies with respect to $\setrep$, but in a complementary
    way, i.e., if $x_i\in c_q$, then, for all $p'\in [2p]\setminus \setrep(2i)$, we have that $(v_{p'}^{\alpha}, c_q^{\star})\in E(G)$, and if $\overline{x}_i\in c_q$, then, for all $p'\in [2p]\setminus \setrep(2i-1)$, we have that $(v_{p'}^{\alpha}, c_q^{\star})\in E(G)$.}
    \label{fig:reduction-overview-diam-tw}
\end{figure}
This concludes the construction of $G$.
The reduction returns $(G,  k)$
as an instance of \textsc{Metric Dimension} where
\begin{equation*}
k= 3 \cdot (n + (\lceil \log(|X^{\alpha}|/2+2)\rceil + 1)
+ (\lceil\log(|A^{\alpha}|+2)\rceil + 1) + (\lceil \log(|V^{\alpha}|+2)\rceil + 1)) + \lceil\log(|C|/2+2)\rceil + 1.
\end{equation*} 

\subsection{Correctness of the Reduction}
\label{subsec:correctness-3-Par-3-SAT-Met-Dim-diam-tw}

Suppose,  given an instance $\psi$ of \textsc{$3$-Partitioned-$3$-SAT}, that
the reduction returns $(G, k)$ as an instance of \textsc{Metric Dimension}.
We first prove the following lemma, which will be helpful in proving the correctness of the reduction.

\begin{lemma} For any resolving set $S$ of $G$ and for all $X\in \{C\} \cup \{X^{\delta},A^{\delta},V^{\delta} \mid \delta\in \{\alpha,\beta,\gamma\}\}$,
\begin{enumerate}
\item $S$ contains at least one vertex from each pair of false twins in $\bits(X)$.
\item Vertices in $\bits(X) \cap S$ resolve any non-critical pair of vertices
$\langle u, v\rangle$ when $u \in X\cup X^+$ and $v\in V(G)$.  
\item Vertices in $X^+ \cap S$ \emph{cannot} resolve any critical
pair of vertices $\langle x_i^{\delta',\circ},  x_i^{\delta',\star}\rangle$ nor $\langle c_q^{\circ},  c_q^{\star}\rangle$ for all $i\in [n]$, $\delta'\in \{\alpha,\beta,\gamma\}$, and $q\in [m]$.     
\end{enumerate}\label{lemma:set-id-tw}
\end{lemma}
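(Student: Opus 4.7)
The plan is to analyse the distance-profile of every vertex of $G$ to $\bits(X)$, reducing all three parts to checks about the structure of the Set Identifying Gadget. Part 1 is the cleanest: for every $i\in[q]$, the pendants $y_i^a$ and $y_i^b$ share the single common neighbor $y_i$, so they are false twins; similarly $y_\star^a$ and $y_\star^b$ are false twins with common neighbor $y_\star$. Applying Observation~\ref{obs:twins} to each of these $q+1$ pairs forces $S$ to contain one endpoint of each pair, giving exactly Part 1.

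For Part 2 the plan is to show that the distance-profile to $\bits(X)$ pins down each vertex of $X \cup X^+$, and separates it from every vertex outside $X \cup X^+$. Three cases cover the argument. (i) For $x_j \in X$, $d(x_j, y_i^a) = d(x_j, y_i^b)$ equals $2$ if the $i$-th bit of $\bit(j)$ is $1$ (via the direct edge $x_j y_i$) and $3$ otherwise (using that $\{y_i\}_{i\in[q]}\cup\{y_\star\}$ is a clique and $y_\star$ is adjacent to every vertex of $X$); also $d(x_j, y_\star^a) = d(x_j, y_\star^b) = 2$. Two distinct $x_j, x_{j'}$ therefore disagree on the coordinate corresponding to a differing bit of their indices. (ii) Within $X^+$, the vertex $y_i$ is uniquely characterised by $d(y_i, y_i^a) = 1$; each pendant $y_i^a$ by $d(y_i^a, y_i^a) = 0$ and $d(y_i^a, y_i^b) = 2$; and $\bitrepnullifier(X)$ by sitting at distance exactly $2$ from every vertex of $\bits(X)$, a profile no vertex of $X$ realises because the choice of $q$ guarantees that every $\bit(j)$, $j\in[|X|]$, contains at least one bit equal to $0$. (iii) For $v \notin X\cup X^+$, the only neighbors of $y_i$ in $G$ lie in $X \cup X^+$, so $v$ is non-adjacent to $y_i$ and hence $d(v, y_i^a) \geq 3$; symmetrically $d(v, y_\star^a) \geq 3$. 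Matching this all-$\geq 3$ profile against the profiles above shows that some coordinate of $\bits(X)$ separates $u$ from $v$, and by Part 1 the corresponding $y_i^a$ or $y_i^b$ (or its twin) lies in $S$.

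For Part 3 the plan is to exploit the symmetry built into the critical-pair gadget of Subsection~\ref{subsubsec:gadget-critical-pairs}: the two endpoints $p_1, p_2$ of any critical pair in $G$ belong to a common set $X' \in \{X^\delta, A^\delta, V^\delta, C\}$ and, by construction, are connected to \emph{exactly the same} subset of $\bitrep(X')$; moreover they share the same adjacencies to the various $\bitrepnullifier$-vertices that sit between gadgets. A direct distance computation then yields $d(y, p_1) = d(y, p_2)$ for every $y \in X'^+$ (both distances equal $1$ or $2$, with the value depending on $\bit$ but always symmetric in $p_1, p_2$), and for any other set $X \neq X'$ every path from $X^+$ to $\{p_1, p_2\}$ exits the gadget of $X$ through common intermediaries that are adjacent to both $p_1$ and $p_2$. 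The main technical obstacle, and where the symmetry of the reduction is doing most of the work, is the asymmetric edge-set between $c_q^\star$ and the validation portals $V^\delta$: one must verify that although $c_q^\star$ is missing some edges to $V^\delta$ that $c_q^\circ$ has, every distance-$2$ shortcut from a vertex of $V^{\delta,+}$ to $c_q^\circ$ is matched by one to $c_q^\star$, using that both $c_q^\circ$ and $c_q^\star$ are adjacent to $\bitrepnullifier(V^\delta)$ and to at least one vertex of $V^\delta$ accessible from any $y \in \bitrep(V^\delta)$.
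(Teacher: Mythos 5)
Parts~1 and~2 of your proposal are correct and essentially reproduce the paper's own argument: Observation~\ref{obs:twins} applied to the pendant twin pairs, and the same distance-profile case analysis (every vertex of $X\cup X^+$ is within distance~$2$ of some vertex of $\bits(X)$ while every vertex outside $X\cup X^+$ is at distance at least~$3$ from all of $\bits(X)$, with the bit-wise coordinates separating vertices inside the gadget).

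Part~3, however, has a genuine gap. Your blanket claim that ``every path from $X^+$ to $\{p_1,p_2\}$ exits the gadget of $X$ through common intermediaries that are adjacent to both $p_1$ and $p_2$'' is false as stated: a path from $y_j\in\bitrep(A^{\delta})$ through $t^{\delta}_{2i}$ reaches $x^{\delta,\circ}_i$ via an intermediary adjacent to only one endpoint of that critical pair, and a path through a portal vertex $v^{\delta}_{p'}$ with $p'\in\setrep(2i)$ reaches $c^{\circ}_q$ but not $c^{\star}_q$. What actually makes the statement true --- and what the paper proves --- is a distance identity: for every $u\in X^+$ with $X\in\{X^{\delta},A^{\delta},V^{\delta}\}$ one has $d(u,c^{\circ}_q)=d(u,c^{\star}_q)=d(u,\bitrepnullifier(V^{\delta}))+1$, and analogously $d(u,x^{\delta',\circ}_i)=d(u,x^{\delta',\star}_i)=d(u,\bitrepnullifier(A^{\delta'}))+1$ for $u$ in any gadget other than that of $X^{\delta'}$; inside the pair's own gadget, equality follows from the shared bit-representation together with the clique on $\bitrep(\cdot)\setminus\bits(\cdot)$. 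Your premise that the two endpoints ``share the same adjacencies to the various $\bitrepnullifier$-vertices'' is true but not sufficient: one must also verify that the asymmetric edges never yield a strictly shorter path to one endpoint than the nullifier route gives to both. You flag and resolve this only for the clause pairs (the missing $c^{\star}_q$--$V^{\delta}$ edges), but the exactly analogous asymmetry for the variable pairs --- $x^{\delta,\circ}_i$ is adjacent to $t^{\delta}_{2i}$ and $f^{\delta}_{2i-1}$ while $x^{\delta,\star}_i$ is not, which is the very reason these pairs are critical --- is never addressed, e.g.\ for $u\in\bitrep(A^{\delta})$ or $u=\bitrepnullifier(X^{\delta})$. (A smaller sign that the computation was not carried out: your parenthetical that distances from $X'^+$ to the pair are ``$1$ or $2$'' misses the pendant vertices of $\bits(X')$, which lie at distance~$3$ from both endpoints.)
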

\begin{proof}
\begin{enumerate}
\item By~\cref{obs:twins}, the statement follows for all $X\in \{C\} \cup \{X^{\delta},A^{\delta},V^{\delta} \mid \delta\in \{\alpha,\beta,\gamma\}\}$.

\item For all $X\in \{C\} \cup \{X^{\delta},A^{\delta},V^{\delta} \mid \delta\in \{\alpha,\beta,\gamma\}\}$, note that $\bitrepnullifier(X)$ is distinguished by 
$S \cap \bits(X)$ since it is the only vertex in $G$ that is at distance~$2$ from every vertex in $\bits(X)$. We now do a case analysis for the remaining non-critical pairs of vertices $\langle u, v \rangle$ assuming that $\bitrepnullifier(X)\notin \{u,v\}$ (also, suppose that both $u$ and $v$ are not in $S$, as otherwise, they are obviously distinguished):
\begin{description}
\item[Case i: $u, v \in X\cup X^+$.]
\hfill
\begin{description}
\item[Case i(a): $u, v\in X$ or $u, v\in \bitrep(X)\setminus \bits(X)$.] In the first case, let $j$ be the digit in the binary representation of the subscript of $u$ that is not equal to the $j^{\text{th}}$ digit in the binary representation of the subscript of $v$ (such a $j$ exists since $\langle u, v \rangle$ is not a critical pair). In the second case, without loss of generality, let $u=y_i$ and $v=y_j$. 
By the first item of the statement of the lemma (1.), without loss of generality, $y_j^{a}\in S\cap \bits(X)$.
Then, in both cases, $d(y_j^a, u)\neq d(y_j^a, v)$.
\item[Case i(b): $u \in X$ and $v\in \bitrep(X)$.]
Without loss of generality, $y_{\star}^{a}\in S\cap \bits(X)$ (by 1.). 
Then, $d(y_{\star}^{a}, u)=2$ and, for all $v\in \bits(X)\setminus\{y_{\star}^b\}$, $d(y_{\star}^{a}, v)=3$. 
Without loss of generality, let $y_i$ be adjacent to $u$ and let $y_i^a\in S\cap \bits(X)$ (by 1.).
Then, for $v=y_{\star}^b$, $3=d(y_i^a, v)\neq d(y_i^a, u)=2$.
If $v\in \bitrep(X)\setminus \bits(X)$, then, without loss of generality, $v=y_j$ and $y_j^a\in S\cap \bits(X)$ (by 1.), and $1=d(y_j^a, v)< d(y_j^a, u)$.
\item[Case i(c): $u, v\in \bits(X)$.]
Without loss of generality, $u=y_i^b$ and $y_i^a\in S$ (by 1.). 
Then, $2=d(y_i^a, u)\neq d(y_i^a, v)=3$.
\item[Case i(d): $u \in \bits(X)$ and $v\in \bitrep(X)\setminus \bits(X)$.]
Without loss of generality, $v=y_i$ and $y_i^a\in S$ (by 1.). 
Then, $1=d(y_i^a, v) < d(y_i^a, u)$.
\end{description}
\item[Case ii: $u \in X\cup X^+$ and $v \in V(G)\setminus (X\cup X^+)$.]
For each $u \in X\cup X^+$, there exists $w\in \bits(X)\cap S$ such that $d(u, w)\leq 2$, while, for each $v \in V(G)\setminus (X\cup X^+)$ and $w\in \bits(X)\cap S$, we have $d(v, w)\geq 3$.
\end{description}
\item For all $X\in \{X^{\delta},A^{\delta},V^{\delta} \mid \delta\in \{\alpha,\beta,\gamma\}\}$, $u\in X^+$, $v \in \{c_q^{\circ}, c_q^{\star}\}$, and $q\in [m]$,
we have that $d(u, v)=d(u, \bitrepnullifier(V^{\delta}))+1$.
Further, for $X=C$ and all $u\in X^+$ and $q\in [m]$, either $d(u,c_q^{\circ})=d(u,c_q^{\star})=1$, $d(u,c_q^{\circ})=d(u,c_q^{\star})=2$, or $d(u,c_q^{\circ})=d(u,c_q^{\star})=3$ by the construction in Subsection~\ref{subsubsec:gadget-critical-pairs} and since $\bitrep(X)\setminus \bits(X)$ is a clique. 
Hence, for all $X\in \{C\} \cup \{X^{\delta},A^{\delta},V^{\delta} \mid \delta\in \{\alpha,\beta,\gamma\}\}$, vertices in $X^+ \cap S$ cannot resolve a pair of vertices $\langle c_q^{\circ}, c_q^{\star}\rangle$ for any $q\in [m]$.

\smallskip

For all $\delta\in \{\alpha,\beta,\gamma\}$, if $v\in X^{\delta}$, then, for all $X\in \{C\} \cup \{X^{\delta'},A^{\delta'},V^{\delta'} \mid \delta'\in \{\alpha,\beta,\gamma\}\}$ such that $\delta\neq \delta'$, and $u\in X^+$, we have that $d(u,v)=d(u, \bitrepnullifier(A^{\delta}))+1$. Similarly, for all $\delta\in \{\alpha,\beta,\gamma\}$, if $v\in X^{\delta}$, then, for all $X\in \{A^{\delta}, V^{\delta}\}$ and $u\in X^+$, we have that $d(u,v)=d(u, \bitrepnullifier(A^{\delta}))+1$. Lastly, for each $\langle x_i^{\delta,\circ},  x_i^{\delta,\star}\rangle$, $\delta\in \{\alpha,\beta,\gamma\}$, and $i\in [n]$, if $X=X^{\delta}$, then, for all $u\in X^+$, either $d(u,x_i^{\delta,\circ})=d(u,x_i^{\delta,\star})=1$, $d(u,x_i^{\delta,\circ})=d(u,x_i^{\delta,\star})=2$, or $d(u,x_i^{\delta,\circ})=d(u,x_i^{\delta,\star})=3$ by the construction in Subsection~\ref{subsubsec:gadget-critical-pairs} and since $\bitrep(X)\setminus \bits(X)$ is a clique. \qedhere
\end{enumerate}
\end{proof}

\begin{lemma}
\label{lemma:3-Part-3-SAT-Met-Dim-diam-tw-forward}
If $\psi$ is a satisfiable \textsc{$3$-Partitioned-$3$-SAT} formula, then $G$ admits a resolving set of size $k$.
\end{lemma}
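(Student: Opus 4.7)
The plan is to convert a satisfying assignment $\pi$ of $\psi$ into a resolving set $S$ of $G$ whose size exactly matches the budget $k$. For each variable $x_i^{\delta}$, add $t_{2i}^{\delta}$ to $S$ if $\pi(x_i^{\delta}) = \true$ and $f_{2i-1}^{\delta}$ otherwise; this contributes $3n$ vertices. Then, for each set $X \in \{X^{\delta}, A^{\delta}, V^{\delta} : \delta \in \{\alpha,\beta,\gamma\}\} \cup \{C\}$, add to $S$ exactly one vertex from each false-twin pair $\{y_i^a, y_i^b\}$ and $\{y_\star^a, y_\star^b\}$ in $\bits(X)$ (say, the $a$-variants). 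A direct count against the formula for $k$ given at the end of Section~\ref{sec:reduction-tw} shows $|S| = k$.

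Now I would verify that $S$ resolves every pair of vertices. The non-critical pairs are handled essentially for free: since $S$ contains one vertex from each false-twin pair in every $\bits(X)$ (condition 1 of Lemma~\ref{lemma:set-id-tw}), condition 2 of that lemma ensures that any non-critical pair $\langle u,v\rangle$ with $u \in X \cup X^+$ is resolved by $\bits(X) \cap S$; and since every vertex of $G$ lies in some $X \cup X^+$, every non-critical pair is covered. It remains to resolve the two families of critical pairs. For the pairs $\langle x_i^{\delta,\circ}, x_i^{\delta,\star}\rangle$, observe that by construction both $t_{2i}^{\delta}$ and $f_{2i-1}^{\delta}$ are adjacent to $x_i^{\delta,\circ}$ but not to $x_i^{\delta,\star}$; hence whichever of the two is placed in $S$ witnesses $d(\cdot, x_i^{\delta,\circ}) = 1 \neq d(\cdot, x_i^{\delta,\star})$.

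The main work lies in the clause pairs $\langle c_q^{\circ}, c_q^{\star}\rangle$, and this is where the Sperner-family machinery of $\setrep$ is used. Since $\pi$ satisfies $C_q$, some literal, say the positive occurrence of $x_i^{\delta}$, is true under $\pi$, so $t_{2i}^{\delta} \in S$. By construction, $c_q^{\circ}$ is adjacent to every vertex of $V^{\delta}$ while $c_q^{\star}$ is adjacent only to those $v_{p'}^{\delta}$ with $p' \in [2p]\setminus \setrep(2i)$. Since $t_{2i}^{\delta}$ is adjacent to $v_{p'}^{\delta}$ exactly when $p' \in \setrep(2i)$, a path $t_{2i}^{\delta} - v_{p'}^{\delta} - c_q^{\circ}$ with $p' \in \setrep(2i)$ gives $d(t_{2i}^{\delta}, c_q^{\circ}) = 2$, whereas no common neighbour exists between $t_{2i}^{\delta}$ and $c_q^{\star}$, forcing $d(t_{2i}^{\delta}, c_q^{\star}) \geq 3$. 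The symmetric argument handles the case of a satisfied negative literal using $f_{2j-1}^{\delta}$.

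The main obstacle will be the last distance inequality: I need to rule out all shortcut paths from $t_{2i}^{\delta}$ to $c_q^{\star}$ of length $2$, including ones that leave the validation portal through nullifier vertices such as $\bitrepnullifier(A^{\delta})$, $\bitrepnullifier(V^{\delta})$, or $\bitrepnullifier(C)$, and through the intra-portal clique edges of $V^{\delta}$. A careful inspection of the adjacency lists produced by the reduction (in particular, that $\bitrepnullifier(V^{\delta})$ is not adjacent to $A^{\delta}$, and that $\bitrepnullifier(A^{\delta})$ is not adjacent to $C$) shows that every such route requires at least three edges, so $d(t_{2i}^{\delta}, c_q^{\star}) = 3$ and the critical pair is resolved. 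Assembling the three cases completes the proof that $S$ is a resolving set of size $k$.
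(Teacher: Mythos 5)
Your proposal is correct and follows essentially the same route as the paper's proof: the same resolving set (one of $t^{\delta}_{2i}/f^{\delta}_{2i-1}$ according to $\pi$, plus one vertex per false-twin pair in each $\bits(X)$), non-critical pairs discharged via Lemma~\ref{lemma:set-id-tw}, variable critical pairs resolved by the chosen vertex of $A^{\delta}$, and clause pairs resolved through the Sperner-based adjacency with $V^{\delta}$ giving $d(t^{\delta}_{2i},c_q^{\circ})=2<d(t^{\delta}_{2i},c_q^{\star})$. Your extra care in ruling out length-$2$ shortcuts through the nullifier vertices is a more explicit spelling-out of what the paper compresses into ``by the construction.''
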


\begin{proof}
Suppose $\pi: X^{\alpha} \cup X^{\beta} \cup X^{\gamma} \mapsto \{\true, \false\}$
is a satisfying assignment for $\psi$.
We construct a resolving set $S$ of size $k$ for $G$ using this assignment.

For every $\delta \in \{\alpha, \beta, \gamma\}$ and $i \in [n]$,
if $\pi(x^{\delta}_i)=\true$, then let $t^{\delta}_{2i}\in S$, and otherwise, let $f^{\delta}_{2i-1}\in S$.
For every $X \in \{B^{\delta},  A^{\delta},  V^{\delta}, C\}$ and $\delta \in \{\alpha, \beta, \gamma\}$,
add one vertex from each pair of false twins in $\bits(X)$ to $S$.
Note that the size of $S$ is $k$.

In the remaining part of the proof, we show that $S$ is a resolving set of $G$.
First, we prove that all critical pairs are resolved by $S$ in the following claim.

\begin{claim}\label{clm:critpairs-tw}
All critical pairs are resolved by $S$.
\end{claim}
\begin{claimproof}
For each $i\in [n]$ and $\delta\in \{\alpha, \beta, \gamma\}$, the critical pair $\langle x^{\delta, \circ}_i, x^{\delta, \star}_i \rangle$
is resolved by the vertex $S\cap A^{\delta}$ by the construction.
For each $q\in [m]$, the clause $C_q$ is satisfied by the assignment $\pi$.
Thus, there is a variable, say $x_i$ in $C_q$, that satisfies $C_q$ according to $\pi$.
If $x_i$ appears positively in $C_q$, then $t_{2i}^{\delta}\in S$ resolves the critical pair $\langle c_q^{\circ}, c_q^{\star} \rangle$ since
$d(t_{2i}^{\alpha},c_q^{\circ})=2<d(t_{2i}^{\alpha},c_q^{\star})$ by the construction.
Similarly, if $x_i$ appears negatively in $C_q$, then $f_{2i-1}^{\delta}\in S$ resolves the critical pair $\langle c_q^{\circ}, c_q^{\star} \rangle$ since $d(f_{2i-1}^{\alpha},c_q^{\circ})=2<d(f_{2i-1}^{\alpha},c_q^{\star})$ by the construction.
Thus, every critical pair $\langle c_q^{\circ}, c_q^{\star} \rangle$ is resolved by $S$.
\end{claimproof}

Then, every vertex pair in $V(G)$ is resolved by $S$ by \cref{clm:critpairs-tw} in conjunction with the second item of the statement of \cref{lemma:set-id-tw}.
\end{proof}

\begin{lemma}
\label{lemma:3-Part-3-SAT-Met-Dim-diam-tw-backword}
If $G$ admits a resolving set of size $k$, then $\psi$ is a satisfiable \textsc{$3$-Partitioned-$3$-SAT} formula.
\end{lemma}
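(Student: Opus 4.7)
The plan is to show that any resolving set $S$ of $G$ with $|S|\le k$ induces a satisfying assignment for $\psi$ via a tight budget analysis that mirrors the forward direction.

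First, I would establish a budgeting lower bound. For every $X\in\{X^{\delta},A^{\delta},V^{\delta},C\}$ with $\delta\in\{\alpha,\beta,\gamma\}$, Observation~\ref{obs:twins} forces $S$ to contain at least one vertex from each false-twin pair in $\bits(X)$, contributing a fixed quota per block. Separately, for each $\delta$ and $i\in[n]$, the critical pair $\langle x_i^{\delta,\circ}, x_i^{\delta,\star}\rangle$ must be resolved. Part (3) of Lemma~\ref{lemma:set-id-tw} rules out every vertex of any Set Identifying Gadget $X^+$ as a potential resolver. A short case analysis will then show that the only vertices of $G$ at differing distances from $x_i^{\delta,\circ}$ and $x_i^{\delta,\star}$ lie in $\{x_i^{\delta,\circ},x_i^{\delta,\star},t_{2i}^{\delta},f_{2i-1}^{\delta}\}$, since these pair-vertices have identical adjacencies to everything except $\{t_{2i}^{\delta}, f_{2i-1}^{\delta}\}$ (and a more careful argument handles paths through $\bitrepnullifier(A^{\delta})$, which yields equal distances to both). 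Summing the lower bounds from the twin pairs and the $3n$ critical pairs matches $k$ exactly, so $S$ contains \emph{exactly} one twin from each pair in every $\bits(X)$, and \emph{exactly} one vertex from $\{t_{2i}^{\delta}, f_{2i-1}^{\delta}, x_i^{\delta,\circ}, x_i^{\delta,\star}\}$ for every $\delta$ and $i$.

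Second, I would extract an assignment $\pi:X^{\alpha}\cup X^{\beta}\cup X^{\gamma}\to\{\true,\false\}$ from this structure by setting $\pi(x_i^{\delta})=\true$ if $t_{2i}^{\delta}\in S$, $\pi(x_i^{\delta})=\false$ if $f_{2i-1}^{\delta}\in S$, and picking $\pi(x_i^{\delta})$ arbitrarily if instead $S$ hits $\{x_i^{\delta,\circ},x_i^{\delta,\star}\}$. I then need to verify that $\pi$ satisfies every clause $C_q$. Fix $q\in[m]$; the critical pair $\langle c_q^{\circ},c_q^{\star}\rangle$ is resolved by some $w\in S$, and by Lemma~\ref{lemma:set-id-tw}(3), $w$ is not in any $X^+$. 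A direct distance computation---analogous to that in Claim~\ref{clm:critpairs-tw} and to the Set Representation analysis in Subsection~\ref{subsubsec:set-representation-MD}---shows that vertices in $X^{\delta}$ lie at equal distance from $c_q^{\circ}$ and $c_q^{\star}$ (both distances equal $3$, routed through $\bitrepnullifier(A^{\delta})$ and $V^{\delta}$), so $w$ cannot come from any $X^{\delta}$ either. Hence $w\in A^{\delta}$ for some $\delta$; and the Sperner-family argument (the fact that $\setrep(j)\setminus\setrep(i)\neq\emptyset$ whenever $i\neq j$) forces $w$ to be precisely the $A^{\delta}$-vertex whose literal appears in $C_q$.

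Finally, I would conclude: the existence of this $w\in A^{\delta}\cap S$ resolving $\langle c_q^{\circ}, c_q^{\star}\rangle$ means either $w=t_{2i}^{\delta}$ with $x_i^{\delta}$ occurring positively in $C_q$, or $w=f_{2i-1}^{\delta}$ with $x_i^{\delta}$ occurring negatively in $C_q$; in both cases, the definition of $\pi$ makes the corresponding literal true in $C_q$. The main obstacle will be the careful case analysis in Step~1 showing that the four listed vertices are indeed the \emph{only} possible resolvers of $\langle x_i^{\delta,\circ},x_i^{\delta,\star}\rangle$ and the symmetric analysis in Step~2 showing that $X^{\delta}$-vertices cannot resolve clause pairs; both require chasing paths through the $\bitrepnullifier$ hubs and the validation portals to rule out ``long-range'' resolvers, thereby locking the tight budget accounting into a bijective correspondence with a satisfying assignment of $\psi$.
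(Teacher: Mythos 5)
Your proposal is correct and follows essentially the same route as the paper's proof: a tight budget count (twin pairs in each $\bits(X)$ plus the $3n$ variable critical pairs, whose only possible resolvers are $\{x_i^{\delta,\circ},x_i^{\delta,\star},t_{2i}^{\delta},f_{2i-1}^{\delta}\}$ by Lemma~\ref{lemma:set-id-tw}(3) and the equal-distance argument through $\bitrepnullifier(A^{\delta})$), then reading off the assignment and using Lemma~\ref{lemma:set-id-tw}(3) together with the Sperner/set-representation property to show that each clause pair $\langle c_q^{\circ},c_q^{\star}\rangle$ can only be resolved by an $A^{\delta}$-vertex whose literal satisfies $C_q$. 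The distance computations you sketch (e.g., $X^{\delta}$-vertices being equidistant from both clause-pair vertices via the nullifier/portal routes) match those carried out in the paper.
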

\begin{proof}
Assume that $G$ admits a resolving set $S$ of size $k$.
First, we prove some properties regarding $S$.
By the first item of the statement of \cref{lemma:set-id-tw}, for each $\delta \in \{\alpha, \beta, \gamma\}$, we have
\begin{equation*}
\begin{split}
&|S\cap \bits(X^{\delta})| \geq \lceil\log(|X^{\delta}|/2+2)\rceil+1, \quad
|S\cap \bits(V^{\delta})| \geq \lceil\log(|V^{\delta}|+2)\rceil+1, \\
&|S\cap \bits(A^{\delta})| \geq \lceil\log(|A^{\delta}|+2)\rceil+1, \quad
|S\cap \bits(C)| \geq \lceil\log(|C|/2+2)\rceil + 1.
\end{split}
\end{equation*}
Hence, any resolving set $S$ of $G$ already has size at least
$$3 \cdot ((\lceil \log(|X^{\alpha}|/2+2)\rceil + 1)
+ (\lceil\log(|A^{\alpha}|+2)\rceil + 1) + (\lceil \log(|V^{\alpha}|+2)\rceil + 1)) + \lceil\log(|C|/2+2)\rceil + 1.$$

Now, for each $\delta\in\{\alpha, \beta, \gamma\}$ and $i\in[n]$, consider the critical pair $\langle x_i^{\delta, \circ}, x_i^{\delta, \star}\rangle$.
By the construction mentioned in Subsection~\ref{subsubsec:gadget-critical-pairs}, only $v\in \{t^{\delta}_{2i},  f^{\delta}_{2i - 1}, x_i^{\delta, \circ}, x_i^{\delta, \star}\}$ resolves a pair $\langle x_i^{\delta, \circ}, x_i^{\delta, \star} \rangle$.
Indeed, for all $X\in \{C\} \cup \{X^{\delta'},A^{\delta'},V^{\delta'} \mid \delta'\in \{\alpha,\beta,\gamma\}\}$, no vertex in $X^+$ can resolve such a pair by the third item of the statement of \cref{lemma:set-id-tw}. Also, for all $X\in \{A^{\delta''},A^{\delta}\setminus \{t^{\delta}_{2i},  f^{\delta}_{2i - 1}\},V^{\delta'},C\}$, $\delta' \in \{\alpha, \beta, \gamma\}$, $\delta'' \in \{\alpha, \beta, \gamma\}$ such that $\delta\neq \delta''$, and $u\in X$, we have that $d(u,x_i^{\delta, \circ})=d(u,x_i^{\delta, \star})=d(u,\bitrepnullifier(A^{\delta}))+1$. Hence, since any resolving set $S$ of $G$ of size at most $k$ can only admit at most another $3n$ vertices, we get that equality must in fact hold in every one of the aforementioned inequalities, and any resolving set $S$ of $G$ of size at most $k$ contains one vertex from $\{t^{\delta}_{2i},  f^{\delta}_{2i - 1}, x_i^{\delta, \circ}, x_i^{\delta, \star}\}$ for all $i\in [n]$ and $\delta\in \{\alpha, \beta, \gamma\}$. Hence, any resolving set $S$ of $G$ of size at most $k$ is actually of size exactly $k$.

Next, we construct an assignment $\pi: X^{\alpha}\cup X^{\beta}\cup X^{\gamma} \rightarrow \{\true, \false\}$ in the following way.
For each $\delta\in \{\alpha, \beta, \gamma\}$ and $i\in [n]$, if $t_{2i}^{\delta}\in S$, then set $\pi(x^{\delta}_i):=\true$, and if $f_{2i-1}^{\delta}\in S$, then set $\pi(x^{\delta}_i):=\false$.
For any $i\in [n]$ and $\delta\in\{\alpha, \beta, \gamma\}$, if $S\cap \{t^{\delta}_{2i},  f^{\delta}_{2i - 1}\}=\emptyset$, then one of $ x_i^{\delta, \circ}, x_i^{\delta, \star}$ is in $S$, and we can use an arbitrary assignment of the variable $x_i^{\delta}$.

We prove that the constructed assignment $\pi$ satisfies every clause in $C$.
Since $S$ is a resolving set, it follows that, for every clause $c_q\in C$,
there exists $v\in S$ such that $d(v, c_q^{\circ})\neq d(v, c_q^{\star})$.
Note that, for any $v$ in $\bits(A^{\delta}), \bits(X^{\delta}), \bits(V^{\delta})$ for any $\delta\in \{\alpha, \beta, \gamma\}$ or in $\bits(C)$, we have $d(v, c_i^{\circ})= d(v, c_i^{\star})$ by the third item of the statement of \cref{lemma:set-id-tw}.
Further, for any $v\in X^{\delta}$ and any $\delta\in \{\alpha, \beta, \gamma\}$, we have that $d(v, c_q^{\circ})=d(v, c_q^{\star})=d(v,\bitrepnullifier(V^{\delta}))+1$.
Thus, $v\in S\cap \bigcup\limits_{\delta\in \{\alpha, \beta, \gamma\}} A^{\delta}$.
Without loss of generality, suppose that $c_q^{\circ}$ and $c_q^{\star}$ are resolved by
$t_{2i}^{\alpha}$.
So, $d(t_{2i}^{\alpha}, c_i^{\circ}) \neq d(t_{2i}^{\alpha}, c_i^{\star})$.
By the construction, the only case where $d(t_{2i}^{\alpha}, c_i^{\circ}) \neq d(t_{2i}^{\alpha}, c_i^{\star})$ is when
$C_q$ contains a variable $x_i\in X^{\alpha}$ and $\pi(x_i)$ satisfies $C_q$.
Thus, we get that the clause $C_q$ is satisfied by the assignment $\pi$.

Since $S$ resolves all pairs $\langle c_q^{\circ}, c_q^{\star}\rangle$ in $V(G)$, then the assignment $\pi$ constructed above indeed satisfies every clause $c_q$, completing the proof.
\end{proof}

\begin{proof}[Proof of \Cref{thm:lower-bound-diam-tw}.]
In Subsection~\ref{sec:reduction-tw}, we presented a reduction that takes an instance $\psi$ of \textsc{$3$-Partitioned-$3$-SAT} and returns an equivalent instance $(G,k)$ of \textsc{Metric Dimension} (by Lemmas~\ref{lemma:3-Part-3-SAT-Met-Dim-diam-tw-forward} and \ref{lemma:3-Part-3-SAT-Met-Dim-diam-tw-backword}) in polynomial time.
Now, consider the set $$Z=\{V^{\delta}\cup X^+~|~X \in \{X^{\delta}, A^{\delta},  V^{\delta}, C\}, \delta \in \{\alpha, \beta, \gamma\}\}.$$
It is easy to verify that $|Z| = \calO(\log (n))$
and $G - Z$ is a collection of $P_3$'s and isolated vertices.
Hence,  $\tw(G)$, $\fvs(G)$, and $\td(G)$ are upper bounded by $\calO(\log (n))$.
It is also easy to see that the diameter of the graph is bounded by a constant.
Hence, if there is an algorithm for \textsc{Metric Dimension} 
that runs in time $2^{f(\diam)^{o(\tw)}}$ (or $2^{f(\diam)^{o(\fvs)}}$ or $2^{f(\diam)^{o(td)}}$),
then there is an algorithm solving \textsc{$3$-Partitioned-$3$-SAT} running in time $2^{o(n)}$, which by \cref{prop:3-SAT-to-3-Partition-3-SAT} contradicts the \ETH.
\end{proof}

\section{\gsfull: Lower Bound Regarding Diameter plus Treewidth}
\label{sec:lower-bound-diam-treewidth}


The aim of this section is to prove
the following theorem.
\begin{theorem}
\label{thm:lower-bound-diam-tw-GS}
Unless the \ETH\ fails, \gsfull does not admit an algorithm 
running in time $2^{f(\diam)^{o(\tw)}} \cdot n^{\OO(1)}$ for any computable function 
$f:\mathbb{N} \mapsto \mathbb{N}$.
\end{theorem}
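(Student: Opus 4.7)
The plan is to reduce from \textsc{3-Partitioned-3-SAT} in direct analogy with Section~\ref{sec:lower-bound-diam-treewidth-MD}, reusing the Sperner-family encoding via validation portals of size $2p = \OO(\log n)$ to keep treewidth logarithmic and diameter constant, while replacing every \mdfull-specific gadget by its \gsfull-analog. Concretely, given $\psi$ on $3n$ variables partitioned into $X^\alpha, X^\beta, X^\gamma$, I will construct a graph $G$ and a budget $k$ such that $\psi$ is satisfiable iff $G$ admits a geodetic set of size $k$, with $|Z| = \OO(\log n)$ for a natural hitting set $Z$ of the non-trivial structure, leaving $G - Z$ a disjoint union of $P_3$'s and isolated vertices so that $\tw(G), \fvs(G), \td(G) = \OO(\log n)$ and $\diam(G) = \OO(1)$.

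For the variable selection, I will introduce for each $\delta\in\{\alpha,\beta,\gamma\}$ and each variable $x_i^\delta$ a true-literal vertex $t_{2i}^\delta$ and a false-literal vertex $f_{2i-1}^\delta$, and I will attach to each of them a (true-twin or pendant) simplicial gadget so that \emph{every} geodetic set must include one forced vertex per literal (via \Cref{obs:simplicial,obs:pendant}). The role played by critical pairs in the \mdfull reduction will be played here by a \emph{variable choice gadget}: a small forcing structure whose coverage forces, for each $i$ and each $\delta$, at least one of the literal vertices $t_{2i}^\delta, f_{2i-1}^\delta$ (or a ``default'' vertex playing the role of $x_i^{\delta,\circ}$/$x_i^{\delta,\star}$) to be chosen into the geodetic set; the budget $k$ will then force that exactly one such vertex is chosen per variable, encoding an assignment $\pi$. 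The clause gadget will be a vertex $c_q$ (augmented with a forced simplicial twin, so that $c_q$ itself need not lie in every solution but \emph{must} be covered by some shortest path between two elements of the solution).

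The Sperner-family edges between literal vertices and the validation portal $V^\delta$, and the complementary edges between each clause vertex and $V^\delta$, are imported verbatim from Subsection~\ref{sec:reduction-tw}: literal $t_{2i}^\delta$ attaches to portal positions indexed by $\setrep(2i)$, and a clause $C_q$ containing $x_i^\delta$ positively attaches its clause vertex to $[2p]\setminus\setrep(2i)$, with the analogous rule for negative occurrences. The key distance computation is the same as in the \mdfull analysis: by the Sperner property, if $C_q$ contains $x_i^\delta$ positively, then a shortest path of length two from $t_{2i}^\delta$ to $c_q$ must traverse $c_q$'s forced companion via one of the portal vertices indexed in $\setrep(2i)$, and this path can be routed to pass through $c_q$; whereas for any other literal $t_{2j}^\delta$ or $f_{2j-1}^\delta$, a shorter or disjoint path through $\setrep(2j)\setminus\setrep(2i)\neq\emptyset$ exists that bypasses $c_q$. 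Hence $c_q$ is geodetically covered iff some literal satisfying $C_q$ is selected, which is precisely what a satisfying assignment provides.

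The main obstacle will be calibrating the forced simplicial attachments and the clause-target gadget so that two conditions hold simultaneously: \emph{(i)} all ``uninteresting'' vertices (portal vertices, bit-representation vertices, identifying-gadget internal vertices) are automatically covered by shortest paths between forced pendants, so that satisfaction of clause vertices becomes the only bottleneck; and \emph{(ii)} the shortest-path lengths through the portal are exactly what the argument needs, so that the ``wrong'' literal cannot accidentally cover $c_q$ via a detour of the same length. This is more delicate than in \mdfull because geodetic covering is existential over shortest paths rather than existential over distinguishing vertices, so I expect to need an auxiliary ``nullifier''-style vertex adjacent to all of $V^\delta$ and all clause vertices in order to guarantee that every shortest path between the required forced pendants that does \emph{not} pass through the relevant portal positions is shorter than the one that does. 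Once these distance constraints are locked in, the correctness proof proceeds as in Subsection~\ref{subsec:correctness-3-Par-3-SAT-Met-Dim-diam-tw}: satisfying assignments yield geodetic sets of size $k$ by literal selection, and conversely, a geodetic set of size $k$ is forced into a literal-per-variable shape whose clause coverage is equivalent to satisfiability. A hypothetical $2^{f(\diam)^{o(\tw)}}\cdot n^{\OO(1)}$-time algorithm for \gsfull would then solve \textsc{3-Partitioned-3-SAT} in time $2^{o(n)}$, contradicting \Cref{prop:3-SAT-to-3-Partition-3-SAT} and the \ETH.
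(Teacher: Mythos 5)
Your high-level plan matches the paper's: reduce from \textsc{3-Partitioned-3-SAT}, use the Sperner-family/validation-portal wiring to keep $\tw=\OO(\log n)$ and $\diam=\OO(1)$, force pendant vertices into the solution via \Cref{obs:pendant}, give each clause a forced companion so that the clause vertex must be covered, and add global ``shortcut'' vertices so that only the intended literal--clause shortest paths cover the clause vertices (the paper's $g_1,g_2,g_3$ play exactly the nullifier role you anticipate). However, there is a structural flaw in your variable gadget, not just a calibration issue. You propose to attach a forced pendant/simplicial vertex to \emph{each} literal vertex $t^{\delta}_{2i}$ and $f^{\delta}_{2i-1}$. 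Every path leaving such a pendant factors through its literal vertex, so $d(t',w)=1+d(t^{\delta}_{2i},w)$ for all $w$, where $t'$ is the pendant on $t^{\delta}_{2i}$. Consequently, whenever the distance structure is arranged so that a shortest path from the satisfying literal to the forced clause companion passes through $c_q$ (which is exactly what the reduction needs), the same holds for the shortest path from $t'$ to that companion --- and $t'$ is forced into \emph{every} geodetic set, independently of the assignment you intend to read off. Every clause vertex would then be covered automatically (each clause contains some literal), and satisfiability would no longer be encoded. The paper avoids this by keeping the literal vertices pendant-free: the forced vertices are $x_i^{\delta,\star}$ (behind the diamond $x^{\triangleleft},x^{\triangleright},x^{\circ}$), $z_1,z_2$ (behind $y_1,y_2$), and $c_q^b$ (behind $c_q^a$), and the globals $g_1,g_2,g_3$ ensure that all shortest paths between these forced vertices avoid $c_q$; the choice of one literal per variable is then forced by the need to cover $x^{\delta,\triangleleft}_i,x^{\delta,\triangleright}_i$ within the budget, not by pendants on the literals.

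A second, smaller problem is your central distance claim: with the complementary wiring you say you import verbatim (clause side attached to $[2p]\setminus\setrep(2i)$), the \emph{satisfying} literal $t^{\delta}_{2i}$ has \emph{no} common portal neighbor with the clause gadget, so there is no ``shortest path of length two from $t^{\delta}_{2i}$ to $c_q$ \dots via one of the portal vertices indexed in $\setrep(2i)$.'' The correct mechanism is the opposite: the absence of a common portal neighbor pushes $d(t^{\delta}_{2i},c_q^b)$ up to $4$, making the path through $g_2,c_q,c_q^a$ a shortest path that covers $c_q$, while any non-satisfying literal shares a portal neighbor with $c_q^a$, has distance $3$ to $c_q^b$, and therefore all of its shortest paths bypass $c_q$. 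Until the forced-vertex placement and these distances are fixed along these lines, the claimed equivalence between satisfying assignments and size-$k$ geodetic sets does not go through.
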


As in the previous section,
we present a different reduction from \textsc{$3$-Partitioned-$3$-SAT} (see \Cref{sec:preliminaries})
to \gsfull. 
The reduction takes as input an instance $\psi$ of \textsc{$3$-Partitioned-$3$-SAT}
on $3n$ variables and returns $(G, k)$ as an instance of \gsfull
such that $\tw(G) = \calO(\log(n))$ and $\diam(G) = \calO(1)$.
We rely on the tool of set representation introduced in Section~\ref{subsubsec:set-representation-MD}. For convenience, we recall it in the next subsection and describe how we apply it in the reduction to prove Theorem~\ref{thm:lower-bound-diam-tw-GS}.

\subsection{Preliminary Tool: Set Representation}
For a positive integer $p$, define $\calF_p$ as the collection of
subsets of $[2p]$ that contains exactly $p$ integers.
We critically use the fact that no set in $\calF_p$
is contained in any other set in $\calF_p$
(such a collection of sets is called a \emph{Sperner family}).
Let $\ell$ be a positive integer such that $\ell \leq \binom{2p}{p}$.
We define
$\setrep: [\ell] \mapsto \calF_p$ as a one-to-one function
by arbitrarily assigning a set in $\calF_p$ to an integer in $[\ell]$.
By the asymptotic estimation of the central binomial coefficient, $\binom{2p}{p}\sim \frac{4^p}{\sqrt{\pi \cdot p}}$ \cite{Sperner}.
To get the upper bound of $p$, we scale down the asymptotic function and have $\ell \leq \frac{4^p}{2^p}=2^p$.
Thus, $p=\OO(\log \ell)$. 

We will apply the existence of such a function in the context of
\gsfull.
Suppose we have a ``large'' collection of vertices,  say $A = \{a_1, a_2, \dots, a_{\ell} \}$,
and a ``large'' collection of vertices
$C = \{c_1, c_2, \dots, c_m\}$.
Moreover, we are given a function $\phi:[m] \mapsto [\ell]$.
The basic idea is to design gadgets such that $c_q$ is only covered by the shortest path from $a_{\phi(q)} \in A$ to $c^b_q$ ($c^b_q$ is forced to be chosen in the geodetic set)
for any $q \in [m]$, while keeping the treewidth of this part of
the graph of order $\calO(\log(|A|))$. To do so, we create a ``small'' intermediate set $V$ (of size $\calO(\log(|A|))$) through which will go the shortest paths between vertices in $A$ and $C$, and we connect $a_i$ to the vertices of $V$ corresponding to the bit-representation of $\setrep(i)$, and $c_q$ (with $i=\phi(q)$) to all the other vertices of $V$. In this way, the construction will ensure that $c_q$ is covered by a shortest path between $a_{\phi(q)}$ and $c^b_q$, but is not covered by any other shortest path between a vertex of $A$ and a vertex of $C$.
We give the details in the following subsection.

\subsection{Reduction}
Consider an instance $\psi$ of \textsc{3-Partitioned-3-SAT},
with $X^{\alpha}, X^{\beta}, X^{\gamma}$ the partition of the variable set.
From $\psi$, we construct the graph $G$ as follows.
We describe the construction of $X^{\alpha}$,
with the constructions for $X^{\beta}$ and $X^{\gamma}$ being analogous. See Figure~\ref{fig:reduction-diameter-tw} for an illustration. 
We rename the variables in $X^{\alpha}$
to $x^{\alpha}_{i}$ for $i \in [n]$.

\begin{figure}[t]
	\centering
	\includegraphics[scale=0.9]{./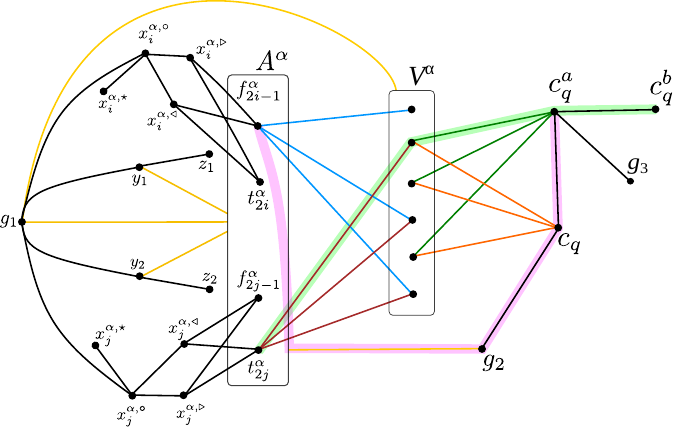}
	\caption{Overview of the reduction. We only draw $A^{\alpha}$ and $V^{\alpha}$ here, as
		$A^{\beta}$, $A^{\gamma}$, $V^{\beta}$, and $V^{\gamma}$ are similar. The yellow lines joining $g_1$, $g_2$, $y_1$, and $y_2$ to sets indicate that the corresponding vertex is adjacent to all the vertices of the corresponding set. Suppose that $f^{\alpha}_{2i-1}$ and $t^{\alpha}_{2j}$ are in the geodetic set and $\overline{x}_i$ appears in the clause $c_q$. The thick green path is a shortest path between $t^{\alpha}_{2j}$ and $c^b_q$ which does not cover $c_q$. The thick violet path plus the edge $(c^a_q, c^b_q)$ is a shortest path between $f^{\alpha}_{2i-1}$ and $c^b_q$ covering $c_q$.}
	\label{fig:reduction-diameter-tw}
\end{figure}

\begin{itemize}
\item
For every variable $x^{\alpha}_i$, we add the vertices $t^{\alpha}_{2i}$ and $f^{\alpha}_{2i - 1}$.
Formally,
$A^{\alpha} = \{t^{\alpha}_{2i},  f^{\alpha}_{2i - 1} ~|\ i \in [n] \}$,
and hence, $|A^{\alpha}| = 2n$.

\item
For every variable $x^{\alpha}_i$,  we add four vertices:
$x^{\alpha, \triangleleft}_i, x^{\alpha, \triangleright}_i, x^{\alpha, \circ}_i, x^{\alpha, \star}_i$.
We make $x^{\alpha, \triangleleft}_i$ and $x^{\alpha, \triangleright}_i$ adjacent to both $t^{\alpha}_{2i}$ and $f^{\alpha}_{2i - 1}$.
We make $x^{\alpha, \circ}_i$ adjacent to both $x^{\alpha, \triangleleft}_i$ and $x^{\alpha, \triangleright}_i$.
We make $x^{\alpha, \star}_i$ adjacent to $x^{\alpha, \circ}_i$.


\item
We add the vertices $y_1,y_2,z_1,z_2$.
We make $y_1$ and $y_2$ adjacent to every vertex of $A^{\alpha}$.
We make $y_i$ adjacent to $z_i$ for $i\in \{1,2\}$.
Note that $y_1,y_2,z_1,z_2$ are common to $X^{\beta}$ and $X^{\gamma}$.

\item
We add the vertex $g_1$ and make it adjacent to $y_1$, $y_2$, and $x^{\alpha, \circ}_i$ for each $i\in [n]$.
Note that $g_1$ is common to $X^{\beta}$ and $X^{\gamma}$.
We add edges between $g_1$ and every vertex of $A^{\alpha}$.

\item
Let $p$ be the smallest positive integer such that $2n \leq \binom{2p}{p}$.
In particular, $p=\OO(\log n)$.
We add a \emph{validation portal},
a clique on $2p$ vertices,  denoted by
$V^{\alpha} = \{v^{\alpha}_1, v^{\alpha}_2,  \dots, v^{\alpha}_{2p} \}$.
For each $\delta \in \{\alpha,\beta,\gamma\}$, we 
add edges between $g_1$ and every vertex of $V^{\delta}$.

\item
For every clause $C_q$ in $\psi$, we introduce three vertices: $c_q, c^a_q, c^b_q$.
We add the edges $(c_q,c^a_q)$ and $(c^a_q,c^b_q)$.

\item
Define $\setrep: [2n] \mapsto \calF_p$ as an arbitrary injective function,
where $\calF_p$ is the Sperner family (and $p$ is as defined two items above).
Add the edge $(t^{\alpha}_{2i},v^{\alpha}_{p'})$ for every $p'\in \setrep(2i)$ and the edge $(f^{\alpha}_{2i-1},v^{\alpha}_{p'})$ for every $p'\in \setrep(2i-1)$.
If the variable $x_i^{\alpha}$ appears positively in the clause $C_q$, then we add the edges $(c_q, v^{\alpha}_{p'})$ and $(c^a_q, v^{\alpha}_{p'})$
for every $p' \in [2p] \setminus \setrep(2i)$.
If the variable $x_i^{\alpha}$ appears negatively in the clause $C_q$, then we add the edges $(c_q, v^{\alpha}_{p'})$ and $(c^a_q, v^{\alpha}_{p'})$
for every $p' \in [2p] \setminus \setrep(2i-1)$.

\item
Add a vertex $g_2$ and make $g_2$ adjacent to every vertex of $A^{\alpha}$ and every vertex of $\{c_q: q\in [m]\}$.
Note that $g_2$ is common to $X^{\beta}$ and $X^{\gamma}$.

\item Add a vertex $g_3$ and make it adjacent to every vertex of $\{c^a_q: q\in [m]\}$. Note that $g_3$ and the vertices of $\{c_q, c^a_q, c^b_q: q\in [m]\}$ are common to $X^{\beta}$ and $X^{\gamma}$.
\end{itemize}

This concludes the construction of $G$.
The reduction returns $(G,  k)$ as an instance of \gsfull where $k = 6n+m+2$.

\subsection{Correctness of the Reduction}

Suppose, given an instance $\psi$ of \textsc{$3$-Partitioned-$3$-SAT}, that the reduction above returns $(G, k)$ as an instance of \textsc{Geodetic Set}.

\begin{lemma}
\label{lemma:3-Part-3-SAT-Geo-Set-diam-tw-forward}
If $\psi$ is a satisfiable \textsc{$3$-Partitioned-$3$-SAT} formula, then $G$ admits a geodetic set of size $k$.
\end{lemma}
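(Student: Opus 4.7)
The plan is to construct a geodetic set $S$ of size $k = 6n + m + 2$ directly from a satisfying assignment $\pi: X^{\alpha}\cup X^{\beta}\cup X^{\gamma} \to \{\true, \false\}$. I will start by observing that $z_1, z_2$, every $x^{\delta,\star}_i$, and every $c^b_q$ has degree one in $G$, and hence must belong to any geodetic set by \cref{obs:pendant}; this forces $2 + 3n + m$ vertices into $S$. To spend the remaining budget of $3n$, for each $\delta \in \{\alpha,\beta,\gamma\}$ and each $i \in [n]$, I add $t^{\delta}_{2i}$ to $S$ when $\pi(x^{\delta}_i) = \true$ and $f^{\delta}_{2i-1}$ otherwise, so that $|S| = 6n + m + 2 = k$.

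The remaining task is to verify that every vertex of $G$ lies on a shortest path between two vertices of $S$, which I will organize by groups of vertices. First, since $d(y_1, y_2) = 2$ with common neighborhood $\{g_1\} \cup A^{\alpha} \cup A^{\beta} \cup A^{\gamma}$, the shortest $z_1$-$z_2$ paths are all of length $4$ and take the form $z_1 - y_1 - u - y_2 - z_2$, simultaneously covering $y_1, y_2, g_1$ and every literal vertex (in particular, every unchosen literal). Next, for each variable $x^{\delta}_i$, I will show that $x^{\delta,\circ}_i$, $x^{\delta,\triangleleft}_i$, and $x^{\delta,\triangleright}_i$ all lie on some length-$3$ shortest path from $x^{\delta,\star}_i$ to the chosen literal in $\{t^{\delta}_{2i}, f^{\delta}_{2i-1}\}$. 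The vertex $g_3$ and every $c^a_q$ will be covered by shortest $c^b_q$-$c^b_{q'}$ paths of length $4$, routed as $c^b_q - c^a_q - g_3 - c^a_{q'} - c^b_{q'}$. Finally, each validation-portal vertex $v^{\delta}_{p'}$ will lie on a shortest path of length $5$ of the form $x^{\delta,\star}_j - x^{\delta,\circ}_j - g_1 - v^{\delta}_{p'} - c^a_q - c^b_q$, for any clause $C_q$ in which $v^{\delta}_{p'}$ is adjacent to $c^a_q$ (which exists whenever the complements of the $\setrep$-images of the variables in $X^{\delta}$ appearing in $\psi$ collectively cover $[2p]$, a mild non-degeneracy condition that may be enforced by preprocessing $\psi$).

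The main obstacle will be covering each clause vertex $c_q$, and this is precisely where the Sperner property of $\setrep$ is used. Since $\pi$ satisfies $C_q$, there exist $\delta$ and $i$ such that the literal of $x^{\delta}_i$ satisfying $C_q$ corresponds to a vertex of $S$; I will treat the case where $x^{\delta}_i$ appears positively and $\pi(x^{\delta}_i) = \true$ (so $t^{\delta}_{2i} \in S$), the other case being symmetric. By construction, the neighborhoods of $t^{\delta}_{2i}$ and of $c^a_q$ inside $V^{\delta}$ are $\setrep(2i)$ and $[2p] \setminus \setrep(2i)$, respectively, which are disjoint. Combined with the facts that $t^{\delta}_{2i}$ is not adjacent to $V^{\delta'}$ for $\delta' \neq \delta$ nor directly to $c^a_q$ or $c^b_q$, a short case analysis over possible routes (through $V^{\delta}$, through $g_1$, through the $y_i$, through $g_2$) will yield $d(t^{\delta}_{2i}, c^b_q) = 4$ and show that $t^{\delta}_{2i} - g_2 - c_q - c^a_q - c^b_q$ is a shortest path, simultaneously covering $c_q$, $g_2$, and $c^a_q$. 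This distance computation, tightly coupled to the Sperner structure of $\setrep$, is the technical crux of the proof.
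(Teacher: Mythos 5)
Your construction of $S$ and most of your coverage arguments coincide with the paper's proof (the $z_1$--$z_2$ paths for $y_1,y_2,g_1$ and the literal vertices; the length-$3$ paths for the variable gadgets; the $c^b_q$--$c^b_{q'}$ paths for $g_3$; and the distance computation $d(t^{\delta}_{2i},c^b_q)=4$ giving the path through $g_2$ that covers $c_q$ and $c^a_q$). The genuine gap is in your coverage of the validation portals $V^{\delta}$. You route each $v^{\delta}_{p'}$ along $x^{\delta,\star}_j - x^{\delta,\circ}_j - g_1 - v^{\delta}_{p'} - c^a_q - c^b_q$, which is only a shortest path when $v^{\delta}_{p'}$ is adjacent to some $c^a_q$, i.e., when $p'$ lies in the complement of the $\setrep$-image of some literal of $X^{\delta}$ actually occurring in a clause. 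This is \emph{not} guaranteed by the reduction: for instance, if the variables of $X^{\delta}$ occur in clauses only through literals whose $\setrep$-images all contain a fixed index $p'$ (e.g., a single variable occurring only positively), then $v^{\delta}_{p'}$ has no clause-gadget neighbor and your path family misses it, and it is easy to check it need not lie on any other shortest path between your solution vertices. Declaring this a ``mild non-degeneracy condition enforced by preprocessing $\psi$'' does not close the gap, because the lemma is a statement about the graph $G$ produced by the reduction as defined from an \emph{arbitrary} satisfiable instance; altering $\psi$ would require changing (and re-verifying) the reduction and both directions of the equivalence.

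The paper avoids this issue by reusing exactly the distance-$4$ pairs you already have. Since $x^{\delta}_i$ satisfies $C_q$ (say positively, $t^{\delta}_{2i}\in S$), the neighbors of $c^a_q$ in $V^{\delta}$ are precisely the non-neighbors of $t^{\delta}_{2i}$ there, and $V^{\delta}$ is a clique; hence for every neighbor $v^{\delta}_j$ of $t^{\delta}_{2i}$ and every non-neighbor $v^{\delta}_h$, the walk $t^{\delta}_{2i} - v^{\delta}_j - v^{\delta}_h - c^a_q - c^b_q$ has length $4 = d(t^{\delta}_{2i},c^b_q)$ and is therefore a shortest path. Letting $j$ and $h$ range over $\setrep(2i)$ and its complement covers all of $V^{\delta}$ with no assumption on which other variables of $X^{\delta}$ appear in $\psi$. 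Replacing your $g_1$-routed paths by this argument repairs the proof; the rest of your write-up is sound (the ``Sperner'' remark is a mischaracterization but harmless, since the forward direction only needs the disjointness of $N(t^{\delta}_{2i})\cap V^{\delta}$ and $N(c^a_q)\cap V^{\delta}$, which is built into the construction).
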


\begin{proof}
Suppose that $\pi: X^{\alpha} \cup X^{\beta} \cup X^{\gamma} \mapsto \{\true, \false\}$
is a satisfying assignment for $\psi$.
We construct a geodetic set $S$ of size $k$ for $G$ using this assignment.

For every $\delta \in \{\alpha, \beta, \gamma\}$ and $i \in [n]$,
if $\pi(x^{\delta}_i)=\true$, then let $t^{\delta}_{2i}\in S$, and otherwise, $f^{\delta}_{2i-1}\in S$.
We also put $z_1,z_2$, $x^{\delta, \star}_i$, and $c^b_q$ into $S$ for all $i\in [n]$, $\delta\in \{\alpha, \beta, \gamma\}$, and $q\in [m]$.
Note that $|S|=k$.

Now, we show that $S$ is indeed a geodetic set of $G$.
First, $y_1,y_2,z_1,z_2,g_1$, and all the vertices of $A^{\alpha},A^{\beta},A^{\gamma}$ are covered by a shortest path between $z_1$ and $z_2$.
Then, for each $\delta \in \{\alpha,\beta,\gamma\}$ and $i\in [n]$, $x^{\delta, \triangleleft}_i$, $x^{\delta, \triangleright}_i$, $x^{\delta, \circ}_i$, and $x^{\delta, \star}_i$ are covered by a shortest path between $S\cap \{t^{\delta}_{2i},f^{\delta}_{2i-1}\}$ and $x^{\delta, \star}_i$. The vertex $g_3$ is covered by any shortest path between $c_q^b$ and $c_{q'}^b$, where $C_q$ and $C_{q'}$ are two clauses of $\psi$.
Suppose that $\pi(x^{\delta}_i)$, for some $i\in [n]$ and $\delta\in \{\alpha,\beta,\gamma\}$, satisfies some clause $C_q$.
By our construction, if $x^{\delta}_i$ appears positively (negatively, respectively) in $C_q$,
then $t^{\delta}_{2i}$ ($f^{\delta}_{2i-1}$, respectively) and $c^b_q$ are at distance four since $t^{\delta}_{2i}$ ($f^{\delta}_{2i-1}$, respectively) and $c^a_q$ have no common neighbor in $V^{\delta}$.
Moreover, there is a shortest path from $t^{\delta}_{2i}$ ($f^{\delta}_{2i-1}$, respectively) to $c^b_q$ of length four, covering $g_2,c_q,c^a_q$, and $c^b_q$; 
there is also a shortest path from $t^{\delta}_{2i}$ ($f^{\delta}_{2i-1}$, respectively) to~$c^b_q$ of length four, covering $v^{\delta}_{j}, v^{\delta}_{h},c^a_q$, and $c^b_q$,
where $v^{\delta}_{j}\in V^{\delta}$ is a vertex adjacent to $t^{\delta}_{2i}$ ($f^{\delta}_{2i-1}$, respectively) and $v^{\delta}_{h}$ is any vertex of $V^{\delta}$ that is not adjacent to $t^{\delta}_{2i}$ ($f^{\delta}_{2i-1}$, respectively).
Thus, every vertex of $V^{\delta}$ for $\delta \in \{\alpha,\beta,\gamma\}$ is covered by a shortest path between two vertices of $S$.
Since every clause of $\psi$ is satisfied by~$\pi$, it follows that every vertex of $\{c_q,c^a_q,c^b_q: q\in [m]\}$ is covered by a shortest path between two vertices of~$S$.
As a result, $S$ is a geodetic set of $G$.
\end{proof}

\begin{lemma}
\label{lemma:3-Part-3-SAT-Geo-Set-diam-tw-backword}
If $G$ admits a geodetic set of size $k$, then $\psi$ is a satisfiable \textsc{$3$-Partitioned-$3$-SAT} formula.
\end{lemma}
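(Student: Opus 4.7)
The plan is to show, given any geodetic set $S$ of $G$ of size $k$, how to build a satisfying assignment $\pi$ for $\psi$. First, by Observation~\ref{obs:pendant}, every degree-one vertex must lie in $S$: these are $z_1,z_2$, each $x^{\delta,\star}_i$, and each $c^b_q$, totalling $2+3n+m$ forced vertices. The residual budget is therefore exactly $3n$.

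Second, I show that this budget forces $S$ to contain exactly one vertex of $\{t^{\delta}_{2i},f^{\delta}_{2i-1}\}$ for each $(\delta,i)$. The vertices $x^{\delta,\triangleleft}_i$ and $x^{\delta,\triangleright}_i$ are false twins with common open neighborhood $\{t^{\delta}_{2i},f^{\delta}_{2i-1},x^{\delta,\circ}_i\}$, so $d(x^{\delta,\triangleleft}_i,u)=d(x^{\delta,\triangleright}_i,u)$ for every vertex $u$ outside this pair; in particular, placing a single twin in $S$ leaves the other uncovered. A direct distance computation shows that every shortest path from $x^{\delta,\star}_i$ to any vertex outside the local gadget travels through $x^{\delta,\circ}_i\to g_1$, bypassing both twins, and likewise every pair of ``far'' solution vertices is joined by a shortest path routed through $g_1$ or $g_2$. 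Hence a shortest path through either twin can only exist if $S$ contains at least one of $\{t^{\delta}_{2i},f^{\delta}_{2i-1},x^{\delta,\triangleleft}_i,x^{\delta,\triangleright}_i\}$. Thus each pair $(\delta,i)$ costs at least one slot if some $t$ or $f$ is chosen, and at least two if both twins are used. With $3n$ slots for $3n$ independent pairs, the only feasible allocation picks exactly one of $t^{\delta}_{2i},f^{\delta}_{2i-1}$ per pair; in particular, no vertex of $\{c_q,c^a_q,g_1,g_2,g_3,y_1,y_2\}\cup V^{\delta}$ lies in $S$. Define $\pi(x^{\delta}_i)=\true$ if $t^{\delta}_{2i}\in S$ and $\pi(x^{\delta}_i)=\false$ otherwise.

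Third, I verify that $\pi$ satisfies every clause. Using Proposition~\ref{prop:3-SAT-to-3-Partition-3-SAT}, we may assume that each clause of $\psi$ contains exactly one variable $x^{\delta}_{i_\delta}$ per part. Since $c_q\notin S$, it lies on a shortest path between two vertices of $S$. A computation exploiting the Sperner property of $\setrep$ shows that $t^{\delta}_{2j}$ and $c^a_q$ share a common neighbor in $V^{\delta}$ whenever either $j\ne i_\delta$, or $j=i_\delta$ but $x^{\delta}_{i_\delta}$ appears negatively in $C_q$ (the intersection $\setrep(2j)\setminus\setrep(2i_\delta)$, respectively $\setrep(2i_\delta)\setminus\setrep(2i_\delta-1)$, is non-empty as these are distinct size-$p$ subsets of $[2p]$); an analogous statement holds for $f^{\delta}_{2i-1}$. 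In every such case $d(s,c^b_q)=3$ with the shortest path $s\to v^{\delta}_{p^*}\to c^a_q\to c^b_q$ bypassing $c_q$. Only when $s\in A^{\delta}\cap S$ corresponds to a literal that satisfies $x^{\delta}_{i_\delta}$ in $C_q$ do $s$ and $c^a_q$ lack a common $V^{\delta}$-neighbor; then $d(s,c^b_q)=4$, and the length-$4$ path $s\to g_2\to c_q\to c^a_q\to c^b_q$ is a shortest path covering $c_q$. Routine checks rule out shortest paths through $c_q$ whose endpoints lie outside $A^{\delta}$, since paths among $\{c^b_{q'},z_i,x^{\delta',\star}_{i'}\}$ and $c^b_q$ shortcut through $g_3$ or $g_1$ and avoid $c_q$. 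Therefore the covering of $c_q$ forces some $A^{\delta}\cap S$-vertex corresponding to a satisfying literal of $C_q$, so $\pi$ satisfies $C_q$.

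The main obstacle is the distance analysis. In step two, one must certify that no ``accidental'' shortest path between two remote solution vertices slips through the twins $x^{\delta,\triangleleft}_i,x^{\delta,\triangleright}_i$, which forces a thorough case split on the other endpoint. In step three, one must carry out the Sperner-family case analysis uniformly over both literal signs and both roles of $t$ and $f$, while simultaneously ruling out detours that use the auxiliary vertices $g_1,g_2,g_3,y_1,y_2$ and showing that the distance gap $3$ versus $4$ reliably separates ``satisfying'' from ``non-satisfying'' configurations.
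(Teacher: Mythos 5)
Your proposal follows the paper's proof essentially step for step: forcing the pendant vertices via \cref{obs:pendant}, the budget argument showing that exactly one of $t^{\delta}_{2i},f^{\delta}_{2i-1}$ is picked per variable (the paper's Claim~\ref{clm:3-Part-3-SAT-Geo-Set-diam-tw-backward-1}), and the Sperner-based distance dichotomy ($d=3$ via a common $V^{\delta}$-neighbour bypassing $c_q$ versus $d=4$ with the covering path through $g_2,c_q,c^a_q$) showing that $c_q$ can only be covered by a shortest path from a satisfying literal vertex to $c^b_q$ (the paper's Claim~\ref{clm:3-Part-3-SAT-Geo-Set-diam-tw-backward-2}), from which the assignment is read off exactly as in the paper. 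The level of detail in your ``routine checks'' ruling out other covering paths matches the paper's own ``we can check'' case analysis.

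One caveat: your step ``using \cref{prop:3-SAT-to-3-Partition-3-SAT}, we may assume that each clause contains exactly one variable per part'' is not available \emph{inside} this lemma, since $\psi$ and $G$ are already fixed; such an assumption would have to be made at the level of the reduction (i.e., reducing from \textsc{Exact-3-Partitioned-3-SAT}), not in the correctness proof of the backward direction. The case it sidesteps is genuinely delicate: if no variable of $X^{\delta}$ occurs in $C_q$, then $c^a_q$ has no neighbour in $V^{\delta}$ at all, so \emph{every} literal vertex $s\in A^{\delta}$ has $d(s,c^b_q)=4$ and the path $s\to g_2\to c_q\to c^a_q\to c^b_q$ is a shortest path covering $c_q$ irrespective of the assignment, so your claimed dichotomy (and with it the implication ``$c_q$ covered $\Rightarrow$ $C_q$ satisfied'') fails for that part. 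Note, however, that the paper's own Claim~\ref{clm:3-Part-3-SAT-Geo-Set-diam-tw-backward-2} has the same blind spot (its ``common neighbour in $V^{\delta}$'' argument presupposes that $X^{\delta}$ contributes a variable to $C_q$), so this is a shared implicit assumption rather than a defect specific to your write-up; making it explicit, as you did, is the right instinct, but it belongs in the description of the reduction rather than in this lemma.
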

\begin{proof}
Suppose that $G$ has a geodetic set $S$ of size at most $k$.
By Observation~\ref{obs:pendant}, $z_1,z_2$, $x^{\delta, \star}_i$, and $c^b_q$ for all $i\in [n]$, $\delta\in \{\alpha, \beta, \gamma\}$, and $q\in [m]$ must be in any geodetic set $S$ of $G$.

\begin{claim}
\label{clm:3-Part-3-SAT-Geo-Set-diam-tw-backward-1}
For each $i\in [n]$ and $\delta \in \{\alpha,\beta,\gamma\}$,
exactly one of $t^{\delta}_{2i}$ and $f^{\delta}_{2i-1}$ must be in $S$.
\end{claim}
\begin{claimproof}
Since $S$ is a geodetic set, for each $i\in [n]$ and $\delta \in \{\alpha,\beta,\gamma\}$
$x^{\delta, \triangleleft}_i$ and $x^{\delta, \triangleright}_i$ must be covered by shortest paths between two vertices of $S$.
If $t^{\delta}_{2i}\in S$ ($f^{\delta}_{2i-1}\in S$, respectively), $x^{\delta, \triangleleft}_i$ and $x^{\delta, \triangleright}_i$ 
are covered by shortest paths between $t^{\delta}_{2i}\in S$ ($f^{\delta}_{2i-1}\in S$, respectively) and $x^{\delta, \star}_i$.
Suppose that, for some $i'\in [n]$ and $\delta' \in \{\alpha,\beta,\gamma\}$,
neither of $t^{\delta'}_{2i'}$ and $f^{\delta'}_{2i'-1}$ is in $S$. Moreover, if neither of $x^{\delta', \triangleleft}_{i'}$ and $x^{\delta', \triangleright}_{i'}$ is in $S$, then, due to the edges incident with $g_1$, no vertices in $S$ have a shortest path containing any of these two vertices. Similarly, if only one of $x^{\delta', \triangleleft}_{i'}$ and $x^{\delta', \triangleright}_{i'}$ is in $S$, then the other is not covered by $S$. Thus, if neither of $t^{\delta'}_{2i'}$ and $f^{\delta'}_{2i'-1}$ is in $S$, then both $x^{\delta', \triangleleft}_{i'}$ and $x^{\delta', \triangleright}_{i'}$ must be in $S$. 
Since $k-|\{z_1,z_2\} \cup \{x^{\delta, \star}_i: i\in [n], \delta\in \{\alpha, \beta, \gamma\}\} \cup \{c^b_q: q\in [m]\}|=3n$,
we conclude that exactly one of $t^{\delta}_{2i}$ and $f^{\delta}_{2i-1}$ must be in $S$ for each $i\in [n]$ and $\delta \in \{\alpha,\beta,\gamma\}$.
\end{claimproof}

By \cref{clm:3-Part-3-SAT-Geo-Set-diam-tw-backward-1} and earlier arguments, we now have that $|S|=k$.

\begin{claim}
\label{clm:3-Part-3-SAT-Geo-Set-diam-tw-backward-2}
For each $q\in [m]$, the vertex $c_q$ is covered either by a shortest path between $c^b_q$ and $t^{\delta}_{2i}$, where the variable $x^{\delta}_{i}$ appears positively in the clause $C_q$, or by a shortest path between $c^b_q$ and $f^{\delta}_{2i-1}$, where the variable $x^{\delta}_{i}$ appears negatively in the clause $C_q$. Moreover, $c_q$ is covered by no other type of shortest path between two vertices in $S$.
\end{claim}
\begin{claimproof}
By the construction of $G$, if the variable $x^{\delta}_{i}$ appears positively in the clause $C_q$, 
then there is a shortest path from $t^{\delta}_{2i}$ to $c^b_q$ of length four covering $g_2,c_q,c^a_q$, and $c^b_q$. 
If the variable $x^{\delta}_{i}$ appears negatively in the clause~$C_q$, 
then there is a shortest path from $f^{\delta}_{2i-1}$ to $c^b_q$ of length four covering $g_2,c_q,c^a_q$, and $c^b_q$.

Next, we show that $c_q$ is not covered by any shortest path between any other two vertices of $S$.
We can check that $c_q$ is not covered by any of the shortest paths between $z_1$ and $z_2$, 
between $z_j$ ($j\in \{1,2\}$) and $x^{\delta, \star}_i$ ($i\in [n], \delta\in \{\alpha, \beta, \gamma\}$), 
and between $z_j$ ($j\in \{1,2\}$) and $S\cap \{t^{\delta}_{2i},f^{\delta}_{2i-1}\}$ ($i\in [n], \delta\in \{\alpha, \beta, \gamma\}$).
Note that any shortest path from $z_j$ ($j\in \{1,2\}$) to $c^b_q$ ($q\in [m]$) is of length five, 
covering $y_j$, some vertex of $A^{\delta}$ ($\delta\in \{\alpha, \beta, \gamma\}$), some vertex of $V^{\delta}$, $c^a_q$, and~$c^b_q$.

We can check that 
$c_q$ is not covered by any of the shortest paths 
between $x^{\delta, \star}_{i}$ and $x^{\delta', \star}_{i'}$ ($i,i' \in [n], \delta,\delta' \in \{\alpha, \beta, \gamma\}$), 
and between $x^{\delta, \star}_{i}$ and $S\cap \{t^{\delta'}_{2i'},f^{\delta'}_{2i'-1}\}$ ($i,i' \in [n], \delta,\delta' \in \{\alpha, \beta, \gamma\}$).
Note that any shortest path from $x^{\delta, \star}_{i}$ ($i\in [n], \delta\in \{\alpha, \beta, \gamma\}$) to $c^b_q$ ($q\in [m]$) is of length five,
covering $x^{\delta, \circ}_{i}$, $g_1$, some vertex of $V^{\delta}$, $c^a_q$, and $c^b_q$.

Note that any shortest path between $c^b_q$ and $c^b_{q'}$ ($q,q'\in [m]$) is of length four, covering $c^a_q$, $g_3$, and $c^a_{q'}$. 

We can check that $c_q$ is not covered by any shortest paths 
between $S\cap \{t^{\delta}_{2i},f^{\delta}_{2i-1}\}$ and $S\cap \{t^{\delta'}_{2i'},f^{\delta'}_{2i'-1}\}$ ($i,i' \in [n], \delta,\delta' \in \{\alpha, \beta, \gamma\}$).

If the variable $x^{\delta}_{i}$ does not appear positively in the clause $C_q$,
then any shortest path between $c^b_q$ and $t^{\delta}_{2i}$ is of length three (because $c^a_q$ and $t^{\delta}_{2i}$ have a common neighbour in $V^{\delta}$), covering some vertex of $V^{\delta}$ and $c^a_q$, but not $c_q$. Similarly, if $x^{\delta}_{i}$ does not appear negatively in $C_q$, then any shortest path between $c^b_q$ and $f^{\delta}_{2i-1}$ is of length three and does not cover $c_q$.

By the case analysis above, the claim is true.
\end{claimproof}

By Claim~\ref{clm:3-Part-3-SAT-Geo-Set-diam-tw-backward-1}, 
exactly one vertex of $t^{\delta}_{2i}$ and $f^{\delta}_{2i-1}$ belongs to $S$ for each $i\in [n]$ and $\delta \in \{\alpha,\beta,\gamma\}$.
We define an assignment $\pi$ to the variables of $\psi$ as follows. 
For each $i\in [n]$ and $\delta \in \{\alpha,\beta,\gamma\}$,
if $t^{\delta}_{2i}\in S$, then $\pi(x^{\delta}_{i})=\true$.
Otherwise, $\pi(x^{\delta}_{i})=\false$.
Since $S$ is a geodetic set for $G$, 
every vertex $c_q$ ($q\in [m]$) is covered by a shortest path between two vertices of $S$.
By Claim~\ref{clm:3-Part-3-SAT-Geo-Set-diam-tw-backward-2}, 
every vertex $c_q$ ($q\in [m]$) is covered by a shortest path between $S\cap \{t^{\delta}_{2i},f^{\delta}_{2i-1}\}$ and $c^b_q$,
where the variable $x^{\delta}_{i}$ appears in the clause $C_q$.
It follows that every clause $C_q$ is satisfied by $\pi(x^{\delta}_{i})$.
As a result, $\psi$ is a satisfiable \textsc{$3$-Partitioned-$3$-SAT} formula.
\end{proof}

\begin{proof}[Proof of \Cref{thm:lower-bound-diam-tw-GS}.]
First, it is not hard to check that the diameter of $G$ is at most 5.
Then, let $X=V^{\alpha} \cup V^{\beta} \cup V^{\gamma} \cup \{g_1,g_2,g_3,y_1,y_2\}$.
We can check that every component of $G\setminus X$ has at most six vertices and $|X|=\OO(\log n)$.
Thus, the treewidth $\tw(G)$ --- in fact, even the treedepth $\td(G)$ --- of $G$ is bounded by $\OO(\log n)$.
By the description of the reduction, it takes polynomial time to compute the reduced
instance.
Hence, if there is an algorithm for \gsfull 
that runs in time $2^{f(\diam)^{o(\tw)}}$ (or $2^{f(\diam)^{o(td)}}$)
then, there is an algorithm running in time $2^{o(n)}$ for \textsc{$3$-Partitioned-$3$-SAT}, which by Proposition~\ref{prop:3-SAT-to-3-Partition-3-SAT}, contradicts the \ETH.
\end{proof}

\section{\smdfull: Lower Bound Regarding Vertex Cover}
\label{sec:strong-met-dim-lower-bound-vc}

The aim of this section is to prove
the following theorem.
\begin{theorem}
\label{thm:smd-lower-bound-vc}
Unless the \ETH\ fails, \smdfull does not admit:
\begin{itemize}
\item an algorithm 
running in time $2^{2^{o(\vc)}} \cdot n^{\OO(1)}$ for any computable function 
$f:\mathbb{N} \mapsto \mathbb{N}$, nor
\item a kernelization algorithm returning a kernel with $2^{o(\vc)}$ vertices.
\end{itemize}
\end{theorem}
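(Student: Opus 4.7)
The plan is to establish both statements of Theorem~\ref{thm:smd-lower-bound-vc} via a single polynomial-time reduction from \textsc{Exact-3-Partitioned-3-SAT} on $3n$ variables to \smdfull, producing an equivalent instance $(G,k)$ with $\vc(G) = \OO(\log n)$ and $|V(G)|$ bounded by a polynomial in $n$. Both lower bounds then follow from Proposition~\ref{prop:3-SAT-to-3-Partition-3-SAT}: a hypothetical $2^{2^{o(\vc)}}\cdot n^{\OO(1)}$-time algorithm for \smdfull\ would solve $\psi$ in time $2^{2^{o(\log n)}} \cdot n^{\OO(1)} = 2^{n^{o(1)}}$, while a hypothetical kernel on $2^{o(\vc)} = n^{o(1)}$ vertices could be brute-forced (iterating over all vertex subsets and checking the strong-resolving property) in time $2^{n^{o(1)}} \cdot n^{\OO(1)}$. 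Both running times are $2^{o(n)}$ and hence contradict \ETH.

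\textbf{Sketch of the reduction.} The reduction would reuse the Sperner family machinery of Subsection~\ref{subsubsec:set-representation-MD}: let $p$ be minimal with $2n \leq \binom{2p}{p}$, so $p = \OO(\log n)$, and fix an injection $\setrep:[2n]\to\calF_p$. For each $\delta \in \{\alpha,\beta,\gamma\}$, introduce a clique \emph{validation portal} $V^{\delta}$ on $2p$ vertices; for every variable $x_i^{\delta}$, two \emph{literal vertices} $t_i^{\delta}, f_i^{\delta}$, with $t_i^{\delta}$ joined to $v_{p'}^{\delta}$ iff $p'\in\setrep(2i)$ and $f_i^{\delta}$ joined to $v_{p'}^{\delta}$ iff $p'\in\setrep(2i-1)$; and for every clause $C_q$ a constant-size gadget containing a designated pair $\langle c_q^{\circ}, c_q^{\star}\rangle$. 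The clause gadget attaches to each $V^{\delta}$ via the complementary index set $[2p]\setminus\setrep(\cdot)$ dictated by the unique variable of $X^{\delta}$ occurring in $C_q$, mirroring the encoding from Section~\ref{sec:lower-bound-diam-treewidth-MD}. A small amount of auxiliary structure (of size independent of $n$) forces any strong resolving set to contain exactly one of $\{t_i^{\delta}, f_i^{\delta}\}$ per variable, and the budget $k$ is the sum of these forced contributions. The bound $\vc(G) = \OO(\log n)$ follows by taking the three portals together with the $\OO(1)$ extra vertices per gadget as the cover, since the literal and clause vertices then form an independent set.

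\textbf{Correctness and main obstacle.} The forward direction is expected to be routine: given a satisfying assignment $\pi$, include the literal vertex chosen by $\pi$ for each variable together with the forced gadget vertices, and verify that each designated pair is strongly resolved. The backward direction will use the forcing structure to extract an assignment from any strong resolving set of size $k$, and the designated clause pairs to ensure that this assignment satisfies every clause. The main obstacle will be engineering the clause gadget so that the \emph{strong} resolution condition $|d(s,c_q^{\circ}) - d(s,c_q^{\star})| = d(c_q^{\circ}, c_q^{\star})$, which is strictly more demanding than the distinctness condition used in the \mdfull\ proof, is triggered by a literal vertex $s$ if and only if the literal corresponding to $s$ satisfies $C_q$. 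The plan is to arrange a small, constant distance $d(c_q^{\circ}, c_q^{\star})$ (e.g., $1$ or $2$ via a dedicated gadget vertex) and then exploit the Sperner property ``$\setrep(j)\not\subseteq\setrep(i)$ for $i\neq j$'' to guarantee that any non-satisfying literal vertex is equidistant from $c_q^{\circ}$ and $c_q^{\star}$, while the satisfying literal vertex sits at distances differing by exactly $d(c_q^{\circ}, c_q^{\star})$. A further subtlety, likely handled via auxiliary forced vertices, will be certifying that each $\langle c_q^{\circ}, c_q^{\star}\rangle$ is itself a mutually maximally distant pair, so that it is genuinely critical and cannot be ignored by any strong resolving set.
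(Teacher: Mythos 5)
Your high-level framing (a polynomial-time reduction from \textsc{Exact-3-Partitioned-3-SAT} producing $\vc(G)=\OO(\log n)$, after which both the running-time and the kernel lower bounds follow from the ETH-hardness of the SAT variant) coincides with the paper's. However, the construction you sketch has a genuine gap at exactly the point you flag as the ``main obstacle,'' and the gap is not a matter of gadget tuning: the mechanism you want to port from the \mdfull\ reduction cannot exist for \smdfull. If two vertices $u,v$ are mutually maximally distant ($u \bowtie v$), then \emph{no} third vertex strongly resolves them: if $w\neq u,v$ and, say, $u$ lay on a shortest $v$--$w$ path, the neighbor of $u$ towards $w$ would be farther from $v$ than $u$ is, contradicting maximal distance. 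Consequently (Oellermann--Peters-Fransen), $S$ is a strong resolving set if and only if $S$ is a vertex cover of the strong resolving graph $G_{SR}$, i.e., if and only if $S$ hits every mutually maximally distant pair. So your clause pair $\langle c_q^{\circ},c_q^{\star}\rangle$ faces a dichotomy that defeats the intended encoding: if it is mutually maximally distant (as you propose to enforce so that it is ``genuinely critical''), then no literal vertex can ever strongly resolve it, and the only way to satisfy it is to buy $c_q^{\circ}$ or $c_q^{\star}$ themselves, independently of the assignment; if it is not mutually maximally distant, it imposes no constraint at all beyond the MMD pairs and is resolved ``for free'' by any vertex cover of $G_{SR}$. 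Either way, ``literal $s$ strongly resolves the clause pair iff $s$ satisfies $C_q$'' cannot be the binding constraint, and the Sperner/distance argument that works for \mdfull\ (where a distant vertex may resolve a pair by a distance difference of $1$) does not transfer.

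The paper's proof avoids this by working on the other side of the equivalence $\smd(G)=\vc(G_{SR})$: it takes the textbook reduction from \textsc{Exact-3-Partitioned-3-SAT} to \textsc{Vertex Cover} (matching edges for variables, triangles for clauses, literal-consistency edges), and then engineers a graph $G$ of small vertex cover whose strong resolving graph $G_{SR}$ realizes exactly these prescribed edges between large independent sets --- the Sperner-family connection portals are used to decide which pairs become mutually maximally distant, and bit-representation plus pendant/global vertices keep the sets independent in $G_{SR}$ and confine all unwanted $G_{SR}$-edges to a clique of pendant vertices that can be bought wholesale. The clause constraint is thus encoded as a triangle in $G_{SR}$ (budget $2$ per clause, total $k=3n+2m+(|Z|-1)$), not as a pair resolvable by a chosen literal vertex. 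To repair your proposal you would need to redesign it along these lines: specify, for every pair of vertices, whether it is mutually maximally distant, and make the resulting $G_{SR}$ encode satisfiability as a vertex cover instance, rather than trying to reuse the \mdfull-style resolution-by-distance mechanism.
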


To this end, we present a reduction from 
\textsc{Exact-$3$-Partitioned-$3$-SAT} (see \Cref{sec:preliminaries})
to \textsc{Strong Metric Dimension}.
We use the relation
between \textsc{Strong Metric Dimension}
and the \textsc{Vertex Cover} problem,
which was established in~\cite{OP07},
to prove the theorem.
We need the following definition in order
to state the relationship.
\begin{definition}
\label{def:mutually-max-distant}
Given a graph $G$, we say a vertex $u\in V(G)$ is maximally distant from $v\in V(G)$ if
there is no $x \in V(G) \setminus \{u\}$ such that a shortest path between $x$ and $v$ contains $u$.
Formally, for every 
$y \in N(u)$, we have
$d(y, v) \le d(u, v)$.
If $u$ is maximally distant from $v$, and
$v$ is maximally distant from $u$, 
then $u$ and $v$ are \emph{mutually maximally 
distant} in $G$, and we write $u \bowtie v$.
\end{definition}
For any two mutually maximally distant
vertices in $G$, there is no vertex in $G$ 
that strongly resolves them, except themselves. 
Hence, if $u \bowtie v$ in $G$, 
then, for any strong resolving set $S$ of 
$G$, at least one of $u$ or $v$ is in $S$,
i.e., $|\{u, v\}\cap S|\geq 1$.
Oellermann and Peters-Fransen~\cite{OP07}
showed that
this necessary condition
is also sufficient.
Consider an auxiliary graph $G_{SR}$
of $G$ defined as follows.
\begin{definition}
\label{def:strong-resolving-graph}
Given a connected graph $G$, the 
\emph{strong resolving graph} of $G$, 
denoted by $G_{SR}$, has vertex set 
$V(G)$ and two vertices $u, v$ are adjacent
if and only if
$u$ and $v$ are mutually maximally 
distant in $G$, i.e., $u \bowtie v$.
\end{definition}
\begin{proposition}[Theorem 2.1 in \cite{OP07}]
\label{prop:strong-met-dim-equiv-vc}
For a connected graph $G$,
$\smd(G) = \vc(G_{SR})$.
\end{proposition}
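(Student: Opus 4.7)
The plan is to prove the equivalence by two separate inequalities, each hinging on a single structural observation about mutually maximally distant pairs. The preliminary fact I will use throughout is: if $w \in V(G) \setminus \{u,v\}$ strongly resolves $\{u,v\}$, say because $v$ lies on a shortest $(u,w)$-path, then the next vertex $v'$ on this path after $v$ is a neighbor of $v$ with $d(u,v') = d(u,v) + 1 > d(u,v)$, so $v$ is not maximally distant from $u$. Contrapositively, if $u \bowtie v$ then no vertex other than $u$ or $v$ can strongly resolve the pair $\{u,v\}$.

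For the inequality $\smd(G) \ge \vc(G_{SR})$, I take any strong resolving set $S$ and any edge $uv$ of $G_{SR}$. Since $u \bowtie v$, the observation above forces $S \cap \{u,v\} \ne \emptyset$, so $S$ is a vertex cover of $G_{SR}$ and $|S| \ge \vc(G_{SR})$.

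For $\smd(G) \le \vc(G_{SR})$, I take a minimum vertex cover $C$ of $G_{SR}$ and show it is a strong resolving set. The main step is an \emph{extension lemma}: given any pair $\{x,y\} \subseteq V(G)$, I will construct vertices $x^*, y^*$ with $x^* \bowtie y^*$ such that both $x$ and $y$ lie on a common shortest $(x^*,y^*)$-path with $x$ between $x^*$ and $y$. The construction proceeds greedily in two phases. First, while $y$ is not maximally distant from $x$, replace $y$ by a neighbor $y'$ with $d(x,y') > d(x,y)$; this strictly increases the distance from $x$, so the process terminates at some $y^*$ maximally distant from $x$, and an inductive triangle-inequality argument shows the original $y$ lies on a shortest $(x,y^*)$-path. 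Next, perform the symmetric procedure at the other endpoint: while $x$ is not maximally distant from $y^*$, replace $x$ by a neighbor $x'$ with $d(x',y^*) > d(x,y^*)$; the invariant ``$y$ lies on a shortest $(\cdot,y^*)$-path'' is preserved, and the previous $x$ ends up on this same shortest path. Once such $x^*, y^*$ exist, the edge $x^*y^*$ of $G_{SR}$ is covered by some vertex of $C$; if $x^* \in C$, then the chain equality $d(x^*,x)+d(x,y)+d(y,y^*) = d(x^*,y^*)$ implies $x$ lies on a shortest $(x^*,y)$-path, so $x^*$ strongly resolves $\{x,y\}$, and the case $y^* \in C$ is symmetric.

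The main technical obstacle will be the correct formulation and verification of the extension lemma: one must keep track of the invariant that both original vertices remain on a shortest path between the current extended endpoints across all replacement steps. The right invariant to maintain after each step is precisely the chain equality above, and the key observation that makes it go through is that replacing an endpoint by a strictly farther neighbor forces the old endpoint onto the new shortest path by triangle inequality, so nothing that was already aligned gets lost. Once that invariant is cleanly stated, both the termination of the extension and the final strong-resolving property fall out immediately.
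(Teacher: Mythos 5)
The paper does not prove this statement at all: it is imported as Theorem~2.1 of Oellermann and Peters-Fransen~\cite{OP07}, and the surrounding text only sketches the easy inequality (a pair $u \bowtie v$ can be strongly resolved by no vertex other than $u$ or $v$, so every strong resolving set is a vertex cover of $G_{SR}$). Your proposal therefore supplies what the paper omits, and it does so correctly: the preliminary fact, the lower bound $\smd(G) \ge \vc(G_{SR})$, the phase-1 and phase-2 greedy extensions with the chain-equality invariant $d(x^*,x)+d(x,y)+d(y,y^*)=d(x^*,y^*)$, and the final deduction that a cover vertex among $\{x^*,y^*\}$ strongly resolves $\{x,y\}$ all check out; this is in substance the classical argument of~\cite{OP07} (every pair extends along a common geodesic to a mutually maximally distant pair).

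One assertion should be made explicit, since as written it is the only unverified link: you conclude that $x^*y^*$ is an edge of $G_{SR}$, but your two phases only establish that $y^*$ is maximally distant from the \emph{original} $x$ and that $x^*$ is maximally distant from $y^*$; mutuality additionally requires that $y^*$ is still maximally distant from the \emph{new} endpoint $x^*$. This does follow from your invariant in one line: the chain equality gives $d(y^*,x^*) = d(y^*,x) + d(x,x^*)$, so for every neighbor $w$ of $y^*$ we get $d(w,x^*) \le d(w,x) + d(x,x^*) \le d(y^*,x) + d(x,x^*) = d(y^*,x^*)$, where the middle step uses that $y^*$ is maximally distant from $x$. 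With that sentence added (and the trivial remark that both phases terminate because distances are bounded), your proof of both inequalities is complete.
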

In light of the above proposition,
it is sufficient to prove the following lemma.
\begin{lemma}
\label{lemma:SAT-SMD-VC}
There is a polynomial-time reduction that,
given an instance $\psi$ of 
\textsc{Exact-$3$-Partitioned-$3$-SAT} 
on $3n$ variables,
returns an equivalent instance
$(G, k)$ of \textsc{Strong Metric Dimension}
such that 
$\vc(G) = \calO(\log(n))$
and $\vc(G_{SR}) = k$.
\end{lemma}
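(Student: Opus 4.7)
The plan is to reduce from \textsc{Exact-$3$-Partitioned-$3$-SAT} by building a graph $G$ in which the mutually-maximally-distant relation encodes the satisfiability structure, and then apply \Cref{prop:strong-met-dim-equiv-vc}. Given $\psi$ on $3n$ variables, I would first let $p$ be the smallest integer with $2n\le\binom{2p}{p}$, so $p=\calO(\log n)$, and fix an injection $\setrep\colon[2n]\to\calF_p$ as in \Cref{subsubsec:set-representation-MD}. For each $\delta\in\{\alpha,\beta,\gamma\}$ and $i\in[n]$, introduce literal vertices $t_{2i}^{\delta},f_{2i-1}^{\delta}$ collected in $A^{\delta}$, a validation-portal clique $V^{\delta}=\{v_1^{\delta},\dots,v_{2p}^{\delta}\}$, and the edges $(t_{2i}^{\delta},v_{p'}^{\delta})$ for $p'\in\setrep(2i)$ and $(f_{2i-1}^{\delta},v_{p'}^{\delta})$ for $p'\in\setrep(2i-1)$. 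For every clause $C_q$ and every $\delta$, if $x_i^{\delta}$ is the (unique, by exactness) variable of $C_q$ in $X^{\delta}$, introduce a clause vertex $c_q$ adjacent to $V^{\delta}\setminus\setrep(2i)$ when the literal is positive and to $V^{\delta}\setminus\setrep(2i-1)$ when negative. I would append a small constant-size ``padding'' gadget attached to each literal vertex and each clause vertex (a couple of private pendants through the portals) solely to calibrate eccentricities, so that the correct pairs become mutually maximally distant.

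Next I would set $k=3n$ and bound $\vc(G)$. Since $A^{\alpha}\cup A^{\beta}\cup A^{\gamma}$ and the set of clause vertices are independent sets whose only neighbors lie in $V^{\alpha}\cup V^{\beta}\cup V^{\gamma}$ (together with $\calO(1)$ auxiliary calibration vertices), the set $V^{\alpha}\cup V^{\beta}\cup V^{\gamma}$ plus the $\calO(1)$ auxiliaries forms a vertex cover of size $6p+\calO(1)=\calO(\log n)$.

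The core step is to describe $G_{SR}$ explicitly and show $\vc(G_{SR})\le k$ iff $\psi$ is satisfiable. The plan is to prove that the edges of $G_{SR}$ are exactly: (i) $t_{2i}^{\delta}\bowtie f_{2i-1}^{\delta}$ for every variable, yielding $3n$ disjoint \emph{choice} edges; and (ii) for every clause $C_q$, $c_q\bowtie \ell$ for every literal vertex $\ell\in A^{\alpha}\cup A^{\beta}\cup A^{\gamma}$ that satisfies $C_q$. The Sperner property — no $\setrep(j)$ is contained in another $\setrep(j')$ — is exactly what drives this: for any two literal vertices of the \emph{same} variable, every neighbor in $V^{\delta}$ is at the ``wrong'' distance, forcing them to be mutually maximally distant; for $c_q$ and a literal $\ell$ corresponding to one of $C_q$'s variables, the complementary adjacency pattern ensures $\ell\bowtie c_q$ precisely when $\ell$ satisfies $C_q$, because any other literal $\ell'$ has some validation-portal neighbor of $c_q$ strictly closer than $\ell'$ is. Once this is established, the equivalence is immediate: a satisfying assignment picks one endpoint of each choice edge so that every clause edge is covered, and, conversely, any cover of $G_{SR}$ of size $3n$ must use exactly one endpoint per choice edge, inducing a satisfying assignment.

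The main obstacle I anticipate is ruling out \emph{spurious} mutually-maximally-distant pairs — for instance, between two literal vertices of different variables, between two clause vertices, or between a clause vertex and a non-satisfying literal. Controlling these requires carefully designed constant-size pendants so that every non-target pair $u,v$ admits a neighbor $y$ of $u$ with $d(y,v)>d(u,v)$, without blowing up either the diameter-dependent routing or the vertex-cover budget. Once $G_{SR}$ is pinned down, $\vc(G_{SR})=\smd(G)$ via \Cref{prop:strong-met-dim-equiv-vc} delivers the lemma, and the ETH lower bound from \Cref{prop:3-SAT-to-3-Partition-3-SAT} immediately yields both bullets of \Cref{thm:smd-lower-bound-vc} since $\vc(G)=\calO(\log n)$ turns a $2^{2^{o(\vc)}}n^{\calO(1)}$ algorithm into a $2^{o(n)}$ algorithm for \textsc{Exact-$3$-Partitioned-$3$-SAT}, and a kernel with $2^{o(\vc)}$ vertices into one with $n^{o(1)}$ vertices that could be solved by brute force in the same time.
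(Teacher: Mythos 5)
There is a genuine gap, and it is in the heart of the argument: the strong resolving graph you aim to engineer does not encode satisfiability at your budget $k=3n$. You want $E(G_{SR})$ to consist of the $3n$ disjoint ``choice'' edges $t^{\delta}_{2i}\bowtie f^{\delta}_{2i-1}$ together with, for each clause $C_q$, edges from $c_q$ to the (exactly three) literal vertices occurring in $C_q$. But then a vertex cover of size exactly $3n$ contains no clause vertex and exactly one endpoint of each choice edge, and to cover the three edges at $c_q$ it must contain \emph{all three} literal vertices of $C_q$ — i.e., every literal occurrence of every clause must be set true, which is far stronger than satisfiability; your sentence ``a satisfying assignment picks one endpoint of each choice edge so that every clause edge is covered'' covers only one of the three edges at $c_q$. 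Conversely, raising the budget so that clause vertices may enter the cover trivializes the instance (take all clause vertices plus one endpoint per choice edge). Covering is monotone, so no choice of budget repairs this matching-plus-clause-stars structure. This is precisely why the paper does not target such a graph: it makes $G_{SR}$ coincide (up to forced vertices) with the textbook \textsc{SAT}$\to$\textsc{Vertex Cover} graph $H$ of \Cref{lemma:Ex-3-Par-3-SAT-to-VC}, in which each clause contributes a \emph{triangle} of occurrence vertices (two of which are paid for in the cover, hence the $+2m$ in $k=3n+2m+(|Z|-1)$), and the one uncovered triangle vertex forces a true literal on the variable side via the variable--clause edges.

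A second, related problem is your treatment of spurious mutually-maximally-distant pairs and the budget. By \Cref{obs:pendant-in-G-to-clique-in-GSR}, any pendant ``calibration'' vertices you attach form a clique in $G_{SR}$ and are typically mutually maximally distant with many other vertices, so they generate edges in $G_{SR}$ that must be covered; your $k=3n$ makes no provision for them. The paper handles this explicitly: all undesired edges are made incident to the pendant set $Z$, which contains a vertex $z$ with $N[z]=Z$ in $G_{SR}$, so $Z\setminus\{z\}$ can be assumed in any cover and $|Z|-1$ is added to $k$; in addition, global vertices $\glb(\cdot)$ and the short-cut clique $S_K$ are introduced precisely to destroy unwanted MMD pairs among literal and clause vertices (via \Cref{lemma:strong-met-dim-add-ind-set} and \Cref{lemma:strong-met-dim-add-edges}), something a ``couple of private pendants'' cannot do while keeping the edge set of $G_{SR}$ under control. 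Your vertex-cover bound $\vc(G)=\calO(\log n)$ via the portals is fine in spirit and matches the paper, and using \Cref{prop:strong-met-dim-equiv-vc} is the right framework, but without the clause triangles and the forced-clique accounting the reduction as proposed is not correct.
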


Recall the textbook reduction from
\textsc{$3$-SAT} to \textsc{Vertex Cover}
\cite{DBLP:books/daglib/0015106}.
In this, we add matching edges corresponding 
to variables, and vertex-disjoint triangles
corresponding to clauses.
Finally, we add edges between two vertices
corresponding to the same literal 
on 
the ``variable-side'' and the ``clause-side''.
We adopt the same reduction for
\textsc{Exact-$3$-Partitioned $3$-SAT} to 
\textsc{Vertex Cover}.

\begin{figure}[h]
\begin{center}
\includegraphics[scale=1]{./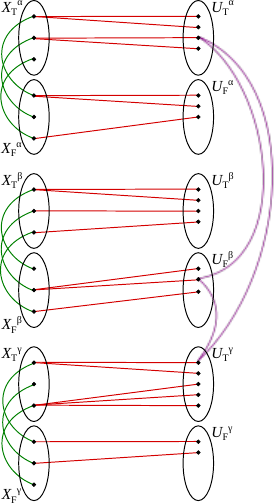}
\end{center}
\caption{
Overview of the reduction from
\textsc{Exact-$3$-Partitioned-$3$-SAT} to
\textsc{Vertex Cover}.
In Step~\ref{step:add-ind-set}, we add
all the independent sets mentioned.
In Step~\ref{step:add-variable-edges}, we
add matching green edges corresponding
to the assignment of the variables.
In Step~\ref{step:add-clause-edges},
we add purple triangles 
corresponding to clauses.
For example, the {purple} 
triangle corresponds to the 
clause $(x^{\alpha}_{3} \lor \neg x^{\beta}_2
\lor x^{\gamma}_{1})$.
In Step~\ref{step:add-variable-clause-edges},
we add red edges connecting the vertices
corresponding to the same literal
on the ``variable-side'' and ``clause-side''.
\label{fig:3-Par-3-SAT-to-VC}}
\end{figure}

Given an 
instance $\psi$ of 
\textsc{Exact-3-Partitioned-3-SAT},
with $m$ clauses and $3n$ variables
partitioned equally into $X^{\alpha}, X^{\beta}, X^{\gamma}$,
we construct the graph $H$ as follows.
\begin{enumerate}
\item \label{step:add-ind-set}
We rename the variables in $X^{\alpha}$
to $x^{\alpha}_{i}$ for $i \in [n]$. Analogously, we do this
for $X^{\beta}$ and $X^{\gamma}$.
For every $x^{\alpha}_i$,
we add two vertices $x^{\alpha}_{i, t}$
and $x^{\alpha}_{i, f}$
for the positive and negative literal, respectively.
Define $X^{\alpha}_{T}$ and $X^{\alpha}_{F}$
as the collection of vertices 
corresponding to the positive and negative 
literals of the variables
in $X^{\alpha}$, respectively.
We define the sets 
$X^{\beta}_{T}, X^{\beta}_F, X^{\gamma}_T,$
and $X^{\gamma}_F$, similarly.

Consider a clause $C_q = (x^{\alpha}_i \lor
\neg x^{\beta}_j \lor x^{\gamma}_{\ell})$.
We add three vertices: $x^{\alpha, q}_{i, t}$, 
$x^{\beta, q}_{j, f}$, and 
$x^{\gamma, q}_{\ell, t}$,
and the edges to make it a triangle.
Let $U^{\alpha}_T$ be the collection
of vertices corresponding to the 
positive literals of
variables in $X^{\alpha}$.
Formally, $U^{\alpha}_T = \{x^{\alpha, q}_{i, t} \mid$ 
there is a clause $C_q$
that contains the positive literal of the variable $x^{\alpha}_i\}$.
Note that, for a variable $x^{\alpha}_i$,
there may be multiple vertices corresponding 
to its positive literal in $U^{\alpha}_T$ depending on the number of clauses that contain $x^{\alpha}_i$.
We analogously define $U^{\alpha}_F$,
$U^{\beta}_T$, $U^{\beta}_F$, $U^{\gamma}_T$, and $U^{\gamma}_F$.
See \cref{fig:3-Par-3-SAT-to-VC}
for an illustration.

\item 
\label{step:add-variable-edges}
We add matching edges across 
$X^{\alpha}_T$ and $X^{\alpha}_F$
connecting $x^{\alpha}_{i, t}$
and $x^{\alpha}_{i, f}$ for every $i \in [n]$.
We add similar matching edges across
$X^{\beta}_T$ and $X^{\beta}_F$, and
$X^{\gamma}_T$ and $X^{\gamma}_F$. 
See the {green} edges in \cref{fig:3-Par-3-SAT-to-VC}.

\item 
\label{step:add-clause-edges}
As $\psi$ is an instance of 
\textsc{Exact-$3$-Partitioned-$3$-SAT}, 
for any clause there is a triangle that
contains {exactly} one vertex from each of
$U^{\alpha}_T \cup U^{\alpha}_F$,
$U^{\beta}_T \cup U^{\beta}_F$, and
$U^{\gamma}_T \cup U^{\gamma}_F$.
See the purple triangle in \cref{fig:3-Par-3-SAT-to-VC}.
For each clause, we add the edges to form its
corresponding triangle.

\item
\label{step:add-variable-clause-edges}
Finally, we add edges connecting 
a vertex corresponding to a literal on the
variable-side to vertices corresponding
to the same literal on the clause-side.
See the red edges in \cref{fig:3-Par-3-SAT-to-VC}.
\end{enumerate}

The reduction returns $(H, k)$ as the 
reduced instance of \textsc{Vertex Cover}, where $k = 3n + 2m$.
The correctness of the following 
lemma is spelled out in \cite{DBLP:books/daglib/0015106}.
\begin{lemma}
\label{lemma:Ex-3-Par-3-SAT-to-VC}
The formula $\psi$, with $3n$ variables and $m$ clauses, is a satisfiable 
\textsc{Exact-3-Partitioned-3-SAT} formula
if and only if $H$ admits a vertex cover
of size $k = 3n + 2m$.
\end{lemma}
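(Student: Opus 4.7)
\textbf{Proof plan for Lemma~\ref{lemma:Ex-3-Par-3-SAT-to-VC}.}
The plan is to follow the classical \textsc{3-SAT}-to-\textsc{Vertex Cover} argument, adapted to the graph $H$ built in Steps~\ref{step:add-ind-set}--\ref{step:add-variable-clause-edges}. First I will establish a tight lower bound on the size of any vertex cover of $H$. Observe that the vertex set of $H$ is partitioned into the \emph{variable side}, $X_T^\alpha\cup X_F^\alpha\cup X_T^\beta\cup X_F^\beta\cup X_T^\gamma\cup X_F^\gamma$, and the \emph{clause side}, $U_T^\alpha\cup U_F^\alpha\cup U_T^\beta\cup U_F^\beta\cup U_T^\gamma\cup U_F^\gamma$. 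Step~\ref{step:add-variable-edges} produces $3n$ vertex-disjoint matching edges, one per variable, wholly inside the variable side. Step~\ref{step:add-clause-edges} produces $m$ vertex-disjoint triangles, one per clause, wholly inside the clause side (disjointness uses that $\psi$ is \emph{exact} $3$-partitioned, so each triangle receives exactly one vertex from each of the three clause-side groups, and distinct clauses yield distinct triangle vertices because every $x^{\delta,q}_{i,\bullet}$ is tagged by $q$). Since a vertex cover must include at least one endpoint of every matching edge and at least two vertices of every triangle, and these vertices are drawn from disjoint sets, any vertex cover of $H$ has size at least $3n+2m=k$.

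For the forward direction, assume $\pi$ satisfies $\psi$. For each variable $x^\delta_i$, place $x^\delta_{i,t}$ into $S$ if $\pi(x^\delta_i)=\true$, and $x^\delta_{i,f}$ otherwise; this contributes $3n$ vertices and covers every matching edge from Step~\ref{step:add-variable-edges}. For each clause $C_q$, pick any literal $\ell$ in $C_q$ that $\pi$ satisfies, and put into $S$ the two triangle vertices of $C_q$ other than the one corresponding to $\ell$; this contributes $2m$ vertices and covers every triangle edge from Step~\ref{step:add-clause-edges}. It remains to verify the Step~\ref{step:add-variable-clause-edges} edges, which connect a variable-side literal vertex to every clause-side vertex representing the same literal. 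Consider such an edge $e$. If both endpoints of $e$ are in $S$, we are done. Otherwise, the clause-side endpoint is the unique triangle vertex of some $C_q$ left out of $S$; by construction this vertex corresponds to the literal $\ell$ chosen to satisfy $C_q$, so its variable-side neighbour is precisely the vertex chosen into $S$ for that variable. Hence $e$ is covered, and $|S|=3n+2m=k$.

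For the backward direction, assume $S\subseteq V(H)$ is a vertex cover with $|S|\le k$. By the lower bound, equality holds and $S$ contains exactly one endpoint of each variable matching edge and exactly two vertices of each clause triangle. Define $\pi(x^\delta_i)=\true$ iff $x^\delta_{i,t}\in S$. To show $\pi$ satisfies $\psi$, fix a clause $C_q$ and let $v$ be the unique vertex of the $C_q$-triangle \emph{not} in $S$; say $v$ corresponds to literal $\ell$ on variable $x^\delta_i$. The Step~\ref{step:add-variable-clause-edges} edge between $v$ and the variable-side vertex for the same literal $\ell$ must be covered by $S$, so that variable-side vertex lies in $S$. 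By definition of $\pi$ this means $\pi$ sets $x^\delta_i$ so that $\ell$ evaluates to $\true$, hence $C_q$ is satisfied. This holds for every clause, so $\psi$ is satisfiable.

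The only non-routine step is verifying the pigeonhole at the start, namely that the matching edges and the clause triangles yield disjoint lower-bound contributions; everything else is bookkeeping. The use of \emph{Exact}-$3$-Partitioned-$3$-SAT is what guarantees that every triangle actually has three distinct vertices and that the global count $3n+2m$ comes out clean, which is why the stronger variant was singled out in Proposition~\ref{prop:3-SAT-to-3-Partition-3-SAT}.
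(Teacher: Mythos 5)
Your proof is correct and is exactly the classical \textsc{3-SAT}-to-\textsc{Vertex Cover} argument that the paper itself relies on (the paper does not spell it out, deferring to the textbook citation, but your lower-bound-plus-tightness argument and the handling of the cross edges via the excluded triangle vertex are the intended proof). Your remark on why the \emph{Exact} variant is needed—to make every clause gadget a genuine triangle so the budget $3n+2m$ is exact—matches the paper's reason for introducing that variant.
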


In what remains of this section,
our objective is to construct $G$ such that
$G_{SR}$ is as ``close'' to $H$ as possible.
It will be helpful to think about $V(H)$ as a subset of $V(G) = V(G_{SR})$.
We want to construct $G$ such that
all the edges in $E(H)$ are present
in $G_{SR}$, while no undesirable edge 
appears in $G_{SR}$.
We use a \emph{set representation gadget}
and a \emph{bit representation gadget}
for the first and the second part, respectively.
However, for the second part, we need another trick since we may have some undesirable edges.
We ensure that all the edges in
$E(G_{SR}) \setminus E(H)$, i.e.,
undesirable edges, are 
incident to a clique, say $Z$, in $G_{SR}$.
Moreover, there is a vertex $z$ in $Z$ 
such that $N[z] = Z$, i.e., $z$ is not
adjacent to any vertex in $V(G_{SR}) \setminus Z$.
Then, without loss of generality,
we can assume that any vertex cover 
of $G_{SR}$ contains $Z \setminus \{z\}$.  
Hence, all the undesired edges are deleted 
by a pre-processing step while finding
the vertex cover of $G_{SR}$.
In other words, $E(H) = E(G_{SR} - (Z \setminus \{z\}))$. 
This ensures that  
the difficulty of finding a strong resolving
set in $G$ is encoded in finding a vertex cover in $G_{SR} - Z$, a graph with only 
desired edges.
We note the following easy observation
before presenting the primary tools used in 
the reduction.
\begin{observation}
\label{obs:pendant-in-G-to-clique-in-GSR}
Consider a connected graph $G$ that has
at least $3$ vertices.
Suppose $Z \subseteq V(G)$ is the collection
of all the pendent vertices in $G$.
Then, 
$Z$ is a clique in $G_{SR}$, and
every vertex in $N(Z)$ is an isolated
vertex in $G_{SR}$.
\end{observation}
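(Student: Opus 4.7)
The proof rests on two simple distance facts about pendant vertices which I will establish first and then combine.

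First, I will argue that every pendant vertex is maximally distant from every other vertex of $G$. Let $u \in Z$ with unique neighbor $w$, and let $v \in V(G) \setminus \{u\}$. If $v = w$ then $d(w,v) = 0 \le 1 = d(u,v)$; otherwise every path from $u$ to $v$ begins with the edge $uw$, hence $d(u,v) = d(w,v)+1 \ge d(w,v)$. In either case, the unique neighbor of $u$ is no farther from $v$ than $u$ itself, so $u$ is maximally distant from $v$ by Definition~\ref{def:mutually-max-distant}. Applying this fact twice, whenever $u, u' \in Z$ are distinct pendant vertices, we get $u \bowtie u'$, so $(u, u') \in E(G_{SR})$ by Definition~\ref{def:strong-resolving-graph}. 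This proves that $Z$ forms a clique in $G_{SR}$.

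Next, I will show that every $w \in N(Z)$ is isolated in $G_{SR}$. The key preliminary observation is that, since $G$ is connected on at least three vertices, no vertex of $N(Z)$ can itself be pendant: if $w \in N(Z)$ had a unique neighbor $u \in Z$, then $\{u,w\}$ would be an entire connected component, contradicting $|V(G)| \ge 3$ and connectivity. In particular $w \notin Z$ and $\deg(w) \ge 2$, so $w$ has some neighbor $y \ne u$ for the pendant neighbor $u \in Z$ of $w$. I will then verify that $w$ fails to be maximally distant from every $v \ne w$, which together with Definition~\ref{def:mutually-max-distant} implies $w$ is incident to no edge of $G_{SR}$.

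The verification splits into two cases. If $v = u$, then $d(w,v) = 1$, but the neighbor $y \ne u$ of $w$ satisfies $d(y,u) \ge 2$ because the only neighbor of the pendant $u$ is $w$; hence $y$ witnesses that $w$ is not maximally distant from $u$. If $v \ne u$ (and $v \ne w$), then since $u$'s unique neighbor is $w$, every shortest path from $u$ to $v$ passes through $w$, so taking $x = u \ne w$ exhibits $w$ on a shortest $(x,v)$-path, again violating maximal distance from $v$. These two cases exhaust all choices of $v \ne w$, so $w$ is not mutually maximally distant from anything, proving $w$ is isolated in $G_{SR}$.

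The only conceptual subtlety is the use of the hypothesis that $G$ is connected with at least three vertices, which is exactly what rules out the degenerate case $G = K_2$ in which the two endpoints of the single edge would both lie in $Z$ and in $N(Z)$; both statements of the observation fail in that case. Otherwise the argument is entirely distance bookkeeping and should pose no real obstacle.
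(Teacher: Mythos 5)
Your proof is correct and follows essentially the same route as the paper's: pendant vertices are maximally distant from all other vertices (yielding the clique on $Z$), and each neighbor of a pendant fails to be maximally distant from every other vertex — using the pendant itself as the witnessing neighbor for $v \notin \{u,w\}$ and invoking connectivity with $|V(G)| \ge 3$ to get a second neighbor for the case $v = u$, exactly as the paper does. Your treatment is just slightly more explicit about the degenerate $K_2$ situation that the hypothesis excludes.
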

\begin{proof}
Consider any two vertices $z_1, z_2$ in $Z$, and let $u_2$ be the unique neighbor of $z_2$.
It is easy to see that $d(z_1, u_2) < d(z_1, z_2)$.
Hence, $z_2$ is maximally distant from $z_1$.
Using the symmetric arguments, $z_1$ is 
maximally distant from $z_2$, and hence,
$z_1 \bowtie z_2$.
By the construction of $G_{SR}$, there is 
an edge with endpoints $z_1, z_2$.
As these were two arbitrary points in $Z$, we have that
$Z$ is a clique in $G_{SR}$.

Consider an arbitrary vertex 
$v \in V(G) \setminus \{u_2\}$.
If $v \neq z_2$, then $d(v, z_2) > d(v, u_2)$ 
and hence, $u_2$ is \emph{not} maximally distant 
from $v$.
This implies that $u_2$ is not adjacent with
any vertex in $V(G_{SR}) \setminus \{z_2\}$
in $G_{SR}$.
As $G$ is connected and has at least three 
vertices, $u_2$ is not maximally distant from 
$z_2$ either.
Hence, $u_2$ is an isolated vertex in 
$G_{SR}$.
Since $u_2$ is an arbitrary vertex in $N(Z)$, 
the second part of the claim follows.
\end{proof} 

\subsection{Preliminary Tools}
\label{subsec:prelim-3-Par-3-SAT-Strong-Met-Dim}
\begin{figure}[t]
    \centering
\includegraphics[scale=1.05]{./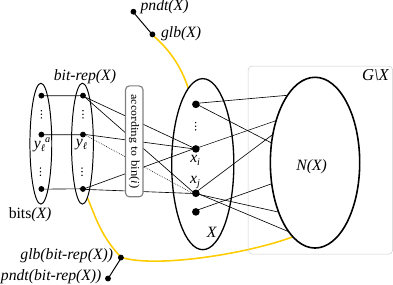}\hfill
\includegraphics[scale=1.05]{./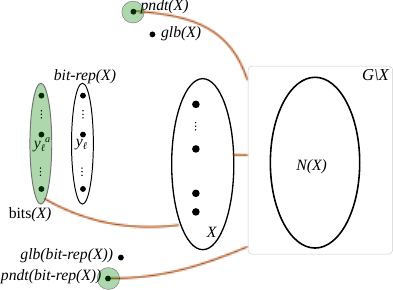}
    \caption{ \textbf{Set Identifying Gadget}.
   The graph $G$ is on the left side, and 
   the corresponding $G_{SR}$ is on the right
   side. The green-shaded region in $G_{SR}$
   denotes a clique. Note that the {brown} edges in $G_{SR}$ are not relevant at this time.
    \label{fig:strong-met-dim-set-identifying-gadget}}
\end{figure}

\subsubsection{Bit Representation Gadget to Add Independent Sets}
\label{subsubsec:strong-met-dim-ind-set}

In this subsection, we accomplish 
Step~\ref{step:add-ind-set} of the reduction
mentioned at the start of the section.
Formally, given a graph $G'$ and an independent set $X\subseteq V(G')$ of its vertices,
we want to add vertices and edges to $G'$ to obtain a graph $G$ such that 
$X$ remains an independent set
in $G$, and $X$ is also an independent set in $G_{SR}$.
We do this as follows:
\begin{itemize}
\item First, let $X=\{x_i\mid i\in [|X|]\}$, 
and set $q := \lceil \log(|X|+ 1) \rceil$. 
We select this value for $q$ to
uniquely represent each integer in $[|X|]$ by 
its bit-representation in binary (note that we 
start from $1$ and not $0$).
\item For every $\ell \in [q]$, add two 
vertices: $y^a_{\ell}$ and $y_{\ell}$, and the edge 
$(y^a_{\ell}, y_{\ell})$.
We denote $\bitrep(X)=\{y_{\ell}\mid \ell \in [q]\}$ and $\bits(X)=\{y^a_{\ell} \mid \ell\in [q]\}$ 
for convenience in a later case analysis.
Note that both $\bitrep(X)$ and $\bits(X)$ are independent sets
and $\bits(X)$ is a collection of pendent
vertices in $G$ whose neighborhoods are in 
$\bitrep(X)$.
\item For every integer $j \in [|X|]$, let $\bit(j)$ denote the binary
representation of $j$ using $q$ bits.
Connect $x_j$ with $y_{i}$
if the $i^{th}$ bit (going from left to right) in $\bit(j)$ is $1$.
\item Add a vertex, denoted by $\glb(X)$,
and make it adjacent to every vertex in $X$.
Add another vertex, denoted by $\pndt(X)$,
and make it adjacent only to $\glb(X)$.

\item Similarly, add a vertex 
$\glb(\bitrep(X))$ which is adjacent to every 
vertex in $\bitrep(X)$ ,
and a vertex $\pndt(\bitrep(X))$ which is 
adjacent only to $\glb(\bitrep(X))$.
\end{itemize}
This completes the construction of $G$. 
We use $\glb(\cdot)$ and $\pndt(\cdot)$
as a function to denote vertices
that are adjacent to every vertex in the set, i.e., global to the set, 
and the vertex pendent to the global
vertex of the set, respectively.
We mention the small caveat in this notation.
Pendent vertices in $ \bits(X) \cup \{\pndt(X)\}$ are \emph{not} adjacent to any vertex in $X$.
It is helpful to think of these pendent 
vertices together with $\bitrep(X)\cup \{\glb(X), \pndt(\bitrep(X)), \glb(\bitrep(X))\}$ as vertices added for $X$. 
Moreover, at a later stage in the reduction,
we may make $\glb(X)$ global to every
vertex in a new set $Y$.
However, we do \emph{not} rename it 
to $\glb(X \cup Y)$ for notational clarity.
Note that no vertex in 
$\bitrep(X) \cup \bits(X)$ is adjacent to any vertex in
$V(G)\setminus (\bitrep(X) \cup \bits(X) \cup X \cup \glb(\bitrep(X))$.
See \cref{fig:strong-met-dim-set-identifying-gadget} for an illustration.

We use this gadget as a building block
for our reduction and do not add 
any more edges whose both endpoints are
completely inside the gadget.
While adding other vertices and edges,
we ensure that we do not add a vertex
whose neighborhood intersects both $X$ 
and $\bitrep(X)$.
We now show that the following property 
(we aimed for) holds under this condition.

\begin{lemma}
\label{lemma:strong-met-dim-add-ind-set}
Consider the graph $G$, an independent set $X$, 
and the vertices and edges as defined 
above.
Suppose there is no vertex $v\in V(G)$ that is adjacent
to both a vertex in $X$ and
a vertex in $\bitrep(X)$.
Then,
$X$ is an independent set in $G_{SR}$.
\end{lemma}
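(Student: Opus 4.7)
The plan is to show, for every pair of distinct vertices $x_i, x_j \in X$, that at least one of the two fails to be maximally distant from the other in $G$; this immediately gives $x_i \not\bowtie x_j$, hence $(x_i, x_j) \notin E(G_{SR})$, which proves that $X$ is independent in $G_{SR}$.

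First I would pin down $d_G(x_i, x_j)$. Since $X$ is independent in $G'$ and the gadget adds no edges internal to $X$, the vertices $x_i$ and $x_j$ are non-adjacent in $G$; however, both are adjacent to $\glb(X)$, so $d_G(x_i, x_j) = 2$.

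Next I would exploit the bit-representation: as $i \neq j$, the strings $\bit(i)$ and $\bit(j)$ differ at some coordinate $\ell$, and by symmetry I may assume that the $\ell$-th bit of $\bit(i)$ is $1$ while the $\ell$-th bit of $\bit(j)$ is $0$. By the connection rule of the gadget, the vertex $y_\ell \in \bitrep(X)$ is then adjacent to $x_i$ but not to $x_j$. The aim becomes to prove $d_G(y_\ell, x_j) \geq 3$, which together with $d_G(x_i, x_j) = 2$ witnesses that $x_i$ is not maximally distant from $x_j$ (the neighbor $y_\ell$ of $x_i$ is strictly farther from $x_j$ than $x_i$ is).

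The main obstacle, and the only step that actually uses the hypothesis of the lemma, is to rule out a common neighbor of $y_\ell$ and $x_j$ in $G$. I would enumerate the candidates. The gadget-internal neighbors of $x_j$ are $\glb(X)$ and the vertices $y_{\ell'} \in \bitrep(X)$ with the $\ell'$-th bit of $\bit(j)$ equal to $1$; none of these is adjacent to $y_\ell$, because $\glb(X)$ is by construction adjacent only to $X \cup \{\pndt(X)\}$, and $\bitrep(X)$ is an independent set, so $y_{\ell'}$ and $y_\ell$ are non-adjacent. Any remaining neighbor $v$ of $x_j$ lies in $V(G) \setminus (X \cup \bitrep(X) \cup \{\glb(X)\})$; such a $v$ is adjacent to the vertex $x_j \in X$, so by the standing hypothesis $v$ cannot be adjacent to any vertex in $\bitrep(X)$, in particular not to $y_\ell$. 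Hence $y_\ell$ and $x_j$ are non-adjacent and share no common neighbor, giving $d_G(y_\ell, x_j) \geq 3 > 2 = d_G(x_i, x_j)$. This certifies that $x_i$ is not maximally distant from $x_j$, closing the proof.
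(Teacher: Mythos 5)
Your proposal is correct and follows essentially the same route as the paper: fix distinct $x_i,x_j\in X$, note $d(x_i,x_j)=2$ via $\glb(X)$, pick a $y_\ell\in\bitrep(X)$ adjacent to one but not the other, and use the hypothesis (plus the independence of $\bitrep(X)$) to force $d(y_\ell,\cdot)=3$, so the vertex with the $y_\ell$-neighbor is not maximally distant from the other. Your only slight imprecision is asserting that $\glb(X)$ is adjacent \emph{only} to $X\cup\{\pndt(X)\}$ in the final graph (later gadgets add further neighbors such as connection-portal vertices), but this is immaterial since the lemma's hypothesis already rules out an edge between $\glb(X)$ and any vertex of $\bitrep(X)$.
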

\begin{proof}
Consider any two vertices, say $x_i, x_j$, in 
$X$.
We want to prove that these two 
vertices are not adjacent in $G_{SR}$.
By the construction, it is sufficient to
prove that either $x_i$ is not maximally disjoint
from $x_j$, or $x_j$ is not maximally disjoint 
from $x_i$.
As the bit-representation of $i$ is not the same
as $j$, there is a vertex, say $y_{\ell}$ in 
$\bitrep(X)$, that is adjacent to 
$x_j$, but not to $x_i$
(or vice-versa).
We consider the first case. 
Note that $d(x_i, x_j) = 2$ since
$\glb(X)$ is adjacent to both $x_i$ and $x_j$, and $X$ is an independent set.
Since, there is no vertex $v \in V(G)$ 
such that $N(v) \cap X \neq \emptyset$ and 
$N(v) \cap \bitrep(X) \neq \emptyset$, and 
$\bitrep(X)$ is an independent set, we have 
$d(x_i, y_{\ell}) > 2$. 
Thus, $d(x_i, y_{\ell}) = 3$.
Hence, there exists a vertex in $N(x_j)$ 
which is farther from $x_i$.
This implies that $x_j$ is not maximally 
distant from $x_i$, concluding the proof.
\end{proof}

\subsubsection{Set Representation Gadget to Add Edges}
\label{subsubsec:strong-met-dim-add-edges}

We use a set representation gadget to 
add edges across two independent sets in $G_{SR}$.
This will be useful to accomplish 
Steps~\ref{step:add-variable-edges},~\ref{step:add-clause-edges}, and~\ref{step:add-variable-clause-edges}.

Consider two independent sets $A$ and $B$ in $G'$,
and suppose there is a 
function $\phi: B \mapsto A$.
This function may not be one-to-one.
As we are defining this function,
it will be helpful to consider $A, B$ as 
an ordered pair.
Our objective is to add vertices and edges to $G'$ to obtain $G$
such that $G_{SR}$ contains an edge $(a_i, b_j)$ for some $a_i \in A$ and $b_j \in B$
if and only if $a_i = \phi(b_j)$.
Moreover, we want $A, B$ to remain as independent sets in $G$ and $G_{SR}$.

We change the function $\phi$
as per our requirements.
In Step~\ref{step:add-variable-edges},
we need to add the edge between 
$x^{\alpha}_{i, t}  \in X^{\alpha}_T$ 
and $x^{\alpha}_{i, f} \in X^{\alpha}_F$, and 
other such pairs.
In this case, we define 
$\phi(x^{\alpha}_{i, f}) = x^{\alpha}_{i, t}$.
Here, $\phi$ is a one-to-one and 
onto function from $B$ to $A$.
Consider a clause $C_{q} = (x^{\alpha}_{i}
\lor \neg x^{\beta}_j \lor x^{\gamma}_{\ell})$.
As mentioned before, in
the reduction, we add the vertices $x^{\alpha, q}_{i, t}$ to $U^{\alpha}_T$,
$x^{\beta, q}_{j, f}$ to $U^{\beta}_T$, and
$x^{\gamma, q}_{\ell, t}$ to $U^{\gamma}_T$.
We expect a triangle with these three vertices.
Hence, in Step~\ref{step:add-clause-edges},
while adding edges across $U^{\alpha}_T$
and $U^{\beta}_F$,
we define the function $\phi$ as
$\phi(x^{\beta, q'}_{j, f}) = x^{\alpha, q}_{i, t}$ if and only if $q = q'$,
i.e., if the literals corresponding 
to these two vertices appear in a clause.
For Step~\ref{step:add-variable-clause-edges},
while adding edges across $X^{\alpha}_T$
and $U^{\alpha}_T$, 
define $\phi(x^{\alpha, q'}_{i', t}) 
= x^{\alpha}_{i, t}$ if and only if
$i = i'$, i.e., if these two vertices 
correspond to the same literal.
In this case, multiple vertices in 
$U^{\alpha}_T$ may be assigned to 
a single vertex in $X^{\alpha}_T$
if they correspond to the same literal in $X^{\alpha}_T$.
For example, $\phi(x^{\alpha, q}_{i, t}) =
\phi(x^{\alpha, q'}_{i, t}) = x^{\alpha}_{i, t}$. 

\begin{figure}[t]
    \centering
        \includegraphics[scale=1]{./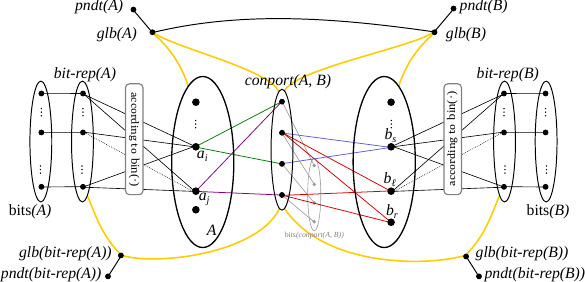}\hfill
        \includegraphics[scale=1]{./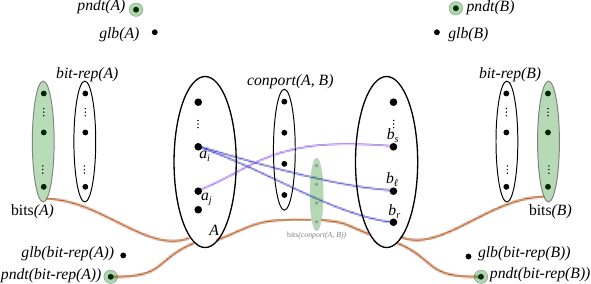}
        \caption{Set Representation Gadget
        to add edges across independent
        sets $A$ and $B$. The graph $G$ is above and the graph $G_{SR}$ is below. The yellow lines from a vertex to a set represent that the vertex is adjacent to every vertex in the set.
        Suppose that $\phi(b_{\ell}) = \phi(b_r) = a_i$ and $\phi(b_s)=a_j$. Note that in $G$, due to $\phi$, $a_i$ shares no common neighbor in $\conport(A, B)$ with $b_{\ell}$ nor $b_r$, and that $a_j$ shares no common neighbor in $\conport(A, B)$ with $b_s$. Furthermore, in $G$, we have $a_i \bowtie b_{\ell}$, $a_i \bowtie b_r$, and $a_j \bowtie b_s$, which create the edges between these pairs of vertices in $G_{SR}$. Lastly, the green-shaded region in $G_{SR}$ denotes a clique.
        \label{fig:strong-met-dim-add-edges}}
\end{figure}

We use \emph{set representations}
of integers to achieve this objective and recall some ideas from Section~\ref{subsubsec:set-representation-MD}.
For a positive integer $p$,  define $\calF_p$ 
as the collection of
subsets of $[2p]$ that contains exactly $p$ 
integers.
We critically use the fact that no set in $\calF_p$
is contained in any other set in $\calF_p$
(such a collection of sets are called a \emph{Sperner family}).
This implies for any two different sets $A, B \in \calF_p$,
$A$ intersects the complement of $B$, i.e.,
$A \cap ([2p] \setminus B) \neq \emptyset$.
Let $n$ be a positive integer such that $n \leq \binom{2p}{p}$.
We define
$\setrep: [n] \mapsto \calF_p$ as a one-to-one function
by arbitrarily assigning a set in $\calF_p$ to an integer in $[n]$.
By the asymptotic estimation of the central binomial coefficient, $\binom{2p}{p}\sim \frac{4^p}{\sqrt{\pi \cdot p}}$~\cite{Sperner}.
To get the upper bound of $p$, we scale down the asymptotic function and have $n \leq \frac{4^p}{2^p}=2^p$.
Thus, $p=\OO(\log n)$. 
Given independent sets $A = \{a_1, \dots, a_n\}$, $B = \{b_1, \dots, b_{n'}\}$ in $G'$, where $n' \le n$ and the function
$\phi: B \mapsto A$, we add 
vertices and edges to $G'$ to obtain $G$ as follows.
\begin{itemize}
\item We add the sets of vertices
$\bitrep(A)$, $\bits(A)$, 
the vertices $\glb(A)$, $\pndt(A)$, 
$\glb(\bitrep(A))$, $\pndt(\bitrep(A))$,
and the appropriate edges 
as mentioned in the previous subsection.
Similarly, we add the corresponding vertices
and edges with respect to $B$.
\item We add the edge $(\glb(A), \glb(B))$. 
\item We add a \emph{connection portal},
denoted by $\conport(A, B) =
\{v_1, v_2, \dots,  v_{2p}\}$.
For every vertex $v_{p'}$ in $\conport(A, B)$,
we add a new vertex and make it adjacent to
$v_{p'}$. 
The collection of these pendent vertices
are denoted by $\bits(\conport(A, B))$.
\item For every $i \in [n]$ and for every $p' \in \setrep(i)$, we
add the edge $(a_i,  v_{p'})$.
If $\phi(b_j) = a_i$ for some $j \in [n']$, then we add the edge $(b_j, v_{p'})$
for every $p'$ \emph{not} in $\setrep(i)$.
This ensures that, for every pair $a_i, b_j$,
if $a_i = \phi(b_j)$, then 
there is no vertex in $\conport(A, B)$
that is adjacent to both $a_i$ and $b_j$.
\item Finally, we make every vertex in 
$\conport(A, B)$ adjacent to $\glb(A)$, $\glb(\bitrep(A))$, $\glb(B)$, and
$\glb(\bitrep(B))$.
\end{itemize}

This completes the construction of $G$. See Figure~\ref{fig:strong-met-dim-add-edges} for an illustration.
As in the previous case,
we use this gadget as a building block
for our reduction and do not add 
any more edges whose both endpoints are
completely inside the gadget.
While adding other vertices and edges,
we ensure that we do not add another 
vertex (outside $\conport(A, B)$)
whose neighbourhood intersects both $A$ 
and $B$.
We now show that the following property 
(we aimed for) holds under this condition.

\begin{lemma}
\label{lemma:strong-met-dim-add-edges}
Consider the graph $G$, independent sets
$A, B$, connection portal $\conport(A, B)$ added with respect to the
function $\phi:B \mapsto A$,
and the vertices and edges as defined above.
For every vertex $v \in V(G) \setminus (\conport(A, B))$, suppose the following conditions are true.
\begin{itemize}
\item It is \emph{not} adjacent to both a vertex in $A$ and a vertex in $B$;
\item it is \emph{not} adjacent to both a vertex in $B$ and a vertex in $\bitrep(A)$;
\item it is \emph{not} adjacent to both a vertex in $A$ and a vertex in $\bitrep(B)$.
\end{itemize}
Then, the edge $(a_i, b_j)$ is present in $G_{SR}$ if and only if $a_i = \phi(b_j)$.
\end{lemma}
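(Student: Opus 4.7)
The plan is to prove both implications through a careful distance analysis in $G$, followed by a verification of the mutual maximal distance (MMD) condition from Definition~\ref{def:mutually-max-distant}. I would first pin down $d_G(a_i, b_j)$ in the two cases. If $a_i = \phi(b_j)$, then by construction $a_i$'s neighbors in $\conport(A,B)$ are $\{v_p : p \in \setrep(i)\}$ while $b_j$'s are $\{v_p : p \notin \setrep(i)\}$, and these are disjoint; combined with the first hypothesis (no vertex outside $\conport(A,B)$ is adjacent to both $A$ and $B$), this gives $d(a_i, b_j) \ge 3$, and the path $a_i \to \glb(A) \to \glb(B) \to b_j$ shows equality. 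If instead $\phi(b_j) = a_{i'} \ne a_i$, the Sperner property of $\setrep$ (no set in $\calF_p$ contains another) yields some $p \in \setrep(i) \setminus \setrep(i')$, so $v_p$ is a common neighbor of $a_i$ and $b_j$ and $d(a_i, b_j) = 2$.

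For the ``if'' direction I would verify $a_i \bowtie b_j$ whenever $\phi(b_j) = a_i$ by case analysis on neighbors. Within the gadget: $\glb(A)$ reaches $b_j$ in $2$ hops via $\glb(B)$; each $v_p \in \conport(A,B)$ with $p \in \setrep(i)$ reaches $b_j$ in $2$ hops via $\glb(B)$; each $y_\ell \in \bitrep(A)$ adjacent to $a_i$ reaches $b_j$ in $3$ hops via $\glb(\bitrep(A)) \to v_{p'} \to b_j$ for some $p' \notin \setrep(i)$. For neighbors of $a_i$ coming from other gadgets that may be composed into $G$, the three hypotheses forbid direct shortcuts to $B$ or $\bitrep(B)$, so such a neighbor can only route to $b_j$ through $\glb(A) \to \glb(B) \to b_j$ (or an analogous hub route) in at most $3$ hops. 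A fully symmetric argument handles the neighbors of $b_j$. Since $d(a_i, b_j) = 3$ and every neighbor of either endpoint is at distance at most $3$ from the other, we conclude $a_i \bowtie b_j$.

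For the contrapositive of the ``only if'' direction, assume $\phi(b_j) \ne a_i$, so $d(a_i, b_j) = 2$, and exhibit a neighbor of $a_i$ strictly farther from $b_j$. Pick any $y_\ell \in \bitrep(A)$ adjacent to $a_i$; such a $y_\ell$ exists since $\bit(i)$ has at least one $1$-bit for $i \ge 1$. The second hypothesis forbids $y_\ell$ from being adjacent to any vertex of $B$, and the shortest route from $y_\ell$ to $b_j$ goes through $\glb(\bitrep(A))$ and the connection portal, yielding $d(y_\ell, b_j) = 3 > 2 = d(a_i, b_j)$. Hence $a_i$ is not maximally distant from $b_j$, and therefore $(a_i, b_j) \notin E(G_{SR})$.

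The main obstacle I expect is the bookkeeping of distances for the ``other-gadget'' neighbors of $a_i$ and $b_j$ in the ``if'' direction. The three asymmetric hypotheses of the lemma are tailored precisely to prevent exotic short routes to $b_j$ (or to $a_i$), but making this rigorous requires a careful inventory of the global hubs $\glb(A), \glb(B), \glb(\bitrep(A)), \glb(\bitrep(B))$ to which each such neighbor can possibly attach, together with an appeal to the Sperner property whenever two attachment points collide in $\conport(A,B)$.
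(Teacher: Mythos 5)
Your proposal is correct and follows essentially the same route as the paper's proof: it pins down $d(a_i,b_j)$ as $3$ or $2$ according to whether $a_i=\phi(b_j)$ (disjoint versus intersecting portal neighbourhoods, via the Sperner property and the first hypothesis), and then certifies or refutes mutual maximal distance through the hub vertices $\glb(A)$, $\glb(B)$, $\glb(\bitrep(A))$, $\glb(\bitrep(B))$ and the $\bitrep$-neighbours, exactly as the paper does. The only cosmetic difference is that in the ``only if'' direction you break maximal distance at $a_i$ using a $\bitrep(A)$-neighbour and the second hypothesis, whereas the paper breaks it at $b_j$ using a $\bitrep(B)$-neighbour and the third hypothesis --- a symmetric variant of the same argument.
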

\begin{proof}
Consider the vertices $a_i, a_j \in A$ and 
$b_{\ell}, b_{r}, b_s \in B$ such that
$\phi(b_{\ell}) = \phi(b_r) = a_i$
and $\phi(b_s) = a_j$ (see \Cref{fig:strong-met-dim-add-edges}).
We focus on the vertex $a_i$.
Note that every vertex in $\conport(A, B)$ 
is at distance $1$ (if its in $N(a_i)$) or
at distance $2$ (as $\glb(A)$ is adjacent 
to every vertex in $A \cup \conport(A, B)$) from $a_i$.
By the construction,
the vertex $\glb(\bitrep(B))$ is at distance
$2$ from $a_i$, and hence, every vertex
in $\bitrep(B)$ is at distance at most $3$ from $a_i$.
Because of the edge $(\glb(A), \glb(B))$
and the global nature of the endpoints of
this edge, every vertex in $B$ is 
at distance at most $3$ from $a_i$.
Since $\phi(b_{\ell}) = a_i$,
as mentioned before, there is no vertex
in $\conport(A, B)$ that is adjacent to both
$a_i$ and $b_{\ell}$.
Moreover, there is no vertex $v$ in 
$V(G) \setminus \conport(A, B)$,
such that $N(v) \cap A \neq \emptyset$
and $N(v) \cap B \neq \emptyset$.
Hence, by the construction, $b_{\ell}$ is
at distance $3$ from $a_i$.
Hence, for every vertex $x \in N(b_{\ell})$,
we have $d(a_i, x) \le d(a_i, b_{\ell})$.
This implies that $b_{\ell}$ is maximally
distant from $a_i$.
As the gadget constructed is symmetric,
it is easy to see that $a_i$
is also maximally distant from $b_{\ell}$.
Hence, we have $a_i \bowtie b_{\ell}$
and $(a_i, b_{\ell})$ is an edge in 
$G_{SR}$.
Using similar arguments, the
edges $(a_i, b_r)$ and $(a_j, b_s)$ are
present in $G_{SR}$.

Now, consider $b_s \in B$ and $a_i \in A$
such that $\phi(b_s) \neq a_i$.
Then, by the properties of the set representation
gadget, there exists a vertex in 
$\conport(A,B)$ that is adjacent to 
both $a_i$ and $b_s$.
Hence, $d(a_i, b_s) = 2$.
The neighbors of $b_s$ in $\bitrep(B)$
are at distance at most $3$ from $a_i$.
However, as there is no vertex $v \in V(G)$
such that $N(v) \cap A \neq \emptyset$
and $N(v) \cap \bitrep(B) \neq \emptyset$,
every neighbor of $b_s$ in $\bitrep(B)$
is at distance exactly $3$ from $a_i$.
Hence, $b_s$ is not maximally distant 
from $a_i$.
This implies that there is no edge with endpoints $a_i, b_s$ in $G_{SR}$.
Hence, the edge $(a_{i'}, b_{j'})$ is 
present in $G_{SR}$ if and only if 
$\phi(b_{j'}) = a_{i'}$.
\end{proof}

Before presenting the reduction,
we note that there is a different schematic representation
of the gadgets mentioned above in 
\cref{fig:strong-met-dim-gadget-representation}.

\begin{figure}[t]
    \centering
        \includegraphics[scale=1]{./images1/setrep1.pdf}
        
        \vspace{0.3cm}
        \includegraphics[scale=0.7]{./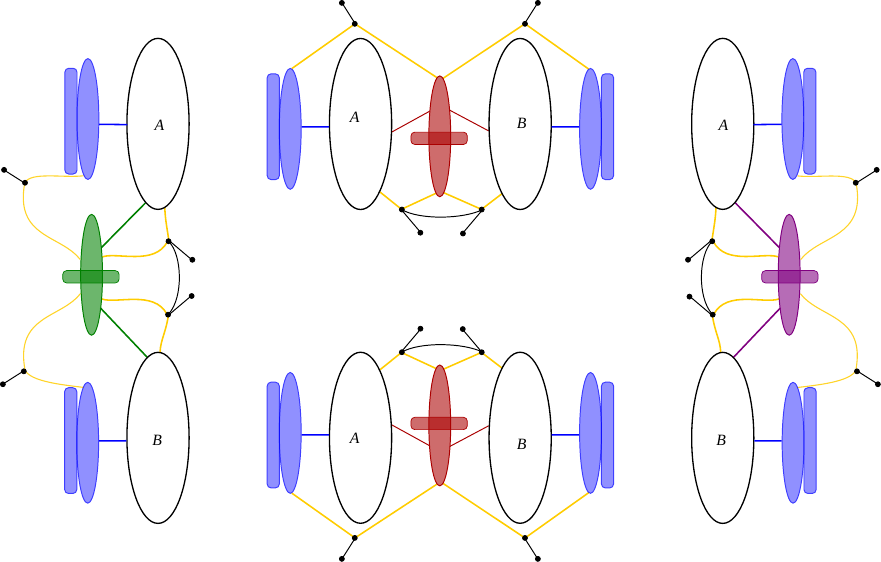}
        \caption{Set Representation Gadget
        to add edges across independent
        sets $A$ and $B$, and four different
        schematic representations of the same below. 
       In the schematic representation,
       yellow thick edges denote that 
       the vertex is adjacent with every
       vertex in the set. 
       The blue filled oval shape corresponds to 
       $\bitrep(\cdot)$, and the blue thick lines denote that the set, say $A$, is connected to $\bitrep(A)$ 
       according to its bit representation
       as mentioned in \cref{subsubsec:strong-met-dim-ind-set}.       
       The filled oval shape of green, red,
       or purple color denotes $\conport(\cdot, \cdot)$.
       The filled rectangle shape denotes
       $\bits(\cdot)$.
       The colors of connection ports correspond to the edges mentioned in Figure~\ref{fig:3-Par-3-SAT-to-VC}.
       \label{fig:strong-met-dim-gadget-representation}}
\end{figure}

\subsection{Reduction}

Consider an instance $\psi$ of \textsc{Exact-3-Partitioned-3-SAT}
with $m$ clauses and $3n$ vertices that are partitioned into
$X^{\alpha}, X^{\beta}, X^{\gamma}$.
Recall the reduction mentioned before Lemma~\ref{lemma:Ex-3-Par-3-SAT-to-VC}.
The reduction we present here, constructs a graph $G$ 
using the construction specified earlier, according to the steps mentioned in 
the reduction.
\begin{itemize}
\item Recall the sets defined in 
Step~\ref{step:add-ind-set}.
For every $A \in \{X^{\delta}_T, 
X^{\delta}_F, U^{\delta}_T, U^{\delta}_F \mid \delta \in \{\alpha, \beta,\gamma\}\}$,
we add the sets $A$, $\bitrep(A)$, $\bits(A)$, the vertices $\glb(A)$, $\pndt(A)$, 
$\glb(\bitrep(A))$, and 
$\pndt(\bitrep(A))$, and
the associated edges as mentioned 
in \cref{subsubsec:strong-met-dim-ind-set}.

\item 
For the edges mentioned in Step~\ref{step:add-variable-edges}, 
for every $\delta \in \{\alpha, \beta, \gamma\}$,
we assign 
$A = X^{\delta}_T$ and $B = X^{\delta}_F$,
and define $\phi:B \mapsto A$ as
$\phi(x^{\delta}_{i, f}) = x^{\delta}_{i, t}$.
Then, we add the connection portal 
$\conport(X^{\delta}_T, X^{\delta}_F)$
and the other vertices and edges mentioned
in \cref{subsubsec:strong-met-dim-add-edges}.

\item 
For the edges mentioned in Step~\ref{step:add-clause-edges},  for $\delta \in \{\alpha, \beta, \gamma\}$,
define $U^{\delta} = U^{\delta}_T \cup U^{\delta}_F$.
First, we assign 
$A = U^{\alpha}$ and $B = U^{\beta}$,
and define $\phi: B \mapsto A$ as follows:
for every clause $C_q = (x^{\alpha}_i \lor
\neg x^{\beta}_j \lor x^{\gamma}_{\ell})$,
$\phi(x^{\beta, q}_{j, f}) = x^{\alpha, q}_{i, t}$.
We add $\conport(U^{\alpha}, U^{\beta})$ and the corresponding vertices 
and edges as specified in 
\cref{subsubsec:strong-met-dim-add-edges}.
We let $\bitrep(U^{\alpha}) = \bitrep(U^{\alpha}_T) \cup \bitrep(U^{\alpha}_F)$
and 
$\bitrep(U^{\beta}) = \bitrep(U^{\beta}_T) \cup \bitrep(U^{\beta}_F)$.
We repeat the process for the pairs
$(U^{\beta}, U^{\gamma})$ and 
$(U^{\gamma}, U^{\alpha})$.
As $\psi$ is an instance of
\textsc{Exact-3-Partitioned-3-SAT},
every clause contains exactly three 
variables, and hence, in each case, $
\phi$ is well-defined.
We remark that, for every $U^{\delta}$, there are two global 
vertices now. 
We denote them by $\glb(U^{\delta})$
and $\glb^{\circ}(U^{\delta})$.
However, we do \emph{not} add another $\bitrep(U^{\delta})$.

Figure~\ref{fig:strong-met-dim-notfull} shows the vertices and edges added 
so far. 

\item For the edges mentioned in Step~\ref{step:add-variable-clause-edges},
we first consider $A = X^{\alpha}_T$ 
and $B = U^{\alpha}_T$, and 
define the function $\phi: B \mapsto A$
as follows:
for every $x^{\alpha, q}_{i, t}$ in 
$U^{\alpha}$ for some $q \in [m]$ and $i \in [n]$,
define $\phi(x^{\alpha, q}_{i, t}) = x^{\alpha}_{i, t}$. 
We add $\conport(X^{\alpha}_T,
U^{\alpha}_T)$ and the corresponding vertices 
and edges as specified in 
\cref{subsubsec:strong-met-dim-add-edges}.
However, the sets $\bitrep(X^{\alpha}_T)$ and 
$\bitrep(U^{\alpha}_T)$, and the vertices 
$\glb(X^{\alpha}_T)$ and
$\glb(U^{\alpha}_T)$ are already defined.
Hence, we reuse the sets
and introduce some new vertices.
\begin{itemize}
\item We add the vertex 
$\glb^{\star}(X^{\alpha}_T)$ and make it
adjacent to every vertex in $X^{\alpha}_T$.
We also add $\pndt^{\star}(X^{\alpha}_T)$
which is adjacent to only 
$\glb^{\star}(X^{\alpha}_T)$.
Similarly, we add
$\glb^{\star}(U^{\alpha}_T)$,  
$\pndt^{\star}(U^{\alpha}_T)$, and
the corresponding edges.
\item We add the vertex 
$\glb^{\star}(\bitrep(X^{\alpha}_T))$ and make it
adjacent to every vertex in $\bitrep(X^{\alpha}_T)$.
We also add 
$\pndt^{\star}(\bitrep(X^{\alpha}_T))$
which is adjacent to only 
$\glb^{\star}(\bitrep(X^{\alpha}_T))$.
Similarly, we add
$\glb^{\star}(\bitrep(U^{\alpha}_T))$,  
$\pndt^{\star}(\bitrep(U^{\alpha}_T))$, and
the corresponding edges.

\item We add 
$\conport(X^{\alpha}_T, U^{\alpha}_T)$
and $\bits(\conport(X^{\alpha}_T, U^{\alpha}_T))$
as mentioned in \cref{subsubsec:strong-met-dim-add-edges}.
\item Finally, 
we make every vertex in 
$\conport(X^{\alpha}_T, U^{\alpha}_T)$
adjacent to
$\glb^{\star}(X^{\alpha}_T)$,
$\glb^{\star}(\bitrep(X^{\alpha}_T))$,
$\glb^{\star}(U^{\alpha}_T)$, and
$\glb^{\star}(\bitrep(U^{\alpha}_T))$.
\end{itemize}
See Figure~\ref{fig:strong-met-dim-notfull}.
We repeat the process
for the remaining pairs 
$(X^{\delta}_T, U^{\delta}_T)$ and
$(X^{\delta}_F, U^{\delta}_F)$
for $\delta \in \{\alpha, \beta, \gamma\}$.

\item In the final step,
consider the following twelve vertices:
$\glb^{\star}(X^{\delta}_T)$, 
$\glb^{\star}(X^{\delta}_F)$,
$\glb^{\star}(U^{\delta}_T)$, and
$\glb^{\star}(U^{\delta}_F)$ 
for $\delta \in \{\alpha, \beta, \gamma\}$.
These vertices are highlighted using
filled orange circles around them in \cref{fig:strong-met-dim-overview}.
We add the edges to convert
these vertices into a clique (omitted in Figure~\ref{fig:strong-met-dim-overview}).
We denote it by $S_K$, for
\emph{short-cut clique}, as it provides all the necessary short-cuts.
We remark that such additional edges
are not necessary for the
other $\glb(\cdot)$ vertices.
\end{itemize}
This completes the construction of $G$.

Suppose $Z$ is the collection of all
the pendent vertices in $G$.
Formally,
\begin{align*}
Z =& \left[\bigcup_{\delta \in \{\alpha, \beta, \gamma\}} 
\bits(X^{\delta}_T) \cup 
\bits(X^{\delta}_F) \cup 
\bits(U^{\delta}_T) \cup 
\bits(U^{\delta}_F)
\right] \bigcup\\
&
\left[ 
\bigcup_{\delta \in \{\alpha, \beta, \gamma\}} 
\bits(\conport(X^{\delta}_T, X^{\delta}_F))
\right] 
\bigcup
\left[ 
\bigcup_{\delta \neq \epsilon \in \{\alpha, \beta, \gamma\}}
\bits(\conport(U^{\delta}, U^{\epsilon}))
\right] \bigcup
\\
&
\left[ 
\bigcup_{\delta \in \{\alpha, \beta, \gamma\}} 
\bits(\conport(X^{\delta}_T, U^{\delta}_T)) \cup
\bits(\conport(X^{\delta}_F, U^{\delta}_F))
\right] \bigcup\\
& \left\{
\pndt(X^{\delta}_T),
\pndt(X^{\delta}_F),
\pndt^{\star}(X^{\delta}_T),
\pndt^{\star}(X^{\delta}_F),
\mid \delta \in \{\alpha, \beta, \gamma\}
\right\} \cup\\
& \left\{
\pndt(\bitrep(X^{\delta}_T)),
\pndt(\bitrep(X^{\delta}_F)),
\pndt^{\star}(\bitrep(X^{\delta}_T)),
\pndt^{\star}(\bitrep(X^{\delta}_F))
\mid \delta \in \{\alpha, \beta, \gamma\}
\right\}\\
&
\cup \left\{
\pndt^{\star}(U^{\delta}_T),
\pndt^{\star}(U^{\delta}_F),
\pndt^{\star}(\bitrep(U^{\delta}_T)),
\pndt^{\star}(\bitrep(U^{\delta}_F)),
\mid \delta \in \{\alpha, \beta, \gamma\}
\right\} \bigcup\\
&
\cup \left\{
\pndt(U^{\delta}),
\pndt^{\circ}(U^{\delta}),
\pndt(\bitrep(U^{\delta})),
\pndt^{\circ}(\bitrep(U^{\delta}))
\mid \delta \in \{\alpha, \beta, \gamma\}
\right\} \bigcup
\end{align*}
The reduction returns
$(G, k)$ as an instance of
\textsc{Strong Metric Dimension}, where $k = 3n + 2m + (|Z| - 1)$.

\begin{figure}[h!]
	\centering
	\includegraphics[scale=1.7]{./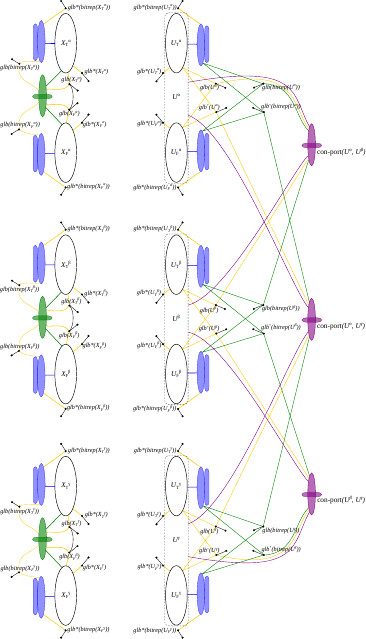}
	\caption{Overview of the vertices and edges added in the first step of the reduction along 
	with some new vertices like $\glb^{\star}(\cdot)$, which we define soon.
	Please refer to Figure~\ref{fig:strong-met-dim-clause-edges} for a more streamlined
	illustration of connections on clause-side vertices.
	We highlight that the construction so far is replicating the gadget mentioned in Subsection~\ref{subsubsec:strong-met-dim-ind-set} and \ref{subsubsec:strong-met-dim-add-edges}.
	Hence, it also satisfies the premises of Lemma~\ref{lemma:strong-met-dim-add-ind-set} 
	and \ref{lemma:strong-met-dim-add-edges}.
	This implies these vertices and edges across them are identical to corresponding 
	vertices and edges in $H$.
	\label{fig:strong-met-dim-notfull}}
\end{figure}

\begin{figure}[h!]
	\centering
	\includegraphics[scale=1.8]{./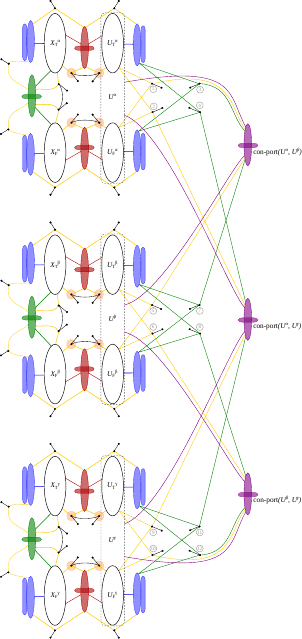}
	\caption{Overview of the reduction.
	The orange circled vertices denote the short-cut clique $S_K$.
		The edges $(1, 5)$, $(3, 9)$, and $(6, 10)$ are not shown to preserve clarity. 
		\label{fig:strong-met-dim-overview}}
\end{figure}

\begin{figure}[h!]
	\centering
	\includegraphics[scale=2]{./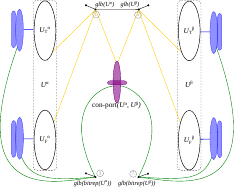}
	\caption{Highlighting the connections on the clause side.
		\label{fig:strong-met-dim-clause-edges}}
\end{figure}

\subsection{Correctness of the Reduction}
Suppose, given an instance $\psi$ of \textsc{Exact-$3$-Partitioned-$3$-SAT},
that the reduction above returns $(G,k)$ as an instance of \textsc{Strong Metric Dimension}.

\begin{lemma}
$\psi$ is a satisfiable \textsc{Exact-$3$-Partitioned-$3$-SAT} formula if and only if $G$ admits a strong resolving set of size $k$.
\end{lemma}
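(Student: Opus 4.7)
The plan is to combine Proposition~\ref{prop:strong-met-dim-equiv-vc} with Lemma~\ref{lemma:Ex-3-Par-3-SAT-to-VC} by showing that the strong resolving graph $G_{SR}$ decomposes neatly into a clique on the pendants of $G$, a set of isolated vertices, and an exact copy of the auxiliary graph $H$. Concretely, I would first take $Z$ to be the set of all pendants of $G$ listed in the construction and invoke Observation~\ref{obs:pendant-in-G-to-clique-in-GSR} to conclude that $Z$ is a clique in $G_{SR}$ and that every vertex of $N_G(Z)$ is isolated in $G_{SR}$. Inspection of the construction shows that $N_G(Z)$ contains every $\glb(\cdot)$, $\glb^{\star}(\cdot)$, $\bitrep(\cdot)$, and $\conport(\cdot,\cdot)$ vertex; in particular the short-cut clique $S_K$, though a clique in $G$, consists entirely of starred global vertices and therefore disappears in $G_{SR}$.

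Next, I would analyze the remaining vertices
\[
W := V(G)\setminus(Z\cup N_G(Z)) \;=\; \bigcup_{\delta\in\{\alpha,\beta,\gamma\}}\left(X^{\delta}_T\cup X^{\delta}_F\cup U^{\delta}_T\cup U^{\delta}_F\right).
\]
For each independent set $A\subseteq W$ built via the bit-representation gadget of Subsection~\ref{subsubsec:strong-met-dim-ind-set}, I would apply Lemma~\ref{lemma:strong-met-dim-add-ind-set} to show that $A$ remains independent in $G_{SR}$; and for each ordered pair $(A,B)$ connected via $\conport(A,B)$ with function $\phi$ defined in the reduction, I would apply Lemma~\ref{lemma:strong-met-dim-add-edges} to show the $A$-to-$B$ edges of $G_{SR}$ are exactly the pairs $(a_i,b_j)$ with $a_i=\phi(b_j)$. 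Matching the resulting set of edges against Steps~\ref{step:add-variable-edges}--\ref{step:add-variable-clause-edges} of the reduction to $H$ would give $G_{SR}[W]=H$.

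Once the structure of $G_{SR}$ is established, the arithmetic is immediate: since $Z$ induces a clique of size $|Z|$ in $G_{SR}$ and every edge of $G_{SR}$ not inside $Z$ lies inside $W$, a minimum vertex cover of $G_{SR}$ consists of any $|Z|-1$ vertices of $Z$ together with a minimum vertex cover of $G_{SR}[W]=H$; the isolated vertices need not be covered. Hence $\vc(G_{SR})=\vc(H)+|Z|-1$. Combining Proposition~\ref{prop:strong-met-dim-equiv-vc} with Lemma~\ref{lemma:Ex-3-Par-3-SAT-to-VC}, and using $k=3n+2m+(|Z|-1)$, yields $\smd(G)\le k$ if and only if $\psi$ is satisfiable.

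The main technical obstacle will be verifying that the premises of Lemmas~\ref{lemma:strong-met-dim-add-ind-set} and~\ref{lemma:strong-met-dim-add-edges} continue to hold \emph{after} every later addition in the construction. In particular, I would need to check that the reuse of $\bitrep(U^{\delta})$ and $\bitrep(X^{\delta}_T)$ across multiple gadgets, and the introduction of the short-cut clique $S_K$ together with the starred global vertices $\glb^{\star}(\cdot)$, do not create any unwanted common neighbor between a pair of independent sets in $W$ nor any new mutually maximally distant pair inside $W$. The role of $S_K$ seems designed precisely for this: it shortens distances among the vertices destined to become isolated in $G_{SR}$ so that Observation~\ref{obs:pendant-in-G-to-clique-in-GSR} continues to eliminate every potential spurious edge, thereby ensuring $G_{SR}[W]$ is exactly $H$ and not a supergraph.
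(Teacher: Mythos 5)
Your overall route is the same as the paper's: use Proposition~\ref{prop:strong-met-dim-equiv-vc} to replace strong resolving sets by vertex covers of $G_{SR}$, get that $Z$ is a clique and $N(Z)$ is isolated via Observation~\ref{obs:pendant-in-G-to-clique-in-GSR}, recover the edges of $H$ inside $X\cup U$ from Lemmas~\ref{lemma:strong-met-dim-add-ind-set} and~\ref{lemma:strong-met-dim-add-edges}, and finish with Lemma~\ref{lemma:Ex-3-Par-3-SAT-to-VC} and the arithmetic $k=3n+2m+(|Z|-1)$. However, there is a genuine gap at the decisive step. You assert that every edge of $G_{SR}$ not inside $Z$ lies inside $W=X\cup U$, and hence that \emph{any} $|Z|-1$ vertices of $Z$ together with a vertex cover of $H$ cover $G_{SR}$. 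Observation~\ref{obs:pendant-in-G-to-clique-in-GSR} does not give you this: it only makes the \emph{neighbours} of pendants isolated in $G_{SR}$. A pendant $z'\in Z$ is maximally distant from every other vertex of $G$, so $z'$ is adjacent in $G_{SR}$ to every $w\in X\cup U$ that happens to be maximally distant from $z'$; nothing you cite excludes such $Z$--$W$ edges, and your explanation of the role of $S_K$ (``so that the Observation continues to eliminate every potential spurious edge'') misattributes it, since the Observation never applies to pendant--$W$ pairs.

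The paper does not prove your stronger claim. It proves the weaker statement that is actually needed: there is one specific vertex $z=\pndt^{\star}(X^{\alpha}_T)\in Z$ with no $G_{SR}$-neighbour in $X\cup U$, shown by exhibiting, for every $w\in X\cup U$, a shortest path from $z$ to a suitable $\glb(\cdot)$ vertex that passes through $w$ (this is exactly where the short-cut clique $S_K$ is used, providing the shortcut $\glb^{\star}(X^{\alpha}_T)$--$\glb^{\star}(\cdot)$). Then, since $Z$ is a clique, a minimum vertex cover may be assumed to contain precisely $Z\setminus\{z\}$ --- not an arbitrary $(|Z|-1)$-subset --- and this particular choice also covers all potential spurious edges incident to $Z$, after which only the $H$-edges remain and your arithmetic $\vc(G_{SR})=\vc(H)+|Z|-1$ goes through. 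To repair your argument you must either prove your blanket ``no $Z$--$W$ edges'' claim (a far larger case analysis over all pendants, and not obviously true), or switch to the paper's targeted argument for the single vertex $z$. The rest of your plan --- checking that the premises of Lemmas~\ref{lemma:strong-met-dim-add-ind-set} and~\ref{lemma:strong-met-dim-add-edges} survive the later gadget additions, and ruling out edges between non-interacting sets such as $X^{\alpha}_T$ and $U^{\alpha}_F$ via shortest-path arguments through global vertices --- is indeed necessary and corresponds to the case table in the paper's proof.
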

\begin{proof}
We note that the construction is very symmetric
with respect to the partition $X^{\alpha}$, $X^{\beta}$, $X^{\gamma}$.
Recall that, from the graph $G$, we construct $G_{SR}$
as mentioned in \cref{def:strong-resolving-graph}.
Hence, $V(G) = V(G_{SR})$.
Moreover, 
by \cref{prop:strong-met-dim-equiv-vc}, $\smd(G) = \vc(G_{SR})$.
Hence, it is sufficient to prove that
$\psi$ is a satisfiable \textsc{Exact-$3$-Partitioned-$3$-SAT} formula
if and only if $G_{SR}$ admits a vertex cover of size $k$.
We start by identifying
all the edges in $G_{SR}$.

Define the subsets $X$ an $U$ of $V(G)$ as
$X := \bigcup_{\delta \in \{\alpha, \beta, \gamma\}} (X^{\delta}_T \cup X^{\delta}_F)$
{and} 
$U := \bigcup_{\delta \in \{\alpha, \beta, \gamma\}} (U^{\delta}_T \cup U^{\delta}_F)$.
Observe that 
$\langle Z, N(Z), X, U \rangle$ is a 
partition of $V(G)$.
By \cref{obs:pendant-in-G-to-clique-in-GSR},
$N(Z)$ is a collection of pendent 
vertices in $G_{SR}$,
and hence, their presence in 
$G_{SR}$ is irrelevant while computing
its vertex cover.
Hence, we need to focus on $Z$, $X$, $U$,
and the edges across and within these sets in $G_{SR}$.
By \cref{obs:pendant-in-G-to-clique-in-GSR},
$Z$ is a clique in $G_{SR}$.
We first prove that there is 
a vertex $z \in Z$ such that 
$z$ is not adjacent to any 
vertex in $X \cup U$.
Hence, it is safe to assume that
any vertex cover of $G_{SR}$ contains $Z \setminus \{z\}$.
This also implies that we do not 
need to care about edges across
$Z$ and $X \cup U$, as these
edges will be covered by the 
vertices that are forced into
any solution.
In the end, we have that
the edges whose endpoints 
are in $X \cup U$ encode 
the instance $\psi$.

Consider the vertex 
$z = \pndt^{\star}(X^{\alpha}_T)$.
We argue that, in $G_{SR}$, 
$z$ is not adjacent to any vertex 
in $X \cup U$.
We prove that, for every vertex $u$ in $X \cup U$,
there is a vertex $v \in V(G)\setminus \{u\}$ such that
a shortest path between $z$ and $v$ contains $u$.
By \cref{def:mutually-max-distant} and the construction of
$G_{SR}$, this implies that $z$ is not adjacent to $u$ in $G_{SR}$.
First, consider the case where $u$ is in $X^{\alpha}_T$.
Every vertex in $X^{\alpha}_T$ is on some shortest path from 
$z$ to $\glb(X^{\alpha}_T)$.
Hence, $z$ is not adjacent to any vertex
in $X^{\alpha}_{T}$ in $G_{SR}$.
Now, consider any set $X' \in \{X^{\delta}_T, X^{\delta}_F \mid \delta \{\alpha, \beta, \gamma\}\}$
such that $X' \neq X^{\alpha}_T$.
Every vertex in $X'$ is on some shortest path from 
$z$ to 
$\glb(X')$.
For example, consider the shortest
path from $z$ to 
$\glb(X')$ that contains the vertices
$\glb^{\star}(X^{\alpha}_T)$,
$\glb^{\star}(X')$, 
(both of these vertices are part of $S_K$),
and a vertex in $X'$.
Hence, there is no edge incident 
to $z$ whose other endpoint is in 
$X'$.
This implies that $z$ is not adjacent to 
any vertex in $X$ in $G_{SR}$.
The arguments for any set 
$U' \in \{U^{\delta} \mid \delta \in \{\alpha, \beta, \gamma\}\}$
are identical.
For any $\delta \in \{\alpha, \beta, \gamma\}$, every vertex in $U^{\delta}$ is contained in a shortest path 
from $z$ to $\glb(U^{\delta})$.
The other vertices in the path are $\glb^{\star}(X^{\delta}_T)$ and
$\glb^{\star}(U^{\delta})$ (as both are in the clique $S_K$).
Hence, $z$ is not adjacent to any vertex in $U$ in $G_{SR}$.

Now, we consider the
edges across and within $X \cup U$ in $G_{SR}$.
We consider the following partition
of $X \cup U$:
$X^{\alpha}_T$, $X^{\alpha}_F$, $X^{\delta}_T$, $X^{\delta}_F$, and
$U^{\alpha}_T$, $U^{\alpha}_F$, $U^{\delta}_T$, $U^{\delta}_F$,
where 
$\delta \in \{\beta, \gamma\}$.
In \cref{table:SMD-edge-across-X-U}, we describe the edges across and within $X\cup U$ in $G_{SR}$.
In particular, there are two types of non-empty entries. For the first such type, we make reference
to a gadget that enforces certain edges to exist and others to not exist in $G_{SR}$,
and the existence and non-existence of these edges is proven by either
\cref{lemma:strong-met-dim-add-ind-set} or \cref{lemma:strong-met-dim-add-edges}.
In the second type of entry, we mention a vertex such that any vertex in the set in the
same row (on the far left) lies on a shortest path 
between the vertex mentioned in the entry and any vertex in the set in the same column (top).
In the latter case, this implies that there is no edge in $G_{SR}$ between these two sets.
For more details, see the caption of \cref{table:SMD-edge-across-X-U}.
For example, consider the entry in the first row and last column.
This indicates that there is no edge across $X^{\alpha}_T$ and $U^{\alpha}_T$.
Consider the shortest path from $u$ to $\glb(X^{\alpha}_T)$
that contains the vertices $\glb^{\star}(U^{\alpha}_F)$,
$\glb^{\star}(X^{\alpha}_T)$ (as both these vertices are in the short-cut clique),
and any arbitrary vertex $x$ in $X^{\alpha}_T$.
This implies that there no edge with endpoints $u, s$ in $G_{SR}$. 
As these two are arbitrary vertices in the respective sets, our claim holds. 

This concludes the proof since $G_{SR}$ is very close to the graph $H$ mentioned in \cref{lemma:Ex-3-Par-3-SAT-to-VC}.
Indeed, $G_{SR}$ has the properties mentioned in the paragraph following \cref{lemma:Ex-3-Par-3-SAT-to-VC}, that is, all
the edges in $E(H)$ are present in $G_{SR}$, all the edges in $E(G_{SR})\setminus E(H)$ are incident to $Z$, and $N[z]=Z$.
The proof then follows from the remaining arguments in that paragraph.
\begin{table}
\begin{center}
\begin{tabular}{|c|c|c|c|c|}
\hline
 & $X^{\alpha}_T$ & $X^{\alpha}_F$ &
   $U^{\alpha}_T$ & $U^{\alpha}_F$
 \\
   \hline
   $X^{\alpha}_T$ & 
   \textcolor{blue}{Ind-Set} & 
   \textcolor{green}{Set-Rep} &
   \textcolor{red}{Set-Rep} & 
   $\glb(X^{\alpha}_T)$ \\
   \hline
   $X^{\alpha}_F$ & & \textcolor{blue}{Ind-Set} &
   $\glb(X^{\alpha}_F)$
   & \textcolor{red}{Set-Rep} \\
   \hline
   $U^{\alpha}_T$ & & & \textcolor{blue}{Ind-Set} & $\bitrep(U^{\alpha}_T)$\\
   \hline
   $U^{\alpha}_F$ & & & & \textcolor{blue}{Ind-Set} \\
   \hline
   \hline
 & $X^{\delta}_T$ & $X^{\delta}_F$ &
   $U^{\delta}_T$ & $U^{\delta}_F$
 \\
   \hline
   $X^{\alpha}_T$ & 
   $\glb(X^{\alpha}_T)$ & 
$\glb(X^{\alpha}_T)$ &
$\glb(X^{\alpha}_T)$ & 
   $\glb(X^{\alpha}_T)$ \\
   \hline
   $X^{\alpha}_F$ & & 
   $\glb(X^{\alpha}_F)$ &
   $\glb(X^{\alpha}_F)$ & 
   $\glb(X^{\alpha}_F)$ \\
   \hline
   $U^{\alpha}_T$ & & &
   \textcolor{purple}{Set-Rep} & 
   \textcolor{purple}{Set-Rep}\\
   \hline
   $U^{\alpha}_F$ & & & &
   \textcolor{purple}{Set-Rep}\\
   \hline
\end{tabular}
\end{center}
\caption{Overview of the adjacencies across the partition
of $X \cup U$.
Here, $\delta \in \{\beta, \gamma\}$.
Diagonal entries marked with \textcolor{blue}{Ind-Set} denote
that the set remains as an independent set because of the gadget in
\cref{subsubsec:strong-met-dim-ind-set} and \cref{lemma:strong-met-dim-add-ind-set}.
\textcolor{green}{Set-Rep},  \textcolor{red}{Set-Rep}, and \textcolor{purple}{Set-Rep} 
denote that the corresponding set contains the edges with respect 
to the connection portal added based on the set representation gadget
in \cref{subsubsec:strong-met-dim-add-edges} and \cref{lemma:strong-met-dim-add-edges}.
The other non-empty entries denote that there 
is no edge across these sets. In particular, for such a non-empty entry, 
every vertex in the set in the same row (on the far left) lies on a shortest path 
between the vertex mentioned in the entry and any vertex in the set in the same column (top).
For example, every vertex in $X^{\alpha}_T$ lies on a shortest path between
$\glb(X^{\alpha}_T)$ and any vertex in $U^{\alpha}_T$. 
\label{table:SMD-edge-across-X-U}}
\end{table}
\end{proof}

\begin{proof}[Proof of Theorem~\ref{thm:smd-lower-bound-vc}.] 
Consider the set $Z$ defined above. 
As every set mentioned in its definition is of size $\calO(\log(n))$,
we have $|Z| = \calO(\log(n))$.
As every vertex in $Z$ is a pendent vertex, $|N(Z)| = \calO(\log(n))$.
By construction, it is easy to verify that $N(Z)$ is a vertex cover of $G$.
Hence, $\vc(G) = \calO(\log(n))$.
This implies that if there is an algorithm running in time $2^{2^{o(\vc)}} \cdot n^{\calO(1)}$,
then \textsc{Exact-3-Partitioned-3-SAT} has an algorithm running in time
$2^{o(n)}$, as the reduction takes polynomial time in the size of input.
This, however, contradicts Proposition~\ref{prop:3-SAT-to-3-Partition-3-SAT}.
Hence, the first part of the theorem is true.
The second part of the theorem follows from similar arguments coupled
with the facts that the problem admits a kernel with $2^{\calO(\vc)}$ vertices
and a brute-force algorithm running in time $2^{\calO(n)}$.
\end{proof}

\section{Algorithms}

\subsection{Dynamic Programming Algorithm for \mdfull}\label{sec:algo-diam-tw-MD}
\label{subsec:algo-tw-diam-MD}


\newcommand{\val}[1]{val_#1}
\renewcommand{\int}[1]{D_{int}{{(#1)}}}
\newcommand{\ext}[1]{D_{ext}{{(#1)}}}
\newcommand{\paire}[1]{D_{pair}{{(#1)}}}
\newcommand{\I}[1]{ I_{#1}=(X_{i_{#1}}, S_{I_{#1}}, \int {I_{#1}},\ext {I_{#1}},\paire {I_{#1})}}
\newcommand{\T}[1]{T(#1)}
\renewcommand{\vec}{\mathbf}
\newcommand{\distvec}[2]{\vec{\mathbf{{d_{#1}}(#2)}}}
\newcommand{\dimx}[1]{\dim(#1)}


The aim of this subsection is to prove the following theorem.
\begin{theorem}
\label{thm:algo-diam-tw-MD}
\mdfull admits an algorithm running in time
$2^{\diam^{\mathcal{O}(\tw)}} \cdot n^{\OO(1)}$.
\end{theorem}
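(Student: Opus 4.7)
The plan is a dynamic programming algorithm over a nice tree decomposition of $G$ of width $\tw$. The core observation is that because $G$ has diameter at most $\diam$, for any vertex $u \in V(G_i) \setminus X_i$ its \emph{distance profile} to the bag, $\tau(u) = (d_G(u, x))_{x \in X_i} \in \{0, 1, \dots, \diam\}^{X_i}$, has at most $\Pi := (\diam+1)^{\tw+1} = \diam^{O(\tw)}$ possible values, and, by the separator property of $X_i$, this profile fully determines $d_G(u, s)$ for every $s$ lying on the ``other side'' of the bag. So the eventual distance vector of $u$ to the solution $S$ is determined entirely by $\tau(u)$ together with data that will be produced by the rest of the decomposition.

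The DP state at a bag $X_i$ will record: (a)~the intersection $S \cap X_i$; (b)~the pairwise distance matrix $\mathsf{Dist} : X_i \times X_i \to \{0,\dots,\diam\}$ as computed inside $G_i$; (c)~the set $\mathsf{Occ} \subseteq \{0,\dots,\diam\}^{X_i}$ of profiles realized by some vertex of $V(G_i)$ already processed; and (d)~for every pair $\{\tau_1, \tau_2\}$ of profiles in $\mathsf{Occ}$ (including $\tau_1=\tau_2$, which tracks within-profile pairs), a Boolean $\mathsf{Res}[\tau_1, \tau_2]$ flagging whether \emph{every} pair of forgotten vertices realizing $(\tau_1,\tau_2)$ has already been resolved by some $s \in S \cap V(G_i)$. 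This is precisely the information needed to decide acceptance at the root: pairs with different profiles may still be resolved by a future solution vertex, while pairs with equal profile cannot. The number of such states per bag is at most $2^{\tw+1} \cdot (\diam+1)^{O(\tw^2)} \cdot 2^{\Pi} \cdot 2^{\Pi^2} = 2^{\diam^{O(\tw)}}$, matching the target running time.

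I would then specify the transitions for the four node types of the nice tree decomposition. Leaves are trivial. At an introduce node for a vertex $v$, one extends $\mathsf{Dist}$ by a row and column (using the edges incident to $v$ in the bag), adjusts $\mathsf{Occ}$ to include the profile of $v$ itself, and — if $v$ is guessed into $S$ — updates $\mathsf{Res}$ by marking as resolved every pair $(\tau_1,\tau_2)$ for which the distance from $v$ (reconstructible from $\mathsf{Dist}$, $\tau_1$, and $\tau_2$) differs. At a join node, pairwise distances are recomputed by a Floyd--Warshall-style relaxation through $X_i$, $\mathsf{Occ}$ is a union of the child sets, and $\mathsf{Res}$ is a logical combination accounting for the fact that a pair may become resolved either because it was resolved in one child or because a newly-created shorter path through $X_i$ now separates it. Acceptance at the root requires $\mathsf{Res}[\tau_1, \tau_2] = \true$ for all pairs.

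The main technical obstacle is the \emph{forget} node: when a vertex $v$ is removed from the bag, its coordinate is deleted from every profile, which in general collapses several distinct elements of $\mathsf{Occ}$ into a single new profile. Any two formerly distinct profiles that merge must be consistency-checked, since from now on their vertices will be tracked as ``same-profile'' and hence can no longer be separated by any later $s$; all corresponding cross-pairs must already be recorded as resolved, otherwise the branch is discarded. Establishing that this collapse operation correctly aggregates the $\mathsf{Res}$ table — and that, globally, $\mathsf{Res}$ at each stage truly captures the set of still-unresolved pairs — is the delicate step and the heart of the correctness proof. Once it is in place, multiplying the per-bag state count $2^{\diam^{O(\tw)}}$ by $n$ nodes and a polynomial transition cost yields the claimed $2^{\diam^{O(\tw)}} \cdot n^{O(1)}$ bound.
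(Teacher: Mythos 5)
Your high-level framing matches the paper's algorithm: classify vertices by their distance profiles to the current bag (there are $\diam^{\OO(\tw)}$ of them), exploit that the bag separates $G_i$ from the rest so that distances to future solution vertices are profile-determined, and pay $2^{\diam^{\OO(\tw)}}$ states per bag. However, the state you propose is not rich enough to carry out the transitions you describe, and the gap is concrete. First, your table $\mathsf{Res}$ only covers pairs of profiles realized \emph{inside} $G_i$, and you claim that at an introduce node the new entries for the incoming vertex $v$ can be initialized from the state. They cannot: whether an already-forgotten solution vertex $s\in V(G_{i_1})\setminus X_{i_1}$ resolves the pair $\langle x,v\rangle$ (with $x$ forgotten) requires $d_G(s,x)$, and $s$ and $x$ lie on the \emph{same} side of the bag, so this distance is not determined by their profiles and has been irretrievably aggregated away. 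The paper's state carries exactly the missing object: a table over pairs (profile inside $G_i$, profile \emph{outside} $G_i$) recording whether the partial solution already resolves every realizing cross pair (its $D_{pair}$ component), so that when the outside vertex is later introduced this information is available rather than recomputed. Second, your state does not record which profiles contain solution vertices of $G_i$ (the paper's $D_{int}$), so at a join node you cannot decide whether an unresolved pair from one child is resolved by a solution vertex hidden in the other child; your description of the join update in terms of ``newly-created shorter paths through $X_i$'' does not address this, and is anyway based on the wrong metric (see below). Your forget-node merge check and the root check are sound in spirit, but without the cross-pair and internal-solution-class components the intermediate updates are not well-defined, so the recurrence as proposed does not compute the metric dimension.

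A secondary but real flaw is the distance bookkeeping: resolving is defined with respect to $d_G$, yet you maintain a matrix of distances ``as computed inside $G_i$'' and update it by Floyd--Warshall at join nodes, while simultaneously defining profiles via $d_G$. Distances inside $G_i$ can strictly exceed $d_G$ (shortest paths may leave $G_i$ and return through the bag), so resolution marks computed from $\mathsf{Dist}$ would be wrong. The remedy is what the paper does: precompute all pairwise distances in $G$ once in polynomial time and use only global distances throughout, which also makes the join-node relaxation unnecessary. With global distances, a cross-pair table, the set of solution-occupied profiles, and (as in the paper) either an explicit guess of the profiles of future outside solution vertices or a carefully justified deferral scheme, your approach would essentially reconstruct the paper's dynamic program; as written, it omits the parts of the state on which the correctness of the introduce and join transitions hinges.
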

To this end, we give a dynamic programming algorithm on a tree decomposition for this problem. The algorithm is inspired by the one from~\cite{BDP23} for chordal graphs, though there are some non-trivial differences. We will assume that a tree decomposition of the input graph $G$ of width $w$ is given to us. Note that one can compute a tree decomposition of width $w\leq 2\tw(G)+1$ in time $2^{\OO(\tw(G))}n$~\cite{K21}, and it can be transformed into a \emph{nice tree decomposition} of the same width with $\OO(wn)$ bags in time $\OO(w^2n)$~\cite{niceTW}.

\noindent\textbf{Overview.} As mentioned previously, our dynamic programming is similar to that of~\cite{BDP23}. However, in~\cite{BDP23}, as the diameter of the graph is unbounded, it was crucial to restrict the computations for each step of the dynamic programming to vertices ``not too far'' from the current bag. This was possible due to the metric properties of chordal graphs. In our case, as we consider the diameter of the graph as a parameter, we do not need such restrictions, which makes the proof a little bit simpler.

We now give an intuitive description of the dynamic programming scheme. At each step of the algorithm, we consider a bounded number of \emph{solution types}, depending on the properties of the solution vertices with respect to the current bag. At a given dynamic programming step, we will assume that the current solution resolves all vertex pairs in $G_i$. Such a vertex pair may be resolved by a vertex from $G-G_i$, or by a vertex in $G_i$ itself. 

Any bag $X_i$ of the tree decomposition whose node $i$ lies on a path between two join nodes in $T$, forms a separator of $G$: there are no edges between the vertices of $G_i-X_i$ and $G-G_ i$. For a vertex $v$ not in $X_i$, we consider its distance-vector to the vertices of $X_i$; the distance-vectors induce an equivalence relation on the vertices of $G-X_i$, whose classes we call \emph{$X_i$-classes}. Consider the two subgraphs $G_i$ and $G-G_i$. Any two solution vertices $x,y$ from $G-G_i$ that are in the same $X_i$-class, 
resolve the exact same pairs of vertices from $G_i$. Thus, for this purpose, it is irrelevant whether $x$ or $y$ will be in a resolving set, and it is sufficient to know that a vertex of their $X_i$-class will eventually be chosen. In this way, one can check whether a vertex pair from $G_i$ is resolved by a solution vertex of $G-G_i$. 

The same idea is used to ``remember'' the previously computed solution: it is sufficient to remember the $X_i$-classes of the vertices in the previously computed resolving set, rather than the vertices themselves.

It is slightly more delicate to make sure that vertex pairs in $G_i$ are resolved in the case where such a pair is resolved by a vertex in $G_i$. Indeed, this must be ensured, in particular when processing a join node $i$, for vertex pairs belonging to bags in the two sub-trees corresponding to the children $i_1,i_2$ of $i$. Such pairs may be resolved by four types of solution vertices: from $G-G_i$, $X_i$, $G_{i_1}-X_i$, or $G_{i_2}-X_i$. To ensure this, the dynamic programming scheme makes sure that, at each step, for any possible pair $C_1,C_2$ of $X_i$-classes, all vertex pairs $\langle u,v \rangle$ consisting of a vertex $u$ of $G_i$ with class $C_1$ and a vertex $v$ of $G-G_i$ with class $C_2$ are resolved. The crucial step here is that when a new vertex $v$ is introduced (i.e., added to a bag $X_i$ to form $X_{i'}$), depending on its $X_i$-class, it must be made sure that it is resolved from all other vertices depending on their $X_i$-classes, as described above. To ensure that $v$ is distinguished from all other vertices of $G_i$, we keep track of vertex pairs of $G_i \times (G-G_i)$ that are already resolved by the partial solution, and enforce that, when processing bag $X_{i'}$, for every vertex $x$ of $G_i$, the pair $\langle x,v \rangle$ is already resolved. As $v$ belongs to the new bag $X_{i'}$, we know its distances to all resolving vertices (indeed, $X_{i'}$-classes of solution vertices can be computed from their $X_i$-classes), and thus, the information can be updated accurately.

For a bag $X_i$ and a vertex $v$ not in $X_i$, the number of possible distance vectors to the vertices of $X_i$ is at most $\diam(G)^{|X_i|}$. Thus, a solution for bag $X_i$ will consist of: (i) the subset of vertices of $X_i$ selected in the solution; (ii) a subset of the $\diam(G)^{|X_i|}$ possible vectors to denote the $X_i$-classes from which the currently computed solution (for $G_i$) contains at least one vertex in the resolving set; (iii) a subset of the $\diam(G)^{|X_i|}$ possible vectors denoting the $X_i$-classes from which the future solution needs at least one vertex of $G-G_i$ in the resolving set; (iv) a subset of the $\diam(G)^{|X_i|}\times \diam(G)^{|X_i|}$ possible pairs of vectors representing the $X_i$-classes of the pairs of vertices in $G_i\times (G-G_i)$ that are already resolved by the partial solution.\\

\noindent\textbf{Formal description.} We mostly follow the notations used in~\cite{BDP23}. Before presenting the dynamic program, we first introduce some useful definitions and lemmas.

\begin{definition}
Given a vector $\vec{r}$, we refer to the $i$-th coordinate of $\vec r$ as $\vec{r}_i$.
\begin{itemize}
    \item Let $\vec{r}$ be a vector of size $k$ and let $m$ be an integer. The vector $\vec t = \vec{r|}m$ is the vector of size $k+1$ such that $\vec{t}_{k+1}=m$ and, for all $1 \leq i \leq k$, $\vec{t}_i=\vec{r}_i$ . 
    \item Let $\vec{r}$ be a vector of size $k$. The vector $\vec{r^-}$ is the vector of size $k-1$ such that, for all $1 \leq i \leq k-1$, $\vec{r}^-_i=\vec{r}_i$. 
\end{itemize}
\end{definition}

\begin{definition}
Let $G$ be a graph and let $X = \{v_1, \ldots,v_k\}$ be a subset of vertices of $G$. For a vertex $x$ of $G$, the \emph{distance vector} $\vec{d_X}(x)$ of $x$ to $X$ is the vector of size $k$ such that, for all $1 \leq j \leq k$, $\vec{d_X}(x)_j=d(x,v_j)$. For a set $S \subseteq V(G)$, we let $\vec{d_X}(S) = \{\vec{d_X}(s) \mid s \in S\}$.
\end{definition}

\begin{definition}
Let $\vec{r_1}, \vec{r_2}$, and $\vec{r_3}$ be three vectors of size $k$. We say that $\vec{r_3}$ \emph{resolves} the pair $\langle \vec{r_1},\vec{r_2} \rangle$ if \[\min_{1 \leq i \leq k} \vec{(r_1+r_3)}_i \neq \min_{1 \leq i \leq k} \vec{(r_2+r_3)}_i.\]
\end{definition}

\begin{lemma}
Let $X$ be a separator of a graph $G$, and let $G_1$ be a connected component of $G - X$. Let $\langle x,y \rangle$ be a pair of vertices of $G - G_1$, and let $\vec r$ be a vector of size $|X|$. If $\vec r$ resolves the pair $\langle \vec{d_X}(x),\vec{d_X}(y) \rangle$, then any vertex $s \in V(G_1)$ such that $\vec{d_X}(s) = \vec r$ resolves the pair $\langle x,y \rangle$.
\end{lemma}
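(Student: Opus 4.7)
The plan is to exploit the fact that $X$ separates $s \in V(G_1)$ from every vertex outside $V(G_1)$. Consequently, any shortest path from $s$ to $x$ (and from $s$ to $y$) must pass through at least one vertex of $X$, which lets us rewrite $d(s,x)$ and $d(s,y)$ as minima over $X$ of sums of distances. These minima are exactly the quantities compared in the definition of ``$\vec r$ resolves $\langle \vec{d_X}(x),\vec{d_X}(y)\rangle$'', so the hypothesis will directly yield $d(s,x)\neq d(s,y)$.

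The first step will be to establish the following decomposition claim: for every $u \in V(G_1)$ and every $v \in V(G)\setminus V(G_1)$,
\[
d(u,v) \;=\; \min_{w\in X}\bigl(d(u,w)+d(w,v)\bigr).
\]
The inequality $\le$ holds for any single $w\in X$ by the triangle inequality. For the reverse inequality, I would pick any shortest $u$-$v$ path and let $w$ be the first vertex of $X$ that it meets; such a vertex exists because $X$ separates $u$ from $v$, and its presence on a shortest path guarantees $d(u,v) = d(u,w)+d(w,v)$. The borderline case $v \in X$ is handled by taking $w=v$.

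The second step will be to translate this into the distance-vector language. Writing $X=\{v_1,\dots,v_k\}$ and using $\vec{d_X}(s)=\vec r$, the decomposition claim applied to $u=s$ and $v\in\{x,y\}$ gives
\[
d(s,x) \;=\; \min_{1\le i\le k}\bigl(\vec r_i+\vec{d_X}(x)_i\bigr)
\quad\text{and}\quad
d(s,y) \;=\; \min_{1\le i\le k}\bigl(\vec r_i+\vec{d_X}(y)_i\bigr).
\]
By the definition of ``$\vec r$ resolves $\langle \vec{d_X}(x),\vec{d_X}(y)\rangle$'', the two minima on the right-hand sides differ, whence $d(s,x)\neq d(s,y)$, so $s$ resolves the pair $\langle x,y\rangle$.

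There is no genuinely hard step here; the only subtlety to double-check is the boundary case in which $x$ or $y$ already lies in $X$ (so no ``crossing'' is required), but this is immediately handled by the observation that $\vec{d_X}(v_j)_j=0$, ensuring that the minimum in the decomposition is attained at that coordinate and the formula above remains valid.
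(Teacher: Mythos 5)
Your proof is correct and follows essentially the same route as the paper: both rewrite $d(s,x)$ and $d(s,y)$ as $\min_{1\le j\le |X|}(\vec{d_X}(s)+\vec{d_X}(x))_j$ and $\min_{1\le j\le |X|}(\vec{d_X}(s)+\vec{d_X}(y))_j$ using the separator property and then invoke the definition of vector resolution. You merely spell out the justification of this distance decomposition (triangle inequality plus first $X$-vertex on a shortest path, and the boundary case $x,y\in X$) that the paper leaves implicit.
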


\begin{proof}
To see this, it suffices to note that since $X$ separates $s$ from $x$ ($y$, resp.), $d(s,x) = \min_{1 \leq j \leq |X|} (\vec{d_X}(s) + \vec{d_X}(x))_j$ ($d(s,y) = \min_{1 \leq j \leq |X|} (\vec{d_X}(s) + \vec{d_X}(y))_j$, resp.); and since $\vec r$ resolves the pair $\langle \vec{d_X}(x),\vec{d_X}(y) \rangle$, $d(s,x) \neq d(s,y)$.
\end{proof}

\begin{definition}
Let $X$ be a separator of a graph $G$, and let $G_1,G_2$ be two (not necessarily distinct) connected components of $G - X$. Let $x \in V(G_1) \cup X$ and $y \in V(G_2) \cup X$. If a vector $\vec r$ resolves the pair $\langle \vec{d_X}(x), \vec{d_X}(y) \rangle$, then we say that $\vec r$ \emph{resolves} the pair $\langle x,y \rangle$. More generally, given a set $M$ of vectors, we say that the pair $\langle x,y \rangle$ \emph{is resolved} by $M$ if there exists $\vec r \in M$ that resolves the pair $\langle x,y \rangle$.
\end{definition}

We now define the generalized problem solved at each step of the dynamic programming algorithm, called \textsc{Extended Metric Dimension} ({\sc EMD} for short), whose instances are defined as follows.

\begin{definition}
Let $G$ be a graph and let $(T,\{X_i:i \in V(T)\})$ be a tree decomposition of $G$. For a node $i$ of $T$, an \emph{instance of {\sc EMD} for $i$} is a $5$-tuple $ I = (X_i, S_I, \int I, \ext I, \paire I)$ composed of the bag $X_i$ of $i$, a subset $S_I$ of $X_i$, and three sets of vectors satisfying the following.
\begin{itemize}
    \item $\int {I}\subseteq [\diam(G)]^{|X_i|}$ and $\ext{I} \subseteq [\diam(G)]^{|X_i|}$.
    \item $\ext{I} \neq \emptyset$ or $S_I \neq \emptyset$.
    \item $\paire{I} \subseteq [\diam(G)]^{|X_i|}\times [\diam(G)]^{|X_i|}$.
    \item For each pair of vectors $\langle \vec{r_1},\vec{r_2} \rangle \in \paire I$, there exist two vertices $x \in V(G_i)$ and $y \notin V(G_i)$ such that $\vec{d_{X_i}}(x)= \vec{r_1}$ and $\vec{d_{X_i}}(y)= \vec{r_2}$.
    \item For each vector $\vec r$ of $\ext I$, there exists $x \notin V(G_i)$ such that $\vec{d_{X_i}}(x)= \vec r$.
\end{itemize} 
\end{definition}

\begin{definition}
\label{def_instance}
A set $S \subseteq G_i$ is a solution for an instance $I$ of {\sc EMD} if the following hold.
\begin{itemize}
    \item \textbf{(S1)} Every pair of vertices of $G_i$ is  either resolved by a vertex in $S$ or resolved by a vector of $\ext I$. 
    \item \textbf{(S2)} For each vector $\vec{r} \in \int I$, there exists a vertex $s \in S$ such that $\vec{d_{X_i}}(s)= \vec{r}$. 
    \item \textbf{(S3)} For each pair of vectors $\langle \vec{r_1},\vec{r_2} \rangle \in \paire I$, any vertex $x \in V(G_i)$ such that $\vec{d_{X_i}}(x)= \vec{r_1}$, and any vertex $y \notin V(G_i)$ such that $\vec{d_{X_i}}(y)= \vec{r_2}$, the pair $\langle x,y \rangle$ is resolved by $S$. 
    \item \textbf{(S4)} $S \cap X_i =S_I$. 
\end{itemize}
\end{definition}

In the remainder of this section, for brevity, we will refer to an instance of the {\sc EMD} problem only by an instance.

\begin{definition}
\label{def_dim}
Let $I$ be an instance. We denote by $\dim (I)$ the minimum size of a set $S \subseteq V(G_i)$ which is a solution for $I$. If no such set exists, then we set $\dim (I)= + \infty$. We refer to this value as the \emph{extended metric dimension} of $I$.
\end{definition}

In the following, we fix a graph $G$ and a nice tree decomposition $(T,\{X_i:i \in V(T)\})$ of $G$. Given a node $i$ of $T$ and an instance $I$ for $i$, we show how to compute $\dim(I)$. The proof is divided according to the type of the node $i$.\\

\noindent
{\bf Leaf node.} Computing $\dim(I)$ when $I$ is an instance for a leaf node can be done with the following lemma.

\begin{lemma}
\label{lem:leaf_node}
Let $I$ be an instance for a leaf node $i$ and let $v$ be the only vertex in $X_i$. Then,
\begin{center}
$\dim (I) =
\begin{cases}
0 & \text{if } S_I = \emptyset, \int I = \emptyset, \text{ and } \paire I  = \emptyset \\
1 & \text{if } S_I = \{v\} \text{ and } \int I \subseteq \{(0)\} \\
+\infty & \text{otherwise}
\end{cases}
$
\end{center}
\end{lemma}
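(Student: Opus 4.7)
My plan is to carry out a direct case analysis on the candidate solutions $S \subseteq V(G_i)$. Since $i$ is a leaf node with $X_i = \{v\}$, we have $V(G_i) = X_i = \{v\}$, so the only possible sets $S$ are $S = \emptyset$ and $S = \{v\}$. Condition (S4) immediately forces $S = S_I$, so the two cases correspond exactly to $S_I = \emptyset$ and $S_I = \{v\}$. Condition (S1) is vacuously satisfied in both cases, since $G_i$ has only one vertex and hence no pair of vertices to resolve.

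For the case $S = S_I = \emptyset$, I would argue that (S2) forces $\int I = \emptyset$ (there is no vertex in $S$ whose distance vector realizes any $\vec r \in \int I$), and (S3) forces $\paire I = \emptyset$. The latter uses the instance definition: any pair $\langle \vec{r_1}, \vec{r_2} \rangle \in \paire I$ must be witnessed by vertices $x \in V(G_i)$ and $y \notin V(G_i)$, so necessarily $\vec{r_1} = \vec{d_{X_i}}(v) = (0)$; with $S = \emptyset$, no vector resolves $\langle x,y \rangle$, contradicting (S3). Conversely, if all three conditions $S_I = \emptyset$, $\int I = \emptyset$, and $\paire I = \emptyset$ hold, then $S = \emptyset$ trivially satisfies (S1)--(S4), yielding $\dim(I) = 0$.

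For the case $S = S_I = \{v\}$, (S2) requires every $\vec r \in \int I$ to equal $\vec{d_{X_i}}(v) = (0)$, giving the condition $\int I \subseteq \{(0)\}$. Note that (S3) is automatic here: for any pair $\langle \vec{r_1}, \vec{r_2} \rangle \in \paire I$, the witness $x \in V(G_i)$ must be $v$ and the witness $y \notin V(G_i)$ is some vertex $y \neq v$, so $d(v,v) = 0 \neq d(v,y)$, meaning $v \in S$ resolves $\langle x,y \rangle$. Hence if $S_I = \{v\}$ and $\int I \subseteq \{(0)\}$, then $\dim(I) = 1$.

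In every other situation, neither $S = \emptyset$ nor $S = \{v\}$ satisfies all the required conditions (one of them being blocked by (S4), the other by (S2) or (S3)), so $\dim(I) = +\infty$. There is no real obstacle here; the only subtlety to flag is that one must invoke the instance definition of $\paire I$ to convert the abstract condition (S3) into the concrete requirement $\paire I = \emptyset$ in the $S = \emptyset$ branch, and to observe that $v$ automatically resolves every such pair in the $S = \{v\}$ branch.
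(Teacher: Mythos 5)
Your proof is correct and follows essentially the same route as the paper: a case analysis over the only two candidate sets $S=\emptyset$ and $S=\{v\}$ forced by (S4), checking (S1)--(S4) in each case. You in fact supply slightly more detail than the paper's own terse argument (explicitly using the instance definition of $\paire I$ to show it must be empty when $S_I=\emptyset$ and that $v$ automatically resolves every witnessed pair when $S_I=\{v\}$), which is consistent with the stated formula.
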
 

\begin{proof}
Suppose first that $S_I = \emptyset$. Then, the empty set is the only possible solution for $I$; and the empty set is a solution for $I$ only if $\int I = \emptyset$ and $\paire I = \emptyset$. Suppose next that $S_I = \{v\}$. Then, the set $S = \{v\}$ is the only possible solution for $I$; and this set is a solution for $I$ only if $\int I = \emptyset$ or $\int I$ contains only the vector $\vec{d_{X_i}}(v)= (0)$.
\end{proof}

In the remainder of this section, we handle the three other types of nodes. For each type of node, we proceed as follows: we first define a notion of compatibility on the instances for the child/children of $i$ and show how to compute the extended metric dimension of $I$ from the extended metric dimension of compatible instances for the child/children of $i$.\\

\noindent
{\bf Join node.} Let $I$ be an instance for a join node $i$, and let $i_1$ and $i_2$ be the two children of $i$. Given a pair of instances $\langle I_1,I_2 \rangle$ for $\langle i_1,i_2 \rangle$, we say that a pair $\langle \vec r, \vec t \rangle \in [\diam(G)]^{|X_i|}\times [\diam(G)]^{|X_i|}$ is \emph{2-compatible} if there exist $x \in V(G_{i_1}), y \in V(G - G_i)$, and $\vec{u} \in \int{I_2}$ such that $\vec{d_{X_i}}(x) = \vec r$, $\vec{d_{X_i}}(y) = \vec t$, and $\vec u$ resolves the pair $\langle \vec{r},\vec{t} \rangle$. Symmetrically, we call a pair $\langle \vec r, \vec t \rangle \in [\diam(G)]^{|X_i|}\times [\diam(G)]^{|X_i|}$ \emph{1-compatible} if there exist $x \in V(G_{i_2}), y \in V(G - G_i)$, and $\vec{u} \in \int{I_1}$ such that $\vec{d_{X_i}}(x) = \vec r$, $\vec{d_{X_i}}(y) = \vec t$, and $\vec u$ resolves the pair $\langle \vec{r},\vec{t} \rangle$.

\begin{definition}
\label{def:compatible_pair}
A pair of instances $\langle I_1, I_2 \rangle$ for $\langle i_1,i_2 \rangle$ is \emph{compatible} with $I$ if the following hold.
\begin{itemize}
    \item \textbf{(J1)} $S_{I_1}=S_{I_2}=S_I$.
    \item \textbf{(J2)} $ \ext {I_1} \subseteq \ext{I} \cup \int{I_2} $ and $ \ext {I_2} \subseteq \ext{I} \cup \int{I_1}$.
    \item \textbf{(J3)} $\int{I} \subseteq \int{I_1} \cup \int {I_2}$.

    \item \textbf{(J4)} 
    Let $C_1 = \{\langle \vec r, \vec t \rangle \in [\diam(G)]^{|X_i|}\times [\diam(G)]^{|X_i|} \mid \not\exists x \in V(G_{i_1}) \text{ s.t. } \vec{d_{X_{i_1}}}(x) = \vec r\}$ and $C_2 = \{\langle \vec r, \vec t \rangle \in [\diam(G)]^{|X_i|}\times [\diam(G)]^{|X_i|} \mid \not\exists y \in V(G_{i_2}) \text{ s.t. } \vec{d_{X_{i_2}}}(y) = \vec r\}.$
    Further, let 
    $D_1=\{ \langle \vec r, \vec t \rangle \in [\diam(G)]^{|X_i|}\times [\diam(G)]^{|X_i|} \mid \langle \vec r, \vec t \rangle \text{ is 2-compatible}\}$
    and  
$D_2=\{ \langle \vec r, \vec t \rangle \in [\diam(G)]^{|X_i|}\times [\diam(G)]^{|X_i|} \mid \langle \vec r, \vec t \rangle \text{ is 1-compatible}\}.$
    Then, $ \paire{I} \subseteq (C_1 \cup D_1 \cup \paire{I_1}) \cap (C_2 \cup D_2 \cup \paire{I_2})$.

    \item \textbf{(J5)} For all $\vec{r_1}, \vec{r_2} \in [\diam(G)]^{|X_i|}$ for which there exist $x \in V(G_{i_1})$ and $y \in V(G_{i_2})$ such that $\vec{d_{X_{i_1}}}(x) = \vec{r_1}$ and $\vec{d_{X_{i_2}}}(y) = \vec{r_2}$, one of the following holds:
    \begin{itemize}
    \item $\langle \vec{r_1},\vec{r_2} \rangle \in \paire {I_1}$,
    \item $\langle \vec{r_2},\vec{r_1} \rangle \in \paire {I_2}$, or
    \item there exists $\vec{t} \in \ext{I}$ such that $\vec{t}$ resolves the pair $\langle \vec{r_1},\vec{r_2} \rangle$.
    \end{itemize}
    
\end{itemize}
\end{definition}

Let $\mathcal{F}_J(I)$ be the set of pairs of instances compatible with~$I$. We aim to prove the following.

\begin{lemma}
\label{lem:join_node}
Let $I$ be an instance for a join node $i$. Then, 
\[ 
\dim(I) = \min_{(I_1,I_2) \in \mathcal{F}_J(I)} (\dim(I_1) + \dim(I_2) - |S_I|). 
\]
\end{lemma}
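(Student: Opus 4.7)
The plan is to prove the equality by establishing each of the two inequalities separately, using the solution-decomposition philosophy standard for dynamic programming on tree decompositions. Throughout, I will rely on the identity $d(s,z) = \min_\ell (\vec{d_{X_i}}(s) + \vec{d_{X_i}}(z))_\ell$ that holds whenever $X_i$ separates $s$ from $z$ (or one of them lies in $X_i$), which is what allows vectors in $\int{\cdot}$ and $\ext{\cdot}$ to faithfully represent the resolving power of concrete vertices on the opposite side.

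For the inequality $\dim(I) \le \dim(I_1) + \dim(I_2) - |S_I|$, I would fix a compatible pair $(I_1, I_2) \in \mathcal{F}_J(I)$ together with optimal solutions $S_1, S_2$ for $I_1, I_2$ and set $S := S_1 \cup S_2$. Since $V(G_{i_1}) \cap V(G_{i_2}) = X_i$ and (J1) forces $S_1 \cap X_i = S_2 \cap X_i = S_I$, we have $S_1 \cap S_2 = S_I$ and hence $|S| = |S_1| + |S_2| - |S_I|$, so it remains to check that $S$ satisfies (S1)--(S4) for $I$. Conditions (S4) and (S2) are direct consequences of (J1) and (J3). For (S1), I would split on the location of a pair $\langle x, y \rangle$ in $V(G_i)^2$: if both lie in $V(G_{i_j})$, then (S1) for $I_j$ handles the pair, and any reliance on a vector in $\ext{I_j} \setminus \ext{I}$ is absorbed by (J2), since that vector lies in $\int{I_{3-j}}$ and therefore---via (S2) for $I_{3-j}$ and the separator identity---corresponds to a concrete resolving vertex of $S_{3-j}$; the remaining cross-case with $x \in V(G_{i_1}) \setminus X_i$ and $y \in V(G_{i_2}) \setminus X_i$ is precisely what (J5) is designed for. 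Condition (S3) is verified analogously using (J4): for any witness $x \in V(G_i)$, the pair $\langle \vec{r_1}, \vec{r_2} \rangle \in \paire{I}$ lies in $D_j \cup \paire{I_j}$ for the side $j$ containing $x$, where membership in $D_j$ yields a resolving vertex of $S_{3-j}$ through $\int{I_{3-j}}$, and membership in $\paire{I_j}$ is handled directly by (S3) for $I_j$.

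For the reverse inequality, given an optimal solution $S$ for $I$, I would set $S_j := S \cap V(G_{i_j})$ for $j \in \{1,2\}$, so that again $|S| = |S_1| + |S_2| - |S_I|$. The candidate instances are defined by $S_{I_j} := S_I$, $\int{I_j} := \vec{d_{X_i}}(S_j)$, $\ext{I_1} := \ext{I} \cup \int{I_2}$, symmetrically $\ext{I_2}$, and $\paire{I_j}$ is taken to be the collection of all pairs $\langle \vec{r_1}, \vec{r_2} \rangle$ admitting a witness $x \in V(G_{i_j})$ and $y \notin V(G_{i_j})$ with the prescribed distance vectors such that $\langle x, y \rangle$ is resolved by $S_j$. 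By construction each $S_j$ is a solution for $I_j$, yielding $\dim(I_j) \le |S_j|$, and (J1)--(J3) hold immediately. For (J4), given $\langle \vec{r_1}, \vec{r_2} \rangle \in \paire{I} \setminus C_1$, I would pick witnesses $x \in V(G_{i_1})$ and $y \notin V(G_i)$ (both exist by instance-validity) and apply (S3) for $I$ to get a resolving $s \in S$: the case $s \in S_2 \setminus S_I$ places the pair in $D_1$, while $s \in S_1$ places it in $\paire{I_1}$ by our definition (and symmetrically for $\paire{I_2}$). For (J5), a pair with witnesses on both sides is resolved either by a vector of $\ext{I}$ (third option) or by some $s \in S$ that lies in $S_1$ or $S_2$ and then satisfies the corresponding $\paire{I_j}$-membership condition by the same separator identity.

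The main technical obstacle is the combined bookkeeping imposed by (J4) and (J5), which simultaneously enumerate three distinct sources of resolution for each cross-component vector pair---an external vector from $\ext{I}$, a solution vertex on the opposite side encoded through $\int{I_{3-j}}$, or a solution vertex on the same side encoded through $\paire{I_j}$. The delicate boundary cases are those in which a witnessing vertex $x$ or $y$ lies in $X_i$ itself (so that the corresponding distance-vector class is realized uniquely and may span both sides); these are handled uniformly by noting that the separator identity $d(s,z) = \min_\ell (\vec{d_{X_i}}(s) + \vec{d_{X_i}}(z))_\ell$ remains valid whenever $s$ or $z$ is in $X_i$, so the translation between concrete resolutions in $G$ and vector resolutions through $X_i$ stays consistent across all subcases.
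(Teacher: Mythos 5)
Your overall route is the same as the paper's: prove the two inequalities separately, take $S=S_1\cup S_2$ for the upper bound, and for the lower bound project an optimal $S$ onto the two sides and build candidate child instances with $D_{int}(I_j)=\vec{d_{X_i}}(S_j)$, $D_{ext}(I_1)=D_{ext}(I)\cup D_{int}(I_2)$ (and symmetrically). The forward direction as you sketch it matches the paper and is fine.

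The gap is in the reverse direction, in your definition of $D_{pair}(I_j)$ and in the two places where you use it. As written (``all pairs admitting a witness $x\in V(G_{i_j})$, $y\notin V(G_{i_j})$ \ldots{} such that $\langle x,y\rangle$ is resolved by $S_j$''), the definition is existential: one resolved realizing pair suffices for membership. Under that reading, your claim that ``by construction each $S_j$ is a solution for $I_j$'' is false: condition (S3) quantifies over \emph{all} vertices realizing the two vectors, and another realizing pair $\langle x',y'\rangle$ of the same class (say with $y'\in V(G_{i_{3-j}})\setminus X_i$) may be resolved only by vertices of $S_{3-j}$, not by $S_j$. If you instead take the universal definition (add the vector pair only when $S_j$ resolves \emph{every} realizing pair, which is what the paper does), then (S3) holds by construction, but your verifications of (J4) and (J5) no longer follow ``by our definition'': a single $s\in S_1$ resolving a single witness does not certify membership in $D_{pair}(I_1)$. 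You then need the class-level argument: for a pair $\langle x,y\rangle$ with $x\in V(G_{i_1})$ and $y\notin V(G_i)$, any resolver $s\in S_2$ has both of its distances to $x$ and $y$ factoring through $X_i$, so it resolves either all realizing pairs with second coordinate outside $G_i$ or none of them; hence either some such $s$ exists and the whole class lands in $D_1$, or else every realizing pair is resolved by $S_1$ (applying (S3) for $I$ to each of them, together with the observation that a vertex of $S_1$ resolves $\langle u,v\rangle$ with $v\in V(G_{i_2})\setminus X_i$ if and only if it resolves $\langle u,y\rangle$), and the class lands in $D_{pair}(I_1)$. Likewise, (J5) requires a short contradiction argument involving two ``bad'' realizing pairs, one unresolved by $S_1$ and one unresolved by $S_2$, rather than a pointwise appeal to the definition. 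These are exactly the steps the paper supplies; without them (or with the existential definition) the lower-bound direction does not go through as stated.
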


To prove \Cref{lem:join_node}, we prove the following two lemmas.

\begin{lemma}
Let $\langle I_1,I_2 \rangle$ be a pair of instances for $\langle i_1,i_2 \rangle$ compatible with $I$ such that $\dim(I_1)$ and $\dim(I_2)$ have finite values. Let $S_1$ be a minimum-size solution for $I_1$ and $S_2$ a minimum-size solution for $I_2$. Then, $S = S_1 \cup S_2$ is a solution for $I$. In particular, 
\[ 
\dim(I) \leq \min_{(I_1,I_2) \in \mathcal{F}_J(I)} (\dim(I_1) + \dim(I_2) - |S_I|). 
\]
\end{lemma}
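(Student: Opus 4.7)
The plan is to directly verify that $S = S_1 \cup S_2$ satisfies conditions (S1)--(S4) of \Cref{def_instance}, and then bound $|S|$ using $|S_I|$. Throughout, we will use the key facts that $X_i = X_{i_1} = X_{i_2}$ at a join node, that $V(G_{i_1}) \cap V(G_{i_2}) = X_i$, and that $X_i$ separates $V(G_{i_1}) \setminus X_i$ from $V(G) \setminus V(G_{i_1})$ (and symmetrically for $G_{i_2}$), so that the usual separator formula $d(s,x) = \min_j(\vec{d_{X_i}}(s) + \vec{d_{X_i}}(x))_j$ applies.

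First, (S4) and the size bound are immediate: by (J1), $S_1 \cap X_i = S_{I_1} = S_I = S_{I_2} = S_2 \cap X_i$, so $S \cap X_i = S_I$; and since $S_1 \cap S_2 \subseteq V(G_{i_1}) \cap V(G_{i_2}) = X_i$, we get $S_1 \cap S_2 = S_I$, whence $|S| = |S_1| + |S_2| - |S_I| = \dim(I_1) + \dim(I_2) - |S_I|$. For (S2), use (J3): any $\vec{r} \in \int I$ lies in $\int I_1$ or $\int I_2$, and applying (S2) for the respective sub-instance yields the required vertex in $S_1 \cup S_2$.

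The heart of the argument is (S1) and (S3). For (S1), consider a pair $\langle x,y \rangle$ with $x,y \in V(G_i)$. If both lie in $V(G_{i_1})$, then (S1) for $I_1$ gives a resolving vertex in $S_1$ or a resolving vector $\vec r \in \ext{I_1}$; by (J2), $\vec r \in \ext I$ (done) or $\vec r \in \int{I_2}$, in which case (S2) for $I_2$ yields $s \in S_2$ with $\vec{d_{X_i}}(s) = \vec r$, and the separator formula plus the resolution property of $\vec r$ make $s$ resolve $\langle x,y \rangle$. The symmetric case works analogously. The remaining cross case is $x \in V(G_{i_1}) \setminus X_i$, $y \in V(G_{i_2}) \setminus X_i$: here $y \notin V(G_{i_1})$ and $x \notin V(G_{i_2})$, so we can invoke (J5) with $\vec{r_1} = \vec{d_{X_i}}(x)$, $\vec{r_2} = \vec{d_{X_i}}(y)$; the first case feeds into (S3) for $I_1$, the second into (S3) for $I_2$, and the third directly gives an $\ext I$-vector resolving the pair.

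For (S3), take $\langle \vec{r_1},\vec{r_2} \rangle \in \paire I$ with $x \in V(G_i)$, $y \notin V(G_i)$ realizing the vectors. Since $V(G_i) = V(G_{i_1}) \cup V(G_{i_2})$, either $x \in V(G_{i_1})$ or $x \in V(G_{i_2})$. In the first case, $\langle \vec{r_1},\vec{r_2}\rangle \notin C_1$, so by (J4) it lies in $D_1 \cup \paire{I_1}$: if in $\paire{I_1}$, then (S3) for $I_1$ resolves $\langle x,y \rangle$ via $S_1$ (using $y \notin V(G_{i_1})$ since $y \notin V(G_i)$); if in $D_1$, 2-compatibility provides $\vec{u} \in \int{I_2}$ resolving $\langle \vec{r_1},\vec{r_2}\rangle$, and (S2) for $I_2$ combined with the separator formula produces a resolving vertex in $S_2$. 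The case $x \in V(G_{i_2})$ is entirely symmetric. The main conceptual hurdle is bookkeeping the cross pairs in (J4)/(J5) carefully so that every pair arising in $G_i$ is covered in at least one way; once the partition of cases is set up correctly, each verification reduces to a direct application of a single compatibility clause together with the separator distance identity.
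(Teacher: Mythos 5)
Your proposal is correct and follows essentially the same route as the paper's proof: it verifies (S1)–(S4) via the compatibility conditions, handling pairs within $G_{i_1}$ or $G_{i_2}$ through (S1) of the sub-instance plus (J2), cross pairs via (J5) feeding into (S3), and $\paire I$ via (J4) with the $C_1/D_1/\paire{I_1}$ split, exactly as in the paper. Your explicit justification that $S_1\cap S_2=S_I$ (hence $|S|=|S_1|+|S_2|-|S_I|$) is a detail the paper leaves implicit, but it is the same argument.
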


\begin{proof}
Let us show that every condition of \Cref{def_instance} is satisfied.\\

\noindent
{\bf (S1)} Let $\langle x,y \rangle$ be a pair of vertices of $G_i$. Assume first that $x,y \in V(G_{i_1})$. Then, since $S_1$ is a solution for $I_1$, either $S_1$ resolves the pair $\langle x,y \rangle$, in which case we are done; or the pair $\langle x,y \rangle$ is resolved by a vector $\vec{t} \in \ext{I_1}$. In the latter case, by compatibility, either $\vec{t} \in \ext{I}$, in which case the pair $\langle x,y \rangle$ is still resolved by $\vec{t} \in \ext{I}$, or $\vec{t} \in \int{I_2}$; but then, there exists $s \in S_2$ such that $\vec{d_{X_{i_2}}}(s) = \vec{t}$, and so, $s \in S$ resolves the pair $\langle x,y \rangle$. The case where $x,y \in V(G_{i_2})$ is handled symmetrically. 

Assume therefore that $x \in V(G_{i_1})$ and $y \in V(G_{i_2})$. Then, by compatibility, one of the following holds:
\begin{enumerate}
\item $\langle \vec{d_{X_{i_1}}}(x),\vec{d_{X_{i_2}}}(y) \rangle \in \paire{I_1}$,
\item $\langle \vec{d_{X_{i_2}}}(y),\vec{d_{X_{i_1}}}(x) \rangle \in \paire{I_2}$, or
\item there exists $\vec t \in \ext I$ such that $\vec t$ resolves the pair $\langle \vec{d_{X_{i_1}}}(x),\vec{d_{X_{i_2}}}(y) \rangle$.
\end{enumerate}  
Suppose that item 3. does not hold (we are done otherwise). If item 1. holds, then, since $S_1$ is a solution for $I_1$, the pair $\langle x,y \rangle$ is resolved by $S_1$; and we conclude symmetrically if item 2. holds.\\

\noindent
{\bf (S2)} Consider $\vec r \in \int I$. Then, by compatibility, $\vec r \in \int{I_1}$ or $\vec r \in \int{I_2}$. If the former holds, then since $S_1$ is a solution for $I_1$, there exists $s \in S_1$ such that $\vec{d_{X_{i_1}}}(s) = \vec r$; but then, $s \in S$ with $\vec{d_{X_i}}(s) = \vec r$. We conclude similarly if $\vec r \in \int{I_2}$.\\

\noindent
{\bf (S3)} Consider $\langle \vec r, \vec t \rangle \in \paire I$ and let $x \in V(G_i)$, $y \in V(G-G_i)$ be such that $\vec{d_{X_i}}(x) = \vec r$ and $\vec{d_{X_i}}(y) = \vec t$. Assume, without loss of generality, that $x \in V(G_{i_1})$. By compatibility, $\langle \vec r, \vec t \rangle \in (C_1 \cup D_1 \cup \paire{I_1}) \cap (C_2 \cup D_2 \cup \paire{I_2})$; in particular, $\langle \vec r, \vec t \rangle \in C_ 1 \cup D_1 \cup \paire{I_1}$. Note that $\langle \vec r, \vec t \rangle \notin C_1$ since $x \in V(G_{i_1})$ and $\vec{d_{X_i}}(x) = \vec r$. Now, suppose that $\langle \vec r, \vec t \rangle \in D_1$. Then, there exists $\vec u \in \int{I_2}$ such that $\vec u$ resolves the pair $\langle \vec r, \vec t \rangle$; and since $S_2$ is a solution for $I_2$, there exists $s \in S_2$ such that $\vec{d_{X_{i_2}}}(s) = \vec u$, and so, $s \in S$ resolves the pair $\langle x,y \rangle$. Finally, if  $\langle \vec r, \vec t \rangle \in \paire{I_1}$, then $S_1$ resolves $\langle x,y \rangle$ as it is a solution for $I_1$.\\

\noindent
{\bf (S4)} By compatibility, $S_{I_1} = S_{I_2} = S_I$, and thus, $S \cap X_i = S \cap X_{i_1} = S_{I_1} = S_I$.\\

It now follows from the above that $\dim(I) \leq \dim(I_1)+\dim(I_2)-|S_I|$, and since this holds true for any $\langle I_1,I_2 \rangle \in \mathcal{F}_J(I)$, the lemma follows.
\end{proof}

\begin{lemma}
Let $I$ be an instance for a join node $i$. Then, 
\[ 
\dim(I) \geq \min_{(I_1,I_2) \in \mathcal{F}_J(I)} (\dim(I_1) + \dim(I_2) - |S_I|). 
\]
\end{lemma}

\begin{proof}
If $\dim(I) = + \infty$, then the inequality readily holds. Thus, assume that $\dim(I) < + \infty$, and let $S$ be a minimum-size solution for $I$. For $j \in \{1,2\}$, let $S_j = S \cap V(G_{i_j})$. Now, let $I_1$ and $I_2$ be the two instances for $i_1$ and $i_2$, respectively, defined as follows.
\begin{itemize} 
\item $S_{I_1} = S_{I_2} = S_I$.
\item $\int{I_1} = \vec{d_{X_i}}(S_1)$ and $\int{I_2} = \vec{d_{X_i}}(S_2)$.
\item $\ext{I_1} = \ext{I} \cup \int{I_2}$ and $\ext{I_2} = \ext{I} \cup \int{I_1}$. 
\item We construct $\paire{I_1}$ as follows ($\paire{I_2}$ is constructed symmetrically). For every $\langle \vec r, \vec t \rangle \in ([\diam(G)]^{|X_{i_1}|})^2$, let $R_{\langle \vec r, \vec t \rangle} = \{\langle x,y \rangle \in V(G_{i_1}) \times V(G - G_{i_1}) \mid \vec{d_{X_{i_1}}}(x) = \vec r \text{ and } \vec{d_{X_{i_1}}}(y) = \vec t\}$. If, for every pair $\langle x,y \rangle \in R_{\langle \vec r, \vec t \rangle}$, $S_1$ resolves $\langle x,y \rangle$, then we add $\langle \vec r, \vec t \rangle$ to $\paire{I_1}$. 
\end{itemize}
Let us show that $\langle I_1,I_2 \rangle$ is compatible with $I$ and that, for $j \in \{1,2\}$, $S_j$ is a solution for $I_j$.

\begin{claim}
\label{clm:compatible}
The constructed pair of instances $\langle I_1,I_2 \rangle$ for $\langle i_1,i_2 \rangle$ is compatible with $I$.
\end{claim}

\begin{claimproof}
It is clear that conditions {\bf (J1)}, {\bf (J2)}, and {\bf (J3)} of \Cref{def:compatible_pair} hold; let us show that the remaining conditions hold as well.\\ 

\noindent
{\bf (J4)} Consider a pair $\langle \vec r, \vec t \rangle \in \paire{I}$. Let us show that $\langle \vec r, \vec t \rangle \in C_1 \cup D_1 \cup \paire{I_1}$ (showing that $\langle \vec r, \vec t \rangle \in C_2 \cup D_2 \cup \paire{I_2}$ can be done symmetrically). 

If there exists no vertex $x \in V(G_{i_1})$ such that $\vec{d_{X_{i_1}}}(x) = \vec r$, then $\langle \vec r, \vec t \rangle \in C_1$. Otherwise, let $x \in V(G_{i_1})$ be a vertex such that $\vec{d_{X_{i_1}}}(x) = \vec r$, and let $y \in V(G - G_i)$ be a vertex such that $\vec{d_{X_i}}(y) = \vec t$ (note that such a vertex $y$ exists since $\langle \vec r, \vec t \rangle \in \paire{I}$). Then, since $S$ is a solution for $I$, $\langle x,y \rangle$ is resolved by $S$. Now, if there exists a vertex $s \in S \cap V(G_{i_2})$ such that $s$ resolves $\langle x,y \rangle$, then $s$ resolves every pair $\langle u,v \rangle \in V(G_{i_1}) \times V(G-G_i)$ such that $\vec{d_{X_{i_1}}}(u) = \vec r$ and $\vec{d_{X_i}}(v) = \vec t$; but then, $\langle \vec r, \vec t \rangle \in D_1$. Thus, assume that no vertex in $S \cap V(G_{i_2})$ resolves $\langle x,y \rangle$. Then, there exists a vertex $s \in S \cap V(G_{i_1})$ that resolves the pair $\langle x,y \rangle$; and since this holds for every pair $\langle u,v \rangle \in V(G_{i_1}) \times V(G-G_i)$ (and, a fortiori, for every pair $\langle u,v \rangle \in V(G_{i_1}) \times V(G-G_{i_1})$) such that $\vec{d_{X_{i_1}}}(u) = \vec r \text{ and } \vec{d_{X_{i_1}}}(v) = \vec t$, we conclude that $\langle \vec r, \vec t \rangle \in \paire{I_1}$.\\

\noindent
{\bf (J5)} Let $\vec{r_1},\vec{r_2} \in [\diam(G)]^{|X_i|}$ be two vectors for which there exist $x \in V(G_{i_1})$ and $y \in V(G_{i_2})$ such that $\vec{d_{X_{i_1}}}(x) = \vec{r_1}$ and $\vec{d_{X_{i_2}}}(y) = \vec{r_2}$. Then, since $S$ is a solution for $I$, either the pair $\langle x,y \rangle$ is resolved by a vector in $\ext I$, in which case condition (J5) holds; or there exists a vertex in $S$ resolving $\langle x,y \rangle$. Let us show that, in the latter case, $\langle \vec{r_1},\vec{r_2} \rangle \in \paire{I_1}$ or $\langle \vec{r_2},\vec{r_1} \rangle \in \paire{I_2}$. Suppose toward a contradiction that this does not hold. Then, there exist $\langle x_1,y_1 \rangle \in V(G_{i_1}) \times V(G_{i_2})$ and $\langle x_2,y_2 \rangle \in V(G_{i_2}) \times V(G_{i_1})$ such that $\vec{d_{X_{i_1}}}(x_1) = \vec{d_{X_{i_2}}}(y_2) = \vec{r_1}$, $\vec{d_{X_{i_1}}}(y_1) = \vec{d_{X_{i_2}}}(x_2) = \vec{r_2}$, $S_1$ does not resolve $\langle x_1,y_1 \rangle$, and $S_2$ does not resolve $\langle x_2,y_2 \rangle$. Now, since $S$ is a solution for $I$, there exists $s \in S$ such that $s$ resolves the pair $\langle x_1,x_2 \rangle$; but then, either $s \in S_1$, in which case $s$ resolves the pair $\langle x_1,y_1 \rangle$, or $s \in S_2$, in which case $s$ resolves the pair $\langle x_2,y_2 \rangle$, a contradiction in both cases.
\end{claimproof}

\begin{claim}
\label{clm:compatible_sol}
For every $j \in \{1,2\}$, $S_j$ is a solution for $I_j$.
\end{claim}

\begin{claimproof}
We only prove that $S_1$ is a solution for $I_1$ as the other case is symmetric. To this end, let us show that every condition of \Cref{def_instance} is satisfied.\\

\noindent
{\bf (S1)} Consider two vertices $x,y \in V(G_{i_1})$. Since $S$ is a solution for $I$, the pair $\langle x,y \rangle$ is either resolved by a vector in $\ext I$, in which case we are done as $\ext I \subseteq \ext{I_1}$ by construction; or resolved by a vertex $s \in S$. Now, if $s \in V(G_{i_1})$, then $s$ is a vertex of $S_1$ resolving $\langle x,y \rangle$. Otherwise, $s \in V(G_{i_2})$ and by construction of $I_1$, there exists a vector $\vec r \in \ext{I_1}$ such that $\vec{d_{X_i}}(s) = \vec r$, and so, $\vec r$ resolves the pair $\langle x,y \rangle$.\\

\noindent
{\bf (S2)} Readily follows from the fact that $\int{I_1} = \vec{d_{X_i}} (S_1)$.\\

\noindent
{\bf (S3)} By construction, for every $\langle \vec r,\vec t \rangle \in \paire{I_1}$, any $x \in V(G_{i_1})$ such that $\vec{d_{X_{i_1}}}(x)= \vec r$, and any $y \notin V(G_{i_1})$ such that $\vec{d_{X_{i_1}}}(y)= \vec t$, there exists $s \in S_1$ such that $s$ resolves the pair $\langle x,y \rangle$.\\

\noindent
{\bf (S4)} By construction, $S_{I_1} = S_I$, and thus, $S \cap X_{i_1} = S \cap X_i = S_I = S_{I_1}$.
\end{claimproof}

To conclude, since the sets $S_1$ and $S_2$ are solutions for $I_1$ and $I_2$, respectively, we have that $\dim(I_1) \leq |S_1|$ and $\dim(I_2) \leq |S_2|$. Now, $|S| = |S_1| + |S_2| - |S_I|$, and so, $|S| = \dim(I) \geq \dim(I_1) + \dim(I_2) - |S_I| \geq \min_{\langle J_1,J_2 \rangle \in \mathcal{F}_J(I)} (\dim(J_1) + \dim(J_2) - |S_I|)$.
\end{proof}

\noindent
{\bf Introduce node.} Let $I$ be an instance for an introduce node $i$ with child $i_1$, and let $v \in V(G)$ be such that $X_i = \{v\} \cup X_{i_1}$. Further, let $X_i = \{v_1,\ldots,v_k\}$, where $v = v_k$.

\begin{definition}
\label{def:compatible_introduce}
An instance $I_1$ for $i_1$ is compatible with $I$ of \emph{type 1} if the following hold.
\begin{itemize}
\item {\bf (I1)} $S_I = S_{I_1}$.
\item {\bf (I2)} For all $\vec r \in \ext{I_1}$, there exists $\vec t \in \ext I$ such that $\vec{t^-} = \vec r$.
\item {\bf (I3)} For all $\vec{r} \in \int{I}$, $\vec{r}_k = \min_{1 \leq \ell \leq k-1} (\vec r + \vec{d_{X_{i_1}}}(v))_\ell$ and $\vec{r^-}\in \int{I_1}$. 
\item {\bf (I4)} Let 
$P_1 = \{\langle \vec r, \vec t \rangle \in ([\diam(G)]^{|X_i|})^2 \mid \vec{r}_k \geq 1 \text{ and } \langle \vec{r^-},\vec{t^-} \rangle \in \paire{I_1}\}$
and 
$C_1 = \{\langle \vec{d_{X_i}}(v),\vec t \rangle \in ([\diam(G)]^{|X_i|})^2 \mid \exists \vec u \in \int{I_1}, \vec u \text{ resolves } \langle \vec{d_{X_i}}(v)^-,\vec{t^-} \rangle\}.$
Then, $\paire I \subseteq P_1 \cup C_1$.
\item {\bf (I5)} For all $\vec r \in [\diam(G)]^{|X_i|}$ such that 
\begin{itemize}
\item there exists $x \in V(G_{i_1})$ with $\vec{d_{X_i}}(x) = \vec r$, and
\item no vector in $\ext I$ resolves the pair $\langle x,v \rangle$,
\end{itemize}
$\langle \vec{r^-},\vec{d_{X_{i_1}}}(v) \rangle \in \paire{I_1}$.
\end{itemize}

An instance $I_1$ for $i_1$ is compatible with $I$ of \emph{type 2} if the following hold.
\begin{itemize}
\item {\bf (I'1)} $S_I = S_{I_1} \cup \{v\}$.
\item {\bf (I'2)} $\vec{d_{X_{i_1}}}(v) \in \ext{I_1}$ and, for all $\vec r \in \ext{I_1} \setminus \{\vec{d_{X_{i_1}}}(v)\}$, there exists $\vec t \in \ext I$ such that $\vec{t^-} = \vec r$.
\item {\bf (I'3)} For all $\vec{r} \in \int{I} \setminus \{\vec{d_{X_i}}(v)\}$, $\vec{r}_k = \min_{1 \leq \ell \leq k-1} (\vec r + \vec{d_{X_{i_1}}}(v))_\ell$ and $\vec{r^-}\in \int{I_1}$. 
\item {\bf (I'4)} Let 
$P_2 = \{\langle \vec r, \vec t \rangle \in ([\diam(G)]^{|X_i|})^2 \mid \vec{r}_k \geq 1 \text{ and } \langle \vec{r^-},\vec{t^-} \rangle \in \paire{I_1}\}$
and 
$C_2 = \{\langle \vec r,\vec t \rangle \in ([\diam(G)]^{|X_i|})^2 \mid \vec{r}_k \neq \vec{t}_k)\}.$
Then, $\paire I \subseteq P_2 \cup C_2$. 
\end{itemize}
\end{definition}

We denote by $\mathcal{F}_{1}(I)$ the set of instances for $i_1$ compatible with $I$ of type 1, and by $\mathcal{F}_{2}(I)$ the set of instances for $i_1$ compatible with $I$ of type 2. We aim to prove the following.

\begin{lemma}
\label{lem:introduce_node}
Let $I$ be an instance for an introduce node $i$. Then, 
\[
\dim(I)=\min\;\{ \min_{I_1 \in \mathcal{F}_{1}(I)}\;\{\dim(I_1)\},\min_{I_2 \in \mathcal{F}_{2}(I)}\;\{\dim(I_2)+1\}\}.
\]
\end{lemma}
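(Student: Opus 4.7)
The plan is to prove both inequalities separately, following the template established for the join node. The two arguments of the outer minimum correspond to the two ways $S$ can relate to the newly introduced vertex $v$: either $v \notin S$ (handled by type 1 compatibility) or $v \in S$ (handled by type 2 compatibility, which incurs the additive $+1$). The central geometric fact underlying the whole proof is that $v$'s neighborhood in $G_i$ lies in $X_{i_1}$, so $X_{i_1}$ separates $v$ from $V(G_{i_1}) \setminus X_{i_1}$. Consequently, for every $x \in V(G_{i_1})$, $d(x,v) = \min_{\ell}(\vec{d_{X_{i_1}}}(x) + \vec{d_{X_{i_1}}}(v))_\ell$, and for every $x \in V(G_{i_1})$, $\vec{d_{X_i}}(x) = \vec{d_{X_{i_1}}}(x)\mid d(x,v)$. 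This is precisely what is encoded in the arithmetic constraint appearing in (I3) and (I'3).

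For the upper bound $\dim(I) \le \dots$, fix a compatible $I_1 \in \mathcal F_1(I)$ of finite dimension with minimum solution $S_1 \subseteq V(G_{i_1})$, and argue that $S_1$ is a solution for $I$. Condition (S4) follows from (I1). Condition (S2) follows from (I3): for $\vec r \in \int I$, take $s \in S_1$ with $\vec{d_{X_{i_1}}}(s) = \vec{r^-}$, and observe that the last coordinate of $\vec{d_{X_i}}(s)$ equals $\vec{r}_k$ by the separator identity. For (S1), split the pairs of $V(G_i)$ into those lying in $V(G_{i_1})$, handled by $S_1$ being a solution for $I_1$ together with (I2) (vectors in $\ext{I_1}$ lift to vectors in $\ext I$ whose restriction still resolves the pair, because $X_{i_1}$ separates the resolving vertex from both endpoints), and pairs of the form $\langle x,v\rangle$ with $x \in V(G_{i_1})$, handled by (I5). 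Condition (S3) splits according to the two subsets in (I4): entries with $\vec r_k \geq 1$ encode pairs in $V(G_{i_1}) \times V(G - G_i)$ and descend to $\paire{I_1}$, while entries with $\vec r = \vec{d_{X_i}}(v)$ correspond to $x = v$ and are resolved by a vertex of $S_1$ via a vector in $\int{I_1}$. The type~2 case is handled symmetrically: given $I_2 \in \mathcal F_2(I)$ with minimum solution $S_2$, take $S_2 \cup \{v\}$ as the candidate for $I$; (I'1), (I'2), (I'3) guarantee the analogues of (S4), (S2), and the treatment of $v$ as an external resolver for $I_2$, while (I'4) captures pairs that $v$ itself automatically resolves (those with $\vec r_k \neq \vec t_k$).

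For the lower bound, let $S$ be a minimum solution for $I$ and split on $v$. If $v \notin S$, define $I_1$ of type~1 by $S_{I_1} := S_I$, $\int{I_1} := \vec{d_{X_{i_1}}}(S)$, $\ext{I_1} := \{\vec r^- : \vec r \in \ext I\}$, and take $\paire{I_1}$ to consist of all pairs of vectors $\langle \vec\rho, \vec\tau\rangle$ such that every witnessing pair $\langle x,y\rangle \in V(G_{i_1}) \times V(G - G_{i_1})$ with those $X_{i_1}$-distance vectors is resolved by $S$. A routine verification, mirroring the analogues in the join case (Claims~\ref{clm:compatible}--\ref{clm:compatible_sol}), shows that $I_1$ is compatible with $I$ (checking (I2)--(I5) using the fact that $S$ resolves all pairs required by $I$) and that $S$ is a solution for $I_1$, giving $\dim(I_1) \leq |S|$. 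If $v \in S$, construct $I_2$ of type~2 analogously with $S_{I_2} = S_I \setminus \{v\}$, $\int{I_2} = \vec{d_{X_{i_1}}}(S \setminus \{v\})$, and $\ext{I_2}$ additionally containing $\vec{d_{X_{i_1}}}(v)$ to reflect that $v$, now outside $V(G_{i_1})$, acts as an external resolver for $G_{i_1}$; then $S \setminus \{v\}$ is a solution for $I_2$, yielding $\dim(I_2) + 1 \leq |S|$.

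The main obstacle is the bookkeeping around the last coordinate of the distance vectors: every time we pass between $X_i$ and $X_{i_1}$, we must verify that the arithmetic constraint $\vec r_k = \min_\ell(\vec r^- + \vec{d_{X_{i_1}}}(v))_\ell$ holds or is enforced, and that pairs involving the new vertex $v$ are properly accounted for (which is precisely the role of conditions (I5), (I4) with $\vec r = \vec{d_{X_i}}(v)$, and (I'4) with $\vec r_k \neq \vec t_k$). Once this bookkeeping is set up consistently, both directions of the equality follow from a straightforward case analysis on the conditions of Definitions~\ref{def_instance} and~\ref{def:compatible_introduce}.
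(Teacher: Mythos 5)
Your proposal is correct and follows essentially the same route as the paper: the upper bound is obtained by showing that a minimum solution of a type‑1 compatible instance (resp. a type‑2 compatible instance, augmented by $v$) is a solution for $I$, and the lower bound by splitting on whether $v\in S$ and constructing the very same compatible instances $I_1$ (with $\int{I_1}=\vec{d_{X_{i_1}}}(S)$, $\ext{I_1}$ the truncations of $\ext I$ — plus $\vec{d_{X_{i_1}}}(v)$ in the type‑2 case — and $\paire{I_1}$ defined by the pairs fully resolved by the partial solution). The bookkeeping you identify (the last-coordinate identity via the separator $X_{i_1}$, condition (I5) for pairs involving $v$, and the $C_1$/$C_2$ entries of (I4)/(I'4)) is exactly how the paper's verification proceeds.
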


To prove \Cref{lem:introduce_node}, we prove the following lemmas.

\begin{lemma}
\label{lem:upper_1}
Let $I_1$ be an instance for $i_1$ compatible with $I$ of type 1, and let $S$ be a minimum-size solution for $I_1$. Then, $S$ is a solution for $I$.
\end{lemma}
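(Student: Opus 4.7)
The plan is to verify each of the four conditions (S1)--(S4) of Definition~\ref{def_instance} for $S$ viewed as a candidate solution for $I$, starting with the ``bookkeeping'' conditions and ending with the two resolution conditions. Throughout, the crucial structural fact we will rely on is that in a tree decomposition, for any node $j$, the bag $X_j$ separates $V(G_j) \setminus X_j$ from $V(G) \setminus V(G_j)$ in $G$; combined with this, for any $x \in V(G_j)$ and $u \notin V(G_j)$, we have $d(x,u) = \min_{v_\ell \in X_j}(d(x,v_\ell) + d(v_\ell,u))$ (the case $x \in X_j$ following from the triangle inequality).

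First I would settle (S4) directly: by (I1), $S_I = S_{I_1}$, and since $S \subseteq V(G_{i_1})$ while $v \notin V(G_{i_1})$, we get $S \cap X_i = S \cap X_{i_1} = S_{I_1} = S_I$. Next for (S2), I would take any $\vec{r} \in \int{I}$; by (I3) we have $\vec{r^-} \in \int{I_1}$, so since $S$ is a solution for $I_1$, there exists $s \in S$ with $\vec{d_{X_{i_1}}}(s) = \vec{r^-}$. It then remains to show $d(s,v) = \vec{r}_k$, which is exactly what (I3) forces via the separator identity $d(s,v) = \min_\ell(d(s,v_\ell) + d(v_\ell,v))$ applied with $X_{i_1}$ separating $s \in V(G_{i_1})$ from $v \notin V(G_{i_1})$.

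For (S1), I would perform a case analysis on a pair $\langle x,y\rangle \subseteq V(G_i)$. If both lie in $V(G_{i_1})$, then $S$ solves $I_1$, so either $S$ directly resolves the pair (which carries over since distances are measured in $G$) or some $\vec{r} \in \ext{I_1}$ resolves it, and by (I2) we lift $\vec{r}$ to some $\vec{t} \in \ext{I}$ with $\vec{t^-} = \vec{r}$. The remaining case is $y = v$, where the assumption that no vector in $\ext{I}$ resolves $\langle x,v\rangle$ triggers (I5), forcing $\langle \vec{d_{X_i}}(x)^-, \vec{d_{X_{i_1}}}(v)\rangle \in \paire{I_1}$, and then (S3) for $I_1$ (noting $v \notin V(G_{i_1})$) gives that $S$ resolves $\langle x,v\rangle$. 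Finally, (S3) splits into the two cases from (I4): membership in $P_1$ reduces directly to (S3) for $I_1$ after noting $\vec{r}_k \geq 1$ forces $x \neq v$, hence $x \in V(G_{i_1})$; membership in $C_1$ forces $x = v$ (because $(\vec{d_{X_i}}(v))_k = 0$), after which the resolving vector $\vec{u} \in \int{I_1}$ is realized by some $s \in S$ via (S2) for $I_1$, and $s$ then resolves $\langle v, u\rangle$ in $G$.

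I anticipate that the main technical obstacle is the step inside Case~1 of (S1), namely checking that a vector $\vec{t} \in \ext{I}$ with $\vec{t^-} = \vec{r}$ actually resolves $\langle \vec{d_{X_i}}(x),\vec{d_{X_i}}(y)\rangle$ whenever $\vec{r}$ resolves $\langle \vec{d_{X_{i_1}}}(x),\vec{d_{X_{i_1}}}(y)\rangle$. Writing $a = \min_{\ell \leq k-1}(\vec{d_{X_{i_1}}}(x) + \vec{r})_\ell$, $a' = d(x,v) + \vec{t}_k$, and similarly $b, b'$ for $y$, one needs $\min(a,a') \neq \min(b,b')$. The argument is that $a$ and $b$ already equal the true distances in $G$ from $x$ and $y$ to the vertex realizing $\vec{t}$ (by the separator identity applied to $X_{i_1}$), so the new coordinate $a', b'$ can only tie these minima rather than strictly improve them, giving $\min(a,a') = a$ and $\min(b,b') = b$, hence the inequality is preserved. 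Once this distance-vector identity is handled carefully, every other verification is routine.
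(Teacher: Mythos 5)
Your proof is correct and follows essentially the same route as the paper's: verifying (S1)--(S4) directly from the compatibility conditions (I1)--(I5), using that $X_{i_1}$ separates $V(G_{i_1})$ from $v$ and from $V(G)\setminus V(G_i)$; your extra justification that a vector of $\ext I$ extending one of $\ext{I_1}$ still resolves the pair (the added coordinate $d(x,v)+\vec{t}_k$ can only exceed the true distance realized by the witness vertex) is a valid filling-in of a step the paper only asserts. The only blemish is notational: in the $C_1$ case of (S3) the resolved pair should be written $\langle v,y\rangle$ rather than $\langle v,u\rangle$, since $\vec u$ is a vector, not a vertex.
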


\begin{proof}
Let us prove that every condition of \Cref{def_instance} is satisfied.\\

\noindent
{\bf (S1)} Let $\langle x,y \rangle$ be a pair of vertices of $G_i$. Assume first that $x \neq v$ and $y \neq v$, and suppose that $S$ does not resolve $\langle x,y \rangle$ (we are done otherwise). Then, since $S$ is a solution for $I_1$ and $I_1$ is compatible with $I$, there exists $\vec r \in \ext I$ such that $\vec{r^-} \in \ext{I_1}$ and $\vec{r^-}$ resolves the pair $\langle x,y \rangle$; but then, $\vec r$ resolves $\langle x,y \rangle$ as well. Assume next that $x = v$ and suppose that no vector in $\ext I$ resolves the pair $\langle x,y \rangle$ (we are done otherwise). Then, since $I_1$ is compatible with $I$ of type 1, $\langle \vec{d_{X_{i_1}}}(y),\vec{d_{X_{i_1}}}(x) \rangle \in \paire{I_1}$, and so, $S$ resolves the pair $\langle x,y \rangle$ as it is a solution for $I_1$.\\

\noindent
{\bf (S2)} Consider $\vec r \in \int I$. Since $I_1$ is compatible with $I$, there exists $\vec t \in \int{I_1}$ such that $\vec r = \vec{t |} \min_{1 \leq \ell \leq k-1} (\vec t + \vec{d_{X_{i_1}}}(v))_\ell$. Now, since $S$ is a solution for $I_1$, there exists $s \in S$ such that $\vec{d_{X_{i_1}}}(s) = \vec t$; but then, $\vec{d_{X_i}}(s) = \vec r$ as $d(s,v) =  \min_{1 \leq \ell \leq k-1} (\vec t + \vec{d_{X_{i_1}}}(v))_\ell$ (indeed, $X_{i_1}$ separates $v$ from $s$).\\

\noindent
{\bf (S3)} Consider $\langle \vec r, \vec t \rangle \in \paire I$. Let $x \in V(G_i)$ be such that $\vec{d_{X_i}}(x) = \vec r$, and let $y \notin V(G_i)$ be such that $\vec{d_{X_i}}(y) = \vec t$. Then, since $I_1$ is compatible with $I$ of type 1, $\langle \vec r, \vec t \rangle \in P_1$ or $\langle \vec r, \vec t \rangle \in C_1$. Now, if the former holds, then, $x \neq v$ and $\langle \vec{r^-},\vec{t^-} \rangle \in \paire{I_1}$, and so, $S$ resolves the pair $\langle x,y \rangle$ as it is a solution for $I_1$. Suppose therefore that the latter holds. Then, $x = v$ and there exists $\vec u \in \int{I_1}$ such that $\vec u$ resolves the pair $\langle x,y \rangle$; but $S$ is a solution for $I_1$, and thus, there exists $s \in S$ such that $\vec{d_{X_{i_1}}}(s) = \vec u$.\\

\noindent
{\bf (S4)} By compatibility of type 1, $S_I = S_{I_1}$, and thus, $S \cap X_i = S \cap (X_{i_1} \cup \{v\}) = S_{I_1} = S_I$.
\end{proof}

\begin{lemma}
\label{lem:upper_2}
Let $I_1$ be an instance for $i_1$ compatible with $I$ of type 2, and let $S$ be a minimum-size solution for $I_1$. Then, $S \cup \{v\}$ is a solution for $I$.
\end{lemma}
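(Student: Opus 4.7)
\medskip

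The plan is to verify that $S' := S \cup \{v\}$ meets each of the four conditions (S1)--(S4) of Definition~\ref{def_instance} for the instance $I$, mirroring the structure of the proof of Lemma~\ref{lem:upper_1} but exploiting the fact that $v$ is now inside the solution (condition (I'1)). The main point is that the new vertex $v$ will automatically resolve many pairs for free (any pair involving $v$, plus any pair whose distances to $v$ differ), which is exactly what the weaker hypotheses of type~2 compatibility (in (I'4) we have $C_2$ rather than a condition involving $\int{I_1}$, and in (I'2) we are allowed to drop the single vector $\vec{d_{X_{i_1}}}(v)$) are designed to exploit.

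First I will dispose of (S4) and (S2). For (S4), since $v \notin X_{i_1}$ and $S \cap X_{i_1} = S_{I_1}$ because $S$ is a solution for $I_1$, we get $S' \cap X_i = S_{I_1} \cup \{v\}$, which equals $S_I$ by (I'1). For (S2), if $\vec r = \vec{d_{X_i}}(v)$, then $v \in S'$ witnesses it; otherwise (I'3) gives $\vec{r^-} \in \int{I_1}$ together with the equality $\vec{r}_k = \min_{1 \leq \ell \leq k-1}(\vec{r^-} + \vec{d_{X_{i_1}}}(v))_\ell$, so a witness $s \in S$ with $\vec{d_{X_{i_1}}}(s) = \vec{r^-}$, provided by $S$ being a solution for $I_1$, extends to $\vec{d_{X_i}}(s) = \vec r$ because $X_{i_1}$ separates $v$ from $s$.

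For (S1), take any pair $\langle x,y \rangle$ in $V(G_i)$. If one of them is $v$, then $v \in S'$ already resolves the pair because $d(v,v)=0$ while the other distance is positive. Otherwise both lie in $V(G_{i_1})$; since $S$ is a solution for $I_1$, either some $s \in S$ resolves the pair (done), or some $\vec u \in \ext{I_1}$ does. In the latter case, if $\vec u = \vec{d_{X_{i_1}}}(v)$ then $v \in S'$ itself resolves the pair (because $X_{i_1}$ separates $v$ from $x$ and from $y$, so $d(v,x) = \min_\ell (\vec{d_{X_{i_1}}}(v) + \vec{d_{X_{i_1}}}(x))_\ell$ and similarly for $y$, and these mins differ by assumption on $\vec u$); otherwise (I'2) provides $\vec t \in \ext I$ with $\vec{t^-} = \vec u$, and $\vec t$ resolves $\langle x,y\rangle$.

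Finally, for (S3), let $\langle \vec r,\vec t\rangle \in \paire I$ with witnesses $x \in V(G_i)$ and $y \notin V(G_i)$. By (I'4), $\langle \vec r,\vec t\rangle \in P_2 \cup C_2$. If it lies in $C_2$, then $\vec{r}_k \neq \vec{t}_k$, i.e., $d(v,x) \neq d(v,y)$, so $v \in S'$ resolves $\langle x,y\rangle$. If it lies in $P_2$, then $\vec{r}_k \geq 1$ forces $x \neq v$, so $x \in V(G_{i_1})$, and $\langle \vec{r^-}, \vec{t^-}\rangle \in \paire{I_1}$; since $y \notin V(G_i) \supseteq V(G_{i_1})$, condition (S3) for the solution $S$ of $I_1$ supplies a resolver in $S \subseteq S'$. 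No step requires a genuinely new idea beyond careful bookkeeping; the only place where one must pause is (S1) in the subcase $\vec u = \vec{d_{X_{i_1}}}(v)$, where one must translate ``$\vec u$ resolves the pair in the $X_{i_1}$-indexed sense'' into ``$v$ itself resolves the pair in $G$,'' and this is where the separation property of $X_{i_1}$ is invoked.
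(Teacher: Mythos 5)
Your proof is correct and follows essentially the same route as the paper's: verifying (S1)--(S4) of Definition~\ref{def_instance} using the type-2 compatibility conditions, with $v$ handling pairs involving itself, the $C_2$ pairs, and the $\vec{d_{X_i}}(v)$ entries of $\int I$. In fact, in (S1) you explicitly treat the subcase where the resolving vector of $\ext{I_1}$ is $\vec{d_{X_{i_1}}}(v)$ (which (I'2) exempts from having a lift in $\ext I$) and correctly observe that $v \in S'$ then resolves the pair via the separator $X_{i_1}$ — a detail the paper's write-up passes over silently.
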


\begin{proof}
Let us prove that the conditions of \Cref{def_instance} are satisfied. In the following, we let $S' = S \cup \{v\}$.\\

\noindent
{\bf (S1)} Let $\langle x,y \rangle$ be a pair of vertices of $G_i$ such that $x \neq v$ and $y \neq v$ (it is otherwise clear that the pair is resolved by $v \in S'$). Suppose that $S$ does not resolve the pair $\langle x,y \rangle$ (we are done otherwise). Then, since $S$ is a solution for $I_1$, $\langle x,y \rangle$ is resolved by a vector $\vec r \in \ext{I_1}$. Now, $I_1$ is compatible with $I$, and thus, there exists $\vec t \in \ext I$ such that $\vec{t^-} = \vec r$; but then, $\vec t$ resolves the pair $\langle x,y \rangle$ as well.\\

\noindent
{\bf (S2)} Consider $\vec r \in \int I$ and suppose that $\vec r \neq \vec{d_{X_i}}(v)$ (otherwise $v \in S'$ has $\vec r$ as its distance vector to $X_i$). Then, since $I_1$ is compatible with $I$, there exists $\vec t \in \int{I_1}$ such that $\vec r = \vec{t |} \min_{1 \leq \ell \leq k-1} (\vec t + \vec{d_{X_{i_1}}}(v))_\ell$. Now, $S$ is a solution for $I_1$, and thus, there exists $s \in S$ such that $\vec{d_{X_{i_1}}}(s) = \vec t$; but then, $\vec{d_{X_i}}(s) = \vec r$ as $d(s,v) =  \min_{1 \leq \ell \leq k-1} (\vec t + \vec{d_{X_{i_1}}}(v))_\ell$ (indeed, $X_{i_1}$ separates $v$ from $s$).\\

\noindent
{\bf (S3)} Consider $\langle \vec{r},\vec{t} \rangle \in \paire I$. Let $x \in V(G_i)$ be such that $\vec{d_{X_i}}(x) = \vec r$, and let $y \notin V(G_i)$ be such that $\vec{d_{X_i}}(y) = \vec t$. Suppose that $v$ does not resolve the pair $\langle x,y \rangle$ (we are done otherwise). Then, $\vec{r}_k = d(x,v) = d(y,v) = \vec{t}_k \geq 1$, which implies that $\langle \vec{r^-},\vec{t^-} \rangle \in \paire{I_1}$ as $I_1$ is compatible with $I$ of type 2. But $S$ is a solution for $I_1$, and so, $S$ (and, a fortiori, $S'$) resolves the pair $\langle x,y \rangle$.\\ 

\noindent
{\bf (S4)} By compatibility of type 2, $S_I = S_{I_1} \cup \{v\}$, and thus, $S \cap X_i = S \cap (X_{i_1} \cup \{v\}) = S_{I_1} \cup \{v\} = S_I$.
\end{proof} 

As a consequence of Lemmas \ref{lem:upper_1} and \ref{lem:upper_2}, the following holds.

\begin{lemma}
Let $I$ be an instance for an introduce node $i$. Then, 
\[
\dim(I) \leq \min\;\{ \min_{I_1 \in \mathcal{F}_{1}(I)}\;\{\dim(I_1)\},\min_{I_2 \in \mathcal{F}_{2}(I)}\;\{\dim(I_2)+1\}\}.
\]
\end{lemma}

\begin{lemma}
\label{lem:lower_1}
Let $S$ be a minimum-size solution for $I$ such that $v \notin S$. Then, there exists $I_1 \in \mathcal{F}_{1}(I)$ such that $S$ is a solution for $I_1$.
\end{lemma}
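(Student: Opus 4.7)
The plan is to construct $I_1$ by naturally projecting $I$ onto the child bag $X_{i_1}$, and then verify both the compatibility conditions (I1)--(I5) and the solution conditions (S1)--(S4). Specifically, I would set $S_{I_1}:=S_I$ (valid since $v\notin S$, so $S\cap X_i=S\cap X_{i_1}$), $\int{I_1}:=\vec{d_{X_{i_1}}}(S)$ (the distance vectors of the solution vertices to the smaller bag), and $\ext{I_1}:=\{\vec{r^-}:\vec{r}\in\ext{I}\}$ (drop the coordinate corresponding to $v$ from each external vector). For $\paire{I_1}$, I would take the projections $\langle\vec{r^-},\vec{t^-}\rangle$ of each $\langle\vec{r},\vec{t}\rangle\in\paire{I}$ with $\vec{r}_k\ge 1$ (the requirement imposed by (I4), case A) together with the pairs $\langle\vec{r^-},\vec{d_{X_{i_1}}}(v)\rangle$ mandated by (I5).

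I would then verify $I_1\in\mathcal{F}_1(I)$. Conditions (I1), (I2), and (I5) hold by construction. For (I3), each $\vec{r}\in\int{I}$ arises from some $s\in S$, and since $v\notin S$, we have $s\in V(G_{i_1})$, so $\vec{r^-}=\vec{d_{X_{i_1}}}(s)\in\int{I_1}$; moreover, $X_{i_1}$ separates $s$ from $v$, yielding $\vec{r}_k=d(s,v)=\min_\ell(\vec{r}+\vec{d_{X_{i_1}}}(v))_\ell$. For (I4), given $\langle\vec{r},\vec{t}\rangle\in\paire{I}$, split on whether $\vec{r}_k\ge 1$ (then the pair lies in $P_1$ by construction) or $\vec{r}_k=0$ (then any realizing $x$ must be $v$ itself; since $\langle\vec{r},\vec{t}\rangle\in\paire{I}$, there is a realization $\langle v,y\rangle$ resolved by some $s\in S\subseteq V(G_{i_1})$, and $\vec{d_{X_{i_1}}}(s)\in\int{I_1}$ then resolves $\langle\vec{r^-},\vec{t^-}\rangle$, putting the pair in $C_1$).

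Finally, I would check that $S$ is a solution for $I_1$. Conditions (S2) and (S4) are immediate from the definitions. For (S1), every pair of vertices of $G_{i_1}$ is resolved by $S$ or by some $\vec{u}\in\ext{I}$ (from the fact that $S$ solves $I$); in the latter case, $\vec{u^-}\in\ext{I_1}$ still resolves the pair because $X_{i_1}$ separates the external realizer of $\vec{u}$ from internal vertices, so the minimum in the distance formula is achieved on the first $k-1$ coordinates.

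The main obstacle will be condition (S3). For a projected pair $\langle\vec{r^-},\vec{t^-}\rangle\in\paire{I_1}$ originating from $\langle\vec{r},\vec{t}\rangle\in\paire{I}$, I must show that \emph{every} realization $\langle x',y'\rangle\in V(G_{i_1})\times(V(G)\setminus V(G_{i_1}))$ is resolved by $S$; the subtlety is that the $X_{i_1}$-class of an external vertex is \emph{coarser} than its $X_i$-class, since $d(y',v)$ is not determined by $\vec{d_{X_{i_1}}}(y')$ alone when $y'\notin V(G_i)$. The key structural observation is that $x'\in V(G_{i_1})$ automatically satisfies $\vec{d_{X_i}}(x')=\vec{r}$ (because $X_{i_1}$ separates $x'$ from $v$, forcing $d(x',v)=\vec{r}_k$). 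I would then do a three-way case analysis on $y'$: (a) if $y'\notin V(G_i)$ with $d(y',v)=\vec{t}_k$, then $\langle x',y'\rangle$ is a direct realization of $\langle\vec{r},\vec{t}\rangle\in\paire{I}$ and $S$ resolves by (S3) of $I$; (b) if $y'=v$ (possible only when $\vec{t^-}=\vec{d_{X_{i_1}}}(v)$), then $\langle x',v\rangle\in V(G_i)^2$ is covered by (S1) of $I$, and I would argue $S$ must resolve it here (otherwise one refines the choice of $\paire{I_1}$ to exclude this pair, leveraging (I5) which precisely accounts for such cases); and (c) the remaining case $d(y',v)\neq\vec{t}_k$ needs to be shown either not to arise or to be resolved by invoking alternative realizations. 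For the (I5)-contributed pairs, (S3) follows from the hypothesis that $\ext{I}$ does not resolve $\langle x,v\rangle$, so (S1) of $I$ forces $S$ to resolve it; the same reasoning extends to any realization $\langle x',y'\rangle$ since $\vec{d_{X_i}}(x')=\vec{r}$ and the $\ext{I}$-resolution depends only on distance vectors.
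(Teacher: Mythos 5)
Your construction and most of the verification are sound, but the step you yourself flag under (S3) is a genuine gap, not a deferrable detail, and the route you sketch for closing it does not work. With your syntactic choice of $D_{pair}(I_1)$ (projections of the pairs of $D_{pair}(I)$ with last coordinate at least $1$, plus the (I5)-mandated pairs), condition (S3) demands that $S$ resolve \emph{every} realization $\langle x',y'\rangle$ with $x' \in V(G_{i_1})$ and $y' \in V(G) \setminus V(G_{i_1})$ of a projected pair; as you observe, the $X_{i_1}$-class of $y'$ does not determine $d(y',v)$ (nor does it exclude $y'=v$), so (S3) of $I$ applied to the original pair $\langle \vec{r},\vec{t} \rangle$ says nothing about realizations $y'$ with $d(y',v) \neq \vec{t}_k$ — your case (c) — and nothing about $y'=v$ when only a vector of $D_{ext}(I)$ resolved the corresponding pair inside $G_i$ — your case (b). Your fallback of "refining $D_{pair}(I_1)$ to exclude this pair" is not available, because (I4) forces the projection of every pair of $D_{pair}(I)$ with last coordinate at least $1$ that does not land in $C_1$ to lie in $D_{pair}(I_1)$; so you cannot simply drop the problematic pairs without losing compatibility. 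The missing idea, which uses $v \notin S$ crucially, is a class-invariance observation: every $s \in S$ lies in $V(G_{i_1})$ and is separated by $X_{i_1}$ from all of $V(G) \setminus V(G_{i_1})$, hence $d(s,y')$ depends only on the distance vector of $y'$ to $X_{i_1}$. Consequently, if some $s \in S$ resolves $\langle x',y \rangle$ for the canonical realization $y \notin V(G_i)$ of $\vec{t}$ guaranteed by the definition of an instance (and (S3) of $I$ does provide such an $s$, since your computation that the $X_i$-vector of $x'$ equals $\vec{r}$ is correct), then the same $s$ resolves $\langle x',y' \rangle$ for every $y'$ outside $G_{i_1}$ in the same $X_{i_1}$-class, including $y'=v$ and vertices with $d(y',v) \neq \vec{t}_k$. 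The same observation is also needed to finish your treatment of the (I5)-contributed pairs, where your appeal to (S1) of $I$ only covers the realization $y'=v$. Without this bridging step the proof does not go through as written.

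For comparison, the paper avoids the issue by defining $D_{pair}(I_1)$ semantically rather than syntactically: a pair of vectors is placed in $D_{pair}(I_1)$ if and only if $S$ resolves all of its realizations with respect to $X_{i_1}$. This makes (S3) hold by construction and shifts the entire burden onto the compatibility conditions (I4) and (I5), where the invariance above is exactly what justifies passing from "$S$ resolves the realizations of the $X_i$-classes" to "$S$ resolves the realizations of the coarser $X_{i_1}$-classes". Either architecture can be made to work, but yours needs the invariance argument spelled out (and the "exclude the pair" suggestion removed), whereas the remaining items (I1)–(I4), (S1), (S2), (S4) in your write-up are essentially the paper's arguments.
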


\begin{proof}
Let $I_1$ be the instance for $i_1$ defined as follows. 
\begin{itemize}
\item $S_{I_1} = S_I$, $\ext{I_1} = \{\vec{r^-} \mid \vec r \in \ext I\}$, and $\int{I_1} = \vec{d_{X_{i_1}}}(S)$.
\item For any $\langle \vec r, \vec t \rangle \in ([\diam(G)]^{|X_{i_1}|})^2$, let $R_{\langle \vec r,\vec t \rangle} = \{\langle x,y \rangle \in V(G_{i_1}) \times V(G-G_{i_1}) \mid \vec{d_{X_{i_1}}}(x) = \vec r \text{ and } \vec{d_{X_{i_1}}}(y) = \vec t\}$. If $S$ resolves every pair in $R_{\langle \vec r,\vec t \rangle}$, then we add $\langle \vec r,\vec t \rangle$ to $\paire{I_1}$.
\end{itemize}
Let us prove that $I_1 \in \mathcal{F}_{1}(I)$ and that $S$ is a solution for $I_1$.

\begin{claim}
The constructed instance $I_1$ is compatible with $I$ of type 1.
\end{claim} 

\begin{claimproof}
It is clear that conditions {\bf (I1)} and {\bf (I2)} of \Cref{def:compatible_introduce} hold; let us show that the remaining conditions hold as well.\\ 

\noindent
{\bf (I3)} Since $S$ is a solution for $I$, for every $\vec r \in \int I$, there exists $s \in S$ such that $\vec{d_{X_i}}(s) = \vec r$; but then, $\vec{r^-} = \vec{d_{X_{i_1}}}(s)$ and $\vec{r}_k = d(s,v) = \min_{1 \leq \ell \leq k-1} (\vec r + \vec{d_{X_{i_1}}}(v))_\ell$ as $v \notin S$ and $X_{i_1}$ separates $s$ from $v$. \\

\noindent
{\bf (I4)} Consider $\langle \vec r,\vec t \rangle \in \paire I$ and assume first that $\vec r \neq \vec{d_{X_i}}(v)$. Then, for any $x \in V(G_i)$ such that $\vec{d_{X_i}}(x) = \vec r$ and any $y \notin V(G_i)$ such that $\vec{d_{X_i}}(y) = \vec t$, in fact $x \in V(G_{i_1})$ and $S$ resolves the pair $\langle x,y \rangle$ as it is a solution for $I$; and since $\vec{d_{X_{i_1}}}(x) = \vec{r^-}$ and $\vec{d_{X_{i_1}}}(y) = \vec{t^-}$, it follows by construction that $\langle \vec{r^-},\vec{t^-} \rangle \in \paire{I_1}$, and thus, $\langle \vec r,\vec t \rangle \in P_1$. Second, assume that $\vec r = \vec{d_{X_i}}(v)$ (note that $v$ is then the only vertex in $G_i$ with distance vector $\vec r$ to $X_i$). Let $y \notin V(G_i)$ be such that $\vec{d_{X_i}}(y) = \vec t$. Then, since $S$ is a solution for $I$, there exists $s \in S$ such that $s$ resolves the pair $\langle x,y \rangle$, which implies that $\vec{d_{X_{i_1}}}(s)$ resolves $\langle \vec{d_{X_i}}(v)^-,\vec{t^-} \rangle$, and thus, $\langle \vec r,\vec t \rangle \in C_1$.\\ 

\noindent
{\bf (I5)} Consider $\vec r \in [\diam(G)]^{|X_i|}$ for which there exists $x \in V(G_{i_1})$ such that $\vec{d_{X_i}}(x) = \vec r$, and assume that no vector in $\ext I$ resolves the pair $\langle x,v \rangle$. Then, $S$ must resolve the pair $\langle x,v \rangle$ as it is a solution for $I$; and since this holds for any vertex with distance vector $\vec r$ with respect to $X_i$, it follows by construction that $\langle \vec{r^-},\vec{d_{X_{i_1}}}(v) \rangle \in \paire{I_1}$. 
\end{claimproof}

\begin{claim}
$S$ is a solution for $I_1$.
\end{claim}

\begin{claimproof}
Let us prove that the conditions of \Cref{def_instance} are satisfied.\\

\noindent
{\bf (S1)} Let $\langle x,y \rangle$ be a pair of vertices of $G_{i_1}$ and suppose that $S$ does not resolve the pair $\langle x,y \rangle$ (we are done otherwise). Then, since $S$ is a solution for $I$, there exists $\vec r \in \ext I$ such that $\vec r$ resolves $\langle x,y \rangle$; but then, $\vec{r^-} \in \ext{I_1}$ resolves $\langle x,y \rangle$.\\

\noindent
{\bf (S2)} Readily follows from the fact that $\int{I_1}= \vec{d_{X_{i_1}}}(S)$.\\

\noindent
{\bf (S3)} By construction, for any $\langle \vec r, \vec t \rangle \in \paire{I_1}$, any $x \in V(G_{i_1})$ such that $\vec{d_{X_{i_1}}}(x) = \vec r$, and any $y \notin V(G_{i_1})$ such that $\vec{d_{X_{i_1}}}(y) = \vec t$, $S$ resolves the pair $\langle x,y \rangle$.\\

\noindent
{\bf (S4)} By construction, $S_{I_1}= S_I$, and thus, $S \cap X_{i_1} = S \cap (X_i \setminus \{v\}) = S_I = S_{I_1}$ as $v \notin S_I$ by assumption.
\end{claimproof}

The lemma now follows from the above two claims.
\end{proof}

\begin{lemma}
\label{lem:lower_2}
Let $S$ be a minimum-size solution for $I$ such that $v \in S$. Then, there exists $I_1\in \mathcal{F}_2(I)$ such that $S \setminus \{v\}$ is a solution of $I_1$.
\end{lemma}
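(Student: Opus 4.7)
The plan is to adapt the construction from the proof of Lemma~\ref{lem:lower_1} to account for the fact that $v\in S$. Specifically, define the instance $I_1$ for $i_1$ by setting $S_{I_1}=S_I\setminus\{v\}$, $\int{I_1}=\vec{d_{X_{i_1}}}(S\setminus\{v\})$, $\ext{I_1}=\{\vec{d_{X_{i_1}}}(v)\}\cup\{\vec{r^-}\mid \vec r\in\ext{I}\}$, and including in $\paire{I_1}$ exactly those pairs $\langle \vec{r'},\vec{t'}\rangle$ for which $S\setminus\{v\}$ resolves every $\langle x',y'\rangle\in V(G_{i_1})\times V(G-G_{i_1})$ with $\vec{d_{X_{i_1}}}(x')=\vec{r'}$ and $\vec{d_{X_{i_1}}}(y')=\vec{t'}$. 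Intuitively, dropping $v$ from the candidate solution is compensated for by declaring $\vec{d_{X_{i_1}}}(v)$ to be an external vector for $I_1$, so that the resolving power of $v$ on pairs internal to $G_{i_1}$ is retained through $\ext{I_1}$.

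Checking that $I_1\in\mathcal{F}_2(I)$ amounts to verifying (I'1)--(I'4). Conditions (I'1) and (I'2) follow directly from the construction. For (I'3), note that any $\vec r\in\int{I}$ must be witnessed by some $s\in S$: if $\vec r=\vec{d_{X_i}}(v)$ then $\vec r_k=0$ forces $s=v$, which is precisely the exception in (I'3); otherwise $s\in S\setminus\{v\}\subseteq V(G_{i_1})$, and since $X_{i_1}$ separates $s$ from $v$ (because $v$'s neighbors in $G_i$ all lie in $X_{i_1}$), one obtains $\vec r_k=d(s,v)=\min_{\ell<k}(\vec r+\vec{d_{X_{i_1}}}(v))_\ell$ and $\vec{r^-}\in \int{I_1}$. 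The verification that $S\setminus\{v\}$ satisfies (S1)--(S4) then mirrors the template of Lemma~\ref{lem:lower_1}: for (S1), a pair inside $G_{i_1}$ that in $G_i$ was resolved only by $v$ is now resolved by $\vec{d_{X_{i_1}}}(v)\in\ext{I_1}$, while other pairs are handled as before; (S3) holds by the very definition of $\paire{I_1}$; (S2) and (S4) are immediate.

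The main obstacle is condition (I'4): given $\langle \vec r,\vec t\rangle\in\paire{I}$, show that it lies in $P_2\cup C_2$. If $\vec r_k\neq \vec t_k$ we land directly in $C_2$, so assume $\vec r_k=\vec t_k$. Then $v$ cannot resolve the corresponding vertex pair $\langle x,y\rangle$, which forces $x\neq v$ and hence $\vec r_k\geq 1$; it remains to show $\langle \vec{r^-},\vec{t^-}\rangle\in \paire{I_1}$, i.e., that $S\setminus\{v\}$ resolves every $\langle x',y'\rangle \in V(G_{i_1})\times V(G-G_{i_1})$ with restricted distance vectors $\vec{r^-}$ and $\vec{t^-}$. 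When $y'\in V(G-G_i)$, the separator property of $X_{i_1}$ forces $\vec{d_{X_i}}(x')=\vec r$ and $\vec{d_{X_i}}(y')=\vec t$, so $\langle x',y'\rangle$ is already covered by $\langle \vec r,\vec t\rangle\in \paire{I}$ and is resolved by $S$; since $\vec r_k=\vec t_k$ rules out $v$ as the resolver, some $s\in S\setminus\{v\}$ does the job. The delicate sub-case is $y'=v$, which arises only when $\vec{t^-}=\vec{d_{X_{i_1}}}(v)$; the key observation here is that $v$ and the original witness $y\in V(G-G_i)$ are then ``twin-like'' across the separator, namely $d(s,v)=d(s,y)$ for every $s\in V(G_{i_1})$, so any $s\in S\setminus\{v\}$ resolving $\langle x',y\rangle$ (which exists by applying the previous sub-case to $y'=y$) also resolves $\langle x',v\rangle$. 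This twin-like behavior of $v$ and $y$ across $X_{i_1}$ is the only genuinely non-routine step in the verification.
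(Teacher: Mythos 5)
Your construction of $I_1$ is exactly the one the paper uses (same $S_{I_1}$, same $D_{int}(I_1)$ and $D_{ext}(I_1)$, same rule for populating $D_{pair}(I_1)$), and your verification plan follows the same outline. Your treatment of the sub-case $y'=v$ in (I'4) — observing that whenever $v$ and the witness $y$ have the same distance vector to $X_{i_1}$, every $s\in V(G_{i_1})$ satisfies $d(s,v)=d(s,y)$ because $X_{i_1}$ separates $V(G_{i_1})\setminus X_{i_1}$ from both of them — is correct, and it addresses a point the paper's own proof passes over in silence.

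The gap is in the other sub-case of your (I'4) argument. You assert that for $y'\in V(G-G_i)$ with the same distance vector to $X_{i_1}$ as $y$, ``the separator property of $X_{i_1}$ forces'' $\vec{d_{X_i}}(y')=\vec t$. This is false: $X_{i_1}$ separates $V(G_{i_1})\setminus X_{i_1}$ from $V(G)\setminus V(G_{i_1})$, and both $y'$ and $v$ lie on the latter side, so $d(y',v)$ is \emph{not} determined by the distances from $y'$ to $X_{i_1}$. The vertex $v$ is introduced at $i$ and may have neighbours in $G-G_i$ (edges realized in ancestor bags), so one can have $y'$ adjacent to $v$ while $y$, with the identical $X_{i_1}$-vector, is far from $v$; then $\vec{d_{X_i}}(y')\neq\vec t$. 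Consequently $R_{\langle \vec{r^-},\vec{t^-}\rangle}$ may contain pairs $\langle x',y'\rangle$ whose $X_i$-vectors are not $\langle\vec r,\vec t\rangle$, condition (S3) for $I$ says nothing about such pairs, and your argument does not show that $S\setminus\{v\}$ resolves them — hence it does not establish $\langle\vec{r^-},\vec{t^-}\rangle\in D_{pair}(I_1)$ for the $D_{pair}(I_1)$ you constructed. (Notably, the paper's own proof of (I'4) argues only about pairs whose $X_i$-vectors equal $\langle\vec r,\vec t\rangle$ and then jumps to the same conclusion, so you have isolated exactly the delicate point; but the bridge you supply does not hold, and this sub-case needs an actual argument rather than the separator claim.)
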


\begin{proof}
Let $I_1$ be the instance for $i_1$ defined as follows.
\begin{itemize} 
\item $S_{I_1} = S_I \setminus \{v\}$, $\ext{I_1} = \{\vec{d_{X_{i_1}}}(v)\} \cup \{\vec{r^-} \mid \vec r \in \ext I\}$, and $\int{I_1} = \vec{d_{X_{i_1}}}(S\setminus \{v\})$.
\item For every $\langle \vec r, \vec t \rangle \in ([\diam(G)]^{|X_{i_1}|})^2$, let $R_{\langle \vec r, \vec t \rangle} = \{\langle x,y \rangle \in V(G_{i_1}) \times V(G- G_{i_1}) \mid \vec{d_{X_{i_1}}}(x) = \vec r \text{ and } \vec{d_{X_{i_1}}}(y) = \vec t\}$. If $S \setminus \{v\}$ resolves every pair in $R_{\langle \vec r, \vec t \rangle}$, then we add $\langle \vec r, \vec t \rangle$ to $\paire{I_1}$.
\end{itemize}
Let us prove that $I_1\in \mathcal{F}_2(I)$ and that $S\setminus \{v\}$ is a solution of $I_1$. In the following, we let $S' = S \setminus \{v\}$.

\begin{claim}
The constructed instance $I_1$ is compatible with $I$ of type 2.
\end{claim}

\begin{claimproof}
It is clear that conditions {\bf (I'1)} and {\bf (I'2)} of \Cref{def:compatible_introduce} hold; let us show that the remaining conditions hold as well. \\

\noindent
{\bf (I'3)} Since $S$ is a solution for $I$, for every $\vec r \in \int I \setminus \{\vec{d_{X_i}}(v)\}$, there exists $s \in S$ such that $\vec{d_{X_i}}(s) = \vec r$ (in particular, $s \neq v$); but then, $\vec{r^-} = \vec{d_{X_{i_1}}}(s)$ and $\vec{r^-} \in \int{I_1}$ by construction.\\

\noindent
{\bf (I'4)} Consider $\langle \vec r, \vec t \rangle \in \paire I$. Let $x \in V(G_i)$ be such that $\vec{d_{X_i}}(x) = \vec r$ and let $y \notin V(G_i)$ be such that $\vec{d_{X_i}}(y) = \vec t$. If $v$ resolves the pair $\langle x,y \rangle$, then $\vec{r}_k = d(x,v) \neq d(y,v) = \vec{t}_k$, and so, $\langle \vec r, \vec t \rangle \in C_2$. Suppose therefore that $v$ does not resolve $\langle x,y \rangle$. Then, since $S$ is a solution for $I$, it must be that $S \setminus \{v\}$ resolves the pair $\langle x,y \rangle$; and since this holds for any pair with distance vectors $\langle \vec r,\vec t \rangle$ to $X_i$, $\langle \vec{r^-},\vec{t^-} \rangle \in \paire{I_1}$ by construction, and so, $\langle \vec r, \vec t \rangle \in P_2$.
\end{claimproof}

\begin{claim} 
$S'$ is a solution for $I_1$.
\end{claim}

\begin{claimproof}
Let us prove that the conditions of \Cref{def_instance} are satisfied.\\

\noindent
{\bf (S1)} Let $\langle x,y \rangle$ be a pair of vertices of $G_{i_1}$ and suppose that $S'$ does not resolve $\langle x,y \rangle$ (we are done otherwise). Then, since $S$ is a solution for $I$, either $S \setminus S' = \{v\}$ resolves $\langle x,y \rangle$, in which case $\vec{d_{X_{i_1}}}(v) \in \ext{I_1}$ resolves $\langle x,y \rangle$; or there exists a vector $\vec r \in \ext I$ resolving $\langle x,y \rangle$, in which case $\vec{r^-} \in \ext{I_1}$ resolves the pair as well.\\

\noindent
{\bf (S2)} Readily follows from the fact that $\int{I_1} = \vec{d_{X_{i_1}}}(S')$.\\

\noindent
{\bf (S3)} By construction, for every $\langle \vec r,\vec t \rangle \in \paire{I_1}$, any $x \in V(G_{i_1})$ such that $\vec{d_{X_{i_1}}}(x) = \vec r$, and any $y \notin V(G_{i_1})$ such that $\vec{d_{X_{i_1}}}(y) = \vec t$, $S'$ resolves the pair $\langle x,y \rangle$.\\ 

\noindent
{\bf (S4)} By construction, $S_{I_1} = S_I \setminus \{v\}$, and so, $S \cap X_{i_1} = S \cap (X_i \setminus \{v\}) = S_I \setminus \{v\}$.
\end{claimproof}

The lemma now follows from the above two claims.
\end{proof}

As a consequence of Lemmas \ref{lem:lower_1} and \ref{lem:lower_2}, the following holds.

\begin{lemma}
Let $I$ be an instance for an introduce node $i$. Then, 
\[
\dim(I) \geq \min\;\{ \min_{I_1 \in \mathcal{F}_{1}(I)}\;\{\dim(I_1)\},\min_{I_2 \in \mathcal{F}_{2}(I)}\;\{\dim(I_2)+1\}\}.
\]
\end{lemma}

\noindent
{\bf Forget node.} Let $I$ be an instance for a forget node $i$ with child $i_1$, and let $v \in V(G)$ be such that $X_i = X_{i_1} \setminus \{v\}$. Further, let $X_{i_1} = \{v_1, \ldots, v_k\}$, where $v = v_k$.

\begin{definition}
\label{def:compatible_forget}
An instance $I_1$ for $i_1$ is compatible with $I$ if the following hold.
\begin{itemize}
\item {\bf (F1)} $S_I = S_{I_1} \setminus \{v\}$.
\item {\bf (F2)} For all $\vec r \in \ext{I_1}$, there exists $\vec t \in \ext I$ such that $\vec{r^-} = \vec t$.
\item {\bf (F3)} For all $\vec r \in \int I$, there exists $\vec t \in \int{I_1}$ such that $\vec{t^-} = \vec r$.
\item {\bf (F4)} For all $\vec r, \vec t \in [\diam(G)]^{|X_i|}$, let 
$R_{\langle \vec r, \vec t \rangle} = \{\langle x,y \rangle \in V(G_i) \times V(G-G_i) \mid \vec{d_{X_i}}(x) = \vec r \text{ and } \vec{d_{X_i}}(y) = \vec t\}.$
Then, 
$\paire I \subseteq \{\langle \vec r, \vec t \rangle \in ([\diam(G)]^{|X_i|})^2 \mid \forall (x,y) \in R_{\langle \vec r, \vec t \rangle}, \langle \vec{r |} d(x,v),\allowbreak \vec{t |} d(y,v) \rangle \in \paire{I_1}\}.$  
\end{itemize}
\end{definition}

We denote by $\mathcal{F}_{F}(I)$ the set of instances for $i_1$ compatible with $I$. We aim to prove the following.

\begin{lemma}
\label{lem:forget_node}
Let $I$ be an instance for a forget node $i$. Then,
\[
\dim(I)= \min_{I_1 \in \mathcal{F}_{F}(I)}\;\{\dim(I_1)\}.
\]
\end{lemma}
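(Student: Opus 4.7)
The plan is to follow the same two-inequality template used for the join and introduce nodes. Define $\mathcal{F}_F(I)$ as above, and prove $\dim(I) \leq \min_{I_1 \in \mathcal{F}_F(I)} \dim(I_1)$ and the reverse inequality separately. The key structural observation is that for a forget node, $V(G_i) = V(G_{i_1})$: no new vertex of $G$ is introduced, only the ``forgotten'' vertex $v$ leaves the bag. In particular $V(G - G_i) = V(G - G_{i_1})$, and since $X_{i_1} \supseteq X_i$, the separator $X_i$ still separates $V(G_i) \setminus X_i$ from $V(G - G_i)$, while $X_{i_1}$ separates $V(G_{i_1}) \setminus X_{i_1}$ from $V(G - G_{i_1})$.

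For the forward direction ($\leq$), take $I_1 \in \mathcal{F}_F(I)$ and let $S$ be a minimum-size solution for $I_1$; the claim is that $S$ is a solution for $I$. Condition (S4) is immediate from (F1): $S \cap X_i = (S \cap X_{i_1}) \setminus \{v\} = S_{I_1} \setminus \{v\} = S_I$. For (S2), given $\vec r \in \int{I}$, (F3) provides $\vec t \in \int{I_1}$ with $\vec{t^-} = \vec r$, and then a solution vertex of $I_1$ realizing $\vec t$ realizes $\vec r$ under $\vec d_{X_i}$. Condition (S3) follows directly from (F4): given $\langle \vec r, \vec t\rangle \in \paire{I}$ and a realizing pair $(x,y)$, the pair $\langle \vec{r\,|}\,d(x,v), \vec{t\,|}\,d(y,v) \rangle$ lies in $\paire{I_1}$ by (F4), so $S$ resolves $\langle x,y\rangle$ because it is a solution for $I_1$.

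For (S1), suppose a pair $\langle x,x'\rangle$ of $G_i$ is resolved in $I_1$ by some $\vec r \in \ext{I_1}$, and let $\vec t = \vec{r^-} \in \ext{I}$ from (F2); I would show $\vec t$ resolves $\langle x,x'\rangle$ in $I$. By definition of $\ext{I_1}$, there is some $y \in V(G - G_{i_1}) = V(G - G_i)$ with $\vec d_{X_{i_1}}(y) = \vec r$; then $\vec d_{X_i}(y) = \vec t$. Since $X_i$ separates $x$ (and $x'$) from $y$, one has $d(x,y) = \min_\ell (\vec d_{X_i}(x) + \vec t)_\ell$, and an identical computation over $X_{i_1}$ gives the same value because the extra coordinate for $v$ contributes $d(x,v) + d(v,y) \geq d(x,y)$; thus resolution by $\vec r$ transfers to resolution by $\vec t$. (The $x = v$ case is checked using that $\vec t$ still realizes $\vec d_{X_i}(v)$ via the separator, so the appended coordinate $d(v,y)$ equals $\min_\ell(\vec d_{X_i}(v) + \vec t)_\ell$.)

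For the reverse direction ($\geq$), let $S$ be a minimum-size solution for $I$ and construct $I_1$ for $i_1$ by setting $S_{I_1} = S \cap X_{i_1}$, $\int{I_1} = \vec d_{X_{i_1}}(S)$, $\ext{I_1} = \{\vec d_{X_{i_1}}(y) : y \in V(G - G_i),\ \vec d_{X_i}(y) \in \ext{I}\}$, and including $\langle \vec r', \vec t'\rangle$ in $\paire{I_1}$ exactly when every realizing pair $(x,y) \in V(G_{i_1}) \times V(G - G_{i_1})$ is resolved by $S$. Compatibility (F1)--(F4) is then routine: (F1) follows since $v \in X_{i_1} \setminus X_i$, (F2)--(F3) hold by construction, and (F4) follows because any realizing $(x,y)$ for $\langle \vec r, \vec t\rangle \in \paire{I}$ is resolved by $S$ (condition (S3) for $I$), and every realizing pair for $\langle \vec{r\,|}\,d(x,v), \vec{t\,|}\,d(y,v)\rangle$ in $I_1$ also realizes $\langle \vec r, \vec t\rangle$ in $I$. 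That $S$ is a solution for $I_1$ reuses essentially the same argument as above to lift resolution by vectors in $\ext{I}$ to resolution by the corresponding extensions in $\ext{I_1}$. The main (mildly technical) obstacle throughout is the bookkeeping of the extra coordinate for $v$ in distance vectors, but the separator property of $X_i$ and $X_{i_1}$ makes every such computation reduce to a straightforward case distinction on whether the vertex under consideration is $v$ itself.
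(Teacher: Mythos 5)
Your proposal is correct and follows essentially the same route as the paper: the same compatibility conditions, the same two-inequality template, the same construction of $I_1$ from a minimum solution $S$ for $I$ (your description of $\ext{I_1}$ via realizing vertices is equivalent to the paper's formula using $\min_{\ell}(\vec{r^-}+\vec{d_{X_{i_1}}}(v))_\ell$), and the same lifting of resolution between $X_i$- and $X_{i_1}$-distance vectors via the separator property. You even spell out the coordinate-dropping/appending argument for (S1) in more detail than the paper does.
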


To prove \Cref{lem:forget_node}, we prove the following lemmas.

\begin{lemma}
Let $I_1$ be an instance for $i_1$ compatible with $I$, and let $S$ be a minimum-size solution for $I_1$. Then, $S$ is a solution for $I$. In particular,
\[
\dim(I) \leq \min_{I_1 \in \mathcal{F}_{F}(I)}\;\{\dim(I_1)\}.
\]
\end{lemma}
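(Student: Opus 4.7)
The key structural observation is that, for a forget node $i$ with child $i_1$, we have $V(G_i) = V(G_{i_1})$: no new vertex of $G$ enters the processed subgraph at a forget node (the vertex $v$ is merely removed from the current bag, but it still sits in bags below $i_1$). In particular, $X_{i_1}$ and $X_i$ separate the same sets of vertices from one another, and any vertex $z \notin V(G_i)$ also satisfies $z \notin V(G_{i_1})$. With this in hand, I will verify the four conditions of Definition~\ref{def_instance} for $S$ with respect to $I$ by invoking the corresponding compatibility condition from Definition~\ref{def:compatible_forget}.

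For (S2), any $\vec r \in \int I$ yields, by (F3), some $\vec t \in \int{I_1}$ with $\vec{t^-} = \vec r$; since $S$ solves $I_1$, there is $s \in S$ with $\vec{d_{X_{i_1}}}(s) = \vec t$, and then $\vec{d_{X_i}}(s) = \vec{t^-} = \vec r$. For (S4), (F1) gives $S_I = S_{I_1} \setminus \{v\}$, and since $X_i = X_{i_1} \setminus \{v\}$, we get $S \cap X_i = (S \cap X_{i_1}) \setminus \{v\} = S_{I_1} \setminus \{v\} = S_I$. For (S3), take $\langle \vec r,\vec t\rangle \in \paire{I}$ and any realizing pair $(x,y)$ with $x \in V(G_i) = V(G_{i_1})$ and $y \notin V(G_i) = V(G_{i_1})$. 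The vectors $\vec{d_{X_{i_1}}}(x) = \vec{r \mid} d(x,v)$ and $\vec{d_{X_{i_1}}}(y) = \vec{t \mid} d(y,v)$ lie in $\paire{I_1}$ by (F4), so $S$ resolves $\langle x,y\rangle$ as a solution for $I_1$.

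The one place requiring care is (S1). Let $\langle x,y\rangle$ be a pair in $G_i = G_{i_1}$ and suppose $S$ does not resolve it. Since $S$ solves $I_1$, some $\vec r \in \ext{I_1}$ resolves $\langle x,y\rangle$, and by (F2) the truncation $\vec{r^-}$ lies in $\ext I$. I need to argue that $\vec{r^-}$ also resolves $\langle x,y\rangle$, even though dropping a coordinate could in principle alter the minimum in the definition of ``resolves.'' Here the ``vertex realizability'' clause in the definition of an instance saves us: by assumption there is a vertex $z \notin V(G_{i_1}) = V(G_i)$ with $\vec{d_{X_{i_1}}}(z) = \vec r$, and this same $z$ satisfies $\vec{d_{X_i}}(z) = \vec{r^-}$. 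Because $X_i$ (respectively $X_{i_1}$) separates $z$ from $x$ and $y$, the graph distances $d(z,x)$ and $d(z,y)$ equal the two ``$\min$'' expressions computed through either separator; these are intrinsic to the graph, so $\vec r$ resolves $\langle \vec{d_{X_{i_1}}}(x),\vec{d_{X_{i_1}}}(y)\rangle$ if and only if $d(z,x) \neq d(z,y)$ if and only if $\vec{r^-}$ resolves $\langle \vec{d_{X_i}}(x),\vec{d_{X_i}}(y)\rangle$.

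The main obstacle, and the subtlety worth emphasizing, is exactly this last point: without the realizability requirement baked into the definition of $\ext{I_1}$, truncating a resolving vector would not automatically preserve its resolving property. Once this is observed, all four conditions follow directly, and the inequality $\dim(I) \le \min_{I_1 \in \mathcal{F}_F(I)} \dim(I_1)$ is immediate since any minimum solution $S$ of a compatible $I_1$ is a solution of $I$ and has the same size.
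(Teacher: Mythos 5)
Your proof is correct and takes essentially the same route as the paper's: verify (S1)--(S4) for $S$ directly from the compatibility conditions (F1)--(F4), using that $V(G_i)=V(G_{i_1})$ at a forget node. The only difference is that you spell out, via the realizability clause in the instance definition and the fact that both $X_{i_1}$ and $X_i$ separate the realizing vertex from $x$ and $y$, why a resolving vector of the child's external set still resolves the pair after dropping the coordinate of $v$ --- a step the paper simply asserts in (S1) (and isolates as an explicit covering lemma only in the analogous \gsfull\ section).
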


\begin{proof}
Let us prove that  the conditions of \Cref{def_instance} are satisfied.

\noindent
{\bf (S1)} Let $\langle x,y \rangle$ be a pair of vertices of $G_i$. Since $V(G_i) = V(G_{i_1})$ and $S$ is a solution for $I_1$, either $S$ resolves the pair $\langle x,y \rangle$, in which case we are done; or there exists $\vec t \in \ext{I_1}$ such that $\vec t$ resolves $\langle x,y \rangle$. In the latter case, since $I_1$ is compatible with $I$, there then exists $\vec r \in \ext I$ such that $\vec{t^-} = \vec r$; but then, $\vec r$ resolves the pair $\langle x,y \rangle$.\\

\noindent
{\bf (S2)} Consider $\vec r \in \int I$. Since $I_1$ is compatible with $I$, there exists $\vec t \in \int{I_1}$ such that $\vec{t^-} = \vec r$; and since $S$ is a solution for $I$, there exists $s \in S$ such that $\vec{d_{X_{i_1}}}(s) = \vec t$. Now, note that $\vec{d_{X_i}}(s) = \vec{t^-} = \vec r$.\\

\noindent
{\bf (S3)} Consider $\langle \vec r,\vec t \rangle \in \paire I$. Let $x \in V(G_i)$ be such that $\vec{d_{X_i}}(x) = \vec r$ and let $y \notin V(G_i)$ be such that $\vec{d_{X_i}}(y) = \vec t$. Then, since $I_1$ is compatible with $I$, there exists $\langle \vec u, \vec v \rangle \in \paire{I_1}$ such that $\vec{d_{X_{i_1}}}(x) = \vec u$ and $\vec{d_{X_{i_1}}}(y) = \vec v$; and since $S$ is a solution for $I_1$, $S$ resolves the pair $\langle x,y \rangle$.\\

\noindent
{\bf (S4)} By compatibility, $S_I = S_{I_1} \setminus \{v\}$, and so, $S \cap X_i = S \cap (X_{i_1} \setminus \{v\}) = S_{I_1} \setminus \{v\}$.
\end{proof}

\begin{lemma}
Let $S$ be a minimum-size solution for $I$. Then, there exists $I_1 \in \mathcal{F}_{F}(I)$ such that $S$ is a solution for $I_1$. In particular,
\[
\dim(I) \geq \min_{I_1 \in \mathcal{F}_{F}(I)}\;\{\dim(I_1)\}.
\]
\end{lemma}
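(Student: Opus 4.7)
The plan mirrors the backward-direction arguments already given for join and introduce nodes: construct a candidate instance $I_1$ for $i_1$ directly from $S$, verify it satisfies the compatibility conditions (F1)--(F4), and then check that the same set $S$ is a solution for $I_1$. The key observation is that, for a forget node, $V(G_{i_1}) = V(G_i)$, so $S \subseteq V(G_{i_1})$ remains a valid candidate, and the only genuine change between $I$ and $I_1$ is that the separator has grown from $X_i$ to $X_{i_1} = X_i \cup \{v\}$ — every vertex in $V(G)$ simply acquires one extra coordinate (its distance to $v$) in its distance vector.

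Concretely, I would define $I_1$ by setting $S_{I_1} = S \cap X_{i_1}$, $\int{I_1} = \vec{d_{X_{i_1}}}(S)$, and $\ext{I_1} = \{ \vec{d_{X_{i_1}}}(y) : y \in V(G) \setminus V(G_{i_1}), \ \vec{d_{X_i}}(y) \in \ext{I} \}$, and letting $\paire{I_1}$ consist of all pairs $\langle \vec u, \vec w \rangle \in ([\diam(G)]^{|X_{i_1}|})^2$ such that every $(x',y') \in V(G_{i_1}) \times (V(G) \setminus V(G_{i_1}))$ with $\vec{d_{X_{i_1}}}(x') = \vec u$ and $\vec{d_{X_{i_1}}}(y') = \vec w$ is resolved by $S$. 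This is the natural ``lift'' of the information from $I$ by one extra coordinate, and it is identical in spirit to the constructions used for the other two backward-direction proofs.

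To check compatibility, (F1) is immediate from the definition of $S_{I_1}$, since $S \cap X_{i_1} = (S \cap X_i) \cup (S \cap \{v\})$. For (F2), any $\vec r \in \ext{I_1}$ is of the form $\vec{d_{X_{i_1}}}(y)$ for some external $y$ with $\vec{d_{X_i}}(y) \in \ext{I}$, and then $\vec{r^-} = \vec{d_{X_i}}(y) \in \ext{I}$. For (F3), each $\vec r \in \int{I}$ is realized by some $s \in S$ by condition (S2) applied to $I$; then $\vec t := \vec{d_{X_{i_1}}}(s) \in \int{I_1}$ satisfies $\vec{t^-} = \vec r$. For (F4), if $\langle \vec r, \vec t \rangle \in \paire{I}$ and $(x,y) \in R_{\langle \vec r, \vec t \rangle}$, then $\vec u := \vec{r|}\,d(x,v) = \vec{d_{X_{i_1}}}(x)$ and $\vec w := \vec{t|}\,d(y,v) = \vec{d_{X_{i_1}}}(y)$; any other $(x',y')$ realizing $(\vec u, \vec w)$ in particular realizes $(\vec r, \vec t)$ with respect to $X_i$, hence lies in $R_{\langle \vec r, \vec t \rangle}$, and is thus resolved by $S$ since $S$ is a solution of $I$, so $\langle \vec u, \vec w \rangle \in \paire{I_1}$ by construction.

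It then remains to verify (S1)--(S4) for $S$ with respect to $I_1$. Conditions (S2), (S3), and (S4) follow directly from the definitions of $\int{I_1}$, $\paire{I_1}$, and $S_{I_1}$. The only slightly delicate point — and the one place that requires more than bookkeeping — is (S1): given a pair $\langle x,y \rangle \subseteq V(G_{i_1}) = V(G_i)$ not resolved by $S$, we use that $S$ solves $I$ to find $\vec r \in \ext{I}$ resolving the pair; fixing any external $y_0$ realizing $\vec r$, the fact that $X_i$ separates $y_0$ from $V(G_i)$ gives $d(y_0,x) = \min_\ell (\vec r + \vec{d_{X_i}}(x))_\ell$ and likewise for $y$, so $y_0$ itself resolves $\langle x,y \rangle$; then $\vec{d_{X_{i_1}}}(y_0) \in \ext{I_1}$ resolves $\langle x,y \rangle$ as well, because $X_{i_1}$ also separates $y_0$ from $V(G_{i_1})$. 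This minimality argument is the only step where one has to argue about actual graph distances rather than manipulate vectors symbolically, and it is the main (modest) obstacle in the proof.
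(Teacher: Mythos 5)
Your proposal is correct and follows essentially the same route as the paper's proof: the same construction of $I_1$ (your $\ext{I_1}$, defined via distance vectors of actual external vertices, coincides with the paper's forced-last-coordinate formulation because $X_i$ separates $v$ from $V(G)\setminus V(G_i)$), the same verification of (F1)–(F4), and the same check of (S1)–(S4), where your explicit argument via an external vertex $y_0$ realizing $\vec r$ is exactly the justification the paper leaves implicit for why the lifted vector still resolves the pair.
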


\begin{proof}
Let $I_1$ be the instance for $i_1$ defined as follows.
\begin{itemize}
\item $S_{I_1} = S \cap X_{i_1}$, $\ext{I_1} = \{\vec r \in [\diam(G)^{|X_{i_1}|}] \mid \vec{r^-} \in \ext I \text{ and } \vec{r}_k = \min_{1 \leq \ell \leq k-1} (\vec{r^-} + \vec{d_{X_{i_1}}}(v))_\ell\}$, and $\int{I_1} = \vec{d_{X_{i_1}}}(S)$.
\item For every $\langle \vec r, \vec t \rangle \in ([\diam(G)]^{|X_{i_1}|})^2$, let $R_{\langle \vec r, \vec t \rangle} = \{\langle x,y \rangle \in V(G_{i_1}) \times V(G-G_{i_1}) \mid \vec{d_{X_{i_1}}}(x) = \vec r \text{ and } \vec{d_{X_{i_1}}}(y) = \vec t\}$. If $S$ resolves every pair in $R_{\langle \vec r, \vec t \rangle}$, then we add $\langle \vec r, \vec t \rangle$ to $\paire{I_1}$.
\end{itemize}
Let us prove that $I_1 \in \mathcal{F}_{F}(I)$ and that $S$ is a solution for $I_1$.

\begin{claim}
The constructed instance $I_1$ is compatible with $I$.
\end{claim}

\begin{claimproof}
It is clear that conditions {\bf (F1)} and {\bf (F2)} of \Cref{def:compatible_forget} hold; let us show that the remaining conditions hold as well.\\

\noindent
{\bf (F3)} Since $S$ is a solution for $I$, for every $\vec r \in \int I$, there exists $s \in S$ such that $\vec{d_{X_i}}(s) = \vec r$; but then, $\vec{d_{X_{i_1}}}(s)^- = \vec{d_{X_i}}(s)$, where $\vec{d_{X_{i_1}}}(s) \in \int{I_1}$ by construction.\\

\noindent
{\bf (F4)} Consider $\langle \vec r, \vec t \rangle \in \paire I$. Let $x \in V(G_i)$ be such that $\vec{d_{X_i}}(x) = \vec r$ and let $y \notin V(G_i)$ be such that $\vec{d_{X_i}}(y) = \vec t$. Then, $S$ resolves the pair $\langle x,y \rangle$ as it is a solution for $I$; and since this holds for every pair $\langle a,b \rangle \in V(G_i) \times V(G-G_i)$ such that $(\vec{d_{X_{i_1}}}(a),\vec{d_{X_{i_1}}}(b)) = (\vec{r |} d(x,v),\vec{t |} d(y,v))$, by construction $\langle \vec{r |} d(x,v),\vec{t |} d(y,v) \rangle \in \paire{I_1}$.
\end{claimproof}

\begin{claim}
$S$ is a solution for $I_1$.
\end{claim}

\begin{claimproof}
Let us prove that the conditions of \Cref{def_instance} hold.\\ 

\noindent
{\bf (S1)} Let $\langle x,y \rangle$ be a pair of vertices of $G_{i_1}$. Since $V(G_i) = V(G_{i_1})$ and $S$ is a solution for $I$, either $S$ resolves the pair $\langle x,y \rangle$, in which case we are done: or there exists $\vec t \in \ext I$ such that $\vec t$ resolves $\langle x,y \rangle$. In the latter case, by construction $\vec{t |} a \in \ext{I_1}$, where $a = \min_{1 \leq \ell \leq k-1} (\vec t + \vec{d_{X_{i_1}}}(v))_\ell$; but then, $\vec{t |} a$ resolves $\langle x,y \rangle$.\\

\noindent
{\bf (S2)} Readily follows from the fact that $\int{I_1} = \vec{d_{X_{i_1}}}(S)$.\\ 

\noindent
{\bf (S3)} By construction, for every $\langle \vec r, \vec t \rangle \in \paire{I_1}$, any $x \in V(G_{i_1})$ such that $\vec{d_{X_{i_1}}}(x) = \vec r$, and any $y \notin V(G_{i_1})$ such that $\vec{d_{X_{i_1}}}(y) = \vec t$, $S$ resolves the pair $\langle x,y \rangle$.\\

\noindent
{\bf (S4)} By construction, $S_{I_1} = S \cap X_{i_1}$.
\end{claimproof}

The lemma now follows from the above two claims.
\end{proof}


To complete the proof of \Cref{{thm:algo-diam-tw-MD}}, let us now explain how the algorithm proceeds. Given a nice tree decomposition $(T,\mathcal{X})$ of a graph $G$ rooted at node $r \in V(T)$, the algorithm computes the extended metric dimension for all possible instances in a bottom-up traversal of $T$. It computes the values for leaf nodes using \Cref{lem:leaf_node}, for join nodes using \Cref{lem:join_node}, for introduce nodes using \Cref{lem:introduce_node}, and for forget nodes using \Cref{lem:forget_node}. The correctness of this algorithm follows from these lemmas and the following.

\begin{lemma}
Let $G$ be a graph and let $(T,\{X_i:i\in V(T)\})$ be a nice tree decomposition of $G$ rooted at node $r \in V(T)$. Then,
\[
\md(G) = \min_{S_r \subseteq X_r} \dim(X_r,S_r,\emptyset,\emptyset,\emptyset).
\]
\end{lemma}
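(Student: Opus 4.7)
The plan is to establish the equality by proving the two inequalities separately, both of which reduce almost immediately to the definitions once we exploit the fact that, for the root $r$ of $T$, we have $V(G_r) = V(G)$.

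For the upper bound $\md(G) \leq \min_{S_r \subseteq X_r} \dim(X_r, S_r, \emptyset, \emptyset, \emptyset)$, I would take any $S_r \subseteq X_r$ and any solution $S \subseteq V(G_r) = V(G)$ for the instance $I = (X_r, S_r, \emptyset, \emptyset, \emptyset)$. Conditions (S2), (S3), (S4) are either vacuous (since $\int I = \paire I = \emptyset$) or impose no constraint beyond $S \cap X_r = S_r$. Condition (S1), specialized to $\ext I = \emptyset$, asserts exactly that every pair of vertices of $G_r = G$ is resolved by a vertex of $S$; that is, $S$ is a resolving set of $G$. Hence $\md(G) \leq |S|$, and taking the minimum over all solutions and all $S_r$ yields the inequality.

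For the lower bound $\md(G) \geq \min_{S_r \subseteq X_r} \dim(X_r, S_r, \emptyset, \emptyset, \emptyset)$, I would pick a minimum-size resolving set $S^*$ of $G$, and set $S_r := S^* \cap X_r$. Then I check that $S^*$ satisfies all four conditions of Definition~\ref{def_instance} for the instance $I = (X_r, S_r, \emptyset, \emptyset, \emptyset)$: (S1) follows because $S^*$ resolves every pair in $V(G) = V(G_r)$, so in particular the requirement is met regardless of $\ext I$ being empty; (S2) and (S3) hold trivially since $\int I = \emptyset$ and $\paire I = \emptyset$; and (S4) holds by the choice of $S_r$. Therefore $\dim(X_r, S_r, \emptyset, \emptyset, \emptyset) \leq |S^*| = \md(G)$.

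There is essentially no obstacle here: the lemma just asserts that the extended problem at the root, with empty external and pair requirements, degenerates into the original \textsc{Metric Dimension} problem on $G$. The only tiny caveat is the technical condition $\ext I \neq \emptyset$ or $S_I \neq \emptyset$ in the definition of an instance; this is handled by noting that $\md(G) \geq 1$ whenever $|V(G)| \geq 2$, so the minimum is attained at some non-empty $S_r$ (and if $|V(G)| \leq 1$, the statement is trivial). Combining the two inequalities gives the desired equality, which together with Lemmas~\ref{lem:leaf_node}, \ref{lem:join_node}, \ref{lem:introduce_node}, and \ref{lem:forget_node} completes the correctness of the bottom-up dynamic programming scheme, and hence Theorem~\ref{thm:algo-diam-tw-MD}.
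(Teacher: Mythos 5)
Your proof is correct and follows essentially the same route as the paper: both directions are obtained by unpacking Definition~\ref{def_instance} at the root, observing that with $\ext I=\int I=\paire I=\emptyset$ a solution for $(X_r,S_r,\emptyset,\emptyset,\emptyset)$ is exactly a resolving set of $G_r=G$ with $S\cap X_r=S_r$, and conversely an optimal resolving set $S^*$ is a solution for the instance with $S_r=S^*\cap X_r$. Your remark about the technical non-emptiness condition on instances is a minor point the paper silently glosses over, and does not change the argument.
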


\begin{proof}
Let $S$ be a minimum-size resolving set of $G$. Then, by \Cref{def_instance}, $S$ is a solution for the {\sc EMD} instance $(X_r,S\cap X_r,\emptyset,\emptyset,\emptyset)$, and so,
\[
\min_{S_r \subseteq X_r} \dim(X_r,S_r,\emptyset,\emptyset,\emptyset) \leq \dim(X_r,S\cap X_r,\emptyset,\emptyset,\emptyset) \leq \md(G).
\]
Conversely, let $S' \subseteq X_r$ be a set attaining the minimum above, and let $S$ be a minimum-size solution for the {\sc EMD} instance $(X_r,S',\emptyset,\emptyset,\emptyset)$. Then, by \Cref{def_instance}, every vertex of $G_r = G$ is resolved by $S$, and so,
\[
\md(G) \leq \dim(X_r,S',\emptyset,\emptyset,\emptyset) = \min_{S_r \subseteq X_r} \dim(X_r,S_r,\emptyset,\emptyset,\emptyset),
\]
which concludes the proof.
\end{proof}

To get the announced complexity, observe first that, at each node $i \in V(T)$, there are at most $2^{|X_i|} \cdot 2^{\diam(G)^{|X_i|}} \cdot 2^{\diam(G)^{|X_i|}} \cdot 2^{\diam(G)^{2|X_i|}}$ possible instances to consider, where $|X_i| = \OO(\tw(G))$. Since $T$ has $\OO(\tw(G)\cdot n)$ nodes, there are in total $\OO(\alpha(\tw(G)) \cdot \tw(G) \cdot n)$ possible instances, where $\alpha(k) = 2^{k} \cdot 2^{\diam(G)^{k}} \cdot 2^{\diam(G)^{k}} \cdot 2^{\diam(G)^{2k}}$. The running time of the algorithm then follows from these facts and the next lemma (note that to avoid repeated computations, we can first compute the distance between every pair of vertices of $G$ in $n^{\OO(1)}$ time, as well as all possible distance vectors to a bag from the possible distance vectors to its child/children).

\begin{lemma}
Let $I$ be an {\sc EMD} instance for a node $i \in V(T)$, and assume that, for every child $i_1$ of $i$ and every {\sc EMD} instance $I_1$ for $i_1$ compatible with $I$, $\dim(I_1)$ is known. Then, $\dim(I)$ can be computed in time $\OO(\alpha(|X_i|)) \cdot n^{\OO(1)}$. 
\end{lemma}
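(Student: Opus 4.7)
The plan is to handle each of the four node types in turn via the recurrences of Lemmas~\ref{lem:leaf_node}--\ref{lem:join_node}, after preprocessing the pairwise distance matrix of $G$ and the tables of feasible distance vectors to each bag (and its children) in $n^{\OO(1)}$ time, so that every elementary compatibility check runs in $n^{\OO(1)}$ time. Throughout, we look up $\dim(I_1)$ values in the tabulated solutions for children, which is allowed by the hypothesis of the lemma.

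For a leaf node, Lemma~\ref{lem:leaf_node} gives a constant-time case analysis. For an introduce or forget node $i$ with child $i_1$, we enumerate all child instances $I_1$ that could be compatible with $I$ (those in $\mathcal{F}_1(I) \cup \mathcal{F}_2(I)$ or $\mathcal{F}_F(I)$, respectively), check Definition~\ref{def:compatible_introduce} or Definition~\ref{def:compatible_forget} in $n^{\OO(1)}$ time per candidate, look up $\dim(I_1)$ in the precomputed table, and take the minimum prescribed by Lemmas~\ref{lem:introduce_node} and \ref{lem:forget_node}. Since $|X_{i_1}|=|X_i|\pm 1$, the number of candidates is bounded by $\alpha(|X_{i_1}|)=\alpha(\OO(|X_i|))$, which we absorb into the $\OO(\alpha(|X_i|))$ notation of the statement, consistently with the global run-time analysis of Theorem~\ref{thm:algo-diam-tw-MD}.

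The join node is the delicate case, since $|\mathcal{F}_J(I)|$ can be quadratic in $\alpha(|X_i|)$. The key observation is a monotonicity property of {\sc EMD}: $\dim(I_2)$ is non-decreasing in $\int{I_2}$ and $\paire{I_2}$ (more required vectors/pairs only tighten the constraints on a solution) and non-increasing in $\ext{I_2}$ (more external resolvers can only help conditions (S1) and (S3) of Definition~\ref{def_instance}). Combined with conditions (J2)--(J5), this implies that for every fixed $I_1$, among all $I_2$ forming with $I_1$ a pair in $\mathcal{F}_J(I)$, there is a canonical minimizer $I_2^*$, uniquely determined by $I$ and $I_1$, with $S_{I_2^*}=S_I$,
\[
\int{I_2^*}=(\int I\setminus \int{I_1})\cup(\ext{I_1}\setminus \ext I),\qquad \ext{I_2^*}=\ext I\cup \int{I_1},
\]
and $\paire{I_2^*}$ equal to the minimum family of pairs forced by (J4) (with the sets $C_2$ and $D_2$ computable from $I$ and $\int{I_1}$ alone, independently of $I_2$) together with the pairs forced by (J5) when the missing resolver must lie in $V(G_{i_2})$. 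Any other compatible $I_2$ satisfies $\int{I_2}\supseteq \int{I_2^*}$, $\ext{I_2}\subseteq \ext{I_2^*}$, $\paire{I_2}\supseteq \paire{I_2^*}$, and hence $\dim(I_2)\ge \dim(I_2^*)$ by the monotonicity above. It therefore suffices to iterate over the $\OO(\alpha(|X_i|))$ candidates $I_1$, construct $I_2^*$ in $n^{\OO(1)}$ time, verify that $I_2^*$ is a valid instance (realizability of $\ext{I_2^*}$ by vertices outside $V(G_{i_2})$, which is witnessed by the solution to $I_1$ for vectors in $\int{I_1}$ and by the external witnesses of $\ext I$ otherwise) and that $(I_1,I_2^*)\in \mathcal{F}_J(I)$, then retrieve $\dim(I_1)+\dim(I_2^*)-|S_I|$. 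The minimum over such pairs equals $\dim(I)$ by Lemma~\ref{lem:join_node}.

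The main obstacle will be justifying the canonical-minimizer claim at join nodes: one must verify that $I_2^*$ is itself an admissible {\sc EMD} instance (the realizability clauses of the instance definition for $\ext{I_2^*}$ and $\paire{I_2^*}$) and trace through (J4)/(J5) to rule out the possibility that enlarging $\int{I_2}$ beyond $\int{I_2^*}$ could indirectly shrink $\paire{I_2}$, e.g.\ by expanding $D_1$ via additional 2-compatible pairs. Once this monotonicity is in place, assembling the four cases yields the claimed $\OO(\alpha(|X_i|))\cdot n^{\OO(1)}$ bound.
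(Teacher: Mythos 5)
Your handling of leaf, introduce, and forget nodes matches the paper's proof: enumerate the $\OO(\alpha(|X_{i_1}|))$ candidate child tuples, test instance-validity and compatibility in $n^{\OO(1)}$ time each (after precomputing distances and realizable distance vectors), look up the tabulated values, and take the minimum from \Cref{lem:leaf_node,lem:introduce_node,lem:forget_node}. The join case, however, contains a genuine gap. Your canonical minimizer $I_2^*$ with $\int{I_2^*}=(\int{I}\setminus \int{I_1})\cup(\ext{I_1}\setminus \ext{I})$ does not in general yield a compatible pair: in condition \textbf{(J4)} the first conjunct $\paire{I} \subseteq C_1 \cup D_1 \cup \paire{I_1}$ has $D_1$ defined in terms of $\int{I_2}$ (a pair is $2$-compatible only if some $\vec u \in \int{I_2}$ resolves it), while $\paire{I_1}$ is fixed once $I_1$ is fixed. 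So every pair of $\paire{I}$ that lies outside $C_1 \cup \paire{I_1}$ forces $\int{I_2}$ to contain \emph{some} vector resolving it. This is a hitting-set constraint: different uncovered pairs may be resolvable by different, incomparable sets of vectors, so the family of feasible $\int{I_2}$ has many minimal elements rather than a unique one, and monotonicity of $\dim$ in $\int{I_2}$ does not single out a canonical minimizer. You flag exactly this as ``the main obstacle'' but leave it unresolved, so the join case is not established.

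The obstacle is also self-inflicted: the quadratic blow-up you are trying to avoid is harmless. The paper simply iterates over all pairs $(I_1,I_2)$ of candidate child instances, checks \Cref{def:compatible_pair} for each, and evaluates the formula of \Cref{lem:join_node}; this costs $\alpha(|X_{i_1}|)^2 \cdot n^{\OO(1)}$, and since $\alpha(k)^2 \le 2^{\OO(k)}\cdot 2^{\OO(\diam(G)^{2k})} = \alpha(\OO(k))$, the bound is absorbed into the $2^{\diam^{\OO(\tw)}}\cdot n^{\OO(1)}$ running time of \Cref{thm:algo-diam-tw-MD} (indeed the paper states the analogous \gsfull\ lemma with $\alpha(\OO(|X_i|))$ for exactly this reason). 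So either drop the canonical-minimizer argument in favor of direct pair enumeration, or supply the missing analysis of the hitting-set structure of feasible $\int{I_2}$ --- the former is the intended and far simpler route.
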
 

\begin{proof}
If $i$ is a leaf node, then $\dim(I)$ can be computed in constant time by \Cref{lem:leaf_node}. Otherwise, let us prove that one can compute all compatible instances in the child nodes in the announced time (note that $i$ has at most two child nodes). First, given a 5-tuple $(X_{i_1},S_{I_1},\int{I_1},\ext{I_1},\allowbreak \paire{I_1})$, checking whether it is an {\sc EMD} instance can be done in $\OO(|I_1|) \cdot n^{\OO(1)}$ time; and the number of such 5-tuples is bounded by $\alpha(|X_{i_1}|)$. It is also not difficult to see that checking for compatibility can, in each case, be done in $\OO(|I|) \cdot n^{\OO(1)}$ time. Now, note that, by \Cref{def_instance}, $|I| = \OO(\diam(G)^{\OO(|X_i|)}$ and thus, computing all compatible instances can indeed be done in $\OO(\alpha(|X_i|)) \cdot n^{\OO(1)}$. Then, since computing the minimum using the formulas of Lemmas \ref{lem:join_node}, \ref{lem:introduce_node}, and \ref{lem:forget_node} can be done in $\OO(\alpha(|X_i|)$ time, the lemma follows. 
\end{proof}
\renewcommand{\int}[1]{D_{int}{{(#1)}}}
\newcommand{\extext}[1]{D_{ext/ext}{{(#1)}}}
\newcommand{\intint}[1]{D_{int/int}{{(#1)}}}
\renewcommand{\ext}[1]{D_{ext}{{(#1)}}}
\newcommand{\add}[1]{D_{add}{{(#1)}}}
\renewcommand{\I}[1]{ I_{#1}=(X_{i_{#1}}, S_{I_{#1}}, \int {I_{#1}},\ext {I_{#1}},\add {I_{#1})}}
\renewcommand{\T}[1]{T(#1)}
\renewcommand{\vec}{\mathbf}
\renewcommand{\dimx}[1]{\dim(#1)}

\subsection{Dynamic Programming Algorithm for \gsfull}
\label{subsec:algo-tw-diam-GD}
%
%

In this subsection, we prove the following theorem.
\begin{theorem}
\label{thm:algo-diam-tw-GD}
\gsfull\ admits an algorithm running in time
$2^{\diam^{\mathcal{O}(\tw)}} \cdot n^{\OO(1)}$.
\end{theorem}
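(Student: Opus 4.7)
The plan is to prove Theorem \ref{thm:algo-diam-tw-GD} by a dynamic programming algorithm on a nice tree decomposition of $G$ of width $w=\calO(\tw)$, following the same architecture as our algorithm for \mdfull in Theorem \ref{thm:algo-diam-tw-MD}. The guiding principle is once again the observation that, since $\diam(G)$ is bounded, the vertices of $V(G)\setminus X_i$ partition into at most $\diam^{|X_i|}=\diam^{\calO(\tw)}$ equivalence classes based on their distance vectors $\vec d_{X_i}(\cdot)$ to the current bag, and two vertices in the same class are interchangeable with respect to every vertex on the other side of the separator $X_i$, because for any $u\in V(G_i)$ and $s\notin V(G_i)$ we have $d_G(u,s)=\min_{x\in X_i}(\vec d_{X_i}(u)_x+\vec d_{X_i}(s)_x)$. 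This will let us afford at most $2^{\diam^{\calO(\tw)}}$ states per bag and therefore the claimed running time.

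The plan is to introduce an auxiliary \textsc{Extended Geodetic Set} problem whose instances at a bag $X_i$ consist of: $S_I\subseteq X_i$ (which bag vertices are taken into the geodetic set); $\int I\subseteq [\diam]^{|X_i|}$ (distance-vector classes already represented by the partial solution inside $G_i$); $\ext I\subseteq [\diam]^{|X_i|}$ (distance-vector classes \emph{promised} from $V(G)\setminus V(G_i)$); and a coverage certificate $\add I$ consisting of a subset of pairs $\langle \vec r_1,\vec r_2\rangle \in [\diam]^{|X_i|}\times[\diam]^{|X_i|}$, each pair witnessing that a specific forgotten vertex of class $\vec r_1$ (resp.\ $\vec r_2$) will be covered by a geodesic between an already-selected inside vertex and a yet-to-be-chosen outside vertex whose class is recorded. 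A set $S\subseteq V(G_i)$ is a \emph{valid solution} for an instance when $S\cap X_i=S_I$, every distance vector in $\int I$ is realized by $S$, every forgotten vertex of $G_i$ is either already covered by a pair of vertices in $S$ or by a consistent pairing with an entry of $\ext I$, and every pending obligation in $\add I$ is resolvable by a future vertex matching an $\ext I$-class.

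The bulk of the work will be to define, for each type of node in the nice tree decomposition (leaf, introduce, forget, join), the notion of compatible child instance(s) and establish recurrences of the form $\dim(I)=\min_{\text{compatible child}}\dim(\cdot)$, analogous to Lemmas \ref{lem:leaf_node}, \ref{lem:introduce_node}, \ref{lem:forget_node}, and \ref{lem:join_node}. Leaf and join nodes are largely bookkeeping. The interesting work happens at the introduce node, where a new vertex $v$ enters $X_{i'}$ and its distance vector relative to $X_{i'}$ can be computed from its distance vector relative to $X_i$ plus its adjacencies, so the effect on $\int I$, $\ext I$, and $\add I$ is deterministic; and especially at the forget node, where a vertex $v$ leaves the bag and we must either (i) certify that $v$ is already geodetically covered, using $S_I$, the classes in $\int I$, and the classes promised in $\ext I\cup\add I$, or (ii) register a new obligation in $\add I$ that pins down the class of the future outside partner needed to cover $v$. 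The answer is then recovered at the root $r$ as $\min\{\dim(X_r,S_r,\emptyset,\emptyset,\emptyset):S_r\subseteq X_r\}$, where $\ext I=\add I=\emptyset$ forces all coverage to be realized.

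The main obstacle will be designing $\add I$ so that it is simultaneously (a) expressive enough to certify the geodetic coverage of already-forgotten vertices once the outside solution is eventually revealed, and (b) compact enough to keep the state space bounded by $2^{\diam^{\calO(\tw)}}$. The delicate case is that a forgotten vertex $u\in V(G_i)\setminus X_i$ may be covered by a pair $\langle s_1,s_2\rangle$ with both endpoints outside $V(G_i)$, forcing the shortest $s_1s_2$-path to enter and exit $V(G_i)$ through $X_i$. The plan is to show, as the key locality lemma, that whether such a covering pair exists depends only on the distance-vector classes $\vec d_{X_i}(u),\vec d_{X_i}(s_1),\vec d_{X_i}(s_2)$ and on the "crossing pattern" through $X_i$, using the identity $d_G(s_1,s_2)=\min_{x\in X_i}(\vec d_{X_i}(s_1)_x+\vec d_{X_i}(s_2)_x)$ when the $s_1s_2$-geodesic traverses $V(G_i)$; consequently, $\add I$ needs at most one entry per pair $(\vec r_1,\vec r_2)$ of distance-vector classes, giving $|\add I|\le 2^{\diam^{\calO(\tw)}}$. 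The rest of the running time analysis will mirror that of Theorem \ref{thm:algo-diam-tw-MD}: $\calO(\tw\cdot n)$ bags, each carrying $2^{\diam^{\calO(\tw)}}$ instances, and each DP transition computable in $2^{\diam^{\calO(\tw)}}\cdot n^{\calO(1)}$ time once all-pairs distances and distance vectors are precomputed.
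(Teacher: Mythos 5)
Your overall architecture is the same as the paper's: a DP over a nice tree decomposition whose states record, per bag $X_i$, the chosen bag vertices, the distance-vector classes of solution vertices already placed in $G_i$, the classes promised from $G-G_i$, and coverage certificates, giving $2^{\diam^{\OO(\tw)}}$ states per bag. However, there is a genuine gap in your key locality lemma. You claim that whether a vertex $u\in V(G_i)$ is covered by a pair $s_1,s_2\notin V(G_i)$ ``depends only on the distance-vector classes'' $\vec{d}_{X_i}(u),\vec{d}_{X_i}(s_1),\vec{d}_{X_i}(s_2)$, invoking $d(s_1,s_2)=\min_{x\in X_i}(\vec{d}_{X_i}(s_1)_x+\vec{d}_{X_i}(s_2)_x)$ ``when the $s_1s_2$-geodesic traverses $V(G_i)$.'' This is exactly where the argument fails: coverage of $u$ requires $d(s_1,u)+d(u,s_2)=d(s_1,s_2)$, and while the two summands are determined by the classes (since $X_i$ separates $u$ from $s_1,s_2$), the right-hand side is \emph{not} --- the true shortest $s_1s_2$-path may stay entirely in $G-G_i$ and be strictly shorter than $\min_{x\in X_i}(\vec{d}_{X_i}(s_1)_x+\vec{d}_{X_i}(s_2)_x)$, in which case $u$ is not covered even though the class-level check would say it is. Hence two outside vertices in the same pair of classes are not interchangeable for ``both-outside'' covering pairs, and a certificate consisting of class pairs alone (your $\add I$, with at most one entry per pair of classes) is not sound. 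The same issue arises at join nodes for pairs of solution vertices both lying in one child $G_{i_1}-X_i$ that must cover a vertex of the other child: their mutual distance can be realized below the bag and is again not a function of their $X_i$-vectors.

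The paper resolves precisely this point by augmenting every pair-type state with the actual distance between the two (present or promised) solution vertices: it maintains $D_{int/int}(I)$ and $D_{ext/ext}(I)$ as subsets of $\mathcal{P}_2([\diam(G)]^{|X_i|})\times[\diam(G)]$, with the semantic requirement that each promised element $(\{\vec r,\vec t\},d)$ be realized by actual vertices at mutual distance exactly $d$, and coverage is tested against that stored $d$ rather than against $\min_x(\vec r+\vec t)_x$. Note that for inside--outside covering pairs your class-only bookkeeping is fine (there the separator identity does pin down the distance), and the distance annotation only multiplies the state space by $2^{\diam(G)^{\OO(\tw)}}$ factors, so the running time is unaffected; but without it, both the forget-node certification of already-forgotten vertices and the join-node exchange of ``already realized'' pairs between children (the paper's $D_{int/int}$ flowing into the sibling's $D_{ext/ext}$) cannot be carried out correctly. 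Your proposal as written also lacks this second component (realized inside pairs with distances), which is needed so that a vertex of one child can be certified as covered by two solution vertices of the other child.
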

The proof follows along the same lines as that of the 
proof of Theorem~\ref{thm:algo-diam-tw-MD}.

\noindent\textbf{Overview.} We first give an intuitive description of the dynamic programming scheme. At each step of the algorithm, we consider a bounded number of \emph{solution types}, depending on the properties of the solution vertices with respect to the current bag. At a given dynamic programming step, we will assume that the current solution covers all vertices in $G_i$. Such a vertex may be covered by (1) two vertices in $G_i$, (2) a vertex in $G_i$ and a vertex in $G-G_i$, or (3) two vertices in $G-G_i$. 

Any bag $X_i$ of the tree decomposition whose node $i$ lies on a path between two join nodes in $T$, forms a separator of $G$: there are no edges between the vertices of $G_i-X_i$ and $G-G_ i$. For a vertex $v$ not in $X_i$, we consider its distance-vector to the vertices of $X_i$; the distance-vectors induce an equivalence relation on the vertices of $G-X_i$, whose classes we call \emph{$X_i$-classes}. Consider the two subgraphs $G_i$ and $G-G_i$. Given a vertex $z$ in $G_i$, any two solution vertices $x,y$ from $G-G_i$ that are in the same $X_i$-class,
will cover together with $z$ the exact same vertices from $G_i$, that is, a vertex $u$ of $G_i$ is covered by $z$ and $x$ if and only if it is covered by $z$ and $y$. Thus, for case (2), it is irrelevant whether $x$ or $y$ will be in a geodetic set, and it is sufficient to know that a vertex of their $X_i$-class will eventually be chosen. Similarly, for any four vertices $x_1,x_2,y_1,y_2$ of $G-G_i$ such that $x_1,x_2$ ($y_1,y_2$, respectively) are in the same $X_i$-class \emph{and} $d(x_1,y_1)= d(x_2,y_2)$, we have that $x_1$ and $y_1$ cover exactly the same vertices in $G_i$ as $x_2$ and $y_2$. Thus, for case (3), it is irrelevant whether $x_1,y_1$ or $x_2,y_2$ will be in the geodetic set, and it is sufficient to know that a vertex from each of their $X_i$-classes whose distance between them is $d(x_1,y_1)$ will eventually be chosen.   

The same idea is used to ``remember'' the previously computed solution: it is sufficient to remember the $X_i$-classes of the vertices in the previously computed geodetic set, as well as pairs $C_1,C_2$ of $X_i$-classes together with an integer $d$ corresponding to the distance between any vertex in $C_1$ and any vertex in $C_2$, rather than the vertices themselves.

Keeping track, in the aforementioned way, of the ``past'' and ``future'' solution, is sufficient when processing a join node $i$. Indeed, for a join node $i$ with children $i_1$,$i_2$, a vertex of $G_i = G_{i_1} \cup G_{i_2}$ may be covered by: two vertices from $G-G_i$; a vertex from $G_{i_1}$ and a vertex from $G - G_i$; two vertices from $G_{i_1}$; a vertex from $G_{i_1}$ and a vertex from $G_{i_2}$; two vertices from $G_{i_2}$; a vertex from $G_{i_2}$ and a vertex from $G-G_i$. This is also sufficient when processing an introduce node $i$ where a new vertex $v$ is introduced (i.e., added to the child bag $X_{i'}$ to form $X_i$). Indeed, we may check that $v$ is either covered by two vertices from $G_{i_1}$; a vertex of $G_{i_1}$ and a vertex of $G-G_i$; or two vertices of $G-G_i$. If this does not hold, then we add $v$ into the solution.

For a bag $X_i$ and a vertex $v$ not in $X_i$, the number of possible distance vectors to the vertices of $X_i$ is at most $\diam(G)^{|X_i|}$. Thus, a solution for bag $X_i$ will consist of: (i) the subset of vertices of $X_i$ selected in the solution; (ii) a subset of the $\diam(G)^{|X_i|}$ possible vectors to denote the $X_i$-classes from which the currently computed solution (for $G_i$) contains at least one vertex in the geodetic set; (iii) a subset of the $\diam(G)^{|X_i|}$ possible vectors denoting the $X_i$-classes from which the future solution needs at least one vertex of $G-G_i$ in the geodetic set; (iv) a subset of the $\diam(G)^{|X_i|} \times \diam(G)^{|X_i|} \times \diam(G)$ possible elements representing the pairs of $X_i$-classes and their distance to each other from which the currently computed solution (for $G_i$) contains at least two vertices in the geodetic set; (v) a subset of the $\diam(G)^{|X_i|} \times \diam(G)^{|X_i|} \times \diam(G)$ possible elements representing the pairs of $X_i$-classes and their distance from which the future solution needs at least two vertices of $G-G_i$ in the geodetic set.\\

\noindent\textbf{Formal description.} Before presenting the dynamic program, we first introduce some useful definitions and lemmas (see also \Cref{sec:algo-diam-tw-MD} for missing definitions). 

For a set $S$, we denote by $\mathcal{P}_2(S)$ the set of subsets of $S$ of size 2. Given a graph $G$ and a set $S \subseteq V(G)$, we say that a vertex $x \in V(G)$ is \emph{covered by $S$} if either $x \in S$ or there exist $u,v \in S$ such that $x$ lies on a shortest path from $u$ to $v$. The smallest size of a geodetic set for $G$ is denoted by $\gs(G)$.


\begin{definition}
Let $\vec{r_1}, \vec{r_2}$, and $\vec{r_3}$ be three vectors of size $k$, and let $d$ be an integer. We say that $\vec{r_3}$ is \emph{covered} by $(\{\vec{r_1},\vec{r_2}\},d)$ if \[\min_{1 \leq i \leq k} \vec{(r_1+r_3)}_i  + \min_{1 \leq i \leq k} \vec{(r_2+r_3)}_i = d.\]
\end{definition}

\begin{definition}
Let $G$ be a graph and let $X = \{v_1, \ldots,v_k\}$ be a subset of vertices of $G$. Given a vertex $x$ of $G$, the \emph{distance vector} $\vec{d_X}(x)$ of $x$ to $X$ is the vector of size $k$ such that for all $1 \leq j \leq k$, $\vec{d_X}(x)_j=d(x,v_j)$. For a set $S \subseteq V(G)$, we let $\vec{d_X}(S) = \{\vec{d_X}(s) \mid s \in S\}$.
\end{definition}

\begin{definition}
Let $G$ be a graph and let $X = \{v_1,\ldots,v_k\}$ be a subset of vertices of $G$. 
\begin{itemize}
\item Let $\vec{r_1},\vec{r_2}$ be two vectors of size $k$ and let $d$ be an integer. Then, for any $x \in V(G)$, we say that $x$ is \emph{covered} by $(\{\vec{r_1},\vec{r_2}\},d)$ if $\vec{d_X}(x)$ is covered by $(\{\vec{r_1},\vec{r_2}\},d)$.
\item Let $x,y$ be two vertices of $G$ and let $\vec r$ be a vector of size $k$. We say that $\vec r$ is \emph{covered} by $x$ and $y$ if $\vec r$ is covered by $(\{\vec{d_X}(x), \vec{d_X}(y)\},d(x,y))$. More generally, given a set $S$ of vertices of $G$, we say that $\vec r$ is \emph{covered} by $S$ if there exist $x,y \in S$ such that $\vec r$ is covered by $x$ and $y$.
\item Let $s$ be a vertex of $G$ and let $\vec r$ be a vector of size $k$. Then, for any $x \in V(G)$, we say that $x$ is \emph{covered} by $s$ and $\vec r$ if $d(s,x) + \min_{1 \leq j \leq k} (\vec{d_X}(x) + \vec r)_j = \min_{1 \leq j \leq k} (\vec{d_X}(s) + \vec r)_j$. 
\end{itemize}
\end{definition}

\begin{lemma}
\label{lem:covering}
Let $X = \{v_1, \ldots, v_k\}$ be a separator of a graph $G$, and let $G_1$ be a connected component of $G - X$. Further, let $x \in V(G_1) \cup X$.
\begin{enumerate} 
\item Let $\vec{r_1},\vec{r_2}$ be two vectors of size $k$, and let $d$ be an integer. If $x$ is covered by $(\{\vec{r_1},\vec{r_2}\},d)$, then, for any $u,v \in V(G-G_1)$ such that $\vec{d_X}(u) = \vec{r_1}$, $\vec{d_X}(v) = \vec{r_2}$, and $d(u,v) = d$, $x$ is covered by $u$ and $v$.
\item Let $s$ be a vertex of $V(G_1) \cup X$ and let $\vec r$ be a vector of size $k$. If $x$ is covered by $s$ and $\vec r$, then, for any $u \in V(G-G_1)$ such that $\vec{d_X}(u) = \vec r$, $x$ is covered by $s$ and $u$.
\item Let $\vec{r_1},\vec{r_2}$ be two vectors of size $k$. If $x$ is covered by $(\{\vec{r_1},\vec{r_2}\},\min_{1 \leq j \leq k}(\vec{r_1} + \vec{r_2})_j)$, then, for any $u \in V(G-G_1)$ such that $\vec{d_X}(u) = \vec{r_1}$, $x$ is covered by $u$ and $\vec{r_2}$.
\item Let $\vec r$ be a vector of size $k$ and let $u,v$ be two vertices of $G-G_1$. If $\vec r$ is covered by $u$ and $v$, then, for any $w \in V(G_1) \cup X$ such that $\vec{d_X}(w) = \vec r$, $w$ is covered by $u$ and $v$.
\end{enumerate}
\end{lemma}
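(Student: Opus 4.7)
The plan is to derive all four items from a single observation: since $X$ is a separator between $V(G_1)$ and $V(G)\setminus(V(G_1)\cup X)$, for any $w\in V(G_1)\cup X$ and any $u\in V(G-G_1)$, every shortest $w$-$u$ path passes through some vertex of $X$, giving the identity
\[
d(w,u)\;=\;\min_{1\le j\le k}\bigl(\vec{d_X}(w)+\vec{d_X}(u)\bigr)_j.
\]
I would first establish this identity as the only technical ingredient. A quick case analysis handles the degenerate cases where $w\in X$ or $u\in X$: in such cases the minimum is attained at the coordinate corresponding to the vertex that already belongs to $X$, so the equality is trivial. Once this separator identity is in hand, each of the four items reduces to a direct substitution and the whole proof is a short unfolding of definitions; there is essentially no conceptual obstacle, only bookkeeping.

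For item (1), I would unfold the hypothesis ``$x$ is covered by $(\{\vec{r_1},\vec{r_2}\},d)$'' to
\[
\min_i(\vec{r_1}+\vec{d_X}(x))_i+\min_i(\vec{r_2}+\vec{d_X}(x))_i\;=\;d,
\]
apply the separator identity to rewrite the two summands as $d(x,u)$ and $d(x,v)$, and combine with $d(u,v)=d$ to conclude $d(x,u)+d(x,v)=d(u,v)$, i.e. $x$ lies on a shortest $u$--$v$ path. Item (4) is symmetric: with $w\in V(G_1)\cup X$ playing the role of $x$ and $u,v\in V(G-G_1)$ given as actual vertices, two applications of the separator identity turn the hypothesis ``$\vec r$ is covered by $u$ and $v$'' into $d(w,u)+d(w,v)=d(u,v)$.

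For item (2), both $s$ and $x$ lie in $V(G_1)\cup X$ while $u$ lies in $V(G-G_1)$; the separator identity applied twice rewrites $\min_j(\vec{d_X}(x)+\vec r)_j$ as $d(x,u)$ and $\min_j(\vec{d_X}(s)+\vec r)_j$ as $d(s,u)$, and the hypothesis becomes $d(s,x)+d(x,u)=d(s,u)$, exactly the condition for $x$ being covered by $s$ and $u$. Item (3) is a hybrid of (1) and (2): here ``$x$ is covered by $u$ and $\vec{r_2}$'' means $d(u,x)+\min_j(\vec{d_X}(x)+\vec{r_2})_j=\min_j(\vec{d_X}(u)+\vec{r_2})_j$; rewriting $d(u,x)$ as $\min_i(\vec{r_1}+\vec{d_X}(x))_i$ via the separator identity and noting that $\min_j(\vec{d_X}(u)+\vec{r_2})_j=\min_j(\vec{r_1}+\vec{r_2})_j$, the required equality is precisely the stated hypothesis.

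The main (and really the only) obstacle is to verify cleanly the separator identity in the corner cases where $w$ or $u$ already belongs to $X$; beyond that, the proof is a sequence of one-line substitutions, and I would write the four items one after the other with a shared opening paragraph fixing notation and stating the identity.
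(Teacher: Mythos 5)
Your proof is correct and follows essentially the same route as the paper: the paper likewise reduces everything to the separator identity $d(x,u)=\min_{1\leq j\leq k}(\vec{d_X}(x)+\vec{d_X}(u))_j$ (with the same parenthetical remark for the degenerate case of vertices already in $X$) and then unfolds the covering definitions, writing out item (1) explicitly and noting that (2)--(4) follow by similar substitutions. You simply spell out all four items where the paper abbreviates, so there is nothing to add.
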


\begin{proof}
To prove item (1), it suffices to note that since $X$ separates $x$ from $u$, $d(x,u) = \min_{1 \leq j \leq k} (\vec{d_X}(x) + \vec{d_X}(u))_j$ (note that if $x$ or $u$ belongs to $X$, then surely this equality holds as well); and for the same reason, $d(x,v) = \min_{1 \leq j \leq k} (\vec{d_X}(x) + \vec{d_X}(v))_j$. Now, $x$ is covered by $(\{\vec{d_X}(u),\vec{d_X}(v)\},d(u,v))$, and so, $d(u,x) + d(x,v) = d(u,v)$ by definition. Items (2), (3), and (4) follow from similar arguments. 
\end{proof}

We now define the problem solved at each step of the dynamic programming algorithm, called \textsc{Extended Geodetic Set} ({\sc EGS} for short), whose instances are defined as follows.

\begin{definition}
Let $G$ be a graph and let $(T,\{X_i:i \in V(T)\})$ be a tree decomposition of $G$. For a node $i$ of $T$, an \emph{instance} of {\sc EGS} is a $6$-tuple $I = (X_i, S_I, \int I, \ext I, \intint I,\allowbreak \extext I)$ composed of the bag $X_i$ of $i$, a subset $S_I$ of $X_i$, and four sets satisfying the following.
\begin{itemize}
    \item $\int I, \ext I \subseteq [\diam(G)]^{|X_i|}$.
    \item $\intint I,\extext I \subseteq \mathcal{P}_2([\diam(G)]^{|X_i|}) \times [\diam(G)]$.
    \item For each $\vec r \in \ext I$, there exists $x \notin V(G_i)$ such that $\vec{d_{X_i}}(x) = \vec r$.
    \item For each $(\{\vec r, \vec t\},d) \in \extext I$, there exist $x,y \notin V(G_i)$ such that $\vec{d_{X_i}}(x) = \vec r$, $\vec{d_{X_i}}(y) = \vec t$ and $d(x,y) = d$.
    \item $S_I \neq \emptyset$ or $\ext I \neq \emptyset$ or $\extext I \neq \emptyset$.
\end{itemize} 
\end{definition}

\begin{definition}
\label{def_instance-GS}
A set $S \subseteq V(G_i)$ is a solution for an instance $I$ of {\sc EGS} if the following hold.
\begin{itemize}
    \item \textbf{(S1)} Every vertex of $G_i$ is either covered by $S$, covered by a vertex in S and a vector in $\ext I$, or covered by an element of $\extext I$.
    \item \textbf{(S2)} For each $\vec{r} \in \int I$, there exists $s \in S$ such that $\vec{d_{X_i}}(s)= \vec{r}$. 
    \item \textbf{(S3)} For each $(\{\vec{r_1},\vec{r_2}\},d) \in \intint I$, there exist two distinct vertices $s_1,s_2 \in S$ such that $\vec{d_{X_i}}(s_1)= \vec{r_1}$, $\vec{d_{X_i}}(s_2)= \vec{r_2}$, and $d(s_1,s_2) = d$.
    \item \textbf{(S4)} $S \cap X_i =S_I$. 
\end{itemize}
\end{definition}

In the remainder of this section, for brevity, we will refer to an instance of the {\sc EGS} problem only as an instance.

\begin{definition}
\label{def_dim-GS}
Let $I$ be an instance. We denote by $\dim (I)$ the minimum size of a set $S \subseteq V(G_i)$ which is a solution for $I$. If no such set exists, then we set $\dim (I)= + \infty$. We refer to this value as the \emph{extended geodetic set number} of $I$.
\end{definition}

In the following, we fix a graph $G$ and a nice tree decomposition $(T,\{X_i:i \in V(T)\})$ of $G$. Given a node $i$ of $T$ and an instance $I$ for $i$, we show how to compute $\dim(I)$. The proof is divided according to the type of the node $i$.\\

\noindent
{\bf Leaf node.} Computing $\dim(I)$ when $I$ is an instance for a leaf node can be done with the following lemma.

\begin{lemma}
\label{lem:leaf_node-GS}
Let $I$ be an instance for a leaf node $i$ and let $v$ be the only vertex in $X_i$. Then,
\begin{center}
$\dim (I) =
\begin{cases}
0 & \text{if } S_I = \emptyset, \int I = \emptyset, \text{ and } \intint I = \emptyset \\
1 & \text{if } S_I = \{v\}, \int I \subseteq \{(0)\}, \text{ and } \intint I = \emptyset\\
+\infty & \text{otherwise}
\end{cases}
$
\end{center}
\end{lemma}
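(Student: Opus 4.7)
The plan is to exploit that a leaf node $i$ has $G_i$ consisting of the single vertex $v$ (the unique element of $X_i$), so any candidate solution $S \subseteq V(G_i)$ is one of only two sets: $\emptyset$ or $\{v\}$. Condition \textbf{(S4)} of Definition~\ref{def_instance-GS} forces $S = S_I$, so unless $S_I \in \{\emptyset, \{v\}\}$ no solution exists and the \textsc{otherwise} branch immediately gives $\dim(I) = +\infty$.

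First I would handle the case $S_I = \emptyset$, where the only candidate is $S = \emptyset$. Since $S$ is empty, condition \textbf{(S2)} cannot be satisfied unless $\int I = \emptyset$, and condition \textbf{(S3)} cannot be satisfied (it requires two distinct vertices in $S$) unless $\intint I = \emptyset$. Under these two constraints together with the covering condition \textbf{(S1)} for $v$ being witnessed (which, when $S = \emptyset$, can only come from $v$ being covered by some element of $\extext I$ that is guaranteed by the instance's structure), $S = \emptyset$ is a valid solution of size $0$; otherwise $\dim(I) = +\infty$.

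Next I would handle $S_I = \{v\}$, where the only candidate is $S = \{v\}$. Condition \textbf{(S1)} is automatic since $v \in S$. For \textbf{(S2)}, the only vector realized by a vertex of $S$ is $\vec{d_{X_i}}(v) = (0)$, so we must have $\int I \subseteq \{(0)\}$. For \textbf{(S3)}, since $|S| = 1$ admits no pair of distinct vertices, we need $\intint I = \emptyset$. When these conditions hold, $S = \{v\}$ is a valid solution of size $1$; otherwise no solution exists and $\dim(I) = +\infty$.

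The main (and essentially only) subtlety is that the statement's first case implicitly relies on $v$ being covered externally: for $\dim(I) = 0$ to hold, the empty set must satisfy \textbf{(S1)}, which requires $v$ to be covered by some $(\{\vec{r_1},\vec{r_2}\},d) \in \extext I$. This is a local check that can be performed in constant time, and if it fails the value is $+\infty$, consistent with the \textsc{otherwise} clause. All other verifications reduce to the trivial observation that with $|S| \leq 1$, no pair of distinct vertices from $S$ exists, so $\intint I$ must be empty in every finite case.
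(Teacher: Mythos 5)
Your handling of (S2)--(S4) is exactly the paper's two-line argument: since $V(G_i)=X_i=\{v\}$, condition (S4) leaves only $S=\emptyset$ or $S=\{v\}$ as candidates; $S=\emptyset$ forces $\int{I}=\emptyset$ and $\intint{I}=\emptyset$, and $|S|\le 1$ forces $\int{I}\subseteq\{(0)\}$ and $\intint{I}=\emptyset$ in the second case. Where you diverge is condition (S1), and that is where your write-up breaks down. Your second paragraph asserts that, when $S=\emptyset$, coverage of $v$ by an element of $\extext{I}$ ``is guaranteed by the instance's structure.'' It is not: the definition of an instance only guarantees that each $(\{\vec{r},\vec{t}\},d)\in\extext{I}$ is realized by a pair of vertices outside $G_i$ with $d(x,y)=d$; it does not guarantee $\vec{r}_1+\vec{t}_1=d$ (i.e.\ that $v$ lies on a corresponding shortest path), and indeed $\extext{I}$ may be empty altogether while $\ext{I}\neq\emptyset$, in which case $v$ cannot be covered by the empty set at all. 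Your closing paragraph then retracts this and treats coverage of $v$ as a check that may fail, claiming that a failure is ``consistent with the otherwise clause'' --- but the three cases of the lemma are distinguished solely by the conditions on $S_I$, $\int{I}$, and $\intint{I}$, so an instance with $S_I=\emptyset$, $\int{I}=\emptyset$, $\intint{I}=\emptyset$ in which no element of $\extext{I}$ covers $v$ still falls under the first case; you cannot push it into ``otherwise.'' The two halves of your argument for the $S_I=\emptyset$ case thus contradict each other, and neither, as written, establishes $\dim(I)=0$ under the stated hypotheses.

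To be fair, you have put your finger on the one genuinely delicate point: the paper's own proof silently ignores (S1) in the $S=\emptyset$ case (in the \mdfull\ analogue, Lemma~\ref{lem:leaf_node}, (S1) is vacuous because a one-vertex $G_i$ has no vertex pairs, but here it is not vacuous). A clean treatment of the first case must either verify explicitly that some $(\{\vec{r},\vec{t}\},d)\in\extext{I}$ satisfies $\vec{r}_1+\vec{t}_1=d$ (the constant-time check you mention) and account for this condition in the statement, or argue that the parent-node recurrences only ever query leaf instances with this property --- but one cannot simply declare the coverage automatic, nor fold its failure into the ``otherwise'' branch. Your $S_I=\{v\}$ case is fine and matches the paper: $v\in S$ covers $v$, so (S1) is immediate there.
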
 

\begin{proof}
Suppose first that $S_I = \emptyset$. Then, the empty set is the only possible solution for $I$; and the empty set is a solution for $I$ only if $\int I = \emptyset$ and $\intint I = \emptyset$. Suppose next that $S_I = \{v\}$. Then, the set $S = \{v\}$ is the only possible solution for $I$; and this set is a solution for $I$ only if $\int I = \emptyset$ or $\int I$ contains only the vector $\vec{d_{X_i}}(v)= (0)$, and $\intint I = \emptyset$.
\end{proof}

In the remainder of this section, we handle the three other types of nodes. For each type of node, we proceed as follows: we first define a notion of compatibility on the instances for the child/children of a node $i$ and show how to compute the extended geodetic set number of an instance $I$ for $i$ from the extended geodetic set number of instances for the child/children of $i$ compatible with $I$.\\

\noindent
{\bf Join node.} Let $I$ be an instance for a join node $i$, and let $i_1$ and $i_2$ be the two children of $i$. In the following, we let $X_i = \{v_1, \ldots, v_{k}\} = X_{i_1} = X_{i_2}$. 

\begin{definition}
\label{def:compatible_pair-GS}
A pair of instances $(I_1, I_2)$ for $(i_1,i_2)$ is \emph{compatible} with $I$ if the following hold.
\begin{itemize}
    \item \textbf{(J1)} $S_{I_1}=S_{I_2}=S_I$.
    \item \textbf{(J2)} $\int{I} \subseteq \int{I_1} \cup \int {I_2}$.
    \item \textbf{(J3)} $\ext {I_1} \subseteq \ext{I} \cup \int{I_2} $ and $\ext {I_2} \subseteq \ext{I} \cup \int{I_1}$.
    \item \textbf{(J4)} Let 
    $D_1 = \{(\{\vec{r_1},\vec{r_2}\},\min_{1 \leq j \leq k} (\vec{r_1} + \vec{r_2})_j) \mid\vec{r_1} \in \ext I, \vec{r_2} \in \int{I_2}\}.$
    Then, $\extext{I_1} \subseteq \extext I \cup \intint{I_2} \cup D_1$. 
    
    Symmetrically, let 
        $D_2 = \{(\{\vec{r_1},\vec{r_2}\},\min_{1 \leq j \leq k} (\vec{r_1} + \vec{r_2})_j) \mid\vec{r_1} \in \ext I, \vec{r_2} \in \int{I_1}\}.$
        Then, $\extext{I_2} \subseteq \extext I \cup \intint{I_1} \cup D_2$. 
     \item \textbf{(J5)} Let 
     $F = \{(\{\vec{r_1},\vec{r_2}\},\min_{1 \leq j \leq k} (\vec{r_1} + \vec{r_2})) \mid \vec{r_1} \in \int{I_1}, \vec{r_2} \in \int{I_2}\}.$
     Then, $\intint I \subseteq \intint{I_1} \cup \intint{I_2} \cup F$.
\end{itemize}
\end{definition}

Let $\mathcal{F}_J(I)$ be the set of pairs of instances compatible with~$I$. We aim to prove the following.

\begin{lemma}
\label{lem:join_node-GS}
Let $I$ be an instance for a join node $i$. Then, 
\[ 
\dim(I) = \min_{(I_1,I_2) \in \mathcal{F}_J(I)} (\dim(I_1) + \dim(I_2) - |S_I|). 
\]
\end{lemma}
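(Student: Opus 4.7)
The plan is to mirror the proof strategy used for Lemma~\ref{lem:join_node} in the metric dimension setting, establishing the equality in Lemma~\ref{lem:join_node-GS} via two inequalities handled by separate lemmas, each following the blueprint of the corresponding argument in Section~\ref{subsec:algo-tw-diam-MD}. The extra complication compared to metric dimension is that a vertex's coverage now depends on a \emph{pair} of witnesses together with the distance between them, so each compatibility condition (J2)--(J5) must be reconciled with how shortest paths in $G_i$ behave when crossing the separator $X_i = X_{i_1} = X_{i_2}$. Lemma~\ref{lem:covering} will be the workhorse that converts statements about distance vectors back into statements about actual vertices.

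\textbf{Upper bound ($\leq$).} I would start by fixing $(I_1,I_2) \in \mathcal{F}_J(I)$ with minimum-size solutions $S_1,S_2$ and showing that $S := S_1 \cup S_2$ is a solution for $I$. Condition \textbf{(S4)} is immediate from (J1). For \textbf{(S2)}, any $\vec r \in \int I$ lies in $\int{I_1}$ or $\int{I_2}$ by (J2), and the appropriate $S_j$ supplies a vertex with distance vector $\vec r$ to $X_i$. For \textbf{(S3)}, I would case-split on whether $(\{\vec{r_1},\vec{r_2}\},d) \in \intint I$ came from $\intint{I_1}$, $\intint{I_2}$, or the ``crossing'' set $F$; in the crossing case, the vertices $s_1 \in S_1$ and $s_2 \in S_2$ exist by (S2) applied to $I_1,I_2$, and their distance in $G$ equals $\min_{1\le j\le k}(\vec{r_1}+\vec{r_2})_j = d$ because $X_i$ separates them. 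For \textbf{(S1)}, a vertex $x \in V(G_{i_1})$ (the $G_{i_2}$ case is symmetric) is, by (S1) for $I_1$, covered by $S_1$ directly, or by $S_1$ and some $\vec r \in \ext{I_1}$, or by some element of $\extext{I_1}$. Condition (J3) lets me turn an $\ext{I_1}$-witness either back into an $\ext I$-witness (fine via Lemma~\ref{lem:covering}(2)) or into a vertex of $S_2$ via $\int{I_2}$ (giving two genuine vertices of $S$ by Lemma~\ref{lem:covering}(4) applied in $G$). Condition (J4) handles $\extext{I_1}$-witnesses analogously: elements coming from $\extext I$ stay as pairs in $G-G_i$ via Lemma~\ref{lem:covering}(1); those coming from $\intint{I_2}$ are realized by two vertices of $S_2$; those from the auxiliary set $D_1$ are realized by one vertex of $S_2$ (from $\int{I_2}$) paired with some external vertex matching $\vec{r_1} \in \ext I$, and Lemma~\ref{lem:covering}(3) ensures the coverage transfers correctly.

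\textbf{Lower bound ($\geq$).} Starting from a minimum solution $S$ of $I$, set $S_j = S \cap V(G_{i_j})$. I would define instances $I_1, I_2$ that record the minimal information each side needs to see of the other:
\begin{itemize}
\item $S_{I_j} := S_I$,\ $\int{I_j} := \vec{d_{X_{i_j}}}(S_j)$;
\item $\ext{I_1} := \ext I \cup \int{I_2}$ and $\ext{I_2} := \ext I \cup \int{I_1}$;
\item $\intint{I_j} := \{(\{\vec{d_{X_i}}(s),\vec{d_{X_i}}(s')\},d(s,s')) : s,s' \in S_j,\ s \ne s'\}$;
\item $\extext{I_j}$ consists of the $\extext I$ elements \emph{plus} the ``crossing'' pairs manufactured from one vertex of $S_{3-j}$ and one vertex of $G - G_i$, plus the fully external pairs inherited from $\extext I$.
\end{itemize}
Compatibility (J1)--(J5) follows by direct verification from the definitions, noting that the auxiliary sets $D_1,D_2,F$ in Definition~\ref{def:compatible_pair-GS} are exactly what is needed to absorb the cross-side coverage/pair witnesses. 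To show $S_j$ solves $I_j$, for \textbf{(S1)} applied to a vertex $x \in V(G_{i_j})$, I would re-examine how $x$ is covered by $S$ in $G_i$: the covering witness(es) either both lie in $G_{i_j}$ (done), lie partly in $G_{i_{3-j}}$ (absorbed via $\int{I_{3-j}} \subseteq \ext{I_j}$ and Lemma~\ref{lem:covering}(2)), or involve external vertices (absorbed via the $\ext I, \extext I$ sets). Properties (S2)--(S4) for $I_j$ follow by construction.

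\textbf{Main obstacle.} The subtle point I expect to spend time on is verifying the distance equalities when ``splicing'' together witnesses from the two sides. Concretely, in the $F$ case of (S3) and in the $D_j$ case of (S1), the distance between an $S_1$-vertex $s_1$ and an $S_2$-vertex $s_2$ (or between an external vertex and a vertex of $S_j$) in $G$ must equal $\min_{1\le j\le k}(\vec{r_1}+\vec{r_2})_j$; this relies on the separator property of $X_i$ but also on checking that shortest paths realizing this minimum actually exist. The same issue arises in reverse when defining $\intint{I_j}$ and $\extext{I_j}$ in the lower bound direction: I must confirm that distances restricted to pieces of $G_i$ coincide with distances in $G$, which again uses that $X_{i_1}$ and $X_{i_2}$ separate $G_{i_1} - X_i$ from $G - G_{i_1}$ and symmetrically. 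Once these distance bookkeeping arguments are in place, both inequalities follow cleanly by accounting: $|S| = |S_1| + |S_2| - |S_I|$ yields the matching inequality.
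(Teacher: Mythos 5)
Your proposal follows the paper's proof essentially step for step: the upper bound unions minimum solutions of a compatible pair and verifies (S1)--(S4) via (J1)--(J5) and \Cref{lem:covering}; the lower bound splits a minimum solution $S$ into $S_1,S_2$, builds child instances with $\int{I_j}=\vec{d_{X_i}}(S_j)$, $\ext{I_1}=\ext{I}\cup\int{I_2}$ (and symmetrically), $\intint{I_j}$ the realized pairs inside $S_j$, checks compatibility and that $S_j$ solves $I_j$, and concludes by the accounting $|S|=|S_1|+|S_2|-|S_I|$. This is exactly the paper's route, and the ``distance bookkeeping'' you flag is indeed handled there just by the separator property of $X_i$.

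One detail in your lower-bound construction as written would fail: you define $\extext{I_j}$ as the elements of $\extext{I}$ plus the crossing pairs formed by one vertex of $S_{3-j}$ and one vertex of $G-G_i$ (i.e., $D_j$), but you omit the pairs coming from \emph{two} vertices of $S_{3-j}$. A vertex $x\in V(G_{i_1})$ may be covered in $G$ only by two vertices $s,s'\in S_2$; neither lies in $S_1$, so condition (S1) for $I_1$ can only be certified through an element of $\extext{I_1}$, and the needed element $(\{\vec{d_{X_i}}(s),\vec{d_{X_i}}(s')\},d(s,s'))$ belongs to $\intint{I_2}$ but not to $\extext{I}$ (both vertices are inside $G_i$) nor to $D_1$. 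The paper therefore sets $\extext{I_1}=\extext{I}\cup\intint{I_2}\cup D_1$ and symmetrically for $I_2$; with that correction your verification of (S1), and the rest of the argument, goes through as you outline, and it is consistent with your upper bound, which already absorbs the $\intint{I_2}$ case through (J4). The remaining discrepancies (e.g., citing \Cref{lem:covering}(4) where item (2) is the relevant one) are cosmetic.
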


To prove \Cref{lem:join_node-GS}, we prove the following two lemmas.

\begin{lemma}
Let $(I_1,I_2)$ be a pair of instances for $(i_1,i_2)$ compatible with $I$ such that $\dim(I_1)$ and $\dim(I_2)$ have finite values. Let $S_1$ be a minimum-size solution for $I_1$ and $S_2$ a minimum-size solution for $I_2$. Then, $S = S_1 \cup S_2$ is a solution for $I$. In particular, 
\[ 
\dim(I) \leq \min_{(I_1,I_2) \in \mathcal{F}_J(I)} (\dim(I_1) + \dim(I_2) - |S_I|). 
\]
\end{lemma}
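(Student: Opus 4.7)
The plan is to verify each of the four conditions (S1)--(S4) of \Cref{def_instance-GS} for the set $S=S_1\cup S_2$, following the same template used in the analogous argument for the join node in the \mdfull\ algorithm (\Cref{lem:join_node}), but now adapted so that ``resolving pairs'' is replaced by ``covering vertices via shortest paths''. The key technical tool throughout will be \Cref{lem:covering}, which lets us translate statements about distance vectors (used to describe $\ext{\cdot}$ and $\extext{\cdot}$) into statements about concrete vertices in $G-G_{i_1}$ (or $G-G_{i_2}$), by exploiting that $X_i=X_{i_1}=X_{i_2}$ separates $G_{i_1}-X_i$ from $G_{i_2}-X_i$ and both from $G-G_i$.

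Condition (S4) is immediate from (J1), and (S2) is a direct consequence of (J2) together with the fact that $S_j$ realises every vector in $\int{I_j}$. For (S3), consider $(\{\vec{r_1},\vec{r_2}\},d)\in\intint I$. By (J5) the element lies in $\intint{I_1}\cup\intint{I_2}\cup F$. In the first two cases we conclude directly using that $S_j$ realises $\intint{I_j}$. In the case of $F$, $\vec{r_1}\in\int{I_1}$ and $\vec{r_2}\in\int{I_2}$ yield vertices $s_1\in S_1$ and $s_2\in S_2$ with the right distance vectors, and since $X_i$ separates them, $d(s_1,s_2)=\min_{1\le j\le k}(\vec{r_1}+\vec{r_2})_j=d$, as required.

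The main obstacle is (S1), as it requires a careful case analysis. Take $x\in V(G_i)$ and assume by symmetry that $x\in V(G_{i_1})$. Since $S_1$ is a solution for $I_1$, one of three things happens: $(a)$ $x$ is covered by $S_1$, in which case we are done since $S_1\subseteq S$; $(b)$ $x$ is covered by some $s\in S_1$ together with a vector $\vec r\in\ext{I_1}$; or $(c)$ $x$ is covered by some $(\{\vec{r_1},\vec{r_2}\},d)\in\extext{I_1}$. In case $(b)$, (J3) yields $\vec r\in\ext I$ (done) or $\vec r\in\int{I_2}$; in the latter sub-case $S_2$ supplies a vertex $s_2$ with $\vec{d_{X_i}}(s_2)=\vec r$, and \Cref{lem:covering}(2) upgrades the ``$s$ and $\vec r$'' covering to coverage of $x$ by the pair $\{s,s_2\}\subseteq S$. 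In case $(c)$, (J4) splits the element into three possibilities: it lies in $\extext I$ (done); it lies in $\intint{I_2}$, and then $S_2$ yields two vertices $u,v$ with the correct distance vectors and mutual distance $d$, so \Cref{lem:covering}(1) gives that $x$ is covered by $\{u,v\}\subseteq S$; or it lies in $D_1$, meaning $\vec{r_1}\in\ext I$, $\vec{r_2}\in\int{I_2}$ and $d=\min_{j}(\vec{r_1}+\vec{r_2})_j$, so $S_2$ yields $s_2$ with $\vec{d_{X_i}}(s_2)=\vec{r_2}$, and then \Cref{lem:covering}(3) gives that $x$ is covered by $s_2$ together with $\vec{r_1}\in\ext I$. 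The case $x\in V(G_{i_2})$ is completely symmetric using the symmetric halves of (J3) and (J4). Finally, since $|S|=|S_1|+|S_2|-|S_{I_1}\cap S_{I_2}|=\dim(I_1)+\dim(I_2)-|S_I|$ by (J1) and (S4) applied to $I_1,I_2$, and the above works for every compatible pair, the claimed inequality on $\dim(I)$ follows.
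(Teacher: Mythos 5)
Your proposal is correct and follows essentially the same route as the paper's own proof: the same verification of (S1)–(S4) via (J1)–(J5), the identical case analysis for (S1) (covered by $S_1$; by $s\in S_1$ and $\ext{I_1}$ split via (J3); by $\extext{I_1}$ split via (J4) into $\extext I$, $\intint{I_2}$, and $D_1$) using parts (1)–(3) of the covering lemma, and the same handling of $F$ in (S3) via the separator property of $X_i$. The concluding size count $|S|=|S_1|+|S_2|-|S_I|$ also matches the paper's argument.
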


\begin{proof}
Let us show that every condition of \Cref{def_instance-GS} is satisfied.\\

\noindent
{\bf (S1)} Let $x$ be a vertex of $G_i$ and assume, without loss of generality, that $x \in V(G_{i_1})$ (the case where $x \in V(G_{i_2})$ is symmetric). Then, since $S_1$ is a solution for $I_1$, either $x$ is covered by $S_1$, in which case we are done; or (1) $x$ is covered by a vertex $s \in S_1$ and a vector $\vec r \in \ext{I_1}$; or (2) $x$ is covered by an element $(\{\vec{r_1},\vec{r_2}\},d) \in \extext{I_1}$. 

Suppose first that (1) holds. Then, by compatibility, either $\vec r \in \ext I$, in which case we are done, or $\vec r \in \int{I_2}$. In the latter case, since $S_2$ is a solution for $I_2$, there exists $s_2 \in S_2$ such that $\vec r = \vec{d_{X_{i_2}}}(s_2) = \vec{d_{X_i}}(s_2)$; but then, by \Cref{lem:covering}(2), $x$ is covered by $s_1,s_2 \in S$. 

Suppose next that (2) holds. Then, by compatibility, either $(\{\vec{r_1},\vec{r_2}\},d) \in \extext I$, in which case we are done; or (i) $(\{\vec{r_1},\vec{r_2}\},d) \in \intint{I_2}$; or (ii) $(\{\vec{r_1},\vec{r_2}\},d) \in D_1$. Now, if (i) holds, then, since $S_2$ is a solution for $I_2$, there exist $s_1,s_2 \in S_2$ such that $\vec{r_1} = \vec{d_{X_{i_2}}}(s_1) = \vec{d_{X_i}}(s_1)$, $\vec{r_2} = \vec{d_{X_{i_2}}}(s_2) = \vec{d_{X_i}}(s_2)$, and $d = d(s_1,s_2)$; but then, by \Cref{lem:covering}(1), $x$ is covered by $s_1,s_2 \in S$. Thus, suppose that (ii) holds. Then, since $S_2$ is a solution for $I_2$, there exists $s_2 \in S_2$ such that, say, $\vec{r_2} = \vec{d_{X_{i_2}}}(s_2) = \vec{d_{X_i}}(s_2)$; but then, by \Cref{lem:covering}(3), $x$ is covered by $s_2 \in S$ and $\vec{r_1} \in \ext I$.\\

\noindent
{\bf (S2)} Consider a vector $\vec r \in \int I$. Then, by compatibility, $\vec r \in \int{I_1} \cup \int{I_2}$, say $\vec r \in \int{I_1}$ without loss of generality. Now, $S_1$ is a solution for $I_1$, and so, there exists $s_1 \in S_1 \subseteq S$ such that $\vec r = \vec{d_{X_{i_1}}}(s_1) = \vec{d_{X_i}}(s_1)$.\\

\noindent
{\bf (S3)} Consider an element $(\{\vec{r_1},\vec{r_2}\},d) \in \intint I$. Then, by compatibility, $(\{\vec{r_1},\vec{r_2}\},d) \in \intint{I_1} \cup \intint{I_2} \cup F$. Now, if $(\{\vec{r_1},\vec{r_2}\},d) \in \intint{I_1}$, then, since $S_1$ is a solution for $I_1$, there exist $s_1,s_2 \in S_1 \subseteq S$ such that $\vec{r_1} = \vec{d_{X_{i_1}}}(s_1) = \vec{d_{X_i}}(s_1)$, $\vec{r_2} = \vec{d_{X_{i_1}}}(s_2) = \vec{d_{X_i}}(s_2)$, and $d(s_1,s_2) = d$; and we conclude symmetrically if $(\{\vec{r_1},\vec{r_2}\},d) \in \intint{I_2}$. Thus, suppose that $(\{\vec{r_1},\vec{r_2}\},d) \in F$. Then, since $S_1$ and $S_2$ are solutions for $I_1$  and $I_2$, respectively, there exist $s_1 \in S_1$ and $s_2 \in S_2$ such that $\vec{r_1} = \vec{d_{X_{i_1}}}(s_1) = \vec{d_{X_i}}(s_1)$ and $\vec{r_2} = \vec{d_{X_{i_1}}}(s_2) = \vec{d_{X_i}}(s_2)$; but then, since $X_i$ separates $s_1$ and $s_2$, $\min_{1 \leq j \leq k} (\vec{r_1} + \vec{r_2})_j = d(s_1,s_2)$.\\

\noindent
{\bf (S4)} By compatibility, $S_{I_1} = S_{I_2} = S_I$, and thus, $S \cap X_i = S \cap X_{i_1} = S_{I_1} = S_I$.\\  

It now follows from the above that $\dim(I) \leq |S| = |S_1| + |S_2| - |S_I| = \dim(I_1)+\dim(I_2)-|S_I|$; and since this holds true for any $(I_1,I_2) \in \mathcal{F}_J(I)$, the lemma follows.
\end{proof}

\begin{lemma}
Let $I$ be an instance for a join node $i$. Then, 
\[ 
\dim(I) \geq \min_{(I_1,I_2) \in \mathcal{F}_J(I)} (\dim(I_1) + \dim(I_2) - |S_I|). 
\]
\end{lemma}

\begin{proof}
If $\dim(I) = + \infty$, then the inequality readily holds. Thus, assume that $\dim(I) < + \infty$ and let $S$ be a minimum-size solution for $I$. For $j \in \{1,2\}$, let $S_j = S \cap V(G_{i_j})$. Now, let $I_1$ and $I_2$ be the two instances for $i_1$ and $i_2$, respectively, defined as follows. 
\begin{itemize}
\item$S_{I_1} = S_{I_2} = S_I$.
\item $\int{I_1} = \vec{d_{X_i}}(S_1)$ and $\int{I_2} = \vec{d_{X_i}}(S_2)$.
\item $\ext{I_1} = \ext{I} \cup \int{I_2}$ and $\ext{I_2} = \ext{I} \cup \int{I_1}$. 
\item $\intint{I_1} = \{(\{\vec{d_{X_{i_1}}}(s_1),\vec{d_{X_{i_1}}}(s_2)\},d(s_1,s_2)) \mid s_1,s_2 \in S_1\}$ and $\intint{I_2} = \{ \allowbreak (\{\vec{d_{X_{i_1}}}(s_1),\allowbreak \vec{d_{X_{i_1}}}(s_2)\},d(s_1,s_2)) \mid s_1,s_2 \in S_2\}$.
\item $\extext{I_1} = \extext I \cup \intint{I_2} \cup D_1$ and $\extext{I_2} = \extext I \cup \intint{I_1}\allowbreak \cup D_2$ (see \Cref{def:compatible_pair-GS} for the definitions of $D_1$ and $D_2$).
\end{itemize}
Let us show that the pair of instances $(I_1,I_2)$ is compatible with $I$ and that, for $j \in \{1,2\}$, $S_j$ is a solution for $I_j$.

\begin{claim}
\label{clm:compatible-GS}
The constructed pair of instances $(I_1,I_2)$ for $(i_1,i_2)$ is compatible with $I$.
\end{claim}

\begin{claimproof}
It is clear that conditions {\bf (J1)} through {\bf (J4)} of \Cref{def:compatible_pair-GS} hold; let us show that condition {\bf (J5)} holds as well.\\ 

\noindent
{\bf (J5)} Consider an element $(\{\vec r, \vec t\}, d) \in \intint I$. Since $S$ is a solution for $I$, there exist $x,y \in S$ such that $\vec{d_{X_i}}(x) = \vec r$, $\vec{d_{X_i}}(y) = \vec t$, and $d(x,y) = d$. Now, if $x,y \in S_1$, then, by construction, $(\{\vec{d_{X_{i_1}}}(x),\vec{d_{X_{i_1}}}(y)\},d(x,y)) \in \intint{I_1}$; and we conclude symmetrically if $x,y \in S_2$. Thus, assume, without loss of generality, that $x \in S_1$ and $y \in S_2$. Then, by construction, $\vec{d_{X_{i_1}}}(x) \in \int{I_1}$ and $\vec{d_{X_{i_2}}}(y) \in \int{I_2}$; and since $X_i$ separates $x$ and $y$, $d = d(x,y) = \min_{1 \leq j \leq k} (\vec{d_{X_i}}(x) + \vec{d_{X_i}}(y))_j$, that is, $(\{\vec r, \vec t\}, d) \in F$.
\end{claimproof}

\begin{claim}
\label{clm:compatible_sol-GS}
For every $j \in \{1,2\}$, $S_j$ is a solution for $I_j$.
\end{claim}

\begin{claimproof}
We only prove that $S_1$ is a solution for $I_1$ as the other case is symmetric. To this end, let us show that every condition of \Cref{def_instance-GS} is satisfied.\\

\noindent
{\bf (S1)} Consider a vertex $x$ of $G_{i_1}$. Then, since $V(G_{i_1}) \subseteq V(G_i)$ and $S$ is a solution for $I$, either (1) $x$ is covered by two vertices $s_1,s_2 \in S$; or (2) $x$ is covered by a vertex $s \in S$ and a vector $ \vec r \in \ext I$; or (3) $x$ is covered by an element of $\extext I$. Since, by construction, $\extext I \subseteq \extext{I_1}$, let us assume that (3) does not hold (we are done otherwise).

Suppose first that (1) holds and assume that at least one of $s_1$ and $s_2$ does not belong to $S_1$ (we are done otherwise), say $s_2 \notin S_1$ without loss of generality. Then, $\vec{d_{X_{i_2}}}(s_2) \in \int{I_2} \subseteq \ext{I_1}$ by construction, and thus, if $s_1 \in S_1$, then $x$ is covered by $s_1 \in S_1$ and $\vec{d_{X_{i_1}}}(s_2) \in \ext{I_1}$. Suppose therefore that $s_1,s_2 \in S_2$. Then, by construction, $(\{\vec{d_{X_i}}(s_1),\vec{d_{X_i}}(s_2)\},d(s_1,s_2)) \in \intint{I_2}$; but $\intint{I_2} \subseteq \extext{I_1}$ by construction, and thus, $x$ is covered by an element of $\extext{I_1}$. 

Second, suppose that (2) holds. Then, by construction, $\vec r \in \ext{I_1}$, and thus, if $s \in S_1$, then we are done. Now, if $s \in S_2$, then $\vec{d_{X_{i_2}}}(s) \in \int{I_2}$ which implies that $(\{\vec r, \vec{d_{X_i}}(s)\}, \min_{1 \leq j \leq k} (\vec r+\vec{d_{X_i}}(s))_j) \in D_1$; but then, $x$ is covered by an element of $\extext{I_1}$.\\

\noindent
{\bf (S2)} and {\bf (S3)} readily follow from the fact that, by construction, $\int{I_1} = \vec{d_{X_i}}(S_1)$ and $\intint{I_1} = \{(\{\vec{d_{X_{i_1}}}(s_1),\vec{d_{X_{i_1}}}(s_2)\},d(s_1,s_2)) \mid s_1,s_2 \in S_1\}$, respectively.\\

\noindent
{\bf (S4)} By construction, $S_{I_1} = S_I$, and thus, $S \cap X_{i_1} = S \cap X_i = S_I = S_{I_1}$.
\end{claimproof}

To conclude, since the sets $S_1$ and $S_2$ are solutions for $I_1$ and $I_2$, respectively, $\dim(I_1) \leq |S_1|$ and $\dim(I_2) \leq |S_2|$. Now, $|S| = |S_1| + |S_2| - |S_I|$, and so, $\dim(I) = |S| \geq \dim(I_1) + \dim(I_2) - |S_I| \geq \min_{(J_1,J_2) \in \mathcal{F}_J(I)} (\dim(J_1) + \dim(J_2) - |S_I|)$.
\end{proof}

\noindent
{\bf Introduce node.} Let $I$ be an instance for an introduce node $i$ with child $i_1$, and let $v \in V(G)$ be such that $X_i = X_{i_1} \cup \{v\}$. In the following, we let $X_i = \{v_1,\ldots,v_k\}$ where $v = v_k$. 

\begin{definition}
\label{def:compatible_introduce-GS}
An instance $I_1$ for $i_1$ is compatible with $I$ of \emph{type 1} if the following hold.
\begin{itemize}
\item {\bf (I1)} $S_I = S_{I_1}$.
\item {\bf (I2)} For each $\vec{r} \in \int{I}$, $\vec{r}_k = \min_{1 \leq j \leq k-1} (\vec r + \vec{d_{X_{i_1}}}(v))_j$ and $\vec{r^-}\in \int{I_1}$. 
\item {\bf (I3)} For each $\vec r \in \ext{I_1}$, there exists $\vec t \in \ext I$ such that $\vec{t^-} = \vec r$.
\item {\bf (I4)} For each $(\{\vec r, \vec t\},d) \in \intint I$, $\vec{r}_k = \min_{1 \leq j \leq k-1} (\vec r + \vec{d_{X_{i_1}}}(v))_j$, $\vec{t}_k = \min_{1 \leq j \leq k-1} (\vec t \allowbreak+ \vec{d_{X_{i_1}}}(v))_j$, and $(\{\vec{r^-},\vec{t^-}\},d) \in \intint{I_1}$.
\item {\bf (I5)} For each $(\{\vec{r_1},\vec{r_2}\},d) \in \extext{I_1}$, there exists $(\{\vec{t_1},\vec{t_2}\},d) \in \extext{I}$ such that $\vec{t_1^-} = \vec{r_1}$ and $\vec{t_2^-} = \vec{r_2}$.
\item {\bf (I6)} One of the following holds.
\begin{itemize}
\item $v$ is covered by an element of $\intint{I_1}$.
\item There exist $\vec r \in \int{I_1}$ and $\vec t \in \ext I$ such that $v$ is covered by $(\{\vec{r|}d, \vec t\},\min_{1 \leq j \leq k} (\vec{r|}d + \vec t)_j)$ where $d = \min_{1 \leq j \leq k-1}(\vec r + \vec{d_{X_{i_1}}}(v))_j$.
\item $v$ is covered by an element of $\extext I$.
\end{itemize} 
\end{itemize}

\noindent
An instance $I_1$ for $i_1$ is compatible with $I$ of \emph{type 2} if the following hold.
\begin{itemize}
\item {\bf (I'1)} $S_I = S_{I_1} \cup \{v\}$.
\item {\bf (I'2)} For each $\vec{r} \in \int{I} \setminus \{\vec{d_{X_i}}(v)\}$, $\vec{r}_k = \min_{1 \leq j \leq k-1} (\vec r + \vec{d_{X_{i_1}}}(v))_j$ and $\vec{r^-}\in \int{I_1}$. 
\item {\bf (I'3)} For each $\vec r \in \ext{I_1} \setminus \{\vec{d_{X_{i_1}}}(v)\}$, there exists $\vec t \in \ext I$ such that $\vec{t^-} = \vec r$.
\item {\bf (I'4)} For each $(\{\vec r, \vec t\},d) \in \intint I$, $\vec{r}_k = \min_{1 \leq j \leq k-1} (\vec r + \vec{d_{X_{i_1}}}(v))_j$ and $\vec{t}_k = \min_{1 \leq j \leq k-1} (\vec t + \vec{d_{X_{i_1}}}(v))_j$. Furthermore, one of the following holds:
\begin{itemize}
\item $(\{\vec{r^-}, \vec{t^-}\},d) \in \intint{I_1}$,
\item $\vec r = \vec{d_{X_i}}(v)$, $d = \vec{t}_k$, and $\vec{t^-} \in \int{I_1}$, or
\item $\vec t = \vec{d_{X_i}}(v)$, $d = \vec{r}_k$, and $\vec{r^-} \in \int{I_1}$.
\end{itemize}
\item {\bf (I'5)} For each $(\{\vec{r_1},\vec{r_2}\},d) \in \extext{I_1}$, one of the following holds:
\begin{itemize}
\item there exist $x,y \notin V(G_{i_1}) \cup \{v\}$ such that $\vec{d_{X_{i_1}}}(x) = \vec{r_1}$, $\vec{d_{X_{i_1}}}(y) = \vec{r_2}$, $d(x,y) = d$, and $(\{\vec{r_1|}d(x,v),\allowbreak \vec{r_2|}d(y,v)\},d) \in \extext{I}$
\item $\vec{r_1} = \vec{d_{X_{i_1}}}(v)$ and there exists $x \notin V(G_{i_1}) \cup \{v\}$ such that $\vec{d_{X_i}}(x) = \vec{r_2|}d$ and $\vec{r_2|}d \in \ext I$, or
\item $\vec{r_2} = \vec{d_{X_{i_1}}}(v)$ and there exists $x \notin V(G_{i_1}) \cup \{v\}$ such that $\vec{d_{X_i}}(x) = \vec{r_1|}d$ and $\vec{r_1|}d \in \ext I$.
\end{itemize}
\end{itemize}
\end{definition}

We denote by $\mathcal{F}_{1}(I)$ the set of instances for $i_1$ compatible with $I$ of type 1, and by $\mathcal{F}_{2}(I)$ the set of instances for $i_1$ compatible with $I$ of type 2. We aim to prove the following.

\begin{lemma}
\label{lem:introduce_node-GS}
Let $I$ be an instance for an introduce node $i$. Then, 
\[
\dim(I)=\min\;\{ \min_{I_1 \in \mathcal{F}_{1}(I)}\;\{\dim(I_1)\},\min_{I_2 \in \mathcal{F}_{2}(I)}\;\{\dim(I_2)+1\}\}.
\]
\end{lemma}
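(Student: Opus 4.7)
The plan is to mirror the structure of the analogous introduce-node proof for \textsc{Metric Dimension} (Lemma~\ref{lem:introduce_node}). We split the equality into two inequalities and handle each through two auxiliary lemmas, one per compatibility type. Throughout, we exploit that $X_{i_1}$ separates $v$ from every vertex of $V(G_{i_1}) \setminus X_{i_1}$, so distances through $v$ may be recovered from distance vectors to $X_{i_1}$.

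For the upper bound $\dim(I) \leq \min\{\min_{I_1 \in \mathcal{F}_1(I)} \dim(I_1), \min_{I_2 \in \mathcal{F}_2(I)}(\dim(I_2)+1)\}$, I would prove two lemmas. First, for $I_1 \in \mathcal{F}_1(I)$ with minimum-size solution $S$, show $S$ is a solution for $I$ by checking (S1)--(S4): (S1) for any $x \in V(G_i)$, if $x \in V(G_{i_1})$ use that $S$ solves $I_1$ together with conditions (I3) and (I5) to lift coverings via $\ext{I_1}$ and $\extext{I_1}$ into coverings via $\ext{I}$ and $\extext{I}$; for $x=v$ invoke condition (I6) and \Cref{lem:covering}(2,3); (S2) for $\vec r \in \int I$ use (I2) to find $s \in S$ with $\vec{d_{X_{i_1}}}(s)=\vec{r^-}$, noting $d(s,v)=\vec{r}_k$ since $X_{i_1}$ separates $s$ from $v$; (S3) for $(\{\vec r,\vec t\},d) \in \intint I$ invoke (I4) similarly and observe $d = d(s_1,s_2)$; (S4) follows from (I1). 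Second, for $I_1 \in \mathcal{F}_2(I)$ with solution $S$, show $S \cup \{v\}$ solves $I$: now (I6) is unnecessary since $v$ is in the solution, and for (S1) every $x \in V(G_{i_1})$ is still handled via (I'3) and (I'5); (S3) uses the three-case decomposition of (I'4), where the second and third cases cover the situation that $v$ itself realises one endpoint of the intint element.

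For the lower bound, given a minimum-size solution $S$ of $I$, split on whether $v \in S$. If $v \notin S$, define $I_1$ with $S_{I_1}=S_I$, $\ext{I_1} = \{\vec{r^-} \mid \vec r \in \ext I\}$, $\int{I_1} = \vec{d_{X_{i_1}}}(S)$, $\intint{I_1} = \{(\{\vec{d_{X_{i_1}}}(s_1),\vec{d_{X_{i_1}}}(s_2)\},d(s_1,s_2)) \mid s_1,s_2 \in S\}$, and $\extext{I_1}$ constructed from $\extext I$ by stripping the last coordinate of each vector (together with any pair $(\{\vec{r^-},\vec{t^-}\},d)$ that still admits realisers outside $V(G_{i_1})$). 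Then show $I_1 \in \mathcal{F}_1(I)$: (I1)--(I5) are immediate from the construction; for (I6), note that since $S$ solves $I$, the vertex $v$ itself must be covered, and the case analysis on where its coverers lie (both in $S$, one in $S$ one in $\ext I$, or both in $\ext I$/$\extext I$) matches the three bullets of (I6) via \Cref{lem:covering}(4). Verifying that $S$ solves $I_1$ is then routine. If $v \in S$, define $I_1$ analogously with $S_{I_1} = S_I \setminus \{v\}$, $\ext{I_1} = \{\vec{d_{X_{i_1}}}(v)\} \cup \{\vec{r^-} \mid \vec r \in \ext I\}$, $\int{I_1} = \vec{d_{X_{i_1}}}(S \setminus \{v\})$, and $\intint{I_1},\extext{I_1}$ defined similarly; here (I6) is dropped and instead the three-case structure of (I'4) and (I'5) precisely records whether $v$ participates as one of the two covering vertices.

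The main obstacle, compared to the \textsc{Metric Dimension} proof, is the bookkeeping forced by the pairs $(\{\vec r,\vec t\},d)$: whereas resolving in $\mdfull$ only needed distance vectors, here we must additionally track the distance $d$ between the two endpoints, and \Cref{lem:covering}(1)--(3) must be invoked in exactly the right form to transfer a covering by an $\extext$ element to a genuine covering by two external vertices. In particular, in the (S1) argument for type~1 at $x=v$, one must be careful that when $v$ is covered by a vertex $s \in S_1$ and a vector $\vec t \in \ext I$, the value $\min_{1\le j\le k}(\vec{d_{X_i}}(s)+\vec t)_j$ used in (I6) equals the length of a genuine shortest path from $s$ to a witness vertex in $G - G_{i_1}$, which again hinges on the separator property of $X_{i_1}$. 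Once this is handled, the remainder is a systematic case analysis matching each bullet of \Cref{def:compatible_introduce-GS} to a corresponding clause of \Cref{def_instance-GS}.
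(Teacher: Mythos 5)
Your proposal is correct and follows essentially the same route as the paper: the same four-lemma decomposition (two upper-bound lemmas for type-1 and type-2 compatibility, two lower-bound lemmas splitting on whether $v$ lies in the optimal solution), the same explicit constructions of the compatible child instance in each lower-bound case (including adding $\vec{d_{X_{i_1}}}(v)$ to $\ext{I_1}$ and recording $v$ as a possible external coverer in $\extext{I_1}$ when $v\in S$), and the same reliance on the separator property of $X_{i_1}$ together with the covering-transfer lemmas to move between $X_i$- and $X_{i_1}$-distance vectors. The only differences are presentational (your sketch leaves the $\extext{I_1}$ definition in the $v\in S$ case implicit and points to slightly different item numbers of the covering lemma), not substantive.
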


Before turning to the proof of \Cref{lem:introduce_node-GS}, we first show the following technical lemma.

\begin{lemma}
\label{lem:covering2-GS}
Let $x,s_1,s_2$ be three vertices of $G_{i_1}$ and let $\vec r, \vec{t_1}, \vec{t_2}$ be three vectors of size $k$ for which there exist $y,z_1,z_2 \notin V(G_i)$ such that $\vec{d_{X_i}}(y) = \vec r$, $\vec{d_{X_i}}(z_1) = \vec{t_1}$, and $\vec{d_{X_i}}(z_2) = \vec{t_2}$. Then, the following hold.
\begin{enumerate}
\item $x$ is covered by $s_1$ and $\vec{t_1}$ if and only if $x$ is covered by $s_1$ and $\vec{t_1^-}$.
\item $x$ is covered by $(\{\vec{t_1}, \vec{t_2}\},d(z_1,z_2))$ if and only if $x$ is covered by $(\{\vec{t_1^-},\vec{t_2^-}\},d(z_1,z_2))$.
\item $\vec r$ is covered by $s_1$ and $s_2$ if and only if $\vec{r^-}$ is covered by $s_1$ and $s_2$.
\item $x$ is covered by $(\{\vec{d_{X_{i_1}}}(v),\vec{r_1^-}\},\vec{r_1}_k)$ if and only if $x$ is covered by $v$ and $\vec{r_1}$.
\end{enumerate}
\end{lemma}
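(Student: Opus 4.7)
The plan is to derive all four items from a single separator property. Since $v$ is introduced at node $i$, we have $v \in V(G_i) \setminus V(G_{i_1})$, so $X_{i_1}$ separates $V(G_{i_1})$ from $V(G) \setminus V(G_i)$ in $G$. Consequently, for any $x \in V(G_{i_1})$ and any $y \notin V(G_i)$, every shortest $(x,y)$-path traverses $X_{i_1}$, giving
\[
d(x,y) \;=\; \min_{1 \le j \le k-1}\bigl(\vec{d_{X_{i_1}}}(x) + \vec{d_{X_{i_1}}}(y)\bigr)_j \;=\; \min_{1 \le j \le k}\bigl(\vec{d_{X_i}}(x) + \vec{d_{X_i}}(y)\bigr)_j,
\]
where the second equality holds because the $k$-th coordinate contributes $d(x,v) + d(y,v) \ge d(x,y)$ by the triangle inequality, so it never strictly improves the minimum. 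Moreover, since $v \in X_i$ we have $\vec{d_{X_i}}(x)_k = d(x,v)$, and since $v \notin V(G_{i_1})$ the same separator argument applied to the pair $(x,v)$ gives $d(x,v) = \min_{1 \le j \le k-1}(\vec{d_{X_{i_1}}}(x) + \vec{d_{X_{i_1}}}(v))_j$.

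For items (1), (2), and (3), I would simply unfold the definitions of the various ``covered by'' statements. Each side is an equation between sums of minima of the form $\min_{j}(\vec{u} + \vec{w})_j$, in which one of the two vectors is the distance vector of a vertex in $V(G_{i_1})$ (namely $s_1$, $s_2$, or $x$), and the other is the distance vector of a vertex lying outside $V(G_i)$ (namely $y$, $z_1$, or $z_2$, whose existence with the prescribed distance vectors is precisely the hypothesis stated in the lemma). The displayed identity above then shows that each such minimum over $[k]$ coincides with its restriction to $[k-1]$, and hence the equation defining coverage by the size-$k$ vectors is equivalent to the equation defining coverage by the truncated size-$(k-1)$ vectors. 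In particular, the value $d(z_1,z_2)$ appearing on both sides of item~(2), and the values $d(s_1,s_2)$ implicit in item~(3), are unchanged, so both equivalences are immediate.

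For item (4), I would treat $v$ and the (implicit) vertex $y$ with $\vec{d_{X_i}}(y) = \vec{r_1}$ as the two ``endpoints''. Unpacking the left-hand side, coverage of $x$ by $(\{\vec{d_{X_{i_1}}}(v), \vec{r_1^-}\}, \vec{r_1}_k)$ becomes
\[
\min_{1 \le j \le k-1}\bigl(\vec{d_{X_{i_1}}}(x) + \vec{d_{X_{i_1}}}(v)\bigr)_j \;+\; \min_{1 \le j \le k-1}\bigl(\vec{d_{X_{i_1}}}(x) + \vec{r_1^-}\bigr)_j \;=\; \vec{r_1}_k,
\]
which by the key observation is exactly $d(x,v) + d(x,y) = d(v,y)$. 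For the right-hand side, coverage of $x$ by $v$ and $\vec{r_1}$ unfolds to $d(v,x) + \min_{1 \le j \le k}(\vec{d_{X_i}}(x)+\vec{r_1})_j = \min_{1 \le j \le k}(\vec{d_{X_i}}(v)+\vec{r_1})_j$; the first minimum equals $d(x,y)$ by the key observation, while the second is attained at $j=k$ since $\vec{d_{X_i}}(v)_k = 0$, and equals $\vec{r_1}_k = d(v,y)$. Thus both sides reduce to the same identity $d(x,v)+d(x,y) = d(v,y)$.

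The main obstacle will be bookkeeping rather than mathematics: each item uses a slightly different combination of size-$k$ and size-$(k-1)$ minima, and for item~(4) in particular one must be careful that coverage by a pair of size-$(k-1)$ vectors is measured with respect to $X_{i_1}$ whereas coverage by $v$ and a size-$k$ vector is measured with respect to $X_i$. Once the separator identity above is isolated, the translation between these two perspectives is routine, and each item follows by direct substitution.
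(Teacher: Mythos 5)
Your proposal is correct and follows essentially the same route as the paper's proof: the separator property of $X_{i_1}$ (and $X_i$) shows that each minimum of the form $\min_j(\vec{d_{X}}(\cdot)+\vec{d_{X}}(\cdot))_j$ equals the corresponding graph distance, so dropping or adding the coordinate for $v$ never changes it, and the coverage equations over $[k]$ and $[k-1]$ coincide. The only difference is that you spell out item (4) (and the $X_i$ versus $X_{i_1}$ bookkeeping) explicitly, which the paper dismisses as ``similar arguments''; this is a faithful elaboration, not a different proof.
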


\begin{proof}
To prove item (1), it suffices to note that since $X_i$ separates $x$ from $z_1$ (or $x \in X_i$), $d(z_1,x) = \min_{1 \leq j \leq k} (\vec{t_1} + \vec{d_{X_i}}(x))_j$; and for the same reason, $d(z_1,s_1) = \min_{1 \leq j \leq k} (\vec{t_1} + \vec{d_{X_i}}(s_1))_j$. But, this is also true of $X_{i_1}$, and thus, $d(z_1,x) = \min_{1 \leq j \leq k-1} (\vec{t_1} + \vec{d_{X_{i_1}}}(x))_j$ and $d(z_1,s_1) = \min_{1 \leq j \leq k-1} (\vec{t_1} + \vec{d_{X_{i_1}}}(s_1))_j$. Items (2), (3), and (4) follow from similar arguments.
\end{proof}

To prove \Cref{lem:introduce_node-GS}, we prove the following four lemmas.

\begin{lemma}
\label{lem:upper_1-GS}
Let $I_1$ be an instance for $i_1$ compatible with $I$ of type 1 such that $\dim(I_1) < \infty$, and let $S$ be a minimum-size solution for $I_1$. Then, $S$ is a solution for $I$.
\end{lemma}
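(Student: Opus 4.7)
The plan is to verify that $S$ satisfies each of the four conditions \textbf{(S1)}--\textbf{(S4)} of Definition~\ref{def_instance-GS} for the instance $I$, leveraging the compatibility conditions \textbf{(I1)}--\textbf{(I6)} of Definition~\ref{def:compatible_introduce-GS} together with the hypothesis that $S$ is a solution for $I_1$. Condition \textbf{(S4)} is immediate: since $I_1$ is of type~1 we have $S_I = S_{I_1}$, which forces $v \notin S_I$, hence $v \notin S$, and therefore $S \cap X_i = S \cap X_{i_1} = S_{I_1} = S_I$.

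For \textbf{(S2)} and \textbf{(S3)}, I would use the fact that $X_{i_1}$ separates every vertex of $V(G_{i_1})$ from $v$, so distances from vertices of $G_{i_1}$ to $v$ can be read off from their distance vectors to $X_{i_1}$ via $\vec{d_{X_{i_1}}}(v)$. For \textbf{(S2)}, take $\vec r \in \int I$. By \textbf{(I2)}, $\vec{r^-} \in \int{I_1}$, so there is $s \in S$ with $\vec{d_{X_{i_1}}}(s) = \vec{r^-}$; the separator property yields $d(s,v) = \min_{1 \le j \le k-1}(\vec{r^-} + \vec{d_{X_{i_1}}}(v))_j = \vec{r}_k$, giving $\vec{d_{X_i}}(s) = \vec r$. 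For \textbf{(S3)}, take $(\{\vec{r_1},\vec{r_2}\},d) \in \intint I$; by \textbf{(I4)} we have $(\{\vec{r_1^-},\vec{r_2^-}\},d) \in \intint{I_1}$, yielding two vertices $s_1,s_2 \in S$ with the right distance vectors to $X_{i_1}$ and $d(s_1,s_2)=d$, and the same separator argument extends these to the correct distance vectors to $X_i$.

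The main case is \textbf{(S1)}, which I would split according to whether the vertex $x$ of $G_i$ lies in $V(G_{i_1})$ or equals $v$. When $x \in V(G_{i_1})$, applying \textbf{(S1)} for $I_1$ gives three sub-cases (covered by $S$; by $s \in S$ and $\vec r \in \ext{I_1}$; by an element of $\extext{I_1}$) that transfer to the required coverage for $I$ using conditions \textbf{(I3)} and \textbf{(I5)} together with Lemma~\ref{lem:covering2-GS}(1) and (2), which show that padding vectors with an extra coordinate does not affect the covering relation for vertices of $G_{i_1}$. The case $x = v$ is handled directly by condition \textbf{(I6)}: in its first sub-case, applying \textbf{(S3)} for $I_1$ gives two vertices in $S$ which, by Lemma~\ref{lem:covering2-GS}(2), cover $v$; in its second sub-case, \textbf{(S2)} for $I_1$ yields $s \in S$ with $\vec{d_{X_{i_1}}}(s) = \vec r$, hence $\vec{d_{X_i}}(s) = \vec{r\,|\,}d$ by the separator property, and then Lemma~\ref{lem:covering}(3) ensures $v$ is covered by $s$ and $\vec t$; the third sub-case is immediate.

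The only mildly delicate step is keeping track of the bookkeeping in \textbf{(S1)} for $x=v$, since the three disjunctions in \textbf{(I6)} each require slightly different uses of the separator property and of Lemma~\ref{lem:covering}/\ref{lem:covering2-GS} to match the coverage certificates available in $I_1$ to those prescribed for $I$; all the other verifications are essentially mechanical translations between vectors of length $k-1$ and $k$.
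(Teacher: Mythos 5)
Your proposal is correct and follows essentially the same route as the paper's proof: checking \textbf{(S1)}--\textbf{(S4)} directly, using \textbf{(I2)}/\textbf{(I4)} plus the separator property of $X_{i_1}$ for \textbf{(S2)}--\textbf{(S3)}, \textbf{(I3)}/\textbf{(I5)} with \Cref{lem:covering2-GS}(1)--(2) for vertices of $G_{i_1}$ in \textbf{(S1)}, and \textbf{(I6)} for $x=v$. The only slip is a citation: in the first sub-case of \textbf{(I6)}, the fact you need (that $v$, covered by an element of $\intint{I_1}$, is covered by two solution vertices realizing those vectors and that distance) is \Cref{lem:covering}(1), not \Cref{lem:covering2-GS}(2); since the correct statement exists in the paper and you use it properly elsewhere, this is a labeling issue rather than a gap.
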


\begin{proof}
Let us prove that every condition of \Cref{def_instance-GS} is satisfied.\\

\noindent
{\bf (S1)} Let $x$ be a vertex of $G_i$ and assume first that $x \neq v$. Then, since $S$ is a solution for $I_1$, either $x$ is covered by $S$, in which case we are done; or (1) $x$ is covered by a vertex $s \in S$ and a vector $\vec r \in \ext{I_1}$; or (2) $x$ is covered by an element $(\{\vec{r_1}, \vec{r_2}\},d) \in \extext{I_1}$. Now, if (1) holds, then, by compatibility, there exists $\vec t \in \ext I$ such that $\vec{t^-} = \vec r$; but then, by \Cref{lem:covering2-GS}(1), $x$ is covered by $s$ and $\vec t$ (recall indeed that by the definition of $\ext I$, there exists $y \notin V(G_i)$ such that $\vec{d_{X_i}}(y) = \vec t$). Suppose next that (2) holds. Then, by compatibility, there exists $(\{\vec{t_1},\vec{t_2}\},d) \in \extext I$ such that $\vec{t_1^-} = \vec{r_1}$ and $\vec{t_2^-} = \vec{r_2}$; but then, by \Cref{lem:covering2-GS}(2), $x$ is covered by $(\{\vec{t_1},\vec{t_2}\},d)$ (recall indeed that by the definition of $\extext I$, there exist $y,z \notin V(G_i)$ such that $\vec{d_{X_i}}(y) = \vec{t_1}$, $\vec{d_{X_i}}(z) = \vec{t_2}$, and $d=d(y,z)$). 

Assume now that $x = v$. Then, by compatibility, either $v$ is covered by an element of $\extext I$, in which case we are done; or (1) $v$ is covered by an element $(\{\vec{r_1}, \vec{r_2}\},d) \in \intint{I_1}$; or (2) there exist $\vec r \in \int{I_1}$ and $\vec t \in \ext I$ such that $v$ is covered by $(\{\vec{r|}d, \vec t\},\min_{1 \leq j \leq k} (\vec{r|}d + \vec t)_j)$ where $d = \min_{1 \leq j \leq k-1}(\vec r + \vec{d_{X_{i_1}}}(v))_j$. Now, if (1) holds, then, since $S$ is a solution for $I_1$, there exist $s_1,s_2 \in S$ such that $\vec{d_{X_{i_1}}}(s_1) = \vec{r_1}$, $\vec{d_{X_{i_1}}}(s_2) = \vec{r_2}$, and $d(s_1,s_2) = d$; but then, by \Cref{lem:covering}(1), $v$ is covered by $s_1$ and $s_2$. Now, if (2) holds, then, since $S$ is a solution for $I_1$, there exists $s \in S$ such that $\vec{d_{X_{i_1}}}(s) = \vec r $; but then, since $X_{i_1}$ separates $s$ from $v$ (or $s \in X_{i_1}$), $d = d(s,v)$, and thus, by \Cref{lem:covering}(3), $v$ is covered by $s$ and $\vec t$. \\

\noindent
{\bf (S2)} Consider a vector $\vec r \in \int I$. Then, by compatibility, $\vec{r}_k = \min_{1 \leq j \leq k-1} (\vec r + \vec{d_{X_{i_1}}}(v))_j$ and $\vec{r^-}\in \int{I_1}$. Now, $S$ is a solution for $I_1$, and so, there exists $s \in S$ such that $\vec{d_{X_{i_1}}}(s) = \vec{r^-}$; but then, $\vec{d_{X_i}}(s) = \vec r$ as $X_{i_1}$ separates $s$ from $v$ (or $s \in X_{i_1}$).\\

\noindent
{\bf (S3)} Consider an element $(\{\vec r,\vec t\},d) \in \intint I$. Then, by compatibility, $\vec{r}_k = \min_{1 \leq j \leq k-1} \allowbreak(\vec r \allowbreak+ \vec{d_{X_{i_1}}}(v))_j$ and $\vec{t}_k = \min_{1 \leq j \leq k-1} (\vec t + \vec{d_{X_{i_1}}}(v))_j$. Furthermore, either (1) $(\{\vec{r^-}, \vec{t^-}\},d) \in \intint{I_1}$; or (2) $\vec r = \vec{d_{X_i}}(v)$, $d = \vec{t}_k$, and $\vec{t^-} \in \int{I_1}$; or (3) $\vec t = \vec{d_{X_i}}(v)$, $d = \vec{r}_k$, and $\vec{r^-} \in \int{I_1}$. Now, if (1) holds, then, since $S$ is a solution for $I_1$, there exist $s_1,s_2 \in S$ such that $\vec{d_{X_{i_1}}}(s_1) = \vec{r^-}$, $\vec{d_{X_{i_1}}}(s_2) = \vec{t^-}$, and $d = d(s_1,s_2)$; but then, $\vec{d_{X_i}}(s_1) = \vec{r}$ and $\vec{d_{X_i}}(s_2) = \vec{t}$ as $X_{i_1}$ separates $v$ from both $s_1$ and $s_2$ (or $s_1$ or $s_2$ belongs to $X_{i_1}$). Suppose next that (2) holds. Then, since $S$ is a solution for $I_1$, there exists $s \in S$ such that $\vec{d_{X_{i_1}}}(s) = \vec{t^-}$; but then, $\vec{d_{X_i}}(s) = \vec{t}$ and $d = d(s,v)$ as $X_{i_1}$ separates $s$ from $v$. Case (3) is handled symmetrically.\\

\noindent
{\bf (S4)} By compatibility, $S_{I_1} = S_I$, and thus, $S \cap X_{i_1} = S \cap (X_i \setminus \{v\}) = S_I = S_{I_1}$.
\end{proof}

\begin{lemma}
\label{lem:upper_2-GS}
Let $I_1$ be an instance for $i_1$ compatible with $I$ of type 2 such that $\dim(I_1) < \infty$, and let $S$ be a minimum-size solution for $I_1$. Then, $S \cup \{v\}$ is a solution for $I$.
\end{lemma}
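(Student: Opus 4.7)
Let $S' = S \cup \{v\}$. The plan is to mirror the argument for Lemma~\ref{lem:upper_1-GS}, verifying each of conditions (S1)--(S4) of Definition~\ref{def_instance-GS}, with careful attention to how the addition of $v$ to the solution interacts with the occurrences of $\vec{d_{X_{i_1}}}(v)$ in $\ext{I_1}$ and $\extext{I_1}$ (which are precisely what is special about type-2 compatibility).

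I would first dispatch (S4) and (S2) since they are immediate. For (S4), by (I'1) we have $S_I = S_{I_1} \cup \{v\}$, and since $S \cap X_{i_1} = S_{I_1}$, we get $S' \cap X_i = (S \cap X_{i_1}) \cup \{v\} = S_I$. For (S2), consider $\vec r \in \int I$. If $\vec r = \vec{d_{X_i}}(v)$, then $v \in S'$ witnesses it; otherwise, by (I'2), $\vec{r^-} \in \int{I_1}$, and picking $s \in S$ with $\vec{d_{X_{i_1}}}(s) = \vec{r^-}$ gives $\vec{d_{X_i}}(s) = \vec r$, as $X_{i_1}$ separates $s$ from $v$ (or $s \in X_{i_1}$).

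Next I would handle (S3) by going through the three sub-cases of (I'4). For $(\{\vec r,\vec t\},d) \in \intint I$, we always have $\vec r_k = \min_{1\le j\le k-1}(\vec r + \vec{d_{X_{i_1}}}(v))_j$ and symmetrically for $\vec t$. If $(\{\vec{r^-},\vec{t^-}\},d) \in \intint{I_1}$, extract two vertices of $S$ as in the proof of Lemma~\ref{lem:upper_1-GS}. Otherwise, one of $\vec r, \vec t$ equals $\vec{d_{X_i}}(v)$; say $\vec r = \vec{d_{X_i}}(v)$, with $d = \vec t_k$ and $\vec{t^-} \in \int{I_1}$. Then there exists $s \in S$ with $\vec{d_{X_{i_1}}}(s) = \vec{t^-}$; separation gives $\vec{d_{X_i}}(s) = \vec t$ and $d(s,v) = \vec t_k = d$, so $v$ and $s$ witness (S3). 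The symmetric case is identical.

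The main obstacle, and the most delicate step, is (S1). For any $x \in V(G_i)$, if $x = v$ then $v \in S'$ covers itself. Otherwise $x \in V(G_{i_1})$, and since $S$ is a solution for $I_1$, one of three things happens: (a) $x$ is covered by $S$, in which case $x$ is covered by $S' \supseteq S$; (b) $x$ is covered by some $s \in S$ together with a vector $\vec r \in \ext{I_1}$; or (c) $x$ is covered by some $(\{\vec{r_1},\vec{r_2}\},d) \in \extext{I_1}$. In case (b), if $\vec r = \vec{d_{X_{i_1}}}(v)$, then by Lemma~\ref{lem:covering2-GS}(1) applied with $v$ playing the role of the external witness, $x$ is covered by $s$ and $v$, both in $S'$; if $\vec r \neq \vec{d_{X_{i_1}}}(v)$, then by (I'3) there is $\vec t \in \ext I$ with $\vec{t^-} = \vec r$ and Lemma~\ref{lem:covering2-GS}(1) gives that $x$ is covered by $s \in S'$ and $\vec t$. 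In case (c), the three sub-cases of (I'5) are used: in the first sub-case, Lemma~\ref{lem:covering2-GS}(2) transports coverage by $(\{\vec{r_1},\vec{r_2}\},d)$ to coverage by the element of $\extext I$; in the second sub-case, $\vec{r_1} = \vec{d_{X_{i_1}}}(v)$ and $\vec{r_2|}d \in \ext I$, so Lemma~\ref{lem:covering2-GS}(4) (together with Lemma~\ref{lem:covering2-GS}(1)) rewrites the coverage as coverage by $v \in S'$ and the vector $\vec{r_2|}d \in \ext I$; the third sub-case is symmetric. The technical challenge lies in verifying that the equalities of the form $d = d(y,z)$ or $d = d(y,v)$ required by these translations are exactly what (I'5) guarantees, which is where type-2 compatibility has been set up to match.
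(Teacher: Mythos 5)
Your proposal is correct and follows essentially the same route as the paper's proof: it verifies (S1)--(S4) via conditions (I'1)--(I'5), handling the special occurrences of $\vec{d_{X_{i_1}}}(v)$ in $\ext{I_1}$ and $\extext{I_1}$ by letting $v$ itself serve as the witness, exactly as the paper does. The only slip is a citation: in case (b) with $\vec r = \vec{d_{X_{i_1}}}(v)$, converting coverage by the vector into coverage by the vertex $v$ is \Cref{lem:covering}(2), not \Cref{lem:covering2-GS}(1), but the underlying idea (``$v$ plays the role of the external witness'') is precisely what that lemma provides.
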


\begin{proof}
Let us prove that the conditions of \Cref{def_instance-GS} are satisfied. In the following, we let $S' = S \cup \{v\}$.\\

\noindent
{\bf (S1)} Let $x$ be a vertex of $G_i$. Since $v \in S'$, we may safely assume that $x \neq v$, that is, $x \in V(G_{i_1})$. Now, $S$ is a solution for $I_1$, and thus, either $x$ is covered by $S$, in which case we are done; or (1) $x$ is covered by a vertex $s \in S$ and a vector of $\vec r \in \ext{I_1}$; or (2) $x$ is covered by an element $(\{\vec{r_1},\vec{r_2}\},d) \in \extext{I_1}$. 

Suppose first that (1) holds. If $\vec r = \vec{d_{X_{i_1}}}(v)$, then, by \Cref{lem:covering}(2), $x$ is covered by $s,v \in S'$. Otherwise, by compatibility, there exists $\vec t \in \ext I$ such that $\vec{t^-} = \vec r$; but then, by \Cref{lem:covering2-GS}(1), $x$ is covered by $s$ and $\vec t$. 

Suppose next that (2) holds. If $\vec{r_1} = \vec{d_{X_{i_1}}}(v)$, then, by compatibility, there exists $x \notin V(G_i)$ such that $\vec{d_{X_i}}(x) = \vec{r_2|}d$ and $\vec{r_2|}d \in \ext I$; but then, by \Cref{lem:covering2-GS}(4), $x$ is covered by $v$ and $\vec{r_2|}d$. We conclude symmetrically if $\vec{r_2} = \vec{d_{X_{i_1}}}(v)$. Thus, by compatibility, we may assume that there exists $(\{\vec{t_1},\vec{t_2}\},d) \in \extext I$ such that $\vec{t_1^-} =\vec{r_1}$ and $\vec{t_2^-} = \vec{r_2}$; but then, by \Cref{lem:covering2-GS}(2), $(\{\vec{t_1},\vec{t_2}\},d)$ covers $x$.\\

\noindent
{\bf (S2)} Consider a vector $\vec r \in \int I$ and assume that $\vec r \neq \vec{d_{X_i}}(v)$ (as $v \in S'$, we are done otherwise). Then, by compatibility, $\vec{r}_k = \min_{1 \leq j \leq k-1} (\vec r + \vec{d_{X_{i_1}}}(v))_j$ and $\vec{r^-}\in \int{I_1}$. Now, $S$ is a solution for $I_1$, and thus, there exists $s \in S$ such that $\vec{d_{X_{i_1}}}(s) = \vec{r^-}$; but then, $\vec{d_{X_i}}(s) = \vec r$ as $X_{i_1}$ separates $s$ from $v$ (or $s \in X_{i_1}$).\\

\noindent
{\bf (S3)} Consider an element $(\{\vec r,\vec t\},d) \in \intint I$. Then, by compatibility, $\vec{r}_k = \min_{1 \leq j \leq k-1} \allowbreak (\vec r + \vec{d_{X_{i_1}}}(v))_j$ and $\vec{t}_k = \min_{1 \leq j \leq k-1} (\vec t + \vec{d_{X_{i_1}}}(v))_j$. Furthermore, either (1) $(\{\vec{r^-}, \vec{t^-}\},d) \in \intint{I_1}$; or (2) $\vec r = \vec{d_{X_i}}(v)$, $d = \vec{t}_k$, and $\vec{t^-} \in \int{I_1}$; or (3) $\vec t = \vec{d_{X_i}}(v)$, $d = \vec{r}_k$, and $\vec{r^-} \in \int{I_1}$. Now, if (1) holds, then, since $S$ is a solution for $I_1$, there exist $s_1,s_2 \in S$ such that $\vec{d_{X_{i_1}}}(s_1) = \vec{r^-}$, $\vec{d_{X_{i_1}}}(s_2) = \vec{t^-}$, and $d(s_1,s_2) = d$; but then, $\vec{d_{X_i}}(s_1) = \vec r$ and $\vec{d_{X_i}}(s_2) = \vec t$ as $X_{i_1}$ separates $v$ from both $s_1$ and $s_2$ (or $s_1$ or $s_2$ belongs to $X_{i_1}$). Similarly, if (2) holds, then, since $S$ is a solution for $I_1$, there exists $s \in S$ such that $\vec{d_{X_{i_1}}}(s) = \vec{t^-}$; but then, $\vec{d_{X_i}}(s) = \vec t$ as $X_{i_1}$ separates $s$ from $v$ (or $s \in X_{i_1}$). Case 3 is handled symmetrically.\\

\noindent
{\bf (S4)} By compatibility, $S_I = S_{I_1} \cup \{v\}$, and thus, $S \cap X_i = S \cap (X_{i_1} \cup \{v\}) = S_{I_1} \cup \{v\} = S_I$.
\end{proof}

As a consequence of Lemmas \ref{lem:upper_1-GS} and \ref{lem:upper_2-GS}, the following holds.

\begin{lemma}
Let $I$ be an instance for an introduce node $i$. Then, 
\[
\dim(I) \leq \min\;\{ \min_{I_1 \in \mathcal{F}_{1}(I)}\;\{\dim(I_1)\},\min_{I_2 \in \mathcal{F}_{2}(I)}\;\{\dim(I_2)+1\}\}.
\]
\end{lemma}

\begin{lemma}
\label{lem:lower_1-GS}
Assume that $\dim(I) < \infty$ and let $S$ be a minimum-size solution for $I$ such that $v \notin S$. Then, there exists $I_1 \in \mathcal{F}_{1}(I)$ such that $S$ is a solution for $I_1$.
\end{lemma}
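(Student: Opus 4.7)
The plan is to mimic the construction used in Lemma~\ref{lem:lower_1} (the analogous step for \mdfull), while adding coordinates to track $\intint{\cdot}$ and $\extext{\cdot}$ which encode \emph{pairs} of solution vertices and their pairwise distances. Concretely, given the minimum-size solution $S$ for $I$ with $v \notin S$, I would define $I_1$ by setting $S_{I_1} = S_I$, $\int{I_1} = \vec{d_{X_{i_1}}}(S)$, $\ext{I_1} = \{\vec{r^-} \mid \vec r \in \ext I\}$, $\intint{I_1} = \{(\{\vec{d_{X_{i_1}}}(s_1),\vec{d_{X_{i_1}}}(s_2)\}, d(s_1,s_2)) \mid s_1,s_2 \in S\}$, and $\extext{I_1} = \{(\{\vec{r_1^-},\vec{r_2^-}\}, d) \mid (\{\vec{r_1},\vec{r_2}\}, d) \in \extext I\}$. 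Since $v \notin S$ and $X_{i_1}$ separates $v$ from every $s \in S$ (or $s \in X_{i_1}\setminus\{v\}$), every vertex in $S$ actually lies in $V(G_{i_1})$, so projections $\vec{d_{X_i}}(s) \mapsto \vec{d_{X_{i_1}}}(s)$ are exactly restrictions of the distance vector to the first $k-1$ coordinates, and $\vec{d_{X_i}}(s)_k = d(s,v) = \min_{1 \le j \le k-1}(\vec{d_{X_{i_1}}}(s) + \vec{d_{X_{i_1}}}(v))_j$.

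Next I would verify that $I_1$ is compatible with $I$ of type~1 by checking conditions (I1)--(I6) of Definition~\ref{def:compatible_introduce-GS}. (I1) and (I3) are immediate from the construction. For (I2), using the fact above that $X_{i_1}$ separates each $s \in S$ from $v$, each $\vec r \in \int I$ comes from some $\vec{d_{X_i}}(s)$ with $s \in S \subseteq V(G_{i_1})$, so $\vec{r^-} = \vec{d_{X_{i_1}}}(s) \in \int{I_1}$ and the last-coordinate identity holds. Condition (I4) follows analogously: any $(\{\vec r,\vec t\},d) \in \intint I$ is realised by two vertices $s_1,s_2 \in S$ in $V(G_{i_1})$ with pairwise distance $d$, and hence $(\{\vec{r^-},\vec{t^-}\}, d) \in \intint{I_1}$ by construction. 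Condition (I5) is immediate from the definition of $\extext{I_1}$.

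The main obstacle is verifying (I6), that $v$ itself is ``witnessed'' from $G_{i_1}$. Because $S$ is a solution for $I$ and $v \in V(G_i)$, (S1) of Definition~\ref{def_instance-GS} applied to $v$ gives three sub-cases. If $v$ is covered by two vertices of $S$, then these lie in $V(G_{i_1})$, their joint distance vector and distance are in $\intint{I_1}$ by construction, and Lemma~\ref{lem:covering}(1) combined with the separation argument shows $v$ is covered by that element of $\intint{I_1}$, giving the first bullet of (I6). If $v$ is covered by some $s \in S$ and a vector $\vec t \in \ext I$, then writing $\vec r := \vec{d_{X_{i_1}}}(s) \in \int{I_1}$ and $d := d(s,v) = \min_{1 \le j \le k-1}(\vec r + \vec{d_{X_{i_1}}}(v))_j$, so that $\vec{r|}d = \vec{d_{X_i}}(s)$, Lemma~\ref{lem:covering}(3) (with the roles of $G_{i_1}$ and $G - G_{i_1}$ swapped, i.e.\ viewing $v$ as lying on one side of the separator and the witness $y$ realising $\vec t$ on the other) yields the second bullet of (I6); here one must invoke $X_i$-separation to convert $d(s,v) + d(v,y) = d(s,y)$ into the covering equation using the projected vectors. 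The third case, where $v$ is covered by an element of $\extext I$, directly gives the third bullet.

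Finally I would check that $S$ is a solution for $I_1$ using Definition~\ref{def_instance-GS}. (S1) follows because any $x \in V(G_{i_1}) \subseteq V(G_i)$ is covered by $S$, or by $s \in S$ and some $\vec r \in \ext I$ (which maps to $\vec{r^-} \in \ext{I_1}$ and still covers $x$ by Lemma~\ref{lem:covering2-GS}(1)), or by some element of $\extext I$ (which maps to the corresponding projected element of $\extext{I_1}$ and still covers $x$ by Lemma~\ref{lem:covering2-GS}(2)). (S2), (S3), and (S4) are direct from the construction combined with $v \notin S$. This gives $\dim(I_1) \le |S| = \dim(I)$, which together with Lemma~\ref{lem:lower_2-GS} (handling the case $v \in S$) yields the desired lower bound $\dim(I) \ge \min\{\min_{I_1 \in \mathcal{F}_1(I)} \dim(I_1), \min_{I_2 \in \mathcal{F}_2(I)} (\dim(I_2)+1)\}$.
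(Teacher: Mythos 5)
Your proposal is correct and essentially reproduces the paper's proof: the same five-component construction of $I_1$, the same verification of (I1)--(I6) (with (I6) split into the same three cases coming from (S1) applied to $v$), and the same check that $S$ solves $I_1$ via Lemma~\ref{lem:covering2-GS}. The only cosmetic difference is that where you cite Lemma~\ref{lem:covering}(1) and (3) (whose stated directions go from vectors to concrete witnesses), the paper argues those covering equalities for $v$ directly from the fact that $X_{i_1}$ separates $v$ from $S$ — which is exactly the separation argument you also mention, so nothing is missing.
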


\begin{proof}
Let $I_1$ be the instance for $i_1$ defined as follows.
\begin{itemize}
\item $S_{I_1} = S_I$.
\item $\int{I_1} = \vec{d_{X_{i_1}}}(S)$.
\item $\ext{I_1} = \{\vec{r^-} \mid \vec r \in \ext I\}$. 
\item $\intint{I_1} = \{(\{\vec{d_{X_{i_1}}}(s_1),\vec{d_{X_{i_1}}}(s_2)\},d(s_1,s_2)) \mid s_1,s_2 \in S\}$.
\item $\extext{I_1} = \{(\{\vec{r^-},\vec{t^-}\},d) \mid (\{\vec r, \vec t\},d) \in \extext I\}$.
\end{itemize}
Let us show that $I_1$ is compatible with $I$ of type 1 and that $S$ is a solution for $I_1$.

\begin{claim}
The constructed instance $I_1$ is compatible with $I$ of type 1.
\end{claim}

\begin{claimproof}
It is clear that condition {\bf (I1)} of \Cref{def:compatible_introduce-GS} holds; let us show that the remaining conditions hold as well. \\

\noindent
{\bf (I2)} Consider a vector $\vec r \in \int I$. Then, since $S$ is a solution for $I$, there exists $s \in S$ such that $\vec{d_{X_i}}(s) = \vec r$; but then, $\vec{r}_k = \min_{1 \leq j \leq k-1} (\vec r + \vec{d_{X_{i_1}}}(v))_j$ as $X_{i_1}$ separates $s$ from $v$ (or $s \in X_{i_1}$), and $\vec{r^-} \in \int{I_1}$ by construction.\\

\noindent
{\bf (I3)} Readily follows from the fact that, by construction, $\ext{I_1} = \{\vec{r^-} \mid \vec r \in \ext I\}$.\\

\noindent
{\bf (I4)} Consider an element $(\{\vec r, \vec t\},d) \in \intint I$. Then, since $S$ is a solution for $I$, there exist $s_1,s_2\in S$ such that $\vec{d_{X_i}}(s_1) = \vec r$, $\vec{d_{X_i}}(s_2) = \vec t$, and $d(s_1,s_2) = d$; but then, $\vec{r}_k = \min_{1 \leq j \leq k-1} (\vec r + \vec{d_{X_{i_1}}}(v))_j$ and $\vec{t}_k = \min_{1 \leq j \leq k-1} (\vec t + \vec{d_{X_{i_1}}}(v))_j$ as $X_{i_1}$ separates $v$ from both $s_1$ and $s_2$ (or $s_1$ or $s_2$ belongs to $X_{i_1}$), and $(\{\vec{r^-}, \vec{t^-}\},d) \in \intint{I_1}$ by construction.\\

\noindent
{\bf (I5)} Readily follows from the fact that, by construction, $\extext{I_1} = \{(\{\vec{r^-},\vec{t^-}\},d) \mid (\{\vec r, \vec t\},d) \in \extext I\}$.\\

\noindent
{\bf (I6)} Since $S$ is a solution for $I$ and $v \notin S$, either $v$ is covered by an element of $\extext I$, in which case we are done; or $v$ is covered by $s_1,s_2 \in S$, in which case $(\{\vec{d_{X_{i_1}}}(s_1),\vec{d_{X_{i_1}}}(s_2)\},\allowbreak d(s_1,s_2)) \in \intint{I_1}$ covers $v$; or $v$ is covered by a vertex $s \in S$ and a vector $\vec r \in \ext I$, in which case $v$ is covered by $(\{\vec{d_{X_i}}(s),\vec r\}, \min_{1 \leq j \leq k} (\vec{d_{X_i}}(s) + \vec r)_j)$, where $\vec{d_{X_i}}(s)^- \in \int{I_1}$ by construction.
\end{claimproof}

\begin{claim}
$S$ is a solution for $I_1$.
\end{claim}

\begin{claimproof}
Let us prove that the conditions of \Cref{def_instance-GS} are satisfied.\\

\noindent
{\bf (S1)} Consider a vertex $x$ of $G_{i_1}$. Then, since $V(G_{i_1}) \subseteq V(G_i)$ and $S$ is a solution for $I$, either $x$ is covered by $S$, in which case we are done; or (1) $x$ is covered by a vertex $s \in S$ and a vector $\vec r \in \ext I$; or (2) $x$ is covered by $(\{\vec r, \vec t\},d) \in \extext I$. Now, if (1) holds, then, by \Cref{lem:covering2-GS}(1), $x$ remains covered by $s$ and $\vec{r^-}$, where $\vec{r^-} \in \ext{I_1}$ by construction; and if (2) holds, then, by \Cref{lem:covering2-GS}(2), $x$ remains covered by $(\{\vec{r^-}, \vec{t^-}\},d)$, which is an element of $\extext{I_1}$ by construction.\\

\noindent
{\bf (S2)} and {\bf (S3)} readily follow from the fact that, by construction, $\int{I_1} = \vec{d_{X_{i_1}}}(S)$ and $\intint{I_1} = \{(\{\vec{d_{X_{i_1}}}(s_1),\vec{d_{X_{i_1}}}(s_2)\},d(s_1,s_2)) \mid s_1,s_2 \in S\}$, respectively.\\

\noindent
{\bf (S4)} By construction, $S_{I_1} = S_I$, and thus, $S \cap X_{i_1} = S \cap (X_i \setminus \{v\}) = S_I = S_{I_1}$.
\end{claimproof}

The lemma now follows from the above two claims.
\end{proof}

\begin{lemma}
\label{lem:lower_2-GS}
Assume that $\dim(I) < \infty$ and let $S$ be a minimum-size solution for $I$ such that $v \in S$. Then, there exists $I_1\in \mathcal{F}_2(I)$ such that $S \setminus \{v\}$ is a solution of $I_1$.
\end{lemma}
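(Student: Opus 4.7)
The plan is to construct $I_1$ by demoting the solution vertex $v$ to an ``external'' resource available to $G_{i_1}$, mirroring the construction in the proof of Lemma~\ref{lem:lower_1-GS} but with two key additions to account for $v$. Concretely, I will set $S_{I_1} = S_I \setminus \{v\}$, $\int{I_1} = \vec{d_{X_{i_1}}}(S \setminus \{v\})$, and $\intint{I_1} = \{(\{\vec{d_{X_{i_1}}}(s_1), \vec{d_{X_{i_1}}}(s_2)\}, d(s_1, s_2)) \mid s_1, s_2 \in S \setminus \{v\}\}$. For the external data I will set $\ext{I_1} = \{\vec{d_{X_{i_1}}}(v)\} \cup \{\vec{r^-} \mid \vec r \in \ext I\}$, and $\extext{I_1} = \{(\{\vec{r^-}, \vec{t^-}\}, d) \mid (\{\vec r, \vec t\}, d) \in \extext I\} \cup \{(\{\vec{d_{X_{i_1}}}(v), \vec{r^-}\}, \min_{1 \leq j \leq k}(\vec{d_{X_i}}(v) + \vec r)_j) \mid \vec r \in \ext I\}$. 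The two extra elements encode that $v$, together with vertices outside $G_i$, is an external resource for $G_{i_1}$; the minimum expression gives the true distance from $v$ to the external vertex realizing $\vec r$, since $X_i$ separates them.

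Then, following the template of Lemma~\ref{lem:lower_1-GS}, I will establish two claims. The first is that $I_1 \in \mathcal{F}_2(I)$. Conditions (I'1) and (I'3) are immediate from the construction. For (I'2), any $\vec r \in \int I \setminus \{\vec{d_{X_i}}(v)\}$ is witnessed by some $s \in S \setminus \{v\}$, and since $X_{i_1}$ separates $s$ from $v$ (or $s \in X_{i_1}$), $\vec{r}_k = d(s,v) = \min_{1 \leq j \leq k-1}(\vec r + \vec{d_{X_{i_1}}}(v))_j$ and $\vec{r^-} = \vec{d_{X_{i_1}}}(s) \in \int{I_1}$. For (I'4), given $(\{\vec r, \vec t\}, d) \in \intint I$ realized by $s_1, s_2 \in S$, I will split on whether $v \in \{s_1, s_2\}$: if neither is $v$, the first of the three options holds by construction; if, say, $s_1 = v$, then $\vec r = \vec{d_{X_i}}(v)$, $\vec{t^-} = \vec{d_{X_{i_1}}}(s_2) \in \int{I_1}$, and $d = d(v, s_2) = \vec{t}_k$, yielding the second option. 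For (I'5), the elements of $\extext{I_1}$ inherited from $\extext I$ satisfy the first option with witnesses $x, y$ taken from $\extext I$ (both outside $V(G_i) \supseteq V(G_{i_1}) \cup \{v\}$), while the elements of the form $(\{\vec{d_{X_{i_1}}}(v), \vec{r^-}\}, d)$ satisfy the second or third option via any witness $x$ for $\vec r \in \ext I$.

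The second claim is that $S' := S \setminus \{v\}$ is a solution for $I_1$. Conditions (S2), (S3), and (S4) follow directly from the respective definitions of $\int{I_1}$, $\intint{I_1}$, and $S_{I_1}$. For (S1), given $x \in V(G_{i_1})$, I will exploit that $S$ covers $x$ in $I$. If $x$ is covered by two vertices of $S$, then either neither is $v$ (so $x$ is covered by $S'$) or one is $v$, in which case Lemma~\ref{lem:covering2-GS}(1) shows that $x$ is covered by the other vertex paired with $\vec{d_{X_{i_1}}}(v) \in \ext{I_1}$. If $x$ is covered by $s \in S$ and $\vec r \in \ext I$, then either $s \neq v$ (and Lemma~\ref{lem:covering2-GS}(1) transfers coverage to $s$ and $\vec{r^-} \in \ext{I_1}$) or $s = v$, in which case Lemma~\ref{lem:covering2-GS}(2) shows $x$ is covered by the element $(\{\vec{d_{X_{i_1}}}(v), \vec{r^-}\}, \cdot) \in \extext{I_1}$. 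Finally, if $x$ is covered by $(\{\vec r, \vec t\}, d) \in \extext I$, Lemma~\ref{lem:covering2-GS}(2) transfers the coverage to $(\{\vec{r^-}, \vec{t^-}\}, d) \in \extext{I_1}$.

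The main obstacle will be the careful bookkeeping in (I'4) and (S1) when $v$ participates in the covering; the subtlety is converting between distances measured against $X_i$ and against $X_{i_1}$, but Lemma~\ref{lem:covering2-GS} together with the separation property of $X_{i_1}$ (which holds for every vertex of $V(G_{i_1})$ versus $v$) should make these reductions routine, paralleling the forward direction in Lemma~\ref{lem:upper_2-GS} but run in reverse.
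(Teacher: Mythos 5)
Your construction of $I_1$ is the same as the paper's (note that your $\min_{1\leq j\leq k}(\vec{d_{X_i}}(v)+\vec r)_j$ equals $\vec{r}_k = d(x,v)$ for any witness $x$ of $\vec r\in \ext I$, since $v=v_k\in X_i$, so your second family in $\extext{I_1}$ coincides with the paper's), and your two claims and their verifications follow the paper's proof step for step, using the same separation arguments and Lemma~\ref{lem:covering2-GS} (for the case $s=v$ in (S1) the relevant item is (4) rather than (2), and the two-solution-vertices case is justified directly by the separation of $v$ from $V(G_{i_1})$ by $X_{i_1}$, exactly as in the paper). So this is essentially the paper's own argument.
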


\begin{proof}
In the following, we let $S' = S \setminus \{v\}$. Now, let $I_1$ be the instance for $i_1$ defined as follows.
\begin{itemize}
\item $S_{I_1} = S_I \setminus \{v\}$.
\item  $\int{I_1} = \vec{d_{X_{i_1}}}(S')$.
\item $\ext{I_1} = \{\vec{r^-} \mid \vec r \in \ext I\} \cup \{\vec{d_{X_{i_1}}}(v)\}$.
\item $\intint{I_1} = \{(\{\vec{d_{X_{i_1}}}(s_1),\vec{d_{X_{i_1}}}(s_2)\},d(s_1,s_2)) \mid s_1,s_2 \in S' \}$.
\item $\extext{I_1} = \{(\{\vec r,\vec t\},d) \mid \exists x,y \notin V(G_i) \text{ s.t. } \vec{d_{X_{i_1}}}(x) = \vec r, \vec{d_{X_{i_1}}}(y) = \vec t, d(x,y) = d, \text{ and } (\{\vec{r|}d(x,v),\vec{t|}d(y,v)\},d) \in \extext I\} \cup \{(\{\vec{d_{X_{i_1}}}(v),\vec r\},d) \mid \exists x \notin V(G_{i_1}) \text{ s.t. } \allowbreak \vec{d_{X_{i_1}}}(x) = \vec r, d(x,v) = d, \text{ and } \vec{r|}d \in \ext I\}$.
\end{itemize}
Let us show that $I_1$ is compatible with $I$ of type 2, and that $S'$ is a solution for $I_1$.

\begin{claim}
The constructed instance $I_1$ is compatible with $I$ of type 2.
\end{claim} 

\begin{claimproof}
It is not difficult to see that condition {\bf (I'1)} of \Cref{def:compatible_introduce-GS} holds; let us show that the remaining conditions hold as well.\\

\noindent
{\bf (I'2)} Consider an element $\vec r \in \int I \setminus \{\vec{d_{X_i}}(v)\}$. Then, since $S$ is a solution for $I$, there exists $s \in S$ such that $\vec{d_{X_i}}(s) = \vec r$; but then, $\vec{r}_k = \min_{1 \leq j \leq k-1} (\vec r + \vec{d_{X_{i_1}}}(v))_j$ as $X_{i_1}$ separates $s$ from $v$ (or $s \in X_{i_1}$), and $\vec{r^-} \in \int{I_1}$ by construction.\\

\noindent
{\bf (I'3)} Readily follows from the fact that, by construction, $\ext{I_1} = \{\vec{r^-} \mid \vec r \in \ext I\} \cup \{\vec{d_{X_{i_1}}}(v)\}$.\\ 

\noindent 
{\bf (I'4)} Consider an element $(\{\vec r, \vec t\},d) \in \intint I$. Then, since $S$ is a solution for $I$, there exist $s_1,s_2 \in S$ such that $\vec{d_{X_i}}(s_1) = \vec r$, $\vec{d_{X_i}}(s_2) = \vec t$, and $d(s_1,s_2) = d$; in particular, $d(s_1,v) = \vec{r}_k = \min_{1 \leq j \leq k-1} (\vec r + \vec{d_{X_{i_1}}}(v))_j$ and $d(s_2,v) = \vec{t}_k = \min_{1 \leq j \leq k-1} (\vec t + \vec{d_{X_{i_1}}}(v))_j$, as either $v \in \{s_1,s_2\}$, or $X_{i_1}$ separates $v$ from both $s_1$ and $s_2$ (or $s_1$ or $s_2$ belongs to $X_{i_1}$). Now, if, say, $s_1 = v$ (the case where $s_2 = v$ is symmetric), then, by construction, $\vec{t^-} \in \int{I_1}$, and thus, the second or third item of {\bf (I'4)} is satisfied. Otherwise, $s_1,s_2 \in S'$, and so, $(\{\vec{r^-},\vec{t^-}\},d) \in \intint{I_1}$ by construction.\\

\noindent
{\bf (I'5)} Readily follows from the fact that, by construction, $\extext{I_1} = \{(\{\vec r,\vec t\},d) \mid \exists x,y \notin V(G_i) \text{ s.t. } \vec{d_{X_{i_1}}}(x) = \vec r, \vec{d_{X_{i_1}}}(y) = \vec t, d(x,y) = d,$ and $(\{\vec{r|}d(x,v),\vec{t|}d(y,v)\},d) \in \extext I\} \allowbreak \cup \{(\{\vec{d_{X_{i_1}}}(v),\vec r\},d) \mid \exists x \notin V(G_{i_1}) \text{ s.t. } \vec{d_{X_{i_1}}}(x) = \vec r, d(x,v) = d, \text{ and } \vec{r|}d \in \ext I\}$.
\end{claimproof}

\begin{claim}
$S'$ is a solution for $I_1$.
\end{claim}

\begin{claimproof}
Let us prove that the conditions of \Cref{def_instance-GS} are satisfied.\\

\noindent
{\bf (S1)} Consider a vertex $x$ of $G_{i_1}$. Then, since $V(G_{i_1}) \subseteq V(G_i)$ and $S$ is a solution for $I$, either (1) $x$ is covered by two vertices $s_1,s_2 \in S$; (2) $x$ is covered by a vertex $s \in S$ and a vector $\vec r \in \ext I$; or (3) $x$ is covered by an element $(\{\vec r, \vec t\},d) \in \extext I$. 

Now, if (1) holds, then we may assume that one of $s_1$ and $s_2$ is $v$ (otherwise, $S'$ covers $x$), say $s_1 = v$ without loss of generality; but then, $\vec{d_{X_{i_1}}}(v) \in \ext{I_1}$ by construction, and thus, $x$ is covered by $s \in S'$ and $\vec{d_{X_{i_1}}}(v) \in \ext{I_1}$ (note indeed that $d(s,v) = \min_{1 \leq j \leq k-1} (\vec{d_{X_{i_1}}}(s) + \vec{d_{X_{i_1}}}(v))_j$ and $d(x,v) = \min_{1 \leq j \leq k-1} (\vec{d_{X_{i_1}}}(x) + \vec{d_{X_{i_1}}}(v))_j$ as $X_{i_1}$ separates $v$ from both $s$ and $x$). 

Suppose next that (2) holds. If $s \neq v$, then, since $\vec{r^-} \in \ext{I_1}$ by construction, $x$ remains covered by $s \in S'$ and $\vec{r^-} \in \ext{I_1}$ by \Cref{lem:covering2-GS}(1); and if $s = v$, then $x$ is covered by $(\{\vec{d_{X_{i_1}}}(v),\vec{r^-}\},\vec{r}_k)$ which is an element of $\extext{I_1}$ by construction. 

Finally, if (3) holds, then by the definition of $\extext I$, there exist $x,y \notin V(G_i)$ such that  $\vec{d_{X_i}}(x) = \vec r$, $\vec{d_{X_i}}(y) = \vec t$, and $d(x,y) = d$; but then, by \Cref{lem:covering}(2), $x$ is covered by $(\{\vec{r^-},\vec{t^-}\},d)$, which is an element of $\extext{I_1}$ by construction.\\

\noindent
{\bf (S2)} and {\bf (S3)} readily follow from the fact that, by construction, $\int{I_1} = \vec{d_{X_{i_1}}}(S')$ and $\intint{I_1} = \{(\{\vec{d_{X_{i_1}}}(s_1),\vec{d_{X_{i_1}}}(s_2)\},d(s_1,s_2)) \mid s_1,s_2 \in S' \}$, respectively.\\

\noindent
{\bf (S4)} By construction, $S_{I_1} = S_I \setminus \{v\}$, and so, $S \cap X_{i_1} = S \cap (X_i \setminus \{v\}) = S_I \setminus \{v\} = S_{I_1}$. 
\end{claimproof}

The lemma now follows from the above two claims.
\end{proof}

As a consequence of Lemmas \ref{lem:lower_1-GS} and \ref{lem:lower_2-GS}, the following holds.

\begin{lemma}
Let $I$ be an instance for an introduce node $i$. Then, 
\[
\dim(I) \geq \min\;\{ \min_{I_1 \in \mathcal{F}_{1}(I)}\;\{\dim(I_1)\},\min_{I_2 \in \mathcal{F}_{2}(I)}\;\{\dim(I_2)+1\}\}.
\]
\end{lemma}

\noindent
{\bf Forget node.} Let $I$ be an instance for a forget node $i$ with child $i_1$, and let $v \in V(G)$ be such that $X_i = X_{i_1} \setminus \{v\}$. Further, let $X_{i_1} = \{v_1, \ldots, v_k\}$, where $v = v_k$.

\begin{definition}
\label{def:compatible_forget-GS}
An instance $I_1$ for $i_1$ is compatible with $I$ if the following hold.
\begin{itemize}
\item {\bf (F1)} $S_I = S_{I_1} \setminus \{v\}$.
\item {\bf (F2)} For each $\vec r \in \int I$, there exists $\vec t \in \int{I_1}$ such that $\vec{t^-} = \vec r$.
\item {\bf (F3)} For each $\vec r \in \ext{I_1}$, $\vec{r^-} \in \ext I$.
\item {\bf (F4)} For each $(\{\vec{r_1},\vec{r_2}\},d) \in \intint I$, there exists $(\{\vec{t_1},\vec{t_2}\},d) \in \intint{I_1}$ such that $\vec{t_1^-} = \vec{r_1}$ and $\vec{t_2^-} = \vec{r_2}$.
\item {\bf (F5)} For each $(\{\vec{r_1},\vec{r_2}\},d) \in \extext{I_1}$, $(\{\vec{r_1^-},\vec{r_2^-}\},d) \in \extext I$.
\end{itemize}
\end{definition}

We denote by $\mathcal{F}_{F}(I)$ the set of instances for $i_1$ compatible with $I$. We aim to prove the following.

\begin{lemma}
\label{lem:forget_node-GS}
Let $I$ be an instance for a forget node $i$. Then,
\[
\dim(I)= \min_{I_1 \in \mathcal{F}_{F}(I)}\;\{\dim(I_1)\}.
\]
\end{lemma}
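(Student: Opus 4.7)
The plan is to prove the two inequalities separately, closely mirroring the pattern used for the join and introduce nodes, but exploiting the fact that forgetting a vertex leaves the underlying subgraph unchanged: $V(G_i)=V(G_{i_1})$ and $X_i=X_{i_1}\setminus\{v\}$. Thus, a solution for the child instance will turn out to be literally a solution for the parent instance (and vice versa), and the only real content is to verify that the distance-vector bookkeeping is consistent under the projection $\vec r\mapsto\vec{r^-}$ that drops the coordinate for $v$.

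For the direction $\dim(I)\le\min_{I_1\in\mathcal{F}_F(I)}\dim(I_1)$, first I would fix an $I_1\in\mathcal{F}_F(I)$ with $\dim(I_1)<\infty$ and a minimum-size solution $S$ for $I_1$, and show that $S$ is itself a solution for $I$ by verifying the four conditions of \Cref{def_instance-GS}. Condition (S4) is immediate from (F1). For (S2), given $\vec r\in\int I$, use (F2) to obtain $\vec t\in\int{I_1}$ with $\vec{t^-}=\vec r$; then $S$ being a solution of $I_1$ yields $s\in S$ with $\vec{d_{X_{i_1}}}(s)=\vec t$, and hence $\vec{d_{X_i}}(s)=\vec r$. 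Condition (S3) is analogous using (F4). For (S1), given $x\in V(G_i)=V(G_{i_1})$, the set $S$ covers $x$ in one of the three EGS senses with respect to $I_1$; using (F3) and (F5), each witness in $\ext{I_1}$ or $\extext{I_1}$ projects to a witness in $\ext I$ or $\extext I$, and the coverage is preserved because the distance values $d(s,x)$, $d(z,x)$, and $d(z_1,z_2)$ involved do not depend on whether we measure distance vectors relative to $X_{i_1}$ or $X_i$ (the shortest paths crossing $X_{i_1}$ must cross $X_i$ as well, and minima over the added coordinate are already recorded in the vectors).

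For the direction $\dim(I)\ge\min_{I_1\in\mathcal{F}_F(I)}\dim(I_1)$, assume $\dim(I)<\infty$ and let $S$ be a minimum-size solution for $I$. I would construct an instance $I_1$ for $i_1$ by setting $S_{I_1}=S\cap X_{i_1}$, $\int{I_1}=\vec{d_{X_{i_1}}}(S)$, and $\intint{I_1}=\{(\{\vec{d_{X_{i_1}}}(s_1),\vec{d_{X_{i_1}}}(s_2)\},d(s_1,s_2))\mid s_1,s_2\in S\}$, and defining $\ext{I_1}$ (respectively $\extext{I_1}$) to be the set of vectors $\vec{d_{X_{i_1}}}(x)$ (resp.\ the set of triples $(\{\vec{d_{X_{i_1}}}(x),\vec{d_{X_{i_1}}}(y)\},d(x,y))$) for $x\notin V(G_{i_1})$ (resp.\ $x,y\notin V(G_{i_1})$) whose projections to $X_i$ lie in $\ext I$ (resp.\ in $\extext I$). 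Checking (F1)--(F5) is then direct from the definition of $I_1$, since projecting drops exactly the last coordinate. To see that $S$ is a solution for $I_1$, I would verify (S1)--(S4) of \Cref{def_instance-GS}: (S4) and (S2)--(S3) are immediate from the construction, while (S1) uses that any vertex of $V(G_{i_1})=V(G_i)$ covered by $S\cup\ext I\cup\extext I$ is also covered by $S\cup\ext{I_1}\cup\extext{I_1}$, precisely because the additional coordinate added by the lift from $X_i$ to $X_{i_1}$ is $d(\cdot,v)$, and the minima in the definition of ``covered by'' are unchanged when an extra equal quantity is added to both sides.

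I do not expect any real obstacle: the reason this case is the simplest of the three is that no new vertex is added to or removed from the graph at a forget node, so no covering relation has to be newly established nor broken, and the bijection between distance vectors of size $|X_i|$ and size $|X_{i_1}|$ induced by appending $d(\cdot,v)$ preserves every relevant predicate used in \Cref{def_instance-GS}. The main care needed is purely notational: to consistently use \Cref{lem:covering2-GS} (or its obvious forget-node analogue) to translate between distance vectors relative to $X_i$ and relative to $X_{i_1}$ when verifying (S1) and (F3)--(F5).
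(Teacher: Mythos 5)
Your proposal is correct and follows essentially the same route as the paper: the same two-directional argument, the same construction of the child instance $I_1$ in the lower-bound direction (with $\ext{I_1}$ and $\extext{I_1}$ taken as the lifts of $\ext I$ and $\extext I$ realized by external vertices), and the same reliance on the forget-node analogue of \Cref{lem:covering2-GS} (the paper's \Cref{lem:covering3}) to transfer coverage between distance vectors over $X_i$ and $X_{i_1}$. The only points your sketch leaves implicit—that verifying (F2) and (F4) for the constructed $I_1$ uses that $S$ is a solution for $I$ to supply the witnesses in $S$—are routine and handled exactly as you indicate.
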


Before turning to the proof of \Cref{lem:forget_node-GS}, we first prove the following technical lemma, which is the analog of \Cref{lem:covering2-GS}.

\begin{lemma}
\label{lem:covering3}
Let $x,s_1,s_2$ be three vertices of $G_i$, and let $\vec r, \vec{t_1}, \vec{t_2}$ be three vectors of size $k$ for which there exist $y,z_1,z_2 \notin V(G_i)$ such that $\vec{d_{X_{i_1}}}(y) = \vec r$, $\vec{d_{X_{i_1}}}(z_1) = \vec{t_1}$, and $\vec{d_{X_{i_1}}}(z_2) = \vec{t_2}$. Then, the following hold.
\begin{enumerate}
\item $x$ is covered by $s_1$ and $\vec{t_1}$ if and only if $x$ is covered by $s_1$ and $\vec{t_1^-}$.
\item $x$ is covered by $(\{\vec{t_1}, \vec{t_2}\},d(y,z))$ if and only if $x$ is covered by $(\{\vec{t_1^-},\vec{t_2^-}\},d(y,z))$.
\item $\vec r$ is covered by $s_1$ and $s_2$ if and only if $\vec{r^-}$ is covered by $s_1$ and $s_2$.
\end{enumerate}
\end{lemma}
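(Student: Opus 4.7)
The plan is to mirror the proof of Lemma \ref{lem:covering2-GS}, swapping the roles of $X_i$ and $X_{i_1}$. In the current setting $i$ is a forget node with child $i_1$, so $X_i = X_{i_1} \setminus \{v\}$ with $|X_{i_1}| = k$, $|X_i| = k-1$, and crucially $V(G_i) = V(G_{i_1})$. Thus the ``trimmed'' vectors $\vec{r^-}, \vec{t_1^-}, \vec{t_2^-}$ are the distance vectors of $y, z_1, z_2$ with respect to the smaller separator $X_i$, while $\vec r, \vec{t_1}, \vec{t_2}$ are their distance vectors with respect to the larger separator $X_{i_1}$.

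The central observation I would establish first is that both $X_i$ and $X_{i_1}$ separate $V(G_i)$ from $V(G) \setminus V(G_i)$ in $G$: the set $X_i$ is a separator by the tree decomposition property, and adding the vertex $v$ to obtain $X_{i_1}$ preserves this. Consequently, for any $w \in \{x, s_1, s_2\} \subseteq V(G_i)$ and any $u \in \{y, z_1, z_2\} \subseteq V(G) \setminus V(G_i)$, the distance $d(w,u)$ admits both representations
\[
d(w,u) = \min_{1 \leq j \leq k-1}(\vec{d_{X_i}}(w) + \vec{d_{X_i}}(u))_j = \min_{1 \leq j \leq k}(\vec{d_{X_{i_1}}}(w) + \vec{d_{X_{i_1}}}(u))_j,
\]
since any shortest path between $w$ and $u$ must cross either separator.

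With this identity in hand, each of items (1), (2), and (3) reduces to a single underlying numerical equality evaluated equivalently on $X_i$ or $X_{i_1}$. For item (1), both ``$x$ is covered by $s_1$ and $\vec{t_1}$'' and ``$x$ is covered by $s_1$ and $\vec{t_1^-}$'' unfold, using the displayed equality with $u = z_1$, to the single condition $d(s_1,x) + d(x,z_1) = d(s_1,z_1)$. For item (2), both sides unfold to $d(x,z_1) + d(x,z_2) = d(z_1,z_2)$. For item (3), applying the displayed equality to the pair $(w,u) = (s_1,y)$ and $(w,u) = (s_2,y)$ shows that both ``$\vec r$ is covered by $s_1,s_2$'' and ``$\vec{r^-}$ is covered by $s_1,s_2$'' amount to $d(y,s_1) + d(y,s_2) = d(s_1,s_2)$.

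There is no real obstacle: the entire argument is a direct transcription of Lemma \ref{lem:covering2-GS} with $X_i \leftrightarrow X_{i_1}$ interchanged. The only point I would take care to verify is that the separator identity holds for all nine relevant pairs $(w,u)$, which is immediate from the tree-decomposition property combined with $V(G_i) = V(G_{i_1})$. Once this is in place, each of (1), (2), (3) is a one-line rewriting.
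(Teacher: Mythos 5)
Your proposal is correct and is essentially the paper's own argument: the paper likewise observes that both $X_{i_1}$ and $X_i$ separate the relevant vertex of $G_i$ from $y,z_1,z_2$ (or contain it), so the min-expressions over either bag equal the true distances, and each covering condition unfolds to the same distance identity. The only cosmetic difference is that you state the separator identity once for all nine pairs up front, whereas the paper verifies it inline for item (1) and says items (2) and (3) follow similarly.
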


\begin{proof}
To prove item (1), it suffices to note that, since $X_{i_1}$ separates $x$ from $z_1$ (or $x \in X_{i_1}$), $d(z_1,x) = \min_{1 \leq j \leq k} (\vec{t_1} + \vec{d_{X_{i_1}}}(x))_j$; and for the same reason, $d(z_1,s_1) = \min_{1 \leq j \leq k} (\vec{t_1} + \vec{d_{X_{i_1}}}(s_1))_j$. But, this is also true of $X_i$, and thus, $d(z_1,x) = \min_{1 \leq j \leq k-1} (\vec{t_1} + \vec{d_{X_i}}(x))_j$ and $d(z_1,s_1) = \min_{1 \leq j \leq k-1} (\vec{t_1} + \vec{d_{X_i}}(s_1))_j$. Items (2) and (3) follow from similar arguments.
\end{proof}

To prove \Cref{lem:forget_node-GS}, we prove the following two lemmas.

\begin{lemma}
Let $I_1$ be an instance for $i_1$ compatible with $I$ such that $\dim(I_1) < \infty$, and let $S$ be a minimum-size solution for $I_1$. Then, $S$ is a solution for $I$. In particular,
\[
\dim(I) \leq \min_{I_1 \in \mathcal{F}_{F}(I)}\;\{\dim(I_1)\}.
\]
\end{lemma}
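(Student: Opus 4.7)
The plan is to verify that $S$ satisfies each of the four defining conditions (S1)--(S4) of a solution for $I$ (in the sense of Definition \ref{def_instance-GS}), leveraging the fact that $S$ is already a solution for $I_1$ and translating the verification through the compatibility conditions (F1)--(F5). The key observation underpinning the argument is that $V(G_i) = V(G_{i_1})$, since $i$ is a forget node, so the only substantive change is that distance vectors to the bag $X_{i_1}$ must be replaced by their truncated counterparts (distance vectors to $X_i = X_{i_1} \setminus \{v\}$). Throughout, I would use the convention that $\vec{r^-}$ is obtained from $\vec r$ by dropping the coordinate corresponding to $v$, noting that $\vec{d_{X_i}}(x) = \vec{d_{X_{i_1}}}(x)^-$ for every $x \in V(G_i)$.

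Conditions (S4) and (S2) are essentially bookkeeping. For (S4), I would use (F1) together with $v \notin X_i$ to conclude that $S \cap X_i = (S \cap X_{i_1}) \setminus \{v\} = S_{I_1} \setminus \{v\} = S_I$. For (S2), given $\vec r \in \int I$, condition (F2) supplies a vector $\vec t \in \int{I_1}$ with $\vec{t^-} = \vec r$; since $S$ is a solution for $I_1$, some $s \in S$ satisfies $\vec{d_{X_{i_1}}}(s) = \vec t$, and truncation yields $\vec{d_{X_i}}(s) = \vec r$. For (S3), I would proceed analogously: given $(\{\vec{r_1},\vec{r_2}\},d) \in \intint I$, (F4) supplies $(\{\vec{t_1},\vec{t_2}\},d) \in \intint{I_1}$ with $\vec{t_j^-} = \vec{r_j}$ for $j \in \{1,2\}$, and the fact that $S$ is a solution for $I_1$ furnishes distinct $s_1,s_2 \in S$ with $\vec{d_{X_{i_1}}}(s_j) = \vec{t_j}$ and $d(s_1,s_2)=d$; truncation again gives $\vec{d_{X_i}}(s_j) = \vec{r_j}$.

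The main substance lies in (S1), which is where Lemma \ref{lem:covering3} will be indispensable. For a vertex $x \in V(G_i) = V(G_{i_1})$, since $S$ is a solution for $I_1$, at least one of the following holds: (i) $x$ is covered by two vertices of $S$ (immediately preserved, as covering is defined intrinsically inside $G$); (ii) $x$ is covered by some $s \in S$ together with a vector $\vec r \in \ext{I_1}$; or (iii) $x$ is covered by some $(\{\vec{r_1},\vec{r_2}\},d) \in \extext{I_1}$. In case (ii), (F3) guarantees $\vec{r^-} \in \ext I$, and Lemma \ref{lem:covering3}(1) shows that coverage of $x$ by $(s,\vec r)$ transfers to coverage by $(s,\vec{r^-})$. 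In case (iii), (F5) gives $(\{\vec{r_1^-},\vec{r_2^-}\},d) \in \extext I$, and Lemma \ref{lem:covering3}(2) shows the coverage transfers to this truncated element.

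The anticipated hurdle is verifying that the hypotheses of Lemma \ref{lem:covering3} are actually met in cases (ii) and (iii), namely that witness vertices $y$ (resp.\ $z_1,z_2$) outside $V(G_{i_1})$ realizing the vectors (resp.\ the vectors and distance $d$) exist. This, however, follows directly from the very definition of $\ext{I_1}$ and $\extext{I_1}$ as components of a valid \textsc{EGS} instance for $i_1$. Once (S1)--(S4) are all established, the second claim $\dim(I) \leq \min_{I_1 \in \mathcal{F}_F(I)} \dim(I_1)$ is immediate, since $|S| = \dim(I_1)$ and the minimum over all compatible $I_1$ yields the stated bound.
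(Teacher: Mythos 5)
Your proposal is correct and follows essentially the same route as the paper's own proof: conditions (S2)--(S4) are handled by the same direct translation through (F1), (F2), (F4), and (S1) by the same case analysis using (F3), (F5) and Lemma~\ref{lem:covering3}(1)--(2), with the witnesses required by that lemma supplied by the definition of an {\sc EGS} instance (and $V(G_i)=V(G_{i_1})$ at a forget node). No gaps.
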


\begin{proof}
Let us prove that  the conditions of \Cref{def_instance-GS} are satisfied.\\

\noindent
{\bf (S1)} Consider a vertex $x$ of $G_i$. Then, since $V(G_i) = V(G_{i_1})$ and $S$ is a solution for $I_1$, either $x$ is covered by $S$, in which case we are done; or (1) $x$ is covered by a vertex $s \in S$ and a vector $\vec r \in \ext{I_1}$; or (2) $x$ is covered by  an element $(\{\vec{r_1},\vec{r_2}\},d) \in \extext{I_1}$. Now, if (1) holds, then, by compatibility, $\vec{r^-} \in \ext I$; but then, by \Cref{lem:covering3}(1), $x$ is covered by $s$ and $\vec{r^-}$. Otherwise, (2) holds, in which case, by compatibility, $(\{\vec{r_1^-},\vec{r_2^-}\},d) \in \extext I$; but then, by \Cref{lem:covering3}(2), $x$ is covered by $(\{\vec{r_1^-},\vec{r_2^-}\},d)$. \\

\noindent
{\bf (S2)} Consider a vector $\vec r \in \int I$. Then, by compatibility, there exists $\vec t \in \int{I_1}$ such that $\vec{t^-} = \vec r$; and since $S$ is a solution for $I_1$, there then exists $s \in S$ such that $\vec{d_{X_{i_1}}}(s) = \vec t$.\\

\noindent
{\bf (S3)} Consider an element $(\{\vec{r_1},\vec{r_2}\},d) \in \intint I$. Then, by compatibility, there exists $(\{\vec{t_1},\vec{t_2}\},d) \in \intint{I_1}$ such that $\vec{t_1^-} = \vec{r_1}$ and $\vec{t_2^-} = \vec{r_2}$; and since $S$ is a solution for $I_1$, there then exist $s_1,s_2 \in S$ such that $\vec{d_{X_{i_1}}}(s_1) = \vec{t_1}$, $\vec{d_{X_{i_1}}}(s_2) = \vec{t_2}$, and $d(s_1,s_2) = d$.\\

\noindent
{\bf (S4)} By compatibility, $S_I = S_{I_1} \setminus \{v\}$, and thus, $S \cap X_i = S \cap (X_{i_1} \setminus \{v\}) = S_{I_1} \setminus \{v\} = S_I$.
\end{proof}

\begin{lemma}
Assume that $\dim(I) < \infty$ and let $S$ be a minimum-size solution for $I$. Then, there exists $I_1 \in \mathcal{F}_{F}(I)$ such that $S$ is a solution for $I_1$. In particular,
\[
\dim(I) \geq \min_{I_1 \in \mathcal{F}_{F}(I)}\;\{\dim(I_1)\}.
\]
\end{lemma}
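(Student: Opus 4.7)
The plan is to construct an explicit instance $I_1$ for the child node $i_1$ from the given solution $S$ of $I$, verify that $I_1$ is compatible with $I$, and verify that $S$ itself is a solution for $I_1$. This will yield $\dim(I_1) \le |S| = \dim(I)$, which is the desired inequality. The approach mirrors the construction used in the forget-node lemma for \mdfull\ and in the two preceding lemmas for the join and introduce nodes for \gsfull.

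Concretely, I would define $I_1$ by setting $S_{I_1} = S \cap X_{i_1}$, $\int{I_1} = \vec{d_{X_{i_1}}}(S)$, and $\intint{I_1} = \{(\{\vec{d_{X_{i_1}}}(s_1),\vec{d_{X_{i_1}}}(s_2)\},d(s_1,s_2)) \mid s_1,s_2 \in S\}$. For the "external" parts, which must reflect what happens outside $G_{i_1} = G_i$, I would take $\ext{I_1} = \{\vec{d_{X_{i_1}}}(x) \mid x \notin V(G_{i_1}),\ \vec{d_{X_i}}(x) \in \ext I\}$ and $\extext{I_1} = \{(\{\vec{d_{X_{i_1}}}(x),\vec{d_{X_{i_1}}}(y)\},d(x,y)) \mid x,y \notin V(G_{i_1}),\ (\{\vec{d_{X_i}}(x),\vec{d_{X_i}}(y)\},d(x,y)) \in \extext I\}$. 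Non-emptiness of $I_1$ as a valid instance follows from the non-emptiness of $I$: each of the clauses $S_I \ne \emptyset$, $\ext I \ne \emptyset$, or $\extext I \ne \emptyset$ transfers immediately to the corresponding clause for $I_1$.

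Verifying compatibility (conditions \textbf{(F1)}--\textbf{(F5)} of \Cref{def:compatible_forget-GS}) is then straightforward bookkeeping. \textbf{(F1)} holds because $S_{I_1} \setminus \{v\} = (S \cap X_{i_1}) \setminus \{v\} = S \cap X_i = S_I$. For \textbf{(F2)} and \textbf{(F4)}, any $\vec r \in \int I$ or $(\{\vec{r_1},\vec{r_2}\},d) \in \intint I$ is witnessed by vertices of $S$, whose distance vectors to $X_{i_1}$ are extensions of those to $X_i$ by the distance to $v$, and these extensions live in $\int{I_1}$ or $\intint{I_1}$ by construction. Conditions \textbf{(F3)} and \textbf{(F5)} hold directly from the construction of $\ext{I_1}$ and $\extext{I_1}$, since every element of each set arises as the $X_{i_1}$-distance vector(s) of a vertex outside $G_{i_1}$ whose $X_i$-distance vector(s) lie in $\ext I$ or $\extext I$, respectively.

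The last step is to check that $S$ is a solution for $I_1$, i.e., conditions \textbf{(S1)}--\textbf{(S4)} of \Cref{def_instance-GS}. Conditions \textbf{(S2)}, \textbf{(S3)}, and \textbf{(S4)} follow immediately from the definitions of $\int{I_1}$, $\intint{I_1}$, and $S_{I_1}$. The only non-trivial step is \textbf{(S1)}: since $V(G_{i_1}) = V(G_i)$, any $x \in V(G_{i_1})$ is covered by $S$, or by some $s \in S$ together with $\vec r \in \ext I$, or by some $(\{\vec r, \vec t\},d) \in \extext I$ (using that $S$ is a solution for $I$). The main obstacle is translating the latter two cases correctly from $X_i$-distance vectors to $X_{i_1}$-distance vectors; this is handled by invoking \Cref{lem:covering3}(1)--(2), which guarantees that coverage by a vertex and a vector, or by an $\extext$-element, is preserved under the $\vec r \mapsto \vec{r^-}$ correspondence. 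With this, coverage of $x$ via $\ext{I_1}$ or $\extext{I_1}$ follows from the existence in $G - G_{i_1}$ of the corresponding witnessing vertex or pair, which is exactly what the construction of $\ext{I_1}$ and $\extext{I_1}$ ensures.
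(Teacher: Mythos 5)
Your proposal is correct and matches the paper's own proof essentially verbatim: you build the same instance $I_1$ (with $\ext{I_1}$ and $\extext{I_1}$ being exactly the paper's sets, just phrased via $\vec{d_{X_i}}(x)$ instead of $\vec{r^-}$), verify compatibility \textbf{(F1)}--\textbf{(F5)} the same way, and use \Cref{lem:covering3}(1)--(2) for the only non-trivial step \textbf{(S1)}. Nothing further is needed.
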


\begin{proof}
Let $I_1$ be the instance for $i_1$ defined as follows.
\begin{itemize}
\item $S_{I_1} = S \cap X_{i_1}$. 
\item $\int{I_1} = \vec{d_{X_{i_1}}}(S)$. 
\item $\ext{I_1} = \{\vec r \mid \exists x \notin V(G_{i_1}) \text{ s.t } \vec{d_{X_{i_1}}}(x) = \vec r \text{ and } \vec{r^-} \in \ext I\}$.
\item $\intint{I_1} = \{(\{\vec{d_{X_{i_1}}}(s_1),\vec{d_{X_{i_1}}}(s_2)\},d(s_1,s_2) \mid s_1,s_2 \in S\}$.
\item $\extext{I_1} = \{(\{\vec r, \vec t\},d) \mid \exists x,y \notin V(G_{i_1}) \text{ s.t } \vec{d_{X_{i_1}}}(x) = \vec r, \vec{d_{X_{i_1}}}(y) = \vec t, d(x,y) = d, \text{ and } (\{\vec{r^-},\vec{t^-}\},d) \in \extext I\}$.
\end{itemize}
Let us show that $I_1$ is compatible with $I$, and that $S$ is a solution for $I_1$.

\begin{claim}
The constructed instance $I_1$ is compatible with $I$.
\end{claim}

\begin{claimproof}
It is clear that condition {\bf (F1)} of \Cref{def:compatible_forget-GS} holds; let us show that the remaining conditions hold as well.\\

\noindent
{\bf (F2)} Consider a vector $\vec r \in \int I$. Then, since $S$ is a solution for $I$, there exists $s \in S$ such that $\vec{d_{X_i}}(s) = \vec r$; but then, $\vec{r|}d(s,v) \in \int{I_1}$ by construction.\\

\noindent
{\bf (F3)} readily follows from the fact that $\ext{I_1} = \{\vec r \mid \exists x \notin V(G_{i_1}) \text{ s.t. } \vec{d_{X_{i_1}}}(x) = \vec r \text{ and } \vec{r^-} \in \ext I\}$.\\

\noindent
{\bf (F4)} Consider an element $(\{\vec{r_1},\vec{r_2}\},d) \in \intint I$. Then, since $S$ is a solution for $I$, there exist $s_1,s_2 \in S$ such that $\vec{d_{X_i}}(s_1) = \vec{r_1}$, $\vec{d_{X_i}}(s_2) = \vec{r_2}$, and $d(s_1,s_2) = d$; but then, $(\{\vec{r_1|}d(s_1,v),\vec{r_2|}d(s_2,v)\},d) \in \intint{I_1}$ by construction.\\

\noindent
{\bf (F5)} readily follows from the fact that $\extext{I_1} = \{(\{\vec r, \vec t\},d) \mid \exists x,y \notin V(G_{i_1}) \text{ s.t. } \vec{d_{X_{i_1}}}(x) = \vec r, \vec{d_{X_{i_1}}}(y) = \vec t, d(x,y) = d, \text{ and } (\{\vec{r^-},\vec{t^-}\},d) \in \extext I\}$.
\end{claimproof}

\begin{claim}
$S$ is a solution for $I_1$.
\end{claim}

\begin{claimproof}
Let us show that every condition of \Cref{def_instance-GS} holds.\\

\noindent
{\bf (S1)} Consider a vertex $x$ of $G_{i_1}$. Then, since $V(G_i) = V(G_{i_1})$ and $S$ is a solution for $I$, either $x$ is covered by $S$, in which case we are done; or (1) $x$ is covered by a vertex $s \in S$ and a vector $\vec r \in \ext I$; (2) or $x$ is covered by an element $(\{\vec r, \vec t\}, d) \in \extext I$. Now, suppose that (1) holds and let $y \notin V(G_i)$ be a vertex with distance vector $\vec r$ to $X_i$ (recall that such a vertex exists by the definition of $\ext I$). Then, by \Cref{lem:covering3}(1), $x$ is covered by $s$ and $\vec{r|}d(y,v)$; but, by construction, $\vec{r|}d(y,v) \in \ext{I_1}$. Suppose next that (2) holds, and let $y,z \notin V(G_i)$ be such that $\vec{d_{X_i}}(y) = \vec r$, $\vec{d_{X_i}}(z) = \vec t$, and $d(y,z) = d$ (recall that such vertices exist by the definition of $\extext I$). Then, by \Cref{lem:covering3}(2), $x$ is covered by $(\{\vec{r|}d(y,v),\vec{t|}d(z,v)\},d(y,z))$, which is an element of $\extext{I_1}$ by construction.\\

\noindent
{\bf (S2)}, {\bf (S3)}, and {\bf (S4)} readily follow from the fact that, by construction, $\int{I_1} = \vec{d_{X_{i_1}}}(S)$, $\intint{I_1} = \{(\{\vec{d_{X_{i_1}}}(s_1),\vec{d_{X_{i_1}}}(s_2)\},d(s_1,s_2) \mid s_1,s_2 \in S\}$, and $S_{I_1} = S \cap X_{i_1}$, respectively.
\end{claimproof}

The lemma now follows from the above two claims.
\end{proof}


To complete the proof of \Cref{thm:algo-diam-tw-GD}, let us now explain how the algorithm proceeds. Given a nice tree decomposition $(T,\mathcal{X})$ of a graph $G$ rooted at node $r \in V(T)$, the algorithm computes the extended geodetic set number for all possible instances in a bottom-up traversal of $T$. It computes the values for leaf nodes using \Cref{lem:leaf_node-GS}, for join nodes using \Cref{lem:join_node-GS}, for introduce nodes using \Cref{lem:introduce_node-GS}, and for forget nodes using \Cref{lem:forget_node-GS}. The correctness of this algorithm follows from these lemmas and the following (recall that $\gs(G)$ is the smallest size of a geodetic set for $G$).

\begin{lemma}
Let $G$ be a graph and let $(T,\{X_i:i\in V(T)\})$ be a nice tree decomposition of $G$ rooted at node $r \in V(T)$. Then,
\[
\gs(G) = \min_{S_r \subseteq X_r} \dim(X_r,S_r,\emptyset,\emptyset,\emptyset,\emptyset).
\]
\end{lemma}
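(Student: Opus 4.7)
The plan is to prove the two inequalities that together establish the equality, following the exact template already used for the analogous lemma for \mdfull.

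For the upper bound direction, I would start with a minimum-size geodetic set $S$ of $G$ and observe that $S$ is itself a solution for the EGS instance $I_0 = (X_r, S \cap X_r, \emptyset, \emptyset, \emptyset, \emptyset)$. Verifying this amounts to walking through the four conditions of Definition~\ref{def_instance-GS}: condition (S1) holds because $S$ is a geodetic set of $G = G_r$, so every vertex of $G_r$ lies on a shortest path between two vertices of $S$ (or equals one of them) and hence is covered by $S$; conditions (S2) and (S3) are vacuous since $\int{I_0} = \intint{I_0} = \emptyset$; and condition (S4) holds by the choice $S_{I_0} = S \cap X_r$. This immediately gives
\[
\min_{S_r \subseteq X_r} \dim(X_r, S_r, \emptyset, \emptyset, \emptyset, \emptyset) \leq \dim(I_0) \leq |S| = \gs(G).
\]

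For the reverse inequality, I would pick $S' \subseteq X_r$ attaining the minimum and let $S$ be a minimum-size solution for $I_1 = (X_r, S', \emptyset, \emptyset, \emptyset, \emptyset)$. Since $\ext{I_1}$ and $\extext{I_1}$ are both empty, condition (S1) can only be satisfied by coverage via $S$ itself; that is, every vertex of $G_r = G$ must be covered by two vertices of $S$ (or lie in $S$). Hence $S$ is a geodetic set of $G$, yielding $\gs(G) \leq |S| = \dim(I_1)$. Combining the two inequalities closes the proof.

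I do not foresee any real obstacle, as the argument is essentially a sanity check that the EGS abstraction specialises correctly to the original problem at the root bag. The only mildly technical point is the non-emptiness requirement on a valid EGS instance ($S_I \neq \emptyset$, $\ext I \neq \emptyset$, or $\extext I \neq \emptyset$); this is met automatically whenever $\gs(G) \geq 1$ by picking $S_r$ so that $S_r = S \cap X_r$ is non-empty for a suitable geodetic set $S$ (for instance, including any simplicial vertex of $G$ that belongs to $X_r$, using Observation~\ref{obs:simplicial}), and the degenerate cases can be dismissed separately.
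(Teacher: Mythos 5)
Your proposal is correct and follows essentially the same two-inequality argument as the paper: a minimum geodetic set $S$ is a solution for $(X_r, S\cap X_r,\emptyset,\emptyset,\emptyset,\emptyset)$, and conversely any minimum solution for $(X_r,S',\emptyset,\emptyset,\emptyset,\emptyset)$ must cover every vertex of $G_r=G$ by $S$ alone since the external sets are empty. Your aside about the non-emptiness requirement on EGS instances is a detail the paper silently glosses over, but it does not change the argument.
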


\begin{proof}
Let $S$ be a minimum-size geodetic set of $G$. Then, by \Cref{def_instance-GS}, $S$ is a solution for the {\sc EGS} instance $(X_r,S\cap X_r,\emptyset,\emptyset,\emptyset,\emptyset)$, and so,
\[
\min_{S_r \subseteq X_r} \dim(X_r,S_r,\emptyset,\emptyset,\emptyset,\emptyset) \leq \dim(X_r,S\cap X_r,\emptyset,\emptyset,\emptyset,\emptyset) \leq \gs(G).
\]
Conversely, let $S' \subseteq X_r$ be a set attaining the minimum above, and let $S$ be a minimum-size solution for the {\sc EGS} instance $(X_r,S',\emptyset,\emptyset,\emptyset,\emptyset)$. Then, by \Cref{def_instance-GS}, every vertex of $G_r = G$ is covered by $S$, and so,
\[
\gs(G) \leq \dim(X_r,S',\emptyset,\emptyset,\emptyset,\emptyset) = \min_{S_r \subseteq X_r} \dim(X_r,S_r,\emptyset,\emptyset,\emptyset,\emptyset),
\]
which concludes the proof.
\end{proof}

Now, let $\alpha(k) = 2^{k} \cdot 2^{\diam(G)^{k}} \cdot 2^{\diam(G)^{k}} \cdot 2^{\diam(G)^{2k+1}} \cdot 2^{\diam(G)^{2k+1}}$. To get the announced complexity, observe first that, at each node $i \in V(T)$, there are at most $\alpha(|X_i|)$ possible instances to consider, where $|X_i| = \OO(\tw(G))$; and since $T$ has $\OO(\tw(G) \cdot n)$ nodes, there are in total $\OO(\alpha(\tw(G)) \cdot \tw(G) \cdot n)$ possible instances. The running time of the algorithm then follows from these facts and the following lemma (note that to avoid repeated computations, we can first compute the distance between every pair of vertices of $G$ in $n^{\OO(1)}$ time, as well as all possible distance vectors to a bag from the possible distance vectors to its child/children).

\begin{lemma}
Let $I$ be an {\sc EGS} instance for a node $i \in V(T)$, and assume that, for every child $i_1$ of $i$ and every {\sc EGS} instance $I_1$ for $i_1$ compatible with $I$, $\dim(I_1)$ is known. Then, $\dim(I)$ can be computed in time $\alpha(\OO(|X_i|)) \cdot n^{\OO(1)}$. 
\end{lemma}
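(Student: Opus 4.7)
The plan is to mirror the argument for the analogous complexity lemma in the Metric Dimension section. For a leaf node, Lemma~\ref{lem:leaf_node-GS} already gives $\dim(I)$ in constant time, so the interesting cases are introduce, forget, and join nodes. In each such case, the strategy is: enumerate all $6$-tuples that could plausibly be an {\sc EGS} instance for the child $i_1$ (or pair of children, at a join node), filter out those that are not valid instances or are not compatible with $I$, and then combine the values $\dim(I_1)$ (and $\dim(I_2)$) — which are assumed known — via the minimization formulas of Lemmas~\ref{lem:join_node-GS}, \ref{lem:introduce_node-GS}, and~\ref{lem:forget_node-GS}.

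First, I would bound the number of candidate tuples. For a child bag of size at most $|X_i|+1$, a candidate $(X_{i_1},S_{I_1},\int{I_1},\ext{I_1},\intint{I_1},\extext{I_1})$ has at most
\[
2^{|X_{i_1}|}\cdot 2^{\diam(G)^{|X_{i_1}|}}\cdot 2^{\diam(G)^{|X_{i_1}|}}\cdot 2^{\diam(G)^{2|X_{i_1}|+1}}\cdot 2^{\diam(G)^{2|X_{i_1}|+1}}
\]
possibilities, i.e.\ $\alpha(\OO(|X_i|))$. For a join node, ordered pairs of such tuples are considered, and the bound $(\alpha(\OO(|X_i|)))^2$ still fits inside $\alpha(\OO(|X_i|))$ after adjusting constants in the exponent.

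Second, I would argue that validating a candidate and checking compatibility takes polynomial time per tuple, provided that a few quantities are precomputed once and for all at the start of the whole algorithm: (i) the all-pairs distance matrix of $G$ (in $n^{\OO(1)}$ time); (ii) for each bag $X_i$, the list of $X_i$-distance vectors realized by the vertices of $V(G_i)$ and of $V(G)\setminus V(G_i)$, tagged with witness vertices; and (iii) for each bag $X_i$, the set of pairs $(\vec r,\vec t,d)$ such that some $x,y \notin V(G_i)$ realize $\vec{d_{X_i}}(x)=\vec r$, $\vec{d_{X_i}}(y)=\vec t$, $d(x,y)=d$. Given these tables, every condition in Definitions~\ref{def:compatible_pair-GS}, \ref{def:compatible_introduce-GS}, and~\ref{def:compatible_forget-GS} reduces to a finite boolean combination of membership tests in sets of size at most $\alpha(\OO(|X_i|))$ and elementary arithmetic on entries of the distance matrix. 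Each individual compatibility test therefore runs in $\alpha(\OO(|X_i|))\cdot n^{\OO(1)}$ time. Finally, the minimization formulas of Lemmas~\ref{lem:join_node-GS}, \ref{lem:introduce_node-GS}, and~\ref{lem:forget_node-GS} are taken over a set of size $\alpha(\OO(|X_i|))$ (times $\alpha(\OO(|X_i|))$ for join nodes), so the overall cost at node $i$ is $\alpha(\OO(|X_i|))\cdot n^{\OO(1)}$, as required.

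The main obstacle will be the careful bookkeeping of the existential witnesses that appear inside the compatibility conditions — most notably clauses like ``there exist $x,y\notin V(G_{i_1})\cup\{v\}$ such that $\vec{d_{X_{i_1}}}(x)=\vec r$, $\vec{d_{X_{i_1}}}(y)=\vec t$, $d(x,y)=d$, and $(\{\vec{r\,|\,}d(x,v),\vec{t\,|\,}d(y,v)\},d)\in\extext{I}$'' in condition~{\bf (I'5)}, and the analogous clauses in~{\bf (F2)}--{\bf (F5)}. The key observation that makes these testable in the claimed time is that only the distance-vector/distance signature of $(x,y)$ with respect to $X_{i_1}$ (equivalently, with respect to $X_i\cup\{v\}$) matters, not the identity of the witnesses; so the precomputed table~(iii) above, indexed by $X_{i_1}$-signatures, suffices to decide each such clause by a single look-up. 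Once this is set up, combining it with the enumeration and the recurrences yields the announced running time.
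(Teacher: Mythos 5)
Your proposal is correct and follows essentially the same route as the paper: enumerate all candidate child instances (at most $\alpha(\OO(|X_i|))$ of them, squared at a join node), check validity and compatibility per candidate in time polynomial in the instance size $\diam(G)^{\OO(|X_i|)}$ and $n$ using precomputed all-pairs distances and realized distance vectors, and then evaluate the minimization formulas of Lemmas~\ref{lem:join_node-GS}, \ref{lem:introduce_node-GS}, and~\ref{lem:forget_node-GS}. Your extra bookkeeping for the existential witnesses (the signature-indexed table of realized triples $(\vec r,\vec t,d)$) just makes explicit what the paper's proof asserts when it says the compatibility checks are easily done in $\OO(|I|)\cdot n^{\OO(1)}$ time.
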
 

\begin{proof}
If $i$ is a leaf node, then $\dim(I)$ can be computed in constant time by \Cref{lem:leaf_node-GS}. Otherwise, let us prove that one can compute all compatible instances in the child nodes in the announced time (recall that $i$ has at most two child nodes). Given a 6-tuple $I = (X_i,S_I,\int I,\ext I, \intint I,\extext I)$, checking whether it is an {\sc EGS} instance can be done in $\OO(|I|) \cdot n^{\OO(1)}$ time; and the number of such 6-tuples is bounded by $\alpha(|X_i|)$. It is also not difficult to see that checking for compatibility can, in each case, be done in $\OO(|I|) \cdot n^{\OO(1)}$ time. Now, note that, by \Cref{def_instance-GS}, $|I| = \diam(G)^{\OO(|X_i|)}$, and thus, computing all compatible instances can indeed be done in $\alpha(\OO(|X_i|)) \cdot n^{\OO(1)}$ time. Then, since computing the minimum using the formulas of Lemmas \ref{lem:join_node-GS}, \ref{lem:introduce_node-GS}, and \ref{lem:forget_node-GS} can be done in $\alpha(\OO(|X_i|))$ time, the lemma follows. 
\end{proof}

\subsection{(Kernelization) Algorithm for \smdfull}
\label{subsec:algo-vc-SMD}

We prove the following theorem.

\begin{theorem}
\label{thm:smd-algo-vertex-cover}
\smdfull admits 
\begin{itemize}
\item an \FPT\ algorithm running in time
$2^{2^{\mathcal{O}(\vc)}} \cdot n^{\OO(1)}$, and
\item a kernelization algorithm that outputs a kernel with
  $2^{\OO(\vc)}$ vertices.
\end{itemize}
\end{theorem}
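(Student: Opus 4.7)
The plan is to obtain both results via a single kernelization based on a simple twin-reduction rule on the independent set $I := V(G) \setminus C$, where $C$ is a polynomial-time computed approximate vertex cover of $G$ of size $O(\vc)$ (using, e.g., the standard $2$-approximation). I will partition $I$ into \emph{types} according to neighborhoods in $C$: two vertices $u, v \in I$ share a type precisely when $N(u) = N(v)$, i.e., when they are false twins of $G$. There are at most $2^{|C|} = 2^{O(\vc)}$ types.

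The reduction rule will be: \emph{whenever some type contains at least three vertices $u_1, u_2, u_3 \in I$, delete $u_3$ and decrement $k$ by $1$.} Safety hinges on two observations. First, any two false twins $u, v \in I$ satisfy $u \bowtie v$ in $G$, since $d(y, u) = 1 \leq 2 = d(v, u)$ for every $y \in N(v) = N(u)$ (and symmetrically); hence $u_1, u_2, u_3$ form a triangle in $G_{SR}$. Second, deleting $u_3$ preserves all pairwise distances in the remaining graph, since any shortest path through $u_3$ can be rerouted via its twin $u_1$; hence $(G - u_3)_{SR} = G_{SR} - u_3$. By Proposition~\ref{prop:strong-met-dim-equiv-vc}, it then suffices to show that $\vc(G_{SR}) = \vc(G_{SR} - u_3) + 1$.

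For this equality, the direction $\leq$ is immediate (add $u_3$ to any vertex cover of $G_{SR} - u_3$). For $\geq$, take a minimum vertex cover $S$ of $G_{SR}$; the triangle on $u_1, u_2, u_3$ forces $|S \cap \{u_1, u_2, u_3\}| \geq 2$. If $u_3 \in S$, then $S \setminus \{u_3\}$ covers $G_{SR} - u_3$. Otherwise, $\{u_1, u_2\} \subseteq S$ and, because false twins share their entire $G_{SR}$-neighborhood outside the triangle, every $G_{SR}$-neighbor of $u_3$ outside the triangle also lies in $S$ (to cover its edge to $u_3$); a direct check then shows that $S \setminus \{u_2\}$ still covers all edges of $G_{SR} - u_3$. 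Either way, we obtain a vertex cover of $G_{SR} - u_3$ of size $|S| - 1$, establishing safety.

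After exhaustive application, each type contains at most two vertices in $I$, so the resulting kernel has $|C| + 2 \cdot 2^{O(\vc)} = 2^{O(\vc)}$ vertices, yielding the claimed kernelization. The \FPT\ algorithm will then enumerate all subsets of the kernel's vertex set and check each in polynomial time against the definition of a strong resolving set, giving a total running time of $2^{2^{O(\vc)}} \cdot n^{O(1)}$. The main technical obstacle will be the safety analysis of the reduction rule—specifically, verifying the distance-preservation property under deletion of a twin and the exact structural relation $\vc(G_{SR}) = \vc(G_{SR} - u_3) + 1$ in both cases above.
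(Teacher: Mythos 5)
Your kernelization is exactly the paper's: the same reduction rule (delete a third false twin in the independent set outside the approximate vertex cover and decrement $k$), the same $|C|+2\cdot 2^{|C|}=2^{\OO(\vc)}$ counting of neighborhood types, and the same brute force on the kernel for the $2^{2^{\OO(\vc)}}\cdot n^{\OO(1)}$ algorithm. Where you differ is the safety proof: the paper argues directly on strong resolving sets (any strong resolving set must contain at least two of the three twins, and anything strongly resolved by the deleted twin is strongly resolved by a surviving one, so $S\setminus\{x\}$ works in $G-x$), whereas you route through Proposition~\ref{prop:strong-met-dim-equiv-vc}, proving $(G-u_3)_{SR}=G_{SR}-u_3$ and $\vc(G_{SR})=\vc(G_{SR}-u_3)+1$. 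Your route is correct — the triangle on the three twins in $G_{SR}$, the fact that $u_2$ and $u_3$ have identical $G_{SR}$-neighborhoods outside $\{u_1,u_2,u_3\}$, and the two-case vertex-cover exchange all check out — and it has the mild advantage of reducing safety to purely graph-theoretic statements about $G_{SR}$, at the cost of invoking the Oellermann--Peters-Fransen correspondence (which the paper only needs for its lower bound). One place where your sketch is quicker than it should be: distance preservation under deleting $u_3$ does \emph{not} by itself give $(G-u_3)_{SR}=G_{SR}-u_3$, because the mutually-maximally-distant condition quantifies over neighborhoods, and $N_{G-u_3}(w)=N_G(w)\setminus\{u_3\}$ shrinks for every $w\in N(u_3)$; you additionally need that a surviving twin ($u_1$ or $u_2$) stands in for $u_3$ in these checks (this is exactly why the rule insists on three twins rather than two, so that two survive). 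This is a routine verification, which you already flag as the remaining technical work, so it is not a gap in the approach — just make sure the final write-up handles the targets $v\in\{u_1,u_2\}$ separately, where the other surviving twin is the one that witnesses non-maximality.
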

\begin{proof}
Given a graph $G$,  let $X\subseteq V(G)$ be a minimum vertex cover of $G$. 
If such a vertex cover is not given, then we can find a $2$-factor approximate vertex
cover in polynomial time. 
Let $I:=V(G)\setminus X$.
By the definition of a vertex cover, the vertices of $I$ are pairwise non-adjacent.
The kernelization algorithm exhaustively applies the following reduction rule.
\begin{reduction rule} 
\label{reduc:twins-SMD}
If there exist three vertices $u,v,  x\in I$ such that $u,v, x$ are false twins, 
then delete $x$ and decrease $k$ by one.
\end{reduction rule}
Since $u,v, x$ are false twins, $N(u)=N(v)=N(x)$. 
This implies that, for any vertex $w\in V(G)\setminus \{u,v, x\}$, $d(w,v)=d(w,u)=d(w, x)$. 
In other words, for any $w \neq v$,
any shortest path from $u$ to $w$
does not contain $v$.
Hence, any strong resolving set that excludes at least two vertices in $\{u, v, x\}$
cannot resolve all three pairs $\langle u,v \rangle$,  $\langle u, x \rangle$,  and $\langle v, x \rangle$.
Hence, we can assume, without loss of generality, that 
any resolving set contains both $u$ and $x$.

Any pair of vertices in $V(G) \setminus \{u, x\}$ that 
is strongly resolved by $x$ is also resolved by $u$.
In other words, if $S$ is a strong resolving set of $G$, then
$S \setminus \{x\}$ is a strong resolving set of $G - \{x\}$.
This implies the correctness of the forward direction.
The correctness of the reverse direction trivially follows from the fact
that we can add $x$ into a strong resolving set of $G - \{x\}$
to obtain a resolving set of $G$.

Consider an instance on which the reduction rule is not applicable.
If the budget is negative, then the algorithm returns a trivial \no-instance
of constant size.
Otherwise, for any $Y\subseteq X$, there are at most two vertices $u,v\in I$ such that $N(u)=N(v)=Y$.
This implies that the number of vertices in the reduced instance
is at most $|X| + 2 \cdot 2^{|X|} =  2^{\vc +1} + \vc$.
The second part of the statement is an
immediate consequence of applying a brute-force
algorithm on the reduced instance.
\end{proof}
\section{Conclusion}\label{sec:conclu}

We have shown (under the \ETH) that three natural metric-based graph problems, \mdfull, \gsfull, and \smdfull, exhibit tight (double-) exponential running times for the standard structural parameterizations by treewidth and vertex cover number. This includes tight double-exponential running times for treewidth plus diameter (\mdfull and \gsfull) and for vertex cover (\smdfull).

Such tight double-exponential running times for \FPT\ 
structural paramaterizations of graph problems had previously been observed only for counting problems and problems complete for classes above \NP. 
Thus, surprisingly, our results show that some natural problems can be in \NP\ and still exhibit such a behavior.

It would be interesting to see whether this phenomenon holds for other graph problems in \NP, and for other structural parameterizations. 
Perhaps one can determine certain properties shared by these metric-based graph problems, that imply such running times, with the goal of generalizing our approach to a broader class of problems. 
In particular, concerning the general versatile technique that we designed to obtain the double-exponential lower bounds, it would be intriguing to see for which other problems in \NP\ our technique works.

In fact, after this paper appeared online, our technique was
successfully applied to an \NP-complete problem in machine learning~\cite{CCMR23} (for $\vc$)
as well as \NP-complete identification
problems~\cite{chakraborty2024tight} (for $\tw$).

\bibliography{bib}

\appendix

\end{document}